\newcommand*\tageq{\refstepcounter{equation}\tag{\theequation}}
\DeclareMathAlphabet{\mathpzc}{OT1}{pzc}{m}{it}
\newcommand{\snorm}[1]{\vvvert{#1}\vvvert}
\newcommand{\bigsnorm}[1]{\big\vvvert{#1}\big\vvvert}
\newcommand{\Bigsnorm}[1]{\Big\vvvert{#1}\Big\vvvert}
\newcommand{\biggsnorm}[1]{\bigg\vvvert{#1}\bigg\vvvert}
\newcommand{\leqnomode}{\tagsleft@true\let\veqno\@@leqno}
\newcommand{\reqnomode}{\tagsleft@false\let\veqno\@@eqno}
\newcommand{\E}{\mean}
\newcommand{\Var}{\var}
\newcommand{\Cov}{\cov}
\newcommand{\Hspace}{\mathbbm{H}}
\newcommand{\cumk}[1]{c_{t_1,\ldots,t_{{#1}-1}}}
\newcommand{\cumker}[1]{c_{t_1,\ldots,t_{{#1}-1}}(\tau_1,\ldots \tau_{#1})}
\newcommand{\cumksym}[1]{c_{t_1,\ldots,t_{{#1}-1},0}}
\newcommand{\cumop}[1]{\mathcal{C}_{t_1,\ldots,t_{{#1}-1}}}
\newcommand{\cumkersym}[1]{c_{t_1,\ldots,t_{{#1}-1},0}(\tau_1,\ldots \tau_{#1})}
\newcommand{\AtT}[1]{A^{(T)}_{t,#1}}
\newcommand{\AttT}[2]{A^{(T)}_{#1,#2}}
\newcommand{\AAtT}[1]{\mathcal{A}^{(T)}_{t,#1}}
\newcommand{\AAttT}[2]{\mathcal{A}^{(T)}_{#1,#2}}
\newcommand{\Au}[1]{{A}_{u,#1}}
\newcommand{\AAu}[1]{\mathcal{A}_{u,#1}}
\newcommand{\AAuu}[2]{\mathcal{A}_{#1,#2}}
\newcommand{\aatT}[1]{\mathpzc{a}^{(T)}_{t,#1}}
\newcommand{\aau}[1]{\mathpzc{a}_{u,#1}}
\newcommand{\XT}[1]{X_{#1,T}}
\newcommand{\XNT}[1]{X^{(N)}_{#1,T}}
\newcommand{\Xu}[1]{X^{(u)}_{#1}}
\newcommand{\Xuu}[2]{X^{(#1)}_{#2}}
\newcommand{\ZN}[1]{Z^{(N)}_{#1}}
\newcommand{\cT}{c^{(T)}}
\newcommand{\fT}{f^{(T)}}
\newcommand{\hatfT}{\hat f^{(T)}}
\newcommand{\FT}{\mathcal{F}^{(T)}}
\newcommand{\hatFT}{\hat{\mathcal{F}}^{(T)}}
\newcommand{\F}{\mathcal{F}}
\newcommand{\CCT}{\mathcal{C}^{(T)}}
\newcommand{\DT}{D^{(T)}}
\newcommand{\IT}{I^{(T)}}
\newcommand{\hN}[1]{h_{#1,N}}
\newcommand{\HN}{H_{N}}
\newcommand{\HNN}[1]{H_{#1,N}}
\newcommand{\IID}{\mbox{\itshape i.i.d.} }
\newcommand{\kappat}{\kappa_{\mathrm{t}}}
\newcommand{\kappaf}{\kappa_{\mathrm{f}}}
\newcommand{\bfT}{b_{\mathrm{f},T}}
\newcommand{\btT}{b_{\mathrm{t},T}}
\newcommand{\Kf}{K_{\mathrm{f}}}
\newcommand{\Kt}{K_{\mathrm{t}}}
\newcommand{\KfT}{K_{\mathrm{f},T}}
\newcommand{\KtT}{K_{\mathrm{t},T}}
\newcommand{\ET}{\hat E^{(T)}}
\newcommand{\Iso}{\mathop{\mathcal{T}}}
\begin{document}
\reqnomode

\title[Locally stationary functional time series]%
{Locally stationary functional time series}

\author{Anne van Delft}
\thanks{Ruhr-Universit{\"a}t Bochum, Fakult{\"a}t f{\"u}r Mathematik, 44780 Bochum, Germany} 
\thanks{{\em E-mail address:} anne.vandelft@rub.de (A.~van Delft)}
\author{Michael Eichler}
%\affil{Maastricht University}
\thanks{Department of Quantitative Economics,
Maastricht University, P.O.~Box 616, 6200 MD Maastricht, The Netherlands}
\thanks{{\em E-mail address:} m.eichler@maastrichtuniversity.nl (M.~Eichler)}

\dedicatory{{\upshape\today}}
\begin{abstract}
The literature on time series of functional data has focused on processes of which the probabilistic law is either constant over time or constant up to its second-order structure. Especially for long stretches of data it is desirable to be able to weaken this assumption. This paper introduces a framework that will enable meaningful statistical inference of functional data of which the dynamics change over time. We put forward the concept of local stationarity in the functional setting and establish a class of processes that have a functional time-varying spectral representation. Subsequently, we derive conditions that allow for fundamental results from nonstationary multivariate time series to carry over to the function space. In particular, time-varying functional ARMA processes are investigated and shown to be functional locally stationary according to the proposed definition. As a side-result, we establish a  Cram{\'e}r representation for an important class of weakly stationary functional processes. Important in our context is the notion of a time-varying spectral density operator of which the properties are studied and uniqueness is derived. Finally, we provide a consistent nonparametric estimator of this operator and show it is asymptotically Gaussian using a weaker tightness criterion than what is usually deemed necessary.

\noindent
{{\itshape Keywords:} Functional data analysis, locally stationary processes, spectral analysis, kernel estimator}
\smallskip 

\noindent 
{{\em 2010 Mathematics Subject Classification}. Primary: 62M10;
Secondary: 62M15.}
\end{abstract}

\maketitle
% !TEX spellcheck = en_US

\section{Introduction}

In functional data analysis, the variables of interest take the form of smooth functions that vary randomly between repeated observations or measurements. Thus functional data are represented by random smooth functions $X(\tau)$, $\tau\in D$, defined on a continuum $D$. Examples of functional data are concentration of fine dust as a function of day time, the growth curve of children as functions of age, or the intensity as a function of wavelength in spectroscopy. Because functional data analysis deals with inherently infinite-dimensional data objects, dimension reduction techniques such as functional principal component analysis (FPCA) have been a focal point in the literature. Fundamental for these methods is the existence of a Karhunen-Lo{\`e}ve decomposition of the process \citep{Karhunen1947,Loeve1948}. Some noteworthy early contributions are \citet{Kleffe1973, Grenander1981, Dauxois1982, Besse1986}. For an introductory overview of the main functional data concepts we refer to \citet{Ramsay2005} and  \citet{Ferraty2006}.

Most techniques to analyze functional data are developed under the assumption of independent and identically distributed functional observations and focus on capturing the first- and second-order structure of the process. A variety of functional data is however collected sequentially over time. In such cases, the data can be described by a \textit{functional time series} $\{X_t(\tau)\}_{t\in \znum}$. Since such data mostly show serial dependence, the assumption of \IID repetitions is violated. Examples of functional time series in finance are bond yield curves, where each function is the yield of the bond as a function of time to maturity \citep[e.g.][]{Bowsher2008,Hays2012} or the implied volatility surface of a European call option as a function of moneyness and time to maturity. In demography, mortality and fertility rates are given as a function of age \citep[e.g.][]{Erbas2007,Hyndman2007,Hyndman2008}, while in geophysical sciences, magnometers record the strength and direction of the magnetic field every five seconds. Due to the wide range of applications, functional time series and the development of techniques that allow to relax the \IID assumption have received an increased interest in recent years.  

The literature on functional time series has mainly centered around stationary linear models \citep[][]{Mas2000,  Bosq2002, Dehling2005} and prediction methods \citep{Antoniadis2006,Bosq2007,Aue2015}. A general framework to investigate the effect of temporal dependence among functional observations on existing techniques has been provided by \citet{Hormann2010}, who introduce $L^p_m$ approximability as a moment-based notion of dependence. 

Violation of the assumption of identically distributed observations has been examined in the setting of change-point detection \citep[e.g.][]{Berkes2009,Hormann2010, Aue2009, Horvath2010,Gabrys2010}, in the context of functional regression by \citet{Yao2005, Cardot2006} and in the context of common principal component models by \citet{Benko2009}.

Despite the growing literature on functional time series, the existing theory has so far been limited to strongly or weakly stationary processes. With the possibility to record, store and analyze functional time series of an increasing length, the common assumption of (weak) stationarity becomes more and more implausible. \textcolor{black}{For instance, in meteorology the distribution of the daily records of temperature, precipitation and cloud cover for a region, viewed as three related functional surfaces, may change over time due to global climate changes. In the financial industry, implied volatility of an option as a function of moneyness changes over time. Other relevant examples appear in the study of cognitive functions such as high-resolution recordings from local field potentials, EEG and MEG.  It is widely known that these type of data have a time-varying spectral structure and their statistical treatment requires to take this into account.} While heuristic approaches such as localized estimation are readily implemented and applied, a statistical theory for inference from nonstationary functional time series is yet to be developed.

\textcolor{black}{The objective of the current paper is to develop a framework for inference of nonstationary functional time series that allows the derivation of large sample approximations for estimators and test statistics. For this, we extend the concept of locally stationary processes \citep{Dahlhaus1996a} to the functional time series setting. We show that fundamental results for multivariate time series can be carried over to the function space, which is a nontrivial task. Our work, which provides a basis for inference of nonstationary functional time series, focuses on frequency domain-based methods and therefore also builds upon the work by \citet{Panar2013b,PanarTav2013a}. Functional data carry infinite-dimensional intrinsic variation and in order to exploit this rich source of information, it is important to optimally extract defining characteristics to finite dimension via techniques such as functional PCA (FPCA). In the case of stationary dependent functional data, the shape and smoothness properties of the random curves are completely encoded by the spectral density operator, which has been shown to allow for an optimal lower dimension representation via dynamic FPCA \citep[see e.g.,][]{Panar2013b,Hormann2015}. Since the assumption of weak stationarity is often too restrictive, we aim to provide the building blocks for statistical inference of nonstationary functional time series and for the development of techniques such as time-varying dynamic FPCA. In particular, our framework will be essential for the development of optimal dimension reduction techniques via a {\em local} functional Cram{\'e}r-Karhunen-Lo{\`e}ve representation. Such a representation must not only take into account the between- and within curve dynamics but also that these are time-varying. Moreover, the frequency domain arises quite naturally in certain applications such as brain data imaging and is moreover very useful in nonparametric specifications. For example, \citet{avd16} use our framework to derive a test for stationarity of functional time series against nonstationary alternatives with slowly changing dynamics. 
%Our framework has already shown useful for nonparametric stationarity testing on the function space against smooth alternatives \citep[see e.g.,][]{avd16}.
% \textcolor{blue}{Even more so, this framework will be essential for the development of optimal dimension reduction techniques via a {\em local} functional Cram{\'e}r-Karhunen-Lo{\`e}ve representation. In case of dependent functional data, the shape and smoothness properties of the random curves are completely encoded by the spectral density operator. Such a representation must not only take into account the between- and within curve dynamics but also that these are time-varying. }
}

The paper is structured as follows. In section \ref{section2}, we first introduce some basic notation and methodology for functional data and relate this in a heuristic manner to the concept of locally stationary time series and introduce the definition of a locally stationary functional time series.  In section \ref{section4}, we demonstrate that time-varying functional ARMA models have a causal solution and are functionally locally stationary according to the definition in section \ref{section2}. This hinges on the existence of stochastic integrals for operators that belong to a particular Bochner space. In section \ref{section3}, the time-varying spectral density operator is defined and its properties are derived. In particular, we will show uniqueness of the time-varying spectral density operator. In section \ref{expandcov}, we derive the distributional properties of a local nonparametric estimator of the time-varying spectral density operator and deduce a central limit theorem. The results are illustrated by application to a simulated functional autoregressive process in section \ref{simulation-sect}. Technical details and several auxiliary results that are of independent interest are proved in the Appendix.

% !TEX spellcheck = en_US
\section{Locally stationary functional time series}
\label{section2}

Let $X=\{X_t\}_{t=1,\ldots,T}$ be a stochastic process taking values in the Hilbert space $H=L^2([0,1])$ of all real-valued functions that are square integrable with respect to the Lebesgue measure. While current theory for such processes is limited to the case where $\{X_t\}$ is either strictly or weakly stationary, we consider nonstationary processes with dynamics that vary slowly over time and thus can be considered as approximately stationary at a local level.

As an example, consider the functional autoregressive process $X$ given by
\begin{align*}\label{eq:tvfar1ex}
X_t (\tau) = B_t\big(X_{t-1}\big)(\tau) +\varepsilon_{t}(\tau), \quad  \tau \in [0,1],
\end{align*}
for $t=1,\ldots,T$, where the errors $\varepsilon_t$ are independent and identically distributed random elements in $H$ and $B_{t}$ for $t=1,\ldots,T$ are bounded operators on $H$. Assuming that the autoregressive operators $B_t$ change only slowly over time, we can still obtain estimates by treating the process as stationary over short time periods. However, since this stationary approximation deteriorates over longer time periods, standard asymptotics based on an increasing sample size $T$ do not provide suitable distributional approximations for the finite sample estimators. Instead we follow the approach by \citet{Dahlhaus1996a, d97} and define local stationary processes in a functional setting based on an infill asymptotics. The main idea of this approach is that for increasing $T$ the operator $B_t$ is still `observed' on the same interval but on a finer grid, resulting in more and more observations in the time period over which the process can be considered as approximately stationary. Thus we consider a family of functional processes
\begin{align*}
\XT{t}(\tau)
%= B_{\frac{t}{T}}\big(X_{t-1,T}\big)(\tau) +\varepsilon_{t}(\tau),\qquad\tau \in [0,1],\quad 1\leq t\leq T,
= B_{t/T}\big(\XT{t-1}\big)(\tau) +\varepsilon_{t}(\tau),\qquad\tau \in [0,1],\quad 1\leq t\leq T,
\end{align*}
indexed by $T\in\nnum$ that all depend on the common operators $B_{u}$ indexed by rescaled time $u=t/T$. Consequently, we in fact examine a triangular array of random functions that share common dynamics as provided by the continuous operator-valued function $B_{u}$, $u\in [0,1]$. For each $T$, a different `level' of the sequence is thus considered where the dynamics change more slowly for increasing values of $T$. We will establish a class of functional time series with a time-varying functional spectral representation that includes interesting processes such as the above example and higher order time-varying functional ARMA models. The framework as provided in this paper will allow to investigate how nonstationarity affects existing methods, such as (dynamic) FPCA, and  how these methods  should be adjusted in order to be robust for changing characteristics. Similarly as \citet{SubbaRao2006} and \citet{Vogt2012} in the case of ordinary time series, we call a functional time series locally stationary if it can be locally approximated by a stationary functional time series. In the following definition, $\norm{\cdot}_2$ denotes the $L^2$-norm of $H$.

\begin{definition}[Local stationarity]
\label{FLS}
A sequence of stochastic processes $\{\XT{t}\}_{t\in\znum}$ indexed by $T\in\nnum$ and taking values in $H$ is called locally stationary if for all rescaled times $u\in[0,1]$ there exists an $H$-valued strictly stationary process $\{X^{(u)}_t\}_{t\in\znum}$, such that
\[
\bignorm{\XT{t}-\Xu{t}}_{2}
\leq\big(\big|\tfrac{t}{T}-u\big|+\tfrac{1}{T}\big)\,P_{t,T}^{(u)}\qquad a.s. 
\]
for all $1\leq t\leq T$, where $P_{t,T}^{(u)}$ is a positive real-valued process such that for some $\rho>0$ and $C<\infty$ the process satisfies $\mean\big(\big|P_{t,T}^{(u)}\big|^\rho\big)<C$ for all $t$ and $T$ and uniformly in $u\in[0,1]$.
\end{definition}

\textcolor{black}{For the purpose of illustration, a very simple locally stationary functional time series is depicted in figure \ref{fig:Xlocstatexam}\,(A). Note that visual interpretation of a functional time series can be extremely difficult, especially when it is driven by many interacting components. The process in figure \ref{fig:Xlocstatexam}\,(A) is driven solely by two components and is generated as
\[
\XT{t}(\tau) = \xi_{t,T} \,\phi_1(\tau) +\chi_{t,T}\, \phi_2(\tau) \qquad \tau \in [0,1] \tageq \label{eq:XtTplot}
\]
where $\phi_1, \phi_2$ are basis functions of $H$ and the random coefficients $\xi_{t,T}, \chi_{t,T}$ are independent Gaussian time-varying AR(2). The parameters of the time-varying AR(2) models are chosen in such a way that the magnitude and phase of the roots of the characteristic polynomials of $\xi_{t,T}$ and $\chi_{t,T}$ vary cyclically with rescaled time $t/T$ but in opposite direction as time progresses. The corresponding coefficient curves 
\FloatBarrier
\begin{figure}[h!]
   \begin{subfigure}[t]{.45\textwidth}
    \centering
      \hspace*{-40pt}   \includegraphics[width=1.4\linewidth]{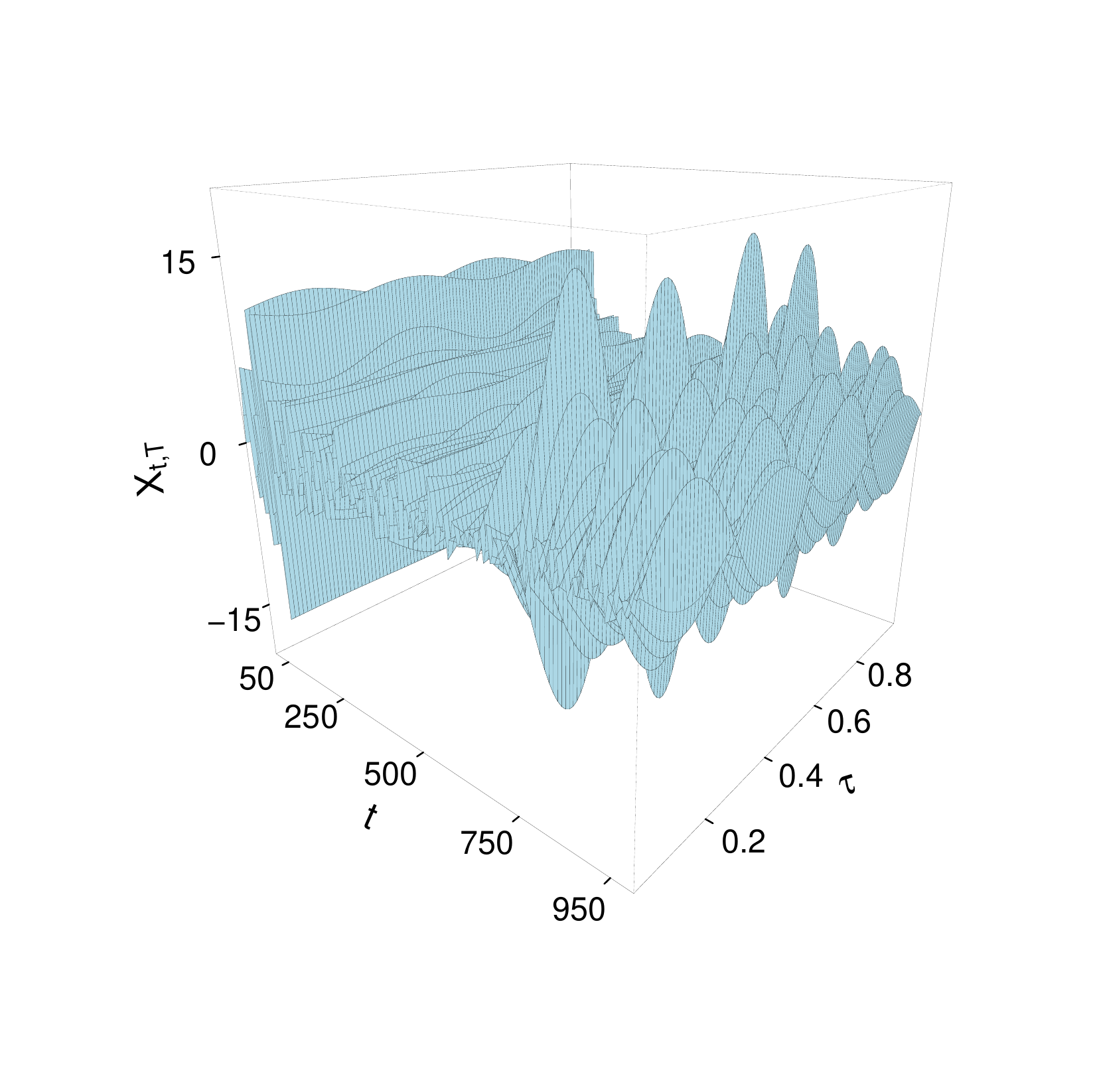}
 \vspace*{-50pt} \caption{Locally stationary process}
  \end{subfigure}
  \hfill
 \begin{subfigure}[t]{.45\textwidth}
    \centering
        \hspace*{-40pt}  \includegraphics[width=1.4\linewidth]{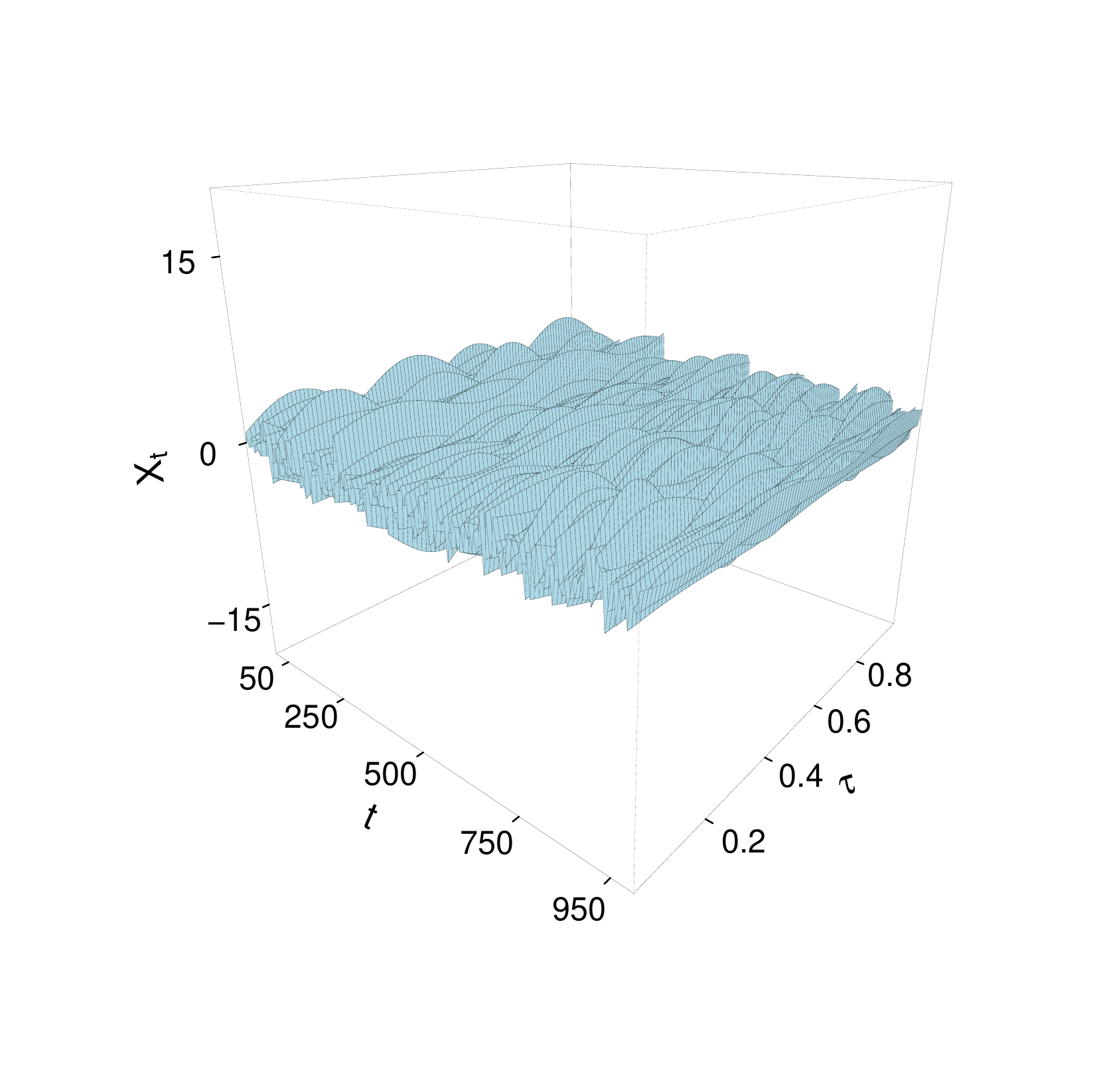}
   \vspace*{-50pt}   \caption{Stationary process}
  \end{subfigure}
\medskip
 \vspace*{-30pt}
  \begin{subfigure}[t]{.45\textwidth}
    \centering
       \hspace*{-10pt}   \includegraphics[width=\linewidth]{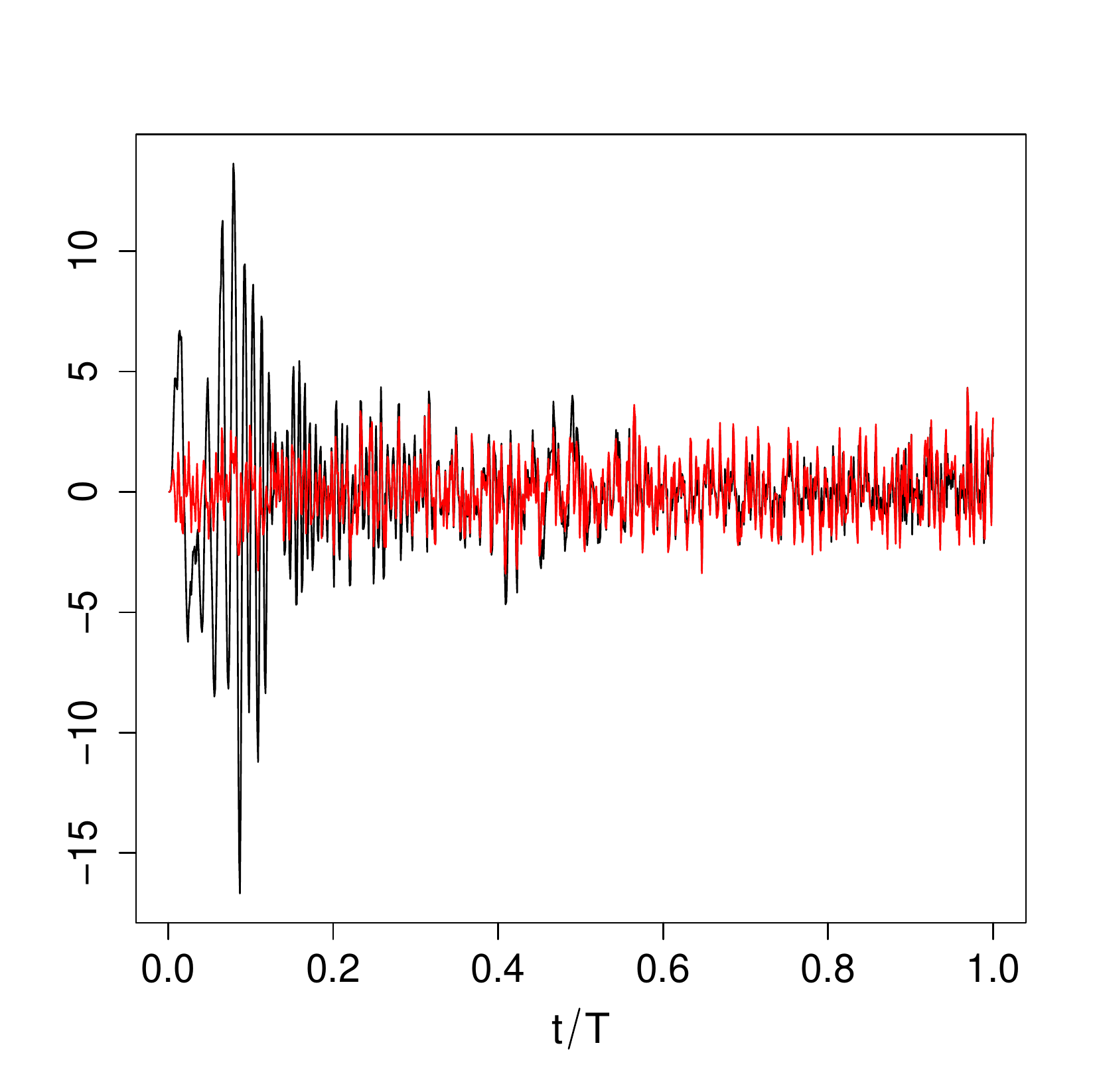}
 \vspace*{-5pt}    \caption{Time series of the coefficient $\xi_{t,T}$ of 
the locally stationary process in (A) (black curve) and of the 
coefficient $\xi_t$ of the stationary process in (B) (red curve)}
  \end{subfigure}
  \hfill
  \begin{subfigure}[t]{.45\textwidth}
    \centering
      \hspace*{-10pt}    \includegraphics[width=\linewidth]{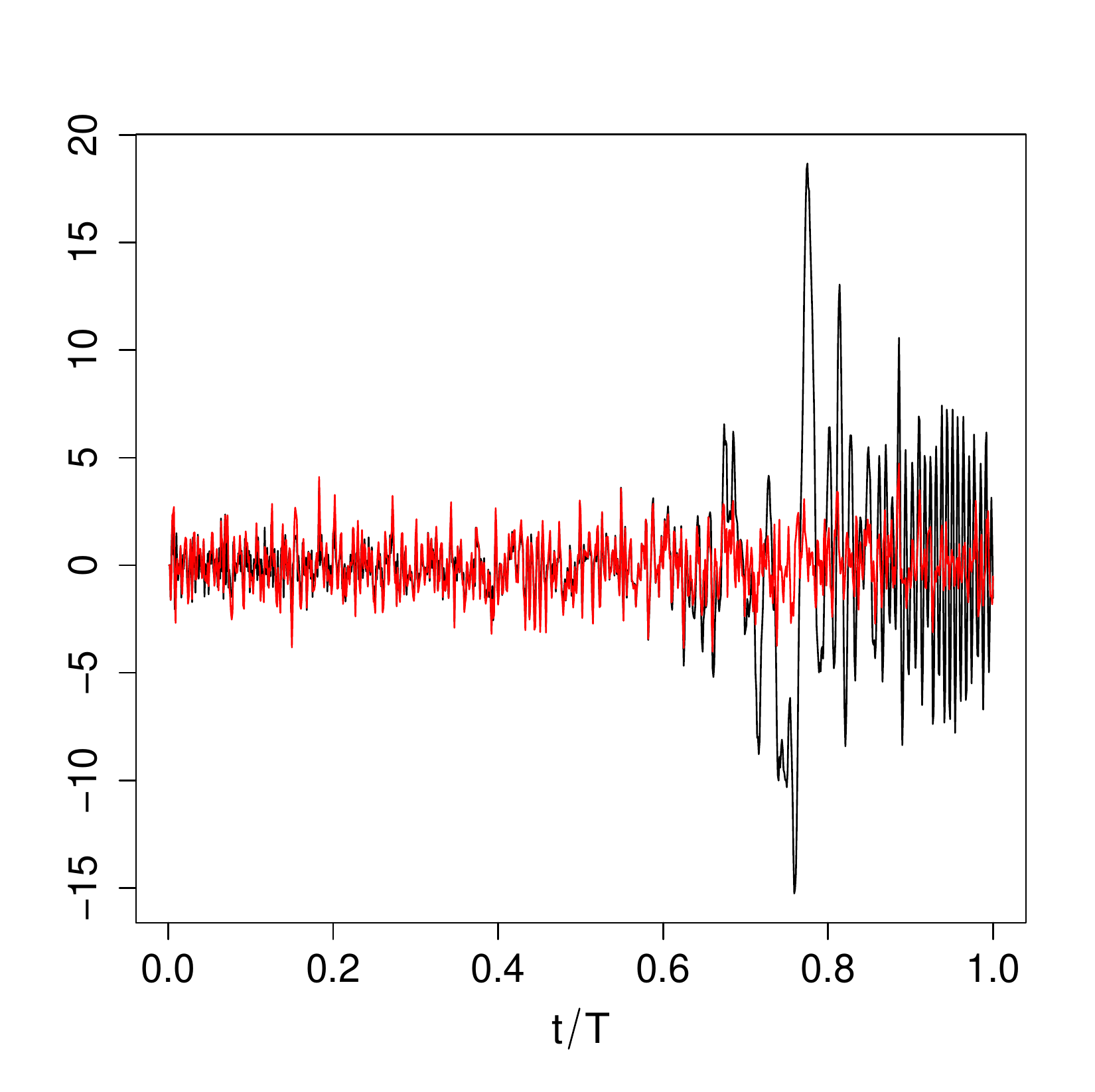}
    \vspace*{-5pt} \caption{Time series of the coefficient $\chi_{t,T}$ of 
the locally stationary process in (A) (black curve) and of the 
coefficient $\chi_t$ of the stationary process in (B) (red curve)}
  \end{subfigure}
  \caption{Comparison of stationary and local stationary functional time 
series}\label{fig:Xlocstatexam}
\end{figure}
\FloatBarrier
are plotted in figure \ref{fig:Xlocstatexam}\,(C) and \ref{fig:Xlocstatexam}\,(D), respectively. The dependence structure of the two driving components vary from independence to close to unit root behavior and this varying cyclical behavior is also clearly visible in the resulting functional process $\{\XT{t}\}$. In order to contrast this with behavior observed under stationarity, figure \ref{fig:Xlocstatexam}\,(B) depicts the closely related weakly stationary functional process of \eqref{eq:XtTplot} where the random coefficients are generated using two stationary AR(2) with parameters specified as the time average of $ \xi_{t,T}$ and $\chi_{t,T}$, respectively. A comparison of the plots for the locally stationary and the stationary case shows a clear difference in the dynamics of the two processes, which is particularly discernible in the projections on the Fourier components. The example thus indicates the effect of falsely misspecifying a locally stationary process as stationary on statistical inference.}

Definition \ref{FLS} is broad and is further investigated in \citet{avd16}. It will allow for the development of statistical inference procedures for nonstationary functional time series and in particular encompasses nonlinear functional models. Nonlinear functional time series is a topic that is relatively unexplored. Possible relevant models that are worth investigating are, for instance, time-varying additive functional regression \citep[][]{Muller2008} and time-varying functional ARCH models \citep[][]{hhr2013}.
\textcolor{black}{However, as the focus of this paper is on frequency domain based methods, it is more appropriate to work with an alternative characterization of local stationarity in terms of spectral representations, which we discuss below. We start by introducing the necessary terminology on operators and spectral representations for stationary functional time series.}

%\textcolor{red}{For the remainder of the paper we will focus on frequency domain based methods. As mentioned in the introduction, the frequency domain is expected to play a central role in the development of meaningful inference techniques of nonstationary functional time series and, in particular, in optimally extracting the most relevant information to finite dimension. For this, we provide sufficient conditions for local stationarity in terms of spectral representations. These will include the most popular used models in practice such as functional ARMA processes. Despite of being linear in the function space, the filter operators act on a Hilbert space of which the elements can still exhibit arbitrary degrees of nonlinearity and can therefore be seen to be highly nonlinear in terms of scalar records. Due to its flexibility as well as its simplicity, functional autoregressive processes have been found useful in numerous applications (see section \ref{section4}). We start by introducing the necessary terminology on operators and spectral representations for stationary functional time series.}

\subsection{Functional spaces and operators: notation and terminology}

First, we introduce some basic notation and definitions on functional spaces and operators. Let $(T,\mathcal{B})$ be a measurable space with $\sigma$-finite measure $\mu$. Furthermore, let $E$ be a Banach space with norm $\norm{\cdot}_E$ and equipped with the Borel $\sigma$-algebra. We then define $L^p_E(T,\mu)$ as the Banach space of all strongly measurable functions $f:T\to E$ with finite norm
\[
\norm{f}_{p}=
\norm{f}_{L^p_E(T,\mu)}=\Big(\int \norm{f(\tau)}_E^p\,d\mu(\tau)\Big)^{\tfrac{1}{p}}
\]
for $1\leq p<\infty$ and with finite norm
\[
\norm{f}_{\infty}=
\norm{f}_{L^\infty_E(T,\mu)}=\inf_{\mu(N)=0}\sup_{\tau\in T\without N}\norm{f(\tau)}_E
\]
for $p=\infty$. We note that two functions $f$ and $g$ are equal in $L^p$, denoted as $f \overset{L^p}{=} g$, if $\norm{f-g}_p=0$. If $E$ is a Hilbert space with inner product $\innerprod{\cdot}{\cdot}_E$ then $L^2_E(T,\mu)$ is also a Hilbert space with inner product
\[
\innerprod{f}{g}=
\innerprod{f}{g}_{L^2_E(T,\mu)}=\int\innerprod{f(\tau)}{g(\tau)}_E\,d\mu(\tau).
\]
For notational convenience, we use the shorter notation $\norm{f}_p$ and $\innerprod{f}{g}$ whenever no ambiguity about the space $L^p_E(T,\mu)$ is possible. Similarly, if $T\subset\rnum^k$ and $\mu$ is the Lebesgue measure on $T$, we omit $\mu$ and write $L^p_E(T)$, and if $E=\rnum$ we write $L^p(T,\mu)$.

Next, an operator $A$ on a Hilbert space $H$ is a function $A:H\to H$. An operator $A$ is said to be compact if the image of each bounded set under $A$ is relatively compact. If $H$ is separable, there exist orthonormal bases $\{\phi_n\}$ and $\{\psi_n\}$ of $H$ and a monotonically decreasing sequence of non-negative numbers $s_n(A)$, $n\in\nnum$ converging to zero, such that
\begin{align}\label{eq:svd}
A\,f= \lsum_{n=1}^\infty s_n(A)\,\innerprod{f}{\psi_n}\,\phi_n
\end{align}
for all $f\in H$. The values $s_n(A)$ are called the {\em singular values} of $A$ and \eqref{eq:svd} is the {\em singular value decomposition} of $A$. For operators on $H$, we denote the {\em Schatten p-class} by $S_p(H)$ and its norm by $\snorm{\cdot}_p$. More specifically, for $p=\infty$, the space  $S_{\infty}(H)$ indicates the space of bounded linear operators equipped with the standard operator norm, while for $1 \leq p < \infty$ the {\em Schatten p-class} is the subspace of all compact operators $A$ on $H$ such that the sequence $s(A)=\big(s_n(A)\big)_{n\in\nnum}$ of singular values of $A$ belongs to $\ell^p$; the corresponding norm is given by $\snorm{A}_p = \norm{s(A)}_p$. For $1 \le p \le q \le \infty$, we have the inclusion $S_{p}(H) \subseteq S_{q}(H)$. Two important classes are the trace-class and the Hilbert-Schmidt operators on $H$, which are given by $S_1(H)$ and $S_{2}(H)$, respectively. More properties of Schatten-class operators and in particular of Hilbert-Schmidt operators are provided in Appendix \ref{properties}. Finally, the adjoint of $A$ is denoted by $A^{\dagger}$ while the identity and zero operator are given by $I_H$ and $O_H$, respectively. As usual, the complex conjugate of $z \in \cnum$ is denoted by $\overline{z}$ and the imaginary number by $\im$.\\

The main object of this paper are functional time series $X=\{X_t\}$ that take values in the Hilbert space $H=L^2([0,1])$. More precisely, for some underlying probability space $(\Omega,\fclass,\prob)$, let $\mathbb{H}=L^2_H(\Omega,\prob)$ be the Hilbert space of all $H$-valued random variables $X$ with finite second moment $\mean\norm{X}^2_2<\infty$. \textcolor{black}{To avoid ambiguities between the norms of $H$ and $\mathbb{H}$, we write $\norm{X}_{\mathbb{H}}$ for the norm in $\mathbb{H}$ and reserve the notation $\norm{X}_2$ for the more frequently used norm in $H$.} Throughout the paper, we assume that $X_t\in\mathbb{H}$. For the spectral representation and Fourier analysis of functional time series $\{X_t\}$, we also require the corresponding spaces $H_\cnum=L^2_\cnum([0,1])$ and $\mathbb{H}_\cnum=L^2_{H_\cnum}(\Omega,\prob)$.
We recall some basic properties of functional time series. First, a functional time series $X$ is called strictly stationary if, for all finite sets of indices $J \subset \mathbb{Z}$, the joint distribution of $\{X_{t+j}\given j \in J\}$ does not depend on $t\in\znum$. Similarly, $X$ is weakly stationary if its first- and second-order moments exist and are invariant under translation in time. In that case, the mean function $m$ of $X$ is defined as the unique element of $H$ such that 
\[
\langle m, g \rangle =\E\langle X_t, g \rangle, \qquad g \in H.
\] 
Furthermore, the $h$--th lag covariance operator $\mathcal{C}_{h}$ is given by
\[
\innerprod{\mathcal{C}_{h}g_1}{g_2}
= \mean\big[\innerprod{g_1}{X_0-m}\,\innerprod{X_h-m}{g_2}\big], \qquad
g_1,g_2 \in H,\]
and belongs to $S_2(H)$. Since $S_2(H)$ is isomorphic to the tensor product, we call $\mathcal{C}_h$ also {\em autocovariance tensor}. The covariance operator $\mathcal{C}_{h}$ can alternatively be described by its kernel function $c_h$ satisfying 
\begin{align*}
\innerprod{\mathcal{C}_h\,g_1}{g_2}
=\int_0^1 \int_{0}^{1}
c_h(\tau,\sigma)\,g_1(\sigma)\,g_2(\tau)\,d\sigma\,d\tau,
\qquad g_1,g_2\in H.
\end{align*}
In analogy to weakly stationary multivariate time series, where the covariance matrix and spectral density matrix form a Fourier pair, the {\em spectral density operator} or {\em tensor} $\mathcal{F}_{\omega}$ is given by the Fourier transform of $\mathcal{C}_h$,
\begin{align}
\label{eq:Fomega}
\mathcal{F}_{\omega}
= \frac{1}{2\pi}\lsum_{h \in \mathbb{Z}} \mathcal{C}_h\,e^{-\im\omega h}.
\end{align}
A sufficient condition for the existence of $\mathcal{F}_{\omega}$ in $S_p(H_\cnum)$ is $\sum_{h\in\znum} \snorm{\mathcal{C}_h}_p < \infty$. 
Since the setting of this paper allows for higher order dependence among the functional observations, we also require the notion of higher order cumulant tensors. The necessary derivations and definitions are given in Appendix \ref{cumprops}. Throughout the remainder of this paper, time points in $\{1,\ldots,T\}$ will be denoted by $t,s$ or $r$, while rescaled time points on the interval $[0,1]$ will be given by $u$ and $v$. Additionally, angular frequencies are indicated with $\lambda, \alpha, \beta$ or $\omega$ and functional arguments are denoted by $\tau, \sigma$.  

Finally, we require the notion of stochastic integrals with respect to operator-valued functions. To this end, let $\mathcal{B}_{\infty}$ denote the Bochner space $\mathcal{B}_{\infty}=L^2_{S_{\infty}(H_\cnum)}([-\pi,\pi],\mu)$ of all strongly measurable functions $U:[-\pi,\pi]\to S_{\infty}(H_\cnum)$ such that 
\[
\|U\|^2_{\mathcal{B}_{\infty}} =\int_{-\pi}^{\pi} \snorm{U_{\omega}}^2_{\infty} d\mu(\omega)<\infty,
\]
where $\mu$ is a measure on the interval $[-\pi,\pi]$ given by $\mu(A)=\int_{A}\snorm{\mathcal{F}_{\omega}}_1\,d\omega$ 
for all Borel sets $A\subseteq[-\pi,\pi]$. The subspace $\mathcal{B}_{2}$ is then defined similarly with Hilbert-Schmidt norms replacing the operator norms. We distinguish explicitly between the two spaces as the latter space allows for stronger results to be obtained but excludes interesting processes such as functional autoregressive processes.

\subsection{Assumptions}
\textcolor{black}{
In this section, we collect for better reference the assumptions required in subsequent sections. We start by the main assumptions needed for a frequency domain characterization of local stationarity. In contrast to \citet{Panar2013b}, who only consider transfer functions in $\mathcal{B}_{2}$, we also prove the more general case of transfer functions in $\mathcal{B}_{\infty}$ as it includes the important case of functional autoregressive processes. The necessary results are proved in section \ref{proofStochInt} of the Appendix. Throughout the assumptions and the paper, we refer to $S_p(H_\cnum)$ and $\mathcal{B}_{p}$ with $p=2$ or $p=\infty$ to make the distinction between the two cases.}
\textcolor{black}{
\begin{Alist}{A}
\item
(i) $\{\veps_t\}_{t\in\znum}$ is a weakly stationary white noise process taking values in $H$ with spectral representation $\veps_t=\int_{-\pi}^{\pi} e^{\im\omega t}\,dZ_{\omega}$, where $Z_{\omega}$ is a $2\pi$-periodic orthogonal increment process taking values in $H_\cnum$;\\
%\item\label{A-XT}
(ii) the functional process $\XT{t}$ with $t=1,\ldots,T$ and $T\in\nnum$ is given by
\[
\XT{t} = \int_{-\pi}^{\pi} e^{\mathrm{i}\omega t}\, \AAtT{\omega}\,dZ_{\omega} \quad \text{a.e. in } \Hspace 
\]
with transfer operator $\AAtT{\omega}\in\mathcal{B}_p$ and an orthogonal increment process $Z_{\omega}$.
\item\label{A-A}
There exists $\mathcal{A}:[0,1]\times[-\pi,\pi]\to S_p(H_\cnum)$ with $\AAu{\cdot}\in\mathcal{B}_p$ and $\AAu{\omega}$ being continuous in $u$ such that for all $T\in\nnum$
\[
\sup_{\omega,t}
\bigsnorm{\AAtT{\omega}-\AAuu{\frac{t}{T}}{\omega}}_p
=O\big(\tfrac{1}{T}\big).
\]
\item
The function $\mathcal{A}_{\cdot,\cdot}$ is H{\"o}lder continuous of order $\alpha > 1/2$ in $u$ and $\omega$.
\item
The function $\mathcal{A}_{\cdot,\cdot}$ is twice continuously differentiable in $u$ and $\omega$ with second derivatives being uniformly bounded in $u$ and $\omega$.
\savecounter{alphcount}
\end{Alist}
We note that a functional Cram{\'e}r representation such as in (A1)(ii) can also be obtained when the spectral density operator is not well-defined; we refer the reader to  \citet{vde2017}, in which a functional version of Herglotz Theorem is proved and frequency domain representations for stationary time series on the function space are further generalized. For the derivation of asymptotic results for kernel estimators of the spectral density operator, we require additional assumptions on the moments of $k$--order of the process $\veps_t$ in (A1). The following assumption will be imposed for $k\leq 4$ or $k<\infty$.
\begin{Alist}{A}
\restorecounter{alphcount}
\item
\label{itm:cumWN}
The process $\{\varepsilon_t\}_{t\in\znum}$ satisfies $\mean\norm{\varepsilon_0}_2^k<\infty$ and $\sum_{t_1,\ldots,t_{k-1}=-\infty}^{\infty}\snorm{\cumop{k}^{\varepsilon}}_2<\infty$.
\savecounter{alphcount}
\end{Alist}
Finally, the following assumptions formulate the conditions imposed on the taper functions, the kernel functions, and the bandwidths used for kernel smoothing.
\begin{Alist}{A}
\restorecounter{alphcount}
\item
The function $h : \rnum \to \rnum^+$ is symmetric with compact support on $[0,1]$ and is of bounded variation.
\item
The function $\Kf:\rnum\to\rnum^+$ is symmetric, has bounded variation and compact support $[-1,1]$, and satisfies
\begin{romanlist} 
\item
$\DS\int_\rnum \Kf(\omega)\,d\omega=1$;
\item
$\DS\int_\rnum \omega\,\Kf(\omega)\,d\omega=0$.
\end{romanlist}
\item
The sequences $\bfT$ and $\btT$ satisfy for $T\to\infty$ (i) $\bfT \to 0$ and $\btT \to 0$; (ii) $\bfT\,\btT\,T \to \infty$; (iii) $\bfT\,\log(\btT\,T) \to 0$; and (iv) $\btT^2\,\bfT \to 0$.
\savecounter{alphcount}
\end{Alist}}

\subsection{Local stationarity in the frequency domain}

The original definition of local stationary processes by \citet{Dahlhaus1996a} has been formulated in the frequency domain. The following proposition can be viewed a generalization of \citet{Dahlhaus1996a} to the functional setting. 

\begin{proposition}\label{FTVspec}
Suppose that assumptions (A1) and (A2) hold. Then $\{\XT{t}\}$ is a locally stationary process in $H$.
\end{proposition}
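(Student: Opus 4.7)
The plan is to build the stationary approximation directly from the functional Cramér representation in (A1)(ii) and then control the discrepancy by a stochastic-integral isometry combined with the uniform approximation bound in (A2).

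For fixed $u \in [0,1]$, the natural candidate for the stationary approximation is
\[
\Xu{t} := \int_{-\pi}^{\pi} e^{\im\omega t}\,\AAu{\omega}\,dZ_{\omega},
\]
which is well-defined in $\Hspace$ because $\AAu{\cdot}\in\mathcal{B}_p$ and the stochastic-integral machinery developed in Appendix \ref{proofStochInt} applies to both $\mathcal{B}_2$ and $\mathcal{B}_\infty$ transfer operators. Since the integrand depends on time only through the unitary factor $e^{\im\omega t}$, the joint laws of $(\Xu{t_1},\ldots,\Xu{t_k})$ are invariant under common shifts of the $t_j$'s, yielding strict (or at minimum weak) stationarity of $\{\Xu{t}\}_{t\in\znum}$ as required by Definition \ref{FLS}.

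The core estimate is obtained from the representation
\[
\XT{t}-\Xu{t} = \int_{-\pi}^{\pi} e^{\im\omega t}\,\bigl(\AAtT{\omega}-\AAu{\omega}\bigr)\,dZ_{\omega}.
\]
Using the isometry-type inequality for the stochastic integral with respect to $Z$ (which, since $\veps_t$ is white noise, reduces the second moment of the integral to a Bochner norm of the integrand up to a multiplicative constant depending on $\snorm{\CCT^{\veps}_0}_1$), one obtains
\[
\mean\bignorm{\XT{t}-\Xu{t}}_{\Hspace}^{2}
\;\leq\; C\int_{-\pi}^{\pi}\bigsnorm{\AAtT{\omega}-\AAu{\omega}}_p^{2}\,d\omega
\;\leq\; 2\pi C\,\sup_{\omega}\bigsnorm{\AAtT{\omega}-\AAu{\omega}}_p^{2}.
\]
The integrand is then split as $\AAtT{\omega}-\AAu{\omega} = \bigl(\AAtT{\omega}-\AAuu{t/T}{\omega}\bigr) + \bigl(\AAuu{t/T}{\omega}-\AAu{\omega}\bigr)$; the first difference is of order $1/T$ uniformly in $t,\omega$ by (A2), while the second is controlled by the regularity of $u\mapsto\AAu{\omega}$ uniformly in $\omega$, producing an $O(|t/T-u|)$ bound. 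Together these give
\[
\mean\bignorm{\XT{t}-\Xu{t}}_{\Hspace}^{2}
\;\leq\; \tilde C\,\bigl(|t/T - u| + 1/T\bigr)^{2}.
\]

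It remains to recast this in the form demanded by Definition \ref{FLS}. Set
\[
P^{(u)}_{t,T} \;:=\; \frac{\bignorm{\XT{t}-\Xu{t}}_{2}}{|t/T-u|+1/T},
\]
so that the almost sure bound holds trivially with equality. By the moment bound just established and Jensen's inequality with $\rho=1$,
\[
\mean\bigl(P^{(u)}_{t,T}\bigr)
\;\leq\; \frac{\sqrt{\mean\bignorm{\XT{t}-\Xu{t}}_{2}^{2}}}{|t/T-u|+1/T}
\;\leq\; \sqrt{\tilde C},
\]
uniformly in $u\in[0,1]$ and in $t,T$, which verifies the moment condition in Definition \ref{FLS}.

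The main obstacle is the second-moment estimate for the stochastic integral: it is straightforward in the Hilbert–Schmidt setting ($p=2$), where one has a Plancherel-type identity, but in the operator-norm setting ($p=\infty$) — which is needed to cover functional AR processes — one must invoke the more delicate construction of $\int A_\omega\,dZ_\omega$ for $A\in\mathcal{B}_\infty$ developed in Appendix \ref{proofStochInt}, and verify that the resulting norm inequality still yields a $O(\snorm{A}_\infty^2)$ control of the second moment of the integral.
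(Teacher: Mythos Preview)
Your proposal is correct and follows essentially the same route as the paper: define $\Xu{t}=\int e^{\im\omega t}\AAu{\omega}\,dZ_\omega$, control $\XT{t}-\Xu{t}$ via the $\mathcal{B}_\infty$ stochastic-integral inequality from Appendix~\ref{proofStochInt}, and split $\AAtT{\omega}-\AAu{\omega}$ into the $(\AAtT{\omega}-\AAuu{t/T}{\omega})$ piece handled by (A2) and the $(\AAuu{t/T}{\omega}-\AAu{\omega})$ piece handled by regularity in $u$. The only cosmetic differences are that the paper factors out $c=\sup_\omega\snorm{\AAtT{\omega}-\AAu{\omega}}_\infty$ before defining $P^{(u)}_{t,T}$ and verifies the moment condition with $\rho=2$ directly, whereas you divide by $|t/T-u|+1/T$ and pass through Jensen to get $\rho=1$; your second-moment bound in fact already gives $\mean\big(P^{(u)}_{t,T}\big)^2\leq\tilde C$, so the Jensen step is unnecessary and you may take $\rho=2$ as the paper does.
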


\begin{proof}
%[Proof of Proposition \ref{FTVspec}]
For $u\in[0,1]$, we define the approximating stationary functional process $\{\Xu{t}\}_{t\in\znum}$ by
\[
\Xu{t}=\int_{-\pi}^{\pi} e^{\im\omega t}\,\AAu{\omega}\,dZ_{\omega}.
\]
Then we have
\[
\bignorm{\XT{t}-\Xu{t}}_{2}
=\Bignorm{\int_{-\pi}^{\pi} e^{\im\omega t}\,
\big(\AAtT{\omega}-\AAu{\omega}\big)\,dZ_{\omega}}_2
\leq c\,P^{(u)}_{t,T}
\]
with
\begin{align*}
c &= \sup_{\omega}\bigsnorm{\AAtT{\omega}-\AAu{\omega}}_\infty\\
&\leq\sup_{\omega}\bigsnorm{\AAtT{\omega}-\AAuu{t/T}{\omega}}_\infty
   +\sup_{\omega}\bigsnorm{\AAuu{t/T}{\omega}-\AAu{\omega}}_\infty
=O\big(\tfrac{t}{T}+\big|\tfrac{t}{T}-u\big|\big)
\end{align*}
and
\[
P^{(u)}_{t,T}
=\frac{1}{c}\,\biggnorm{\int_{-\pi}^{\pi}e^{\im\omega t}\,
\big(\AAtT{\omega}-\AAu{\omega}\big)\,dZ_{\omega}}_2.
\]
Since
\[
\mean|P^{(u)}_{t,T}|^2
\leq\frac{1}{c^2}\,\int_{-\pi}^{\pi}
\bigsnorm{\AAtT{\omega}-\AAu{\omega}}^2_\infty\,
\bigsnorm{\mathcal{F}_\omega}_1\,d\omega
\leq\int_{-\pi}^{\pi}\bigsnorm{\mathcal{F}_\omega}_1\,d\omega,
\]
the process satisfies the conditions of Definition \ref{FLS} with $\rho=2$. 
\end{proof}
As in the time series setting, we need the existence of a {\em transfer operator} $\AAu{\omega}$ that is continuous in $u\in[0,1]$ to guarantee locally an approximately stationary behavior without sudden changes. In order to include interesting cases such as autoregressive processes for which a time-varying functional spectral representation with a common continuous transfer operator $\AAu{\omega}$ does not exist, we require that such a representation only holds approximately by condition (A2). 

The previous result leads us to consider time-varying processes of the form 
\begin{equation}
\label{eq:LP} 
\XT{t}=\lsum_{s \in \mathbb{Z}}\AtT{s} \varepsilon_{t-s},
\end{equation}
where $\{\veps_s\}_{s\in\znum}$ is a weakly stationary functional white noise process in $H$ and $\{\AtT{s}\}_{s \in\znum}$ are sequences of linear operators for $t=1,\ldots,T$ and $T\in\nnum$. The following result states the conditions under which such a process satisfies condition (A1).

\begin{proposition}
\label{tvMAbochnerinf}
Suppose that $\{\veps_t\}_{t \in \znum}$ satisfies assumption (A1) and, for $p=2$ or $p=\infty$, let $\{\AtT{s}\}_{s\in\znum}$ be a sequence of operators in $S_{p}(H)$ satisfying $\sum_{s}\snorm{\AtT{s}}_{p}<\infty$ for all $t=1,\ldots,T$ and $T \in\nnum$. Then the process
\begin{equation}
\XT{t}=\lsum_{s\in\znum}\AtT{s}\,\veps_{t-s}
\end{equation}
satisfies assumption (A1) with $\AAtT{\omega}\in \mathcal{B}_{p}$.
\end{proposition}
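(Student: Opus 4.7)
The plan is to define the transfer operator $\AAtT{\omega}$ as the operator-valued Fourier series of the coefficient sequence $\{\AtT{s}\}_{s\in\znum}$, verify that it lies in $\mathcal{B}_p$, and then identify $\XT{t}$ with $\int_{-\pi}^{\pi} e^{\im\omega t}\,\AAtT{\omega}\,dZ_{\omega}$ by approximating both sides with partial sums and invoking the isometric inequality for stochastic integrals of operator-valued functions developed in Appendix \ref{proofStochInt}.

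First I set $\AAtT{\omega} := \lsum_{s\in\znum}\AtT{s}\,e^{-\im\omega s}$ and check convergence in $S_p(H_\cnum)$. Since $\snorm{\AtT{s}\,e^{-\im\omega s}}_p = \snorm{\AtT{s}}_p$, the partial sums $\lsum_{|s|\leq N}\AtT{s}\,e^{-\im\omega s}$ form a Cauchy sequence in the Banach space $S_p(H_\cnum)$ by the hypothesis $\lsum_{s}\snorm{\AtT{s}}_p<\infty$, and the limit satisfies the uniform bound $\snorm{\AAtT{\omega}}_p \leq M_t := \lsum_{s}\snorm{\AtT{s}}_p<\infty$ for every $\omega\in[-\pi,\pi]$. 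Because $\{\veps_t\}$ is weakly stationary white noise with $\mean\norm{\veps_0}_2^2<\infty$, its covariance operator $\mathcal{C}^{\veps}_0$ is trace-class and the spectral density $\mathcal{F}^{\veps}_{\omega} = \frac{1}{2\pi}\mathcal{C}^{\veps}_0$ is constant, so the measure $\mu$ is finite on $[-\pi,\pi]$. The same argument in fact yields convergence of the partial sums to $\AAtT{\cdot}$ in $\mathcal{B}_p$-norm,
\[
\Bignorm{\AAtT{\cdot}-\lsum_{|s|\leq N}\AtT{s}\,e^{-\im\cdot s}}_{\mathcal{B}_p}^{2}
\leq \Big(\lsum_{|s|>N}\snorm{\AtT{s}}_p\Big)^{2}\mu([-\pi,\pi])\longrightarrow 0,
\]
which in particular gives $\AAtT{\cdot}\in\mathcal{B}_p$.

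Second I establish the spectral representation by matching partial sums from both sides. Let $\XNT{t}:=\lsum_{|s|\leq N}\AtT{s}\,\veps_{t-s}$. Using $\snorm{\AtT{s}}_{\infty}\leq\snorm{\AtT{s}}_p$ and $\|\veps_{t-s}\|_{\Hspace}=\snorm{\mathcal{C}^{\veps}_0}_{1}^{1/2}$, Minkowski's inequality in $\Hspace$ yields $\bignorm{\XT{t}-\XNT{t}}_{\Hspace} \leq \snorm{\mathcal{C}^{\veps}_0}_{1}^{1/2}\lsum_{|s|>N}\snorm{\AtT{s}}_{p}\to 0$. On the other hand, applying the bounded operator $\AtT{s}$ to $\veps_{t-s}=\int e^{\im\omega(t-s)}dZ_{\omega}$ and using the linearity of the stochastic integral in its operator-valued integrand provided by Appendix \ref{proofStochInt} gives
\[
\XNT{t} = \int_{-\pi}^{\pi} e^{\im\omega t}\,\Big(\lsum_{|s|\leq N}\AtT{s}\,e^{-\im\omega s}\Big)\,dZ_{\omega}.
\]
The isometric-type bound used already in the proof of Proposition \ref{FTVspec} then controls
\[
\Bignorm{\int_{-\pi}^{\pi} e^{\im\omega t}\,\Big(\AAtT{\omega}-\lsum_{|s|\leq N}\AtT{s}\,e^{-\im\omega s}\Big)\,dZ_{\omega}}_{\Hspace}^{2}
\leq \Bignorm{\AAtT{\cdot}-\lsum_{|s|\leq N}\AtT{s}\,e^{-\im\cdot s}}_{\mathcal{B}_p}^{2},
\]
which tends to $0$ by the first step. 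Uniqueness of limits in $\Hspace$ therefore identifies $\XT{t}=\int_{-\pi}^{\pi}e^{\im\omega t}\,\AAtT{\omega}\,dZ_{\omega}$ almost everywhere in $\Hspace$, which is (A1)(ii).

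The main obstacle is the rigorous justification of the operator-valued stochastic integral: both the linearity identity $\AtT{s}\int e^{\im\omega(t-s)}dZ_{\omega}=\int e^{\im\omega(t-s)}\AtT{s}\,dZ_{\omega}$ and the isometric bound in the Bochner norm must be available for integrands in $\mathcal{B}_{\infty}$, where the Hilbert-Schmidt structure underlying the classical $S_2$ isometry is absent. This extension from $S_2(H_\cnum)$-valued to $S_\infty(H_\cnum)$-valued transfer operators is what is carried out in Appendix \ref{proofStochInt} and is precisely what is needed to cover autoregressive-type functional processes.
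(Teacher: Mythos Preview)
Your proof is correct and follows essentially the same line as the paper's. Both arguments hinge on the same two ingredients from Appendix \ref{proofStochInt}: the identification $\AtT{s}\,\veps_{t-s}=\int e^{\im\omega(t-s)}\AtT{s}\,dZ_{\omega}$ and the continuity bound $\|\mathcal{T}(U)\|_{\Hspace}\leq\|U\|_{\mathcal{B}_{\infty}}$, then pass partial sums through the integral via this continuity; the paper simply packages your two limiting steps (convergence of $\XNT{t}$ in $\Hspace$ and of the integrand in $\mathcal{B}_p$) into the single statement $\sum_s\mathcal{T}(U_{s,\cdot})=\mathcal{T}(\sum_s U_{s,\cdot})$.
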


The proof is relegated to section \ref{proofssection2} of the Appendix. For $p=2$, the proposition yields a time-varying version of the corresponding result of \citet{Panar2013b}. The more general case $p=\infty$ also includes linear models introduced by \citet{Bosq} and \citet{Hormann2010} as well as the important class of time-varying functional autoregressive processes, which we discuss in detail in the next section.

\textcolor{black}{\begin{remark}
For processes of the form \eqref{eq:LP}, we can alternatively verify Definition \ref{FLS} in the time domain provided we impose some regularity conditions on the decay of the sequence of filter operators $\{\AtT{s}\}_{s\in\znum}$. For example, sufficient conditions for (A1)-(A2) to be satisfied would be to assume that there exists a positive monotonically decreasing sequence  $\{{\ell(s)}\}_{s \in \mathbb{Z}}$ that satisfies $\sum_{s \in \mathbb{Z}} |s|\ell(s)< \infty$ such that $\sup_{t,T}\snorm{\AtT{s}}_{p} < K \ell(s)$ and that there exists a sequence $\{A_{s,u}\}_{s \in \mathbb{N}}$ that satisfies
\[\sup_{t,T}\snorm{\AtT{s}-A_{s,u}}_{p} \le \frac{K \ell(s)}{T},\]
for some constant $K$ independent of $T$. The local asymptotic theory derived later in this paper relies however on additional smoothness conditions of the approximate transfer operators such as condition (A4). These could then be replaced by $\sup_{u}\snorm{\frac{\partial^2}{\partial u^2} A_{s,u}}_{p} <K \ell(s)$, where $\frac{\partial^2}{\partial u^2} A_{s,u}$ denotes the second-order derivative of the function $u \mapsto A_{s,u}$. Depending on the application, different conditions could be considered. The investigation of necessary restrictions on the time domain filter operators are beyond the scope of this paper and are left for future work. 
\end{remark}}
% !TEX spellcheck = en_US
\section{Locally stationary functional autoregressive processes}\label{section4}

Due to its flexibility as well as its simplicity, functional autoregressive processes have been found useful in numerous applications such as economics and medicine, especially for prediction purposes \citep[see e.g.,][for early work]{Damon1982,Besse1986,Antoniadis2003}. Despite of being linear in the function space, the filter operators act on a Hilbert space of which the elements can still exhibit arbitrary degrees of nonlinearity and can therefore be seen to be highly nonlinear in terms of scalar records. Most estimation techniques are however still based on the assumption of \IID functional errors. This assumption has been relaxed by \citet{Bosq}, where the assumption of independence of the errors of the causal solution is relaxed to uncorrelatedness in an appropriate  sense, and by \citet{Hormann2010} for functional AR(1) processes within the framework of $L^p$-$m$-approximability.
 
In this section, we introduce a class of time-varying functional autoregressive processes for which inference and forecasting methods can be developed in a meaningful way. More specifically, we will show that time-varying functional autoregressive processes as well as the more general time-varying functional ARMA processes are locally stationary and that stationary functional ARMA($m$,$n$) processes are a special case. For this we first need to establish that a causal solution exists for time-varying functional AR($m$) processes. This is done in the theorem stated below. 

\begin{theorem} \label{Sol}
Let $\{ \varepsilon_t\}_{t \in \mathbb{Z}}$ be a white noise process in $H$ and let $\{\XT{t}\}$ be a sequence of time-varying functional AR($m$) given by
\begin{equation}
\label{implicitFAR}
\XT{t} = \lsum_{j=1}^{m} B_{\frac{t}{T},j}(\XT{t-j})+\varepsilon_{t}
\end{equation}
with $B_{u,j}=B_{0,j}$ for $u<0$ and $B_{u,j}=B_{1,j}$ for $u>1$. Furthermore, suppose that
\begin{romanlist}
\item
the operators $B_{u,j}$ are continuous in $u \in [0,1] $ for all $j= 1,\ldots,m$; 
\item
for all $u \in [0,1]$, the operators satisfy
$\sum_{j=1}^{m}\snorm{B_{u,j} }_{\infty} < 1$. 
\end{romanlist}
Then \eqref{implicitFAR} has a unique causal solution of the form
\begin{align} \label{eq:MArep}
\XT{t} = \lsum_{l=0}^{\infty} \AtT{l}(\varepsilon_{t-l})
\end{align}
for all $t\in\nnum$ with $\sup_{t,T}\sum_{l=0}^{\infty}\bigsnorm{ \AtT{l}}_{\infty}<\infty$.
\end{theorem}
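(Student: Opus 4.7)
The plan is to construct the MA($\infty$) representation directly by formal substitution, establish the required uniform summability of its coefficients via a renewal-type inequality, and then verify that the resulting series gives a well-defined causal solution.

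As a preliminary, I would upgrade assumption (ii) to a uniform strict bound: operator-norm continuity of $u\mapsto B_{u,j}$ on the compact interval $[0,1]$ (with the prescribed constant extension outside) makes $u\mapsto \sum_{j=1}^m \snorm{B_{u,j}}_\infty$ continuous and hence attaining its maximum; by (ii) this maximum is strictly less than $1$, giving
\[
\rho:=\sup_{u}\,\lsum_{j=1}^m \snorm{B_{u,j}}_\infty<1.
\]
Substituting the ansatz $\XT{t}=\sum_{l\ge 0}\AtT{l}\,\varepsilon_{t-l}$ into \eqref{implicitFAR} and matching coefficients of $\varepsilon_{t-l}$ then dictates $\AtT{0}=I_H$ and, for $l\ge 1$, the recursion $\AtT{l}=\sum_{j=1}^{\min(m,l)} B_{t/T,j}\circ A^{(T)}_{t-j,\,l-j}$, which is well-defined in $S_\infty(H)$ by boundedness of each $B_{u,j}$.

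The crucial step is the uniform bound $\sup_{t,T}\sum_{l\ge 0}\snorm{\AtT{l}}_\infty\le (1-\rho)^{-1}$. Writing $\bar a_l:=\sup_{t,T}\snorm{\AtT{l}}_\infty$ and $\bar b_j:=\sup_u\snorm{B_{u,j}}_\infty$, submultiplicativity of $\snorm{\cdot}_\infty$ applied to the recursion gives $\bar a_l\le \sum_{j=1}^{\min(m,l)}\bar b_j\,\bar a_{l-j}$; summing from $0$ to $L$ and swapping the order of summation yields $S_L:=\sum_{l=0}^{L}\bar a_l\le 1+\sum_{j=1}^m \bar b_j\, S_{L-j}\le 1+\rho\,S_L$, so $S_L\le (1-\rho)^{-1}$ uniformly in $L$, which upon letting $L\to\infty$ gives the claimed bound.

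With this in hand, I would define $\XT{t}:=\sum_{l\ge 0}\AtT{l}\,\varepsilon_{t-l}$; the white-noise assumption combined with $\norm{Af}_2\le\snorm{A}_\infty\norm{f}_2$ gives $\E\bignorm{\sum_{l=L_1}^{L_2}\AtT{l}\varepsilon_{t-l}}_2^2\le \E\norm{\varepsilon_0}_2^2\sum_{l=L_1}^{L_2}\snorm{\AtT{l}}_\infty^2$, and since $\sum\snorm{\AtT{l}}_\infty<\infty$ forces $\sum\snorm{\AtT{l}}_\infty^2<\infty$, the partial sums are Cauchy in $\Hspace$. A direct substitution confirms that $\XT{t}$ satisfies \eqref{implicitFAR}, the recursion having been arranged precisely for this. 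For uniqueness, if $\tilde X^{(T)}_t$ is another causal solution, the difference $D_t$ satisfies the homogeneous AR equation, and expressing $D_t$ as a causal linear functional of $\{\varepsilon_s:s\le t\}$ its coefficients must satisfy the same recursion with zero initial term, propagating to force $D_t\equiv 0$ in $\Hspace$. The main obstacle is the uniform summability step: the strict inequality in (ii) together with continuity is essential in passing from a pointwise bound to $\rho<1$, since otherwise the renewal inequality $S_L\le 1+\rho_L S_L$ would only provide $\rho_L\to 1$ and no finite geometric control.
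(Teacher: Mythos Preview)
Your overall strategy—direct recursion on the MA coefficients combined with a renewal-type summation inequality—is sound and more elementary than the paper's route, but there is one genuine slip in the key step. You set $\rho=\sup_u\sum_j\snorm{B_{u,j}}_\infty$ and $\bar b_j=\sup_u\snorm{B_{u,j}}_\infty$, and then pass from $\sum_j\bar b_j\,S_{L-j}$ to $\rho\,S_L$. This fails in general: one always has $\sum_j\bar b_j\ge\rho$, with strict inequality whenever the individual suprema are attained at different values of $u$ (take $m=2$ with $\snorm{B_{u,1}}_\infty$ concentrated near $u=0$ and $\snorm{B_{u,2}}_\infty$ near $u=1$). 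The fix is simply not to decouple the $B$-norms from their common time index before summing. Keep the $t$-dependence through the summation,
\[
\sum_{l=0}^{L}\snorm{A^{(T)}_{t,l}}_\infty
\le 1+\sum_{j=1}^m\snorm{B_{t/T,j}}_\infty\sum_{k=0}^{L-j}\snorm{A^{(T)}_{t-j,k}}_\infty
\le 1+\Big(\sum_{j=1}^m\snorm{B_{t/T,j}}_\infty\Big)S_L
\le 1+\rho\,S_L,
\]
with $S_L:=\sup_{t,T}\sum_{l=0}^{L}\snorm{A^{(T)}_{t,l}}_\infty$, and only then take the supremum over $t,T$ on the left to obtain $S_L\le(1-\rho)^{-1}$. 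With this correction your argument goes through; the remaining steps (convergence in $\Hspace$, verification, uniqueness) are routine.

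The paper takes a genuinely different path: it rewrites the system in companion form $\boldsymbol{X}^*_{t,T}=\boldsymbol{B}^*_{t/T}\boldsymbol{X}^*_{t-1,T}+\boldsymbol{\varepsilon}^*_t$ on $H^m$, shows via a spectral-radius argument and a $C^*$-algebra similarity lemma (Murphy) that $\bigsnorm{\prod_{s=0}^{l-1}\boldsymbol{B}^*_{(t-s)/T}}_\infty\le c\rho^l$ uniformly in $t,T$, using continuity together with a finite open cover of $[0,1]$. This delivers geometric decay of the individual coefficients rather than mere summability, and—more importantly—extends to the weaker hypothesis $\snorm{(\boldsymbol{B}^*_u)^{k_0}}_\infty<1$ for some $k_0\ge 1$, which your renewal inequality cannot reach since it relies directly on $\rho<1$. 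Your route is shorter and self-contained for the stated assumption; the paper's buys a sharper pointwise bound and covers a broader class of coefficient operators.
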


In order to prove the theorem, note that we can represent the functional AR($m$) process in state space form
\begin{equation} \label{eq:blockfar}
 \underbrace{\begin{pmatrix}
  \XT{t}  \\
  \XT{t-1}  \\
  \vdots \\
  \XT{t-m+1}
 \end{pmatrix}}_{\boldsymbol{X}^{*}_{t,T}}=
 \underbrace{\begin{pmatrix}
  B_{\frac{t}{T},1} & B_{\frac{t}{T},2}  & \cdots & B_{\frac{t}{T},m}  \\
 I_H & & &  O_H \\
&  \ddots & & \vdots  \\
& & I_H  & O_H
 \end{pmatrix}}_{\boldsymbol{B}^{*}_{\frac{t}{T}}} \underbrace{\begin{pmatrix}
  \XT{t-1}  \\
  \XT{t-2}  \\
  \vdots \\
  \XT{t-m}
 \end{pmatrix}}_{\boldsymbol{X}^{*}_{t-1,T}}+ \underbrace{\begin{pmatrix}
  \varepsilon_{t}  \\
  O_H \\
\vdots \\
  O_H
 \end{pmatrix}}_{\boldsymbol{\varepsilon}^{*}_t}.
\end{equation}
Here, $\boldsymbol{X}^{*}_{t,T}$ is a $m$-dimensional random vector taking values in the Hilbert space $H^m$ with inner product $\innerprod{x}{y}=\sum_{i=1}^{m}\innerprod{x_i}{y_i}_H$. Furthermore, $\boldsymbol{B}^{*}_{u}$ denotes a matrix of operators and thus is itself an operator on $H^m$. Consequently, we can write the functional AR($m$) process more compactly as
\[
\boldsymbol{X}^{*}_{t,T}=\boldsymbol{B}_{\frac{t}{T}}^{*}(\boldsymbol{X}^{*}_{t-1,T})+\boldsymbol{\varepsilon}^{*}_t
\]
with $\boldsymbol{\varepsilon}^{*}_t \in L^2_{H^m}(\Omega,\prob)$. 

\begin{proof}[Proof of Theorem \ref{Sol}] 
 In order to show that a causal solution exists in the locally stationary setting, we require the following result which is proved in Appendix  \ref{proofsection3}.
\begin{lemma}
\label{oplem}
For $u \in [0,1]$, the assumption $\sum_{j=1}^{m}\snorm{B_{u,j} }_{\infty} < 1$ implies that the operator $\boldsymbol{B}^{*}_{u}$ satisfies $\snorm{{\boldsymbol{B}^{*}_u}^{k_o} }_{\infty} < 1$  for some $k_o \ge 1, k_o \in \mathbb{Z}$.
\end{lemma}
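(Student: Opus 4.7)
The natural route is via the spectral radius formula. Recall that for any bounded operator $A$ on a Banach space, the Gelfand formula gives $r(A)=\lim_{k\to\infty}\snorm{A^k}_\infty^{1/k}$, where $r(A)$ denotes the spectral radius. Hence it suffices to establish $r(\boldsymbol{B}^{*}_u)<1$: once this is shown, the definition of the limit yields some $k_o\in\nnum$ with $\snorm{(\boldsymbol{B}^{*}_u)^{k_o}}_\infty^{1/k_o}<1$, which is the desired conclusion.

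To bound the spectral radius, I would characterize the spectrum of $\boldsymbol{B}^{*}_u$ in the familiar companion-matrix fashion. For $z\in\cnum$, solving $(zI_{H^m}-\boldsymbol{B}^{*}_u)\mathbf{x}=\mathbf{y}$ block by block, the last $m-1$ equations allow one to eliminate $x_1,\ldots,x_{m-1}$ in terms of $x_m$ and the $y_j$'s, and substitution into the first block equation reduces invertibility of $zI_{H^m}-\boldsymbol{B}^{*}_u$ to invertibility on $H$ of the operator polynomial
\[
P(z)=z^m I_H-\lsum_{j=1}^{m}z^{m-j}B_{u,j}.
\]
For $z\neq 0$ we factor $P(z)=z^m\bigl(I_H-\sum_{j=1}^{m}z^{-j}B_{u,j}\bigr)$, and the parenthesised operator is invertible by a Neumann series whenever $\snorm{\sum_{j=1}^{m}z^{-j}B_{u,j}}_\infty<1$.

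Now the hypothesis $\sum_{j=1}^m\snorm{B_{u,j}}_\infty<1$ enters: for any $z$ with $|z|\ge 1$,
\[
\bigsnorm{\lsum_{j=1}^{m}z^{-j}B_{u,j}}_\infty
\le \lsum_{j=1}^{m}|z|^{-j}\snorm{B_{u,j}}_\infty
\le \lsum_{j=1}^{m}\snorm{B_{u,j}}_\infty<1,
\]
so $P(z)$ and hence $zI_{H^m}-\boldsymbol{B}^{*}_u$ is invertible there. Consequently the spectrum $\sigma(\boldsymbol{B}^{*}_u)$ is contained in the open unit disc $\{|z|<1\}$. Since $\sigma(\boldsymbol{B}^{*}_u)$ is a compact subset of $\cnum$ (a standard property for bounded operators), it is contained in a closed disc of radius strictly less than $1$, so $r(\boldsymbol{B}^{*}_u)<1$, and the spectral radius formula finishes the argument.

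The only delicate point is the companion-matrix inversion in the operator setting; the computation is routine but must be carried out carefully to ensure that the constructed inverse is bounded on $H^m$. Everything else reduces to standard Neumann-series estimates together with Gelfand's formula.
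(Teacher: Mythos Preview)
Your proposal is correct and follows essentially the same route as the paper's proof: reduce invertibility of $zI_{H^m}-\boldsymbol{B}^*_u$ to that of the operator polynomial $P(z)=z^mI_H-\sum_{j=1}^m z^{m-j}B_{u,j}$, use a Neumann-series bound for $|z|\ge 1$ to place the spectrum inside the open unit disc, and finish via Gelfand's spectral radius formula. The paper makes the companion reduction explicit through an algebraic factorization $M_u(\lambda)(\lambda I_{H^m}-\boldsymbol{B}^*_u)U_u(\lambda)=\mathrm{diag}(I_{H^{m-1}},\tilde P_u(\lambda))$ taken from Bosq rather than your block-elimination sketch, but the argument is the same.
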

We note that this is a weaker assumption than $\snorm{ \boldsymbol{B}^{*}_{u}}_{\infty} <1$. Although $\snorm{ \boldsymbol{B}^{*k_0}_{u}}_{\infty} <1$ is usually stated as the condition for a causal solution in the stationary case, the condition $\sum_{j=1}^{m}\snorm{B_{u,j} }_{\infty} < 1$ is easier to check in practice. As a consequence of this lemma, the assumption $\sum_{j=1}^{m}\snorm{B_{u,j} }_{\infty} < 1$ for all $u\in[0,1]$ implies that the spectral radius of $\boldsymbol{B}^{*}_{u}$ satisfies \begin{equation}
\label{eq:srpaper}
r(\boldsymbol{B}^{*}_{u})
=\sup_{\lambda\in S_u}|\lambda|
= \lim_{k \to\infty}\bigsnorm{\boldsymbol{B}^{*k}_{u}}^{1/k}_{\infty}
< \frac{1}{1+\delta}
\end{equation}
for some $\delta >0$. Observe then that by recursive substitution
\begin{align*}
\boldsymbol{X}^{*}_{t,T} = \sum_{l=0}^{\infty} \Big(\lprod_{s=0}^{l-1}\boldsymbol{B}^{*}_{\frac{t-s}{T}}\Big)\boldsymbol{\varepsilon}^{*}_{t-l}.
\end{align*}
From \eqref{eq:blockfar}, this implies a solution is given by 
\begin{align} \label{eq:solFAR}
\XT{t}& = \sum_{l=0}^{\infty} \Big[\lprod_{s=0}^{l-1} \boldsymbol{B}^{*}_{\frac{t-s}{T}}\Big]_{1,1}(\varepsilon_{t-l}),
\end{align}
where $[\cdot]_{1,1}$ refers to the upper left block element of the corresponding block matrix of operators. In order to prove the theorem we shall proceed in a similar manner as \citet{Kunsch} and derive that 
\begin{align*} 
\underset{t,T}{\text{sup}} \Bigsnorm{ \Big[\prod_{s=0}^{l-1} \boldsymbol{B}^{*}_{\frac{t-s}{T}}\big]_{1,1}}_{\infty} < c \rho^l
\end{align*}
for some constant $c$ and $\rho <1$. For this, we require the following lemma.
\begin{lemma} \label{murphy}
Let $B(H)$ be the algebra of bounded linear operators on a Hilbert space. Then for each $A \in B(H)$ and each $\varepsilon >0 $, there exists an invertible element $M$ of $B(H)$ such that $r(A) \le \snorm{MAM^{-1}}_{\infty} \le r(A)+ \varepsilon$. 
\end{lemma}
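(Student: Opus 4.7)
The lemma is a classical result on spectral radius (appearing, e.g., in Murphy's C*-algebra text), and the natural route is to construct $M$ as a similarity that implements an equivalent inner product under which $A$ nearly attains its spectral radius as operator norm.

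First I would dispose of the lower bound: conjugation by an invertible element preserves the spectrum, hence $r(MAM^{-1})=r(A)$, and since the spectral radius is always dominated by the operator norm we get $r(A)\le \snorm{MAM^{-1}}_\infty$ for \emph{any} invertible $M$. All the work therefore sits in the upper bound.

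For the upper bound, fix $\varepsilon>0$ and set $\rho=r(A)+\varepsilon$ and $\rho'=r(A)+\varepsilon/2$. By the Gelfand spectral radius formula $r(A)=\lim_k \snorm{A^k}_\infty^{1/k}$, so there is a constant $C<\infty$ with $\snorm{A^k}_\infty\le C(\rho')^k$ for all $k\ge 0$. The plan is to introduce on $H$ the new sesquilinear form
\[
\langle x,y\rangle_\rho
=\lsum_{k=0}^{\infty}\rho^{-2k}\,\langle A^k x, A^k y\rangle_H ,
\]
noting that, since $\rho'<\rho$, the series is dominated by $C^2\snorm{x}_H\snorm{y}_H\sum_k(\rho'/\rho)^{2k}<\infty$. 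The $k=0$ term gives $\langle x,x\rangle_\rho\ge\snorm{x}_H^2$, so $\langle\cdot,\cdot\rangle_\rho$ is an inner product producing a norm equivalent to the original one and hence turning $H$ into a Hilbert space isomorphic to itself. A direct index shift shows
\[
\langle Ax,Ax\rangle_\rho
=\lsum_{k=0}^\infty\rho^{-2k}\snorm{A^{k+1}x}_H^2
=\rho^2\lsum_{k=1}^\infty\rho^{-2k}\snorm{A^kx}_H^2
\le\rho^2\langle x,x\rangle_\rho,
\]
so the operator norm of $A$ with respect to $\langle\cdot,\cdot\rangle_\rho$ is at most $\rho=r(A)+\varepsilon$.

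It remains to realize the new inner product by conjugation. By the Riesz representation theorem applied to the bounded form $\langle\cdot,\cdot\rangle_\rho$, there is a bounded, self-adjoint, strictly positive operator $P$ on $H$ (with bounded inverse, by the norm equivalence) such that $\langle x,y\rangle_\rho=\langle Px,y\rangle_H$. Setting $M=P^{1/2}$, which is invertible in $B(H)$, gives $\langle x,y\rangle_\rho=\langle Mx,My\rangle_H$, so the map $x\mapsto Mx$ is an isometric isomorphism from $(H,\langle\cdot,\cdot\rangle_\rho)$ to $(H,\langle\cdot,\cdot\rangle_H)$. Consequently
\[
\snorm{MAM^{-1}}_\infty
=\sup_{\snorm{y}_H=1}\snorm{MAM^{-1}y}_H
=\sup_{\langle x,x\rangle_\rho=1}\langle Ax,Ax\rangle_\rho^{1/2}
\le\rho=r(A)+\varepsilon,
\]
which closes the argument.

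The only delicate point is verifying that the sum defining $\langle\cdot,\cdot\rangle_\rho$ converges and yields an \emph{equivalent} norm; this is where I would be most careful, since the argument hinges on the strict inequality $\rho'<\rho$ that allows the geometric series bound. Everything else (similarity-invariance of the spectrum, Riesz representation, taking $P^{1/2}$) is routine Hilbert-space machinery.
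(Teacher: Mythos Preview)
Your proof is correct. The paper does not actually prove this lemma: it simply remarks that $B(H)$ is a unital $C^*$-algebra and invokes the corresponding result from Murphy's \emph{$C^*$-algebras and Operator Theory}, p.~74. What you have written is precisely the standard argument behind that citation---renorming $H$ via the weighted series $\sum_k \rho^{-2k}\langle A^kx,A^ky\rangle$, checking equivalence of norms through Gelfand's formula, and realizing the similarity by the positive square root of the Riesz operator. So your approach is not so much different from the paper's as it is a fleshing-out of the reference the paper defers to; the gain is self-containment, at no real cost.
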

Since $B(H)$ forms a unital $C^{*}$-algebra, this lemma is a direct consequence of a result in \citet{Murphy}[p.74].
Lemma \ref{murphy} together with (\ref{eq:srpaper}) imply we can specify for fixed $u$ a new operator $M(u) \in B(H)$ such that 
\begin{equation*}
 \snorm{M(u) \boldsymbol{B}^{*}_{u} M^{-1}(u)}_{\infty} < \frac{1}{1+\delta/2}.
\end{equation*} 
Because of the continuity of the autoregressive operators in $u$, we have that for all $u \in [0,1]$, there exists a neighborhood $\mathcal{V}(u)$ such that
\begin{equation*}
 \snorm{M(u) \boldsymbol{B}^{*}_{v} M^{-1}(u)}_{\infty} < \frac{1}{1+\delta/3} < 1 \quad \text{for} \quad v \in \mathcal{V}(u).
\end{equation*}
Define now the finite union $\lcup^{r}_{i = 1} \mathcal{V}(u_i)$ with $\mathcal{V}(u_i) \cap \mathcal{V}(u_l) = \varnothing$ for $i \neq l$. Due to compactness and the fact that $\boldsymbol{B}^{*}_{u} =\boldsymbol{B}^{*}_0$ for $u \le 0$ this union forms a cover of $(-\infty, 1]$. The preceding then implies that there exists a constant $c$ such that 
\begin{equation*}
\snorm{\boldsymbol{B}^{*}_v}_{\infty} \le c \snorm{M(u_i) \boldsymbol{B}^{*}_v M^{-1}(u_i)}_{\infty}, \quad i =1,\ldots,r.
\end{equation*}
Now, fix $t$ and $T$ and define the set $J_{i,l} = \{s \ge 0: \frac{t-s}{T} \in \mathcal{V}(u_i)\} \cap \{0,1,\ldots,l-1\}$. Then specify $\rho = \frac{1}{1+\delta/3}$ to obtain 
\begin{align*}
\Bigsnorm{ \big(\lprod_{s=0}^{l-1} \boldsymbol{B}^{*}_{\frac{t-s}{T}}\big)_{1,1} }_\infty
&\leq\Bigsnorm{\lprod_{s=0}^{l-1} \boldsymbol{B}^{*}_{\frac{t-s}{T}} }_\infty 
\le\lprod_{i =1}^{r} \Bigsnorm{ \lprod_{s \in J_{i,l}} \boldsymbol{B}^{*}_{\frac{t-s}{T}} }_\infty\\
& \le c^m \lprod_{i =1}^{r} \lprod_{s \in J_{i,l}}  \bigsnorm{M(u_i) \boldsymbol{B}^{*}_{\frac{t-s}{T}}M^{-1}(u_i) }_\infty\\
&\le c^r  \lprod_{i =1}^{r} \rho^{|J_{i,l}|} = c^r \rho^{l}, 
\end{align*}
which gives the result.
\end{proof}

\textcolor{black}{Theorem \ref{Sol} will be used to show that time-varying functional ARMA models for which a functional spectral representation exists satisfy conditions (A1) and (A2) and hence by Proposition \ref{FTVspec} are locally stationary. Before we can consider general time-varying functional ARMA models we first need the following result, which shows that for time-varying functional autoregressive processes there exists a common continuous transfer operator $\AAu{\omega}$ that satisfies condition (A2). }

\begin{theorem} \label{FAR} 
Let $\{\varepsilon_t\}_{t \in \znum}$ be a white noise process in $\Hspace$ and let $\{\XT{t}\}$ be a sequence of functional autoregressive processes given by 
\begin{align} \label{eq:DE}
\sum_{j=0}^{m} B_{\frac{t}{T},j}(\XT{t-j})= C_{\frac{t}{T}}(\varepsilon_{t})
\end{align}
with $B_{u,j} = B_{0,j}$, $C_{u} = C_{0}$ for $u<0$ and $B_{u,j} = B_{1, j}$, $C_{u} = C_{1}$ for $u>1$. If the process satisfies, for all $u\in[0,1]$ and $p=2$ or $p=\infty$, the conditions
\begin{romanlist}
\item
$C_u$ is an invertible element of $S_{\infty}(H)$;
\item
$B_{u,j} \in S_{p}(H)$  for $j=1,\ldots, m$ with $\sum_{j=1}^{m} \snorm{  B_{u,j}}_l< 1$ and $B_{u,0}=I_H$;
\item
the mappings $u \mapsto B_{u,j}$ for $j= 1,..,m$ and $u\mapsto C_u, $ are continuous in $u \in [0,1]$ and differentiable on $u \in (0,1)$ with bounded derivatives,
\end{romanlist}
then the process $\{\XT{t}\}$ satisfies (A2) with
\begin{align}\label{eq:smoothf}
\AAttT{\frac{t}{T}}{\omega} = \frac{1}{\sqrt{2 \pi}}\,\bigg( \sum_{j=0}^{m} e^{-\im\omega j}\,B_{\frac{t}{T},j}\bigg)^{-1}\,C_{\frac{t}{T}}
\end{align}
and thus is locally stationary.
\end{theorem}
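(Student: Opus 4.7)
The plan is to verify assumptions (A1)(ii) and (A2) for the smooth transfer operator given in \eqref{eq:smoothf}; local stationarity then follows from Proposition \ref{FTVspec}. Since $B_{u,0}=I_H$, equation \eqref{eq:DE} rearranges to
\[
\XT{t} = -\lsum_{j=1}^{m} B_{\frac{t}{T},j}(\XT{t-j}) + C_{\frac{t}{T}}(\varepsilon_t),
\]
which is structurally a time-varying functional AR($m$) with autoregressive operators $-B_{u,j}$ and inhomogeneous driver $C_u(\varepsilon_t)$. The state-space reduction used in the proof of Theorem \ref{Sol} applies verbatim with driving vector $(C_{t/T}(\varepsilon_t), O_H, \ldots, O_H)^{\top}$, and recursive substitution yields
\[
\XT{t} = \lsum_{l=0}^{\infty} \AtT{l}(\varepsilon_{t-l}), \quad \AtT{l} = \Big[\lprod_{s=0}^{l-1}\boldsymbol{B}^{*}_{\frac{t-s}{T}}\Big]_{1,1} C_{\frac{t-l}{T}},
\]
with $\sup_{t,T}\bigsnorm{\lprod_{s=0}^{l-1}\boldsymbol{B}^{*}_{(t-s)/T}}_{\infty}\leq c\rho^{l}$ for some $\rho<1$, by the cover argument from Theorem \ref{Sol}. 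Substituting the spectral representation of $\varepsilon_{t-l}$ and interchanging sum and integral (justified by the geometric decay) places the process in the form required by (A1)(ii) with
\[
\AAtT{\omega} = \frac{1}{\sqrt{2\pi}}\lsum_{l=0}^{\infty} e^{-\im\omega l}\,\AtT{l}.
\]

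The candidate smooth transfer operator is obtained by freezing $u=t/T$ in every factor. A Neumann series computation exploiting the companion block structure yields
\[
\lsum_{l=0}^{\infty} e^{-\im\omega l}\big[(\boldsymbol{B}^{*}_u)^{l}\big]_{1,1} = \Big(\lsum_{j=0}^{m} e^{-\im\omega j} B_{u,j}\Big)^{-1},
\]
with invertibility of the right-hand side following from $\sum_{j=1}^{m}\snorm{B_{u,j}}_{\infty}<1$ together with a Neumann expansion around $I_H$. Multiplication by $C_u/\sqrt{2\pi}$ recovers \eqref{eq:smoothf}, and continuity of $u\mapsto\AAu{\omega}$ is inherited from the continuity of $B_{u,j}$ and $C_u$, giving membership in $\mathcal{B}_{p}$ by dominated convergence on the Neumann series.

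The crux of the proof is the uniform bound
\[
\sup_{t,\omega}\bigsnorm{\AAtT{\omega}-\AAuu{\frac{t}{T}}{\omega}}_{p} = O\big(\tfrac{1}{T}\big).
\]
I would obtain this term-by-term in $l$ from the telescoping identity
\[
\lprod_{s=0}^{l-1}\boldsymbol{B}^{*}_{\frac{t-s}{T}} - (\boldsymbol{B}^{*}_{t/T})^{l} = \lsum_{k=1}^{l-1}\Big(\lprod_{s=0}^{k-1}\boldsymbol{B}^{*}_{\frac{t-s}{T}}\Big)\big(\boldsymbol{B}^{*}_{\frac{t-k}{T}}-\boldsymbol{B}^{*}_{t/T}\big)(\boldsymbol{B}^{*}_{t/T})^{l-k-1}
\]
combined with the Lipschitz estimate $\snorm{\boldsymbol{B}^{*}_{(t-k)/T}-\boldsymbol{B}^{*}_{t/T}}_{\infty}\leq Lk/T$ from the bounded derivative in condition (iii), the geometric decay of the enclosing products from Theorem \ref{Sol}, and an analogous difference bound $\snorm{C_{(t-l)/T}-C_{t/T}}_{\infty}\leq Ll/T$ for the terminal factor. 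Each summand then contributes at most $Ck\rho^{l-1}/T$, so summation over $k$ yields $Cl^{2}\rho^{l-1}/T$ per telescope, and the full series $\sum_{l}l^{2}\rho^{l-1}$ converges, giving $O(1/T)$ uniformly in $\omega$ and $t$. The main obstacle is precisely this telescoping step: one must carefully balance the linear accumulation of finite differences across every position in the product against the geometric contractivity of $[\prod\boldsymbol{B}^{*}]_{1,1}$ provided by the cover argument. With (A1) and (A2) verified, Proposition \ref{FTVspec} delivers local stationarity.
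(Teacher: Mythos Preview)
Your proposal is correct, but the route differs from the paper's. You work entirely in the time domain: you compare the nonstationary moving-average coefficients $\AtT{l}=[\prod_{s=0}^{l-1}\boldsymbol{B}^{*}_{(t-s)/T}]_{1,1}C_{(t-l)/T}$ against their frozen counterparts $[(\boldsymbol{B}^{*}_{t/T})^{l}]_{1,1}C_{t/T}$ via a product telescope, control each summand by Lipschitz continuity of $u\mapsto\boldsymbol{B}^{*}_{u}$ and $u\mapsto C_u$ together with the geometric decay from Theorem~\ref{Sol}, and sum the resulting $l^{2}\rho^{l-1}/T$ contributions. The paper instead argues in the frequency domain: it substitutes the spectral representations of $\XT{t}$ and $\varepsilon_t$ into \eqref{eq:DE}, introduces an auxiliary operator $\Omega^{(T)}_{t,\omega}$ that captures the discrepancy $\sum_{j}e^{-\im\omega j}B_{t/T,j}(\AAuu{t/T}{\omega}-\AAuu{(t-j)/T}{\omega})$ and is $O(1/T)$ by the mean value theorem, and then uses the algebraic identity $\sum_{l}\AtT{l}\Omega^{(T)}_{t-l,\omega}=e^{\im\omega t}\big(\AAtT{\omega}-\AAuu{t/T}{\omega}\big)$, which follows from the recursion \eqref{eq:valuel}, to transfer the bound. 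Your approach is more elementary and makes the mechanism behind the $1/T$ rate transparent; the paper's approach avoids the bookkeeping of the product telescope and separates more cleanly the case $p=2$ (where one needs $\snorm{\AtT{l}}_2<\infty$ for $l\ge 1$) from $p=\infty$. Two small points to tighten in your write-up: the Lipschitz bound $\snorm{\boldsymbol{B}^{*}_{(t-k)/T}-\boldsymbol{B}^{*}_{t/T}}_\infty\le Lk/T$ should be capped by a uniform constant when $k/T$ exceeds $1$ (the boundary convention makes the operators eventually constant), and for $p=2$ you should indicate which factor in each telescoped term carries the $S_2$ norm; both are routine, and the geometric decay absorbs the resulting adjustments.
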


The proof of Theorem  \ref{FAR} is relegated to Appendix  \ref{proofsection3}. As shown in Theorem \ref{Sol}, a sufficient condition for the difference equation  \eqref{eq:DE} to have a causal solution is $\sum_{j=1}^m\snorm{B_{u,j} }_{\infty} < 1$ or $\snorm{\boldsymbol{B}^{*k_0}_u}_{\infty}<1$ for some $k_0 \ge 1$. The moving average operators will then satisfy $\sum_{l=0}^{\infty} \snorm{\AtT{l}}_{\infty} < \infty$, and Proposition \ref{tvMAbochnerinf} shows that $\XT{t}$ satisfies assumption (A1) with $\AAtT{\omega}  \in \mathcal{B}_{\infty}$. It follows from \eqref{eq:solFAR} that time-varying functional AR($m$) processes that have a causal solution with moving average operators satisfying $\sum_{l=0}^{\infty}\snorm{\AtT{l}}_{2} < \infty$ do not exist. Instead we need at least $\AtT{0}$ to be an invertible element of $S_{\infty}(H)$ together with $\sum_{j=1}^{m}\snorm{B_{u,j}}_2 <1$. By Proposition \ref{Neumann}, this case is covered by Proposition \ref{FTVspec} with $\AAuu{\frac{t}{T}}{\omega} \in S_2(H_\cnum)$ in condition (A2). For stationary functional AR($m$) processes this is straightforward to verify using back-shift operator notation and by solving for the inverse of the autoregressive lag operator. Under slightly more restrictive assumptions it is possible to obtain  uniform convergence results for processes with transfer operators $\AAtT{\omega} \in \mathcal{B}_2$. We will come back to this in sections \ref{section3} and \ref{expandcov}, in which we consider capturing the changing second-order dependence structure via the time-varying spectral density operator.  

Using Theorem \ref{FAR}, it is now straightforward to establish that the time-varying functional ARMA processes are locally stationary in the sense of Proposition \ref{FTVspec}. A time-varying functional moving average process of order $n$ has transfer operator 
\begin{align*}
\AAtT{\omega} = \frac{1}{\sqrt{2 \pi}} \sum_{j=0}^{n} \Phi_{\frac{t}{T},j}\,e^{-\im\omega j},
\end{align*} 
where $\Phi_{t/T,j} \in S_p(H)$ are the moving average filter operators. This follows from the spectral representation of the $\varepsilon_t$. 
%as given in \eqref{eq:speceps}. 
Setting $\AAuu{\frac{t}{T}}{\omega}= \AAtT{\omega}$ gives the result. Finally, we can combine this with Theorem \ref{FAR}, to obtain that conditions (A1) and (A2) are satisfied for time-varying functional ARMA($m$,$n$) with common continuous transfer operator given by 
\begin{align}
\AAuu{\frac{t}{T}}{\omega} = \frac{1}{\sqrt{2 \pi}}\,C_{\frac{t}{T}}\,
\bigg( \sum_{j=0}^{m} e^{-\im\omega j}\, B_{\frac{t}{T},j}\bigg)^{-1}   \sum_{l=0}^{n} \Phi_{\frac{t}{T},l}\,e^{-\im\omega l}.
\end{align}

For operators that do not depend on $t$, this result proves the existence of a well-defined functional  Cram{\'e}r representation for weakly stationary functional ARMA($m$,$n$) processes as discussed in \citet{Bosq} or as in \citet{Hormann2010}. The latter is easily seen by means of an application of the dominated convergence theorem and by defining the $m$-dependent coupling process by
\begin{align*}
X^{(m)}_{t,T}=g_{t,T}(\varepsilon_{t},\ldots,\varepsilon_{t-m+1},\varepsilon^{*}_{t-m}, \varepsilon^{(*)}_{t-m-1},\ldots)
\end{align*}
for measurable functions $g_{t,T}:{H}^{\infty}\to H$ with $t=1,\ldots,T$ and $T\in\nnum$ and where $\{\varepsilon^{*}_{t}\}$ is an independent copy of $\{\varepsilon_{t}\}$.
\begin{comment}\textcolor{red}{To see the latter, note that if the linear filter given  by Proposition \ref{FTVspec}\eqref{eq:tvsrep} is causal (i.e.~$\AtT{s}=0$ for $s<0$), we can write $\XT{t}$ also as $\XT{t}=g_{t,T}(\varepsilon_{t},\varepsilon_{t-1},\ldots)$ for measurable functions $g_{t,T}:{H}^{\infty}\to H$ with $t=1,\ldots,T$ and $T\in\nnum$. This is easily seen since the operator $\AtT{\omega} \in \mathcal{B}_{\infty}$ has a well-defined Fourier transform so that an application of the dominated convergence theorem yields 
$\XT{t}=\sum_{s\in\nnum} \AtT{s}\,\varepsilon_{t-s}$. The context of $L^p$-$m$-approximability is then straightforward if one defines the $m$-dependent coupling process by
\begin{align*}
X^{(m)}_{t,T}=g_{t,T}(\varepsilon_{t},\ldots,\varepsilon_{t-m+1},\varepsilon^{*}_{t-m}, \varepsilon^{(*)}_{t-m-1},\ldots),
\end{align*}
where $\{\varepsilon^{*}_{t}\}$ is an independent copy of $\{\varepsilon_{t}\}$.}
\end{comment}

% !TEX spellcheck = en_US
\section{Time-varying spectral density operator}
\label{section3}

We will now introduce the time-varying spectral density operator and its properties. We will show that the uniqueness property of the time-varying spectral density established by \citet{Dahlhaus1996a} also extends to the infinite dimension. Let $\XT{t}$ satisfy conditions (A1) and (A2) with $\AAtT{\omega} = \AAttT{1}{\omega} $ for $t<1$ and $\AAtT{\omega}=\AAttT{T}{\omega}$ for $t>T$. We define the {\em local autocovariance operator} as the cumulant tensor 
\begin{align} \label{eq:rtvop}
\mathcal{C}^{(T)}_{u,s} = \cov(X_{\lfloor uT-s/2\rfloor,T},X_{\lfloor uT+s/2\rfloor,T}),
\end{align}
where $\lfloor s \rfloor$ denotes the largest integer not greater than $s$. This operator belongs to $S_2(H)$ and hence has a {\em local autocovariance kernel} $\cT_{u,s} \in L^2([0,1]^2)$ given by 
\begin{align}
\label{eq:rtv}
\innerprod{\mathcal{C}^{(T)}_{u,s} g_1}{g_2} = \int \int \cT_{u,s}(\tau,\sigma) g_1(\sigma) \overline{g_2(\tau)} d\sigma d\tau
\qquad g_1,g_2 \in H. \end{align}
\begin{proposition} \label{propl2rtv}
Suppose (A1) and (A2) are satisfied. Then the local autocovariance operator defined in \eqref{eq:rtvop} satisfies $\sum_{s\in\znum}\snorm{\mathcal{C}^{(T)}_{u,s} }_{2} < \infty$.
\end{proposition}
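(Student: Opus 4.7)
The plan is to combine the spectral representation in (A1) with Schatten H\"older inequalities to reduce the $\ell^1$-summability of $\{\snorm{\mathcal{C}^{(T)}_{u,s}}_2\}_s$ to a uniform $\ell^1$-bound on the MA coefficients, which in turn will follow from the boundedness of $\AAtT{\omega}\in\mathcal{B}_p$ combined with the uniform continuity furnished by (A2). First, I would substitute the Cram{\'e}r representation of $\XT{t}$ from (A1)(ii) into the definition \eqref{eq:rtvop} and use orthogonality of the increments of $Z_\omega$. Since $\{\veps_t\}$ is white noise, its spectral density operator $\mathcal{F}^{\veps}_\omega=(2\pi)^{-1}\mathcal{C}_0^{\veps}$ is constant in $\omega$, and the isometry property of the stochastic integral yields
\[
\mathcal{C}^{(T)}_{u,s}=\int_{-\pi}^{\pi}e^{\im\omega s_T}\,\AAttT{t_2}{\omega}\,\mathcal{F}^{\veps}_\omega\,\AAttT{t_1}{\omega}^{\dagger}\,d\omega
\]
with $t_1=\lfloor uT-s/2\rfloor$, $t_2=\lfloor uT+s/2\rfloor$ and $s_T=t_2-t_1\in\{s-1,s,s+1\}$. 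Expanding $\AAttT{t}{\omega}=\sum_r \AttT{t}{r}\,e^{-\im\omega r}/\sqrt{2\pi}$ --- the Fourier-inversion dual of the linear-filter representation from Proposition \ref{tvMAbochnerinf} --- and carrying out the $\omega$-integral reduces the covariance to the MA-series form $\mathcal{C}^{(T)}_{u,s}=\sum_{r\in\znum}\AttT{t_2}{r+s_T}\,\mathcal{C}_0^{\veps}\,\AttT{t_1}{r}^{\dagger}$.

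Next, since $\mean\norm{\veps_0}_2^2<\infty$, the operator $\mathcal{C}_0^{\veps}$ is trace class and in particular Hilbert--Schmidt. Applying the Schatten H{\"o}lder inequality $\snorm{ABC}_2\leq\snorm{A}_{\infty}\snorm{B}_2\snorm{C}_{\infty}$ term by term, summing over $s$, and using the change of variables $k=r+s_T$ to decouple the double sum yields
\[
\lsum_{s\in\znum}\snorm{\mathcal{C}^{(T)}_{u,s}}_2\leq\snorm{\mathcal{C}_0^{\veps}}_2\,\Big(\sup_{t,T}\lsum_{k\in\znum}\bigsnorm{\AttT{t}{k}}_{\infty}\Big)^{2}.
\]

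The main obstacle is to establish the uniform $\ell^1$ bound $\sup_{t,T}\sum_{k}\snorm{\AttT{t}{k}}_\infty<\infty$. Direct appeal to $\AAtT{\omega}\in\mathcal{B}_p$ only provides $\ell^2$-type control through Parseval, so (A2) must be brought to bear: the approximation $\sup_{\omega,t}\snorm{\AAtT{\omega}-\AAuu{t/T}{\omega}}_p=O(T^{-1})$ combined with continuity of $u\mapsto\AAu{\omega}$ on the compact interval $[0,1]$ transfers the uniform $\ell^1$-summability of the MA coefficients of the stationary approximation $\Xu{t}$ to the triangular array $\XT{t}$. The floor offsets $s_T\in\{s-1,s,s+1\}$ absorb into a constant, and this final step interacts naturally with the state-space construction underlying Theorem \ref{Sol}, which for the autoregressive class furnishes the requisite uniform operator-norm summability of $\AttT{t}{k}$ directly.
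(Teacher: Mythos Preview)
Your reduction to the MA-series form $\mathcal{C}^{(T)}_{u,s}=\sum_{r}\AttT{t_2}{r+s_T}\,\mathcal{C}_0^{\veps}\,\AttT{t_1}{r}^{\dagger}$ and the Schatten--H\"older bound are fine, but the argument breaks down precisely at the step you flag as the main obstacle. The claim that (A2) ``transfers'' $\ell^1$-summability of the MA coefficients from the stationary approximation to the triangular array does not go through: from $\sup_{\omega}\snorm{\AAtT{\omega}-\AAuu{t/T}{\omega}}_p=O(T^{-1})$ one only gets $\snorm{\AttT{t}{k}-A_{t/T,k}}_p\leq C/T$ for \emph{each fixed} $k$, and summing this bound over $k\in\znum$ diverges. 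Continuity of $u\mapsto\AAu{\omega}$ is likewise a condition at each fixed $\omega$ and carries no information about decay of Fourier coefficients in $k$. Assumptions (A1)--(A2) simply do not contain enough $\omega$-regularity to deliver $\ell^1$ control of the coefficient operators; that would require something like (A3) or (A4), which are not assumed here. Your closing appeal to Theorem~\ref{Sol} only covers the autoregressive subclass, not the general statement.

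The paper sidesteps the need for a uniform $\ell^1$ bound entirely by exploiting the boundary convention stated at the start of Section~\ref{section3}: $\AAtT{\omega}$ is frozen at $\AAttT{1}{\omega}$ for $t<1$ and at $\AAttT{T}{\omega}$ for $t>T$. For fixed $u\in(0,1)$ and $T$, only finitely many $s$ keep at least one of $\lfloor uT\pm s/2\rfloor$ inside $\{1,\dots,T\}$; that finite block is bounded termwise via $\sup_{t,T,\omega}\snorm{\AAtT{\omega}}_\infty^{2}\,\snorm{\F^{\veps}}_2$. On the complementary tail in $s$, both transfer operators equal their boundary values and \emph{no longer depend on $s$}, so the tail is just the lag-summed cross-covariance of two fixed stationary processes $\Xuu{0}{\cdot}$ and $\Xuu{1}{\cdot}$. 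The white-noise MA computation then yields the bound $\sum_{l}\snorm{A_{0,l}}_\infty\sum_{k}\snorm{A_{1,k}}_\infty\,\snorm{\mathcal{C}_0^{\veps}}_2$, requiring summability only at the two boundary points rather than uniformly in $t,T$. This splitting is the idea your proposal is missing.
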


The proof can be found in section \ref{proofsection4} of the Appendix. Proposition \ref{propl2rtv} implies that the Fourier transform of \eqref{eq:rtvop} is a well-defined element of $S_2(H_{\cnum})$ and is given by 
\begin{align}
\label{eq:SDOdef}
\FT_{u,\omega}
=\SSS{\frac{1}{2\pi}} \lsum_{s}\CCT_{u,s}e^{-\im\omega s}.
\end{align}
For fixed $T$, this operator can be seen as a functional generalization of the Wigner-Ville spectral density matrix \citep[][]{Martin1985} and we shall therefore refer to it as the {\em Wigner-Ville spectral density operator} $\FT_{u,\omega} $. It is easily shown that the Fourier transform of the autocovariance kernel $\cT_{u,s}$, for fixed $t$ and $T$, forms a Fourier pair in $L^2$ with the kernel of $\FT_{u,\omega}$, referred to as the  {\em Wigner-Ville spectral density kernel}
\begin{align}
\label{eq:wigvilsdk}
\fT_{u,\omega}(\tau,\sigma)
= \frac{1}{2\pi} \sum_{s \in \znum}\cT_{u,s}(\tau, \sigma) e^{-\im\omega s}.
\end{align}
More specifically, given $\sum_{s\in\znum}\norm{\cT_{u,s}}_{p}< \infty$ for $p=2$ or $p=\infty$, the spectral density kernel is uniformly bounded and uniformly continuous in $\omega$ with respect to $\| \cdot \|_p$. Additionally, the inversion formula \begin{align} \label{eq:invform}
\cT_{u,s}(\tau,\sigma)
=\int_{-\pi}^{\pi}\fT_{u,\omega}(\tau,\sigma)\,e^{\im s\omega}\,d\omega
\end{align}
holds in $\norm{\cdot}_p$ for all $s$, $u$, $T$, $\tau$, and $\sigma$. This formula and its extension to higher order cumulant kernels is direct from an application of the dominated convergence theorem. Under additional assumptions, certain results presented in this paper will hold uniformly rather than in mean square. Sufficient would be to assume that the functional white noise process $\{\veps_t\}_{t\in\znum}$ is mean square continuous %, i.e., that its autocovariance kernel is continuous at lag $h=0$, 
and that the sequence of operators $\{\AtT{s}\}_{s \in \znum}$ is Hilbert-Schmidt with continuous kernels for all $t=1,\ldots,T$ and $T \in \nnum$. The process $\{\XT{t}\}$ is then itself mean square continuous and a slight adjustment in the proof of Proposition \ref{propl2rtv} demonstrates that $\sum_{s \in \znum}\bignorm{\cT_{u,s}}_{\infty}<\infty$. This is also expected to be the case for yet-to-be-developed concepts such as a time-varying Cram{\'e}r-Karhunen-Lo{\`e}ve representation. The results obtained in the previous section however demonstrate that a representation under these stronger conditions excludes time-varying functional AR($m$) models. We will therefore not impose them but merely remark where stronger results could be obtained.

The pointwise interpretation of the $L^2$-kernels makes it easy to verify that the Wigner-Ville spectral operator $\FT_{u,\omega}$ is $2 \pi$-periodic in $\omega$ and self-adjoint. Namely, $\cT_{u,-s}(\sigma,\tau)=\cT_{u,s}(\tau,\sigma)$ implies ${f^{\dagger}}^{(T)} _{u,\omega} (\sigma,\tau)= \overline{\fT_{u,\omega}(\tau,\sigma)}$, where $f^{\dagger}$ the kernel function of the adjoint operator $\F^{\dagger}$. Moreover, $\F^{\varepsilon}_{\omega}$ is trace-class by Parseval's identity and therefore Proposition \ref{holderoperator} implies that \eqref{eq:SDOdef} is actually an element of $S_1(H_\cnum)$. We will show in the following that \eqref{eq:SDOdef} converges in integrated mean square to the {\em time-varying spectral density operator} defined as
\begin{align} \label{eq:tvSDO}
\F_{u, \omega}
= \AAu{\omega}\,\F^{\varepsilon}_{\omega}\,\AAu{\omega}^\dagger. \displaybreak[0]
\end{align}
The time-varying spectral density operator satisfies all of the above properties and is non-negative definite since for every $\psi \in L^2_\cnum([0,1])$,
\begin{align*}
\innerprod{\AAu{\omega}\,\mathcal{F}^{\varepsilon}_{\omega}\, \AAu{\omega}^{\dagger}\,\psi}{\psi}=
\innerprod{\sqrt{\F^{\varepsilon}_{\omega}}\,\AAu{\omega}^{\dagger}\,\psi}{\sqrt{\F^{\varepsilon}_{\omega}} \AAu{\omega}^{\dagger}\,\psi}\geq 0,
\end{align*}
which is a consequence of the non-negative definiteness of $\F^{\varepsilon}_{\omega}$. For any two elements $\psi, \varphi$ in $L^2_\cnum([0,1])$, one can interpret the mapping $\omega \mapsto \innerprod{\psi}{\F_{u,\omega}\,\varphi}=\innerprod{\F_{u, \omega}\, \psi}{\varphi}\in \cnum $ to be the local cross-spectrum of the sequences $\{ \innerprod{\psi}{\Xu{t}}\}_{t \in \znum}$ and $\{\innerprod{\varphi}{ \Xu{t}}\}_{t \in \znum}$. In particular, $\omega \mapsto \innerprod{\psi}{\F_{u, \omega}\,\psi}\geq 0$ can be interpreted as the local power spectrum of $\{ \innerprod{\psi}{\Xu{t}}\}_{t \in \znum}$ for all $u \in [0,1]$. In analogy to the spectral density matrix in multivariate time series, we will show below that the local spectral density operator completely characterizes the limiting second-order dynamics of the family of functional processes $\{\XT{t}:t =1,\ldots,T\}_{T \in \mathbb{N}}$.

\begin{theorem}
\label{SDO}
Suppose that assumptions (A1) to (A3) hold. Then, for all $u \in (0,1)$,
\begin{align}
\label{eq:SDObound}
\int_{-\pi}^{\pi} \bigsnorm{\FT_{u,\omega} - \F_{u,\omega}}^2_2\,d\omega = o(1) \end{align}
as $T\to \infty$.
\end{theorem}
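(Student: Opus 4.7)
The plan is to apply Parseval's identity to convert the integrated squared Hilbert--Schmidt distance over frequencies into an $\ell^2$--sum over lags of differences of autocovariance operators, and then to show this sum vanishes through a truncation argument in the lag variable. Both $\omega\mapsto\FT_{u,\omega}$ and $\omega\mapsto\F_{u,\omega}$ are Hilbert--Schmidt--valued $L^2$--functions on $[-\pi,\pi]$ (the first by Proposition~\ref{propl2rtv}; the second because (A3) yields $\sup_{\omega}\snorm{\AAu{\omega}}_\infty<\infty$ while $\F^\veps_\omega$ is trace class), and their Fourier coefficient sequences are $\{\CCT_{u,s}\}$ and $\{\mathcal{C}^{(u)}_s\}$ with $\mathcal{C}^{(u)}_s=\cov(\Xu{0},\Xu{s})$. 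Parseval therefore gives
\[
\int_{-\pi}^{\pi}\snorm{\FT_{u,\omega}-\F_{u,\omega}}_2^2\,d\omega
=\frac{1}{2\pi}\lsum_{s\in\znum}\snorm{\CCT_{u,s}-\mathcal{C}^{(u)}_s}_2^2,
\]
so it suffices to show the right-hand side vanishes as $T\to\infty$.

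Setting $t_1=\lfloor uT-s/2\rfloor$ and $t_2=\lfloor uT+s/2\rfloor$, the orthogonal--increment representations in (A1) yield
\[
\CCT_{u,s}=\int_{-\pi}^{\pi}e^{\im\omega(t_2-t_1)}\AAttT{t_1}{\omega}\F^\veps_\omega\AAttT{t_2}{\omega}^\dagger\,d\omega,\quad
\mathcal{C}^{(u)}_s=\int_{-\pi}^{\pi}e^{\im\omega s}\AAu{\omega}\F^\veps_\omega\AAu{\omega}^\dagger\,d\omega.
\]
Inserting the intermediate operators $\AAuu{t_i/T}{\omega}$ in place of $\AAttT{t_i}{\omega}$ and then $\AAu{\omega}$ in place of $\AAuu{t_i/T}{\omega}$, H\"older's inequality for Schatten norms (with $\F^\veps_\omega$ trace class and uniformly bounded) combined with (A2) contributes $O(T^{-1})$, while (A3) together with $|t_i/T-u|\leq|s|/(2T)+O(T^{-1})$ contributes $O((|s|/T+T^{-1})^\alpha)$. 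This yields the uniform bound
\[
\snorm{\CCT_{u,s}-\mathcal{C}^{(u)}_s}_2\leq K\bigl(T^{-1}+(|s|/T+T^{-1})^\alpha\bigr).
\]

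Split the sum at $|s|=M_T:=T^\beta$ with $\beta\in(0,\,2\alpha/(2\alpha+1))$. For small lags,
\[
\lsum_{|s|\leq M_T}\snorm{\CCT_{u,s}-\mathcal{C}^{(u)}_s}_2^2\lesssim M_T T^{-2}+M_T^{1+2\alpha}T^{-2\alpha}=o(1),
\]
by the choice of $\beta$. For the tail $|s|>M_T$ bound $\snorm{\CCT_{u,s}-\mathcal{C}^{(u)}_s}_2^2\leq 2(\snorm{\CCT_{u,s}}_2^2+\snorm{\mathcal{C}^{(u)}_s}_2^2)$. The main technical obstacle is a uniform-in-$T$ control of $\sum_{|s|>M_T}\snorm{\CCT_{u,s}}_2^2$: since the indices $t_1,t_2$ depend on $s$, this is not the tail of a fixed Fourier series. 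The resolution exploits that $\{\veps_t\}$ is weakly stationary white noise, so $\F^\veps_\omega=(2\pi)^{-1}\mathcal{C}^\veps_0$ is constant in $\omega$, and hence $\CCT_{u,s}$ equals the $(t_2-t_1)$--th Fourier coefficient of $\omega\mapsto\AAttT{t_1}{\omega}\F^\veps\AAttT{t_2}{\omega}^\dagger$. Transferring the H\"older-$\alpha$ continuity of $\AAu{\omega}$ in $\omega$ from (A3) to $\AAttT{t}{\omega}$ via the $O(T^{-1})$ approximation (A2) yields uniform H\"older-$\alpha$ regularity, and the classical estimate for Fourier coefficients of H\"older functions then gives $\snorm{\CCT_{u,s}}_2=O(|s|^{-\alpha})$ and analogously $\snorm{\mathcal{C}^{(u)}_s}_2=O(|s|^{-\alpha})$, uniformly in $T$. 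Both tails are therefore $O(M_T^{1-2\alpha})=o(1)$ since $2\alpha>1$; the H\"older exponent $\alpha>1/2$ in (A3) is essential here, used simultaneously to keep the small-lag contribution negligible and to render the Fourier tails summable.
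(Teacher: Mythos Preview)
Your overall strategy---Parseval into lags, split at $M_T$, H\"older decay for the tail---is the same as the paper's, and your direct pointwise bound for the range $|s|\le M_T$ is in fact somewhat cleaner than the paper's summation-by-parts argument for that part.

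The substantive gap is in your tail bound. Assumption (A2) only gives $\sup_{t,\omega}\snorm{\AAttT{t}{\omega}-\AAuu{t/T}{\omega}}_p=O(T^{-1})$; it says nothing about the regularity of the residual in $\omega$. So the claim that (A2) ``transfers'' H\"older-$\alpha$ continuity to $\AAttT{t}{\cdot}$ is unjustified: at best you get
\[
\bigsnorm{\AAttT{t}{\omega_1}-\AAttT{t}{\omega_2}}_\infty
\le C|\omega_1-\omega_2|^\alpha+CT^{-1},
\]
and the classical Fourier-coefficient estimate then yields $\snorm{\CCT_{u,s}}_2\le C|s|^{-\alpha}+CT^{-1}$, not $O(|s|^{-\alpha})$ uniformly. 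The additive $T^{-1}$ term is fatal when squared and summed over an unbounded set of lags.

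The paper avoids this by performing the replacement $\AAttT{t_i}{\omega}\mapsto\AAuu{u\pm s/(2T)}{\omega}$ \emph{before} the Fourier analysis, arguing that this costs $o(1)$ in the integrated $L^2_\omega$-norm. After the replacement the integrand $G_{s/(2T),\lambda}$ involves only the smooth transfer operators, is genuinely H\"older-$\alpha$ in $\lambda$ uniformly in $s$, and the $O(|s|^{-\alpha})$ tail bound applies cleanly. To repair your argument you need the same two-step structure: first show that the replacement error $r_s=\CCT_{u,s}-\int e^{\im\omega s}\AAuu{t_1/T}{\omega}\F^\veps\AAuu{t_2/T}{\omega}^\dagger\,d\omega$ satisfies $\sum_s\|r_s\|_2^2=o(1)$ (for $|s|\le cT$ there are $O(T)$ terms each of size $O(T^{-2\alpha})$, giving $O(T^{1-2\alpha})$; for $|s|>cT$ the transfer operators are pinned at boundary values so the residual is a fixed function and Parseval controls its total Fourier mass by $O(T^{-2\alpha})$), and only then invoke the H\"older Fourier-coefficient bound on the smooth part.
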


\begin{proof}
By definition of the Wigner-Ville operator and Lemma \ref{bochnerlemma},
\begin{align*}
\FT_{u,\omega}
&=\frac{1}{2\pi} \sum_{s} \cov\big(\int_{-\pi}^{\pi} e^{\im\lambda \lfloor uT-s/2\rfloor}\,\AAttT{\lfloor uT-s/2\rfloor}{\lambda}\,dZ_{\lambda},
\int_{-\pi}^{\pi} e^{\im\beta \lfloor uT+s/2\rfloor}\,\AAttT{\lfloor uT+s/2\rfloor}{\beta}\,dZ_{\beta}\big)\,e^{-\im\omega s} \\
&=\frac{1}{2\pi} \sum_{s}\int_{-\pi}^{\pi}  e^{\im\lambda s}  \AAttT{\lfloor uT-s/2\rfloor}{\lambda}\,\F^{\varepsilon}_{\lambda}\,\big(\AAttT{\lfloor uT+s/2\rfloor}{\lambda}\big)^\dagger\,d\lambda\,e^{-\im\omega s}.
\end{align*}
Using identity \eqref{eq:tensorabc}, we have that
\[
\AAttT{\lfloor uT-s/2\rfloor}{\lambda}\,\F^{\varepsilon}_{\lambda} \big(\AAttT{\lfloor uT+s/2\rfloor}{\lambda}\big)^\dagger
=\big(\AAttT{\lfloor uT-s/2 \rfloor}{\lambda} \otimes \AAttT{\lfloor uT+s/2 \rfloor}{\lambda} \big)\,\F^{\varepsilon}_{\lambda}.
\]
Similarly, 
\begin{align*}\F_{u,\omega} =
\frac{1}{2\pi} \sum_{s}\int_{-\pi}^{\pi}  e^{\im\lambda s}\,\big(\AAu{\lambda} \otimes  \AAu{\lambda}^{\dagger}\big)\,\F^{\varepsilon}_{\lambda}\,d\lambda\, e^{-\im\omega s}.
\end{align*}
We can therefore write the left-hand side of \eqref{eq:SDObound} as
\begin{align*}
\int_{-\pi}^{\pi} \biggsnorm{\frac{1}{2\pi} \sum_{s}\int_{-\pi}^{\pi}  e^{\im\lambda s}\big(\AAttT{\lfloor uT-s/2 \rfloor}{\lambda}\otimes \AAttT{\lfloor uT+s/2 \rfloor}{\lambda}-\AAu{\lambda}\otimes \AAu{\lambda}\big) \,\F^{\varepsilon}_{\lambda}\, d\lambda\, e^{-\im\omega s}}^2_2\, d\omega. 
\end{align*}
Consider the operator 
\begin{align*}
G^{(u,T)}_{s,\lambda}
= \big(\AAttT{\lfloor uT-s/2 \rfloor}{\lambda}\otimes
\AAttT{\lfloor uT+s/2 \rfloor}{\lambda} -\AAu{\lambda}\otimes \AAu{\lambda}\big)\,\F^{\varepsilon}_{\lambda}
\end{align*}
and its continuous counterpart 
\begin{align*}
G_{\frac{s}{2T},\lambda} = \big(\AAuu{(u-\frac{s}{2T})}{\lambda} \otimes \AAuu{(u+\frac{s}{2T})}{\lambda} -\AAu{\lambda} \otimes \AAu{\lambda}\big)\,\F^{\varepsilon}_{\lambda}. 
\end{align*}
By H{\"o}lder's inequality for operators (Proposition \ref{holderoperator}), both are trace-class and hence Hilbert-Schmidt. Another application of H{\"o}lder's inequality together with assumption (A2) yields
\begin{align*} \label{eq:Thm32_1}
& \Bigsnorm{\big(\AAttT{\lfloor uT\mp s/2 \rfloor}{\lambda}\otimes \big[\AAttT{\lfloor uT \pm s/2 \rfloor}{\lambda}
-\AAuu{u \pm \frac{s}{2T}}{\lambda}\big]\big)
\mathcal{F}^{\varepsilon}_{\lambda} }_2^2  \\
& \le\bigsnorm{\AAttT{\lfloor uT\mp s/2 \rfloor}{\lambda}}^2_{\infty}\, \bigsnorm{\AAttT{\lfloor uT \pm s/2 \rfloor}{\lambda} -\AAuu{u \pm \frac{s}{2T}}{\lambda}}^2_{\infty}\,\bigsnorm{\F^{\veps}_{\omega}}_2^2 = O\Big(\frac{1}{T^2}\Big). \tageq
\end{align*}
By Minkowski's inequality, we obtain
\[
\int_{-\pi}^{\pi} \bigsnorm{\F^{(T)}_{u,\omega} - \F_{u,\omega}}^2_2\,d\omega
= \int_{-\pi}^{\pi} \bigsnorm{G_{\frac{s}{2T},\omega}}^2_2\,d\omega+ o(1). 
\]
It is therefore sufficient to derive a bound on
\[\label{eq:Gint}
\int_{-\pi}^{\pi} \bigsnorm{G_{\frac{s}{2T},\omega}}^2_2\,d\omega. \tageq \]
A similar argument as in \eqref{eq:Thm32_1} shows that
\begin{align*}
& \Bigsnorm{\Big(\AAuu{(u-\frac{s}{2T})}{\omega} \otimes
\big(\AAuu{(u +\frac{s}{2T})}{\omega}
-\AAu{\omega}\big)+\big(\AAuu{(u -\frac{s}{2T})}{\omega}-\AAu{\omega}\big) \otimes \AAu{\omega} \Big)\F^{\varepsilon}_{\omega}}_2 \leq C\,\big|\SSS{\frac{s}{2T}}\big|^{\alpha}
\end{align*}
for some constant $C>0$.  The operator-valued function $G_{u,\omega}$ is therefore H{\"o}lder continuous of order $\alpha > 1/2$ in $u$. \textcolor{black}{Note that \eqref{eq:invform} implies the inverse Fourier transform of this operator is a well-defined element of $S_2(H_\cnum)$. We can therefore write \eqref{eq:Gint} as}
\begin{align*}
& \frac{1}{(2\pi)^2}\int_{-\pi}^{\pi} 
\sum_{s,s'} e^{-\im\omega (s-s')} \biginnerprod{ \int_{0}^{2 \pi}   G_{\frac{s}{2T},\lambda}e^{\im s \lambda}\, d\lambda}{ \int_{0}^{2 \pi}   {G_{\frac{s'}{2T},\lambda'}} e^{\im s' \lambda'}d\lambda' }
%_{H_\cnum{\otimes}H_\cnum}d\omega
\\
&  =  \frac{1}{2\pi}\sum_{s\in\znum}
\snorm{ \tilde{G}_{\frac{s}{2T}}\,}^2_2,
\end{align*}
where 
%$\innerprod{\cdot}{\cdot}_{H_\cnum{\otimes}H_\cnum}$ denotes the Hilbert-Schmidt inner product that acts on  $H_\cnum \otimes H_\cnum$ and where 
$\tilde{G}_s$ can be viewed as the $s$-th Fourier coefficient operator of $G_{\frac{s}{2T},\lambda}$. Because of H{\"o}lder continuity, these operators satisfy $\snorm{\tilde{G}_s}_2 \le \|\pi^{\alpha+1} \snorm{G_{u,\omega}}_{2}\,|s|^{-\alpha} = O(s^{-\alpha})$. Hence,
\begin{align*}
\lsum_{s=n}^{\infty}  \snorm{ \tilde{G}_{\frac{|s|}{2T}}\,}^2_2  = O( n^{1-2\alpha}).
\end{align*}
Concerning the partial sum $ \sum_{s=0}^{n-1}  | \hat{g}_s(\tau,\sigma)|^2$, we proceed as in \citet{Dahlhaus1996a} and use summation by parts to obtain
\begin{equation*}
\sum_{s=0}^{n-1}\snorm{ \tilde{G}_s}_2^2 = \int_{0}^{2 \pi}   \int_{0}^{2 \pi}   \sum_{s=0}^{n-1} e^{\im s (\lambda-\lambda')}\biginnerprod{ G_{\frac{s}{2T},\lambda} }{  {G_{\frac{s'}{2T},\lambda'}} } d\lambda d\lambda'
= O\Big(\frac{n\,\log(n) }{T^{\alpha}}\Big),
\end{equation*}
which follow from the properties of $ \tilde{G}_s$ and Lemma \ref{Dh1993A4}. It is straightforward to see that $ \sum_{s=0}^{n-1}  \snorm{  \tilde{G}_{-s}}_2^2$ satisfies the same bound. Hence, 
\begin{align*}
\int_{-\pi}^{\pi} \bigsnorm{\F^{(T)}_{u,\omega} - \F_{u,\omega}}^2_2\,d\omega
&= \int_{-\pi}^{\pi} \bigsnorm{G_{\frac{s}{2T},\omega}}^2_2\,d\omega+ o(1) \\& =  O\Big( n^{1-2\alpha}\Big)+ O\Big(\frac{n \log (n) }{T^{\alpha}}\Big).
\end{align*}
Choosing an appropriate value $n \ll T$ completes the proof.
\end{proof}

Intuitively, the value of $n$ such that $n\log(n)\,T^{-\alpha}\to 0$ can be seen to determine the length of the data-segment over which the observations are approximately stationary. Only those functional observations $\XT{t}$ from the triangular array with $t/T \in \big[u+\frac{n}{T},u-\frac{n}{T}\big]$  will effectively contribute to the time-varying spectral density operator at $u$. As $T$ increases, the width of this interval shrinks and sampling becomes more dense. Because the array shares dynamics through the operator-valued function $\AAu{\omega}$, which is smooth in $u$, the observations belonging to this interval will thus become close to stationary as $T\to\infty$. The theorem therefore implies that, if we have infinitely many observations with the same probabilistic structure around some time point $t$, the local second-order dynamics of the family are completely characterized by $\F_{u,\omega}$.

The above theorem provides a promising result. It is well-known from the time series setting that a Cram{\'e}r representation as given in Proposition \ref{FTVspec} is in general not unique \citep[e.g.][]{Priestley1981}. However, Theorem \ref{SDO} shows that the uniqueness property as proved by \citet{Dahlhaus1996a} generalizes to the functional setting. That is, if the family of $H$-valued processes $\{\XT{t}: t=1,\ldots,T\}_{T\in\nnum}$ has a representation with common transfer operator $\AAu{\omega}$ that operates on $H_\cnum$ and that is continuous in $u$, then the time-varying spectral density operator will be uniquely determined from the triangular array. This uniqueness of the time-varying spectral density operator is expected to be extremely valuable in the development of inference methods. For example, it would be of interest to determine whether this result will allow to develop Quasi Likelihood methods to fit parametric models in the functional setting. Such an extension is not direct and has to take into account the compactness of the operator and the properties of Toeplitz operators in the infinite dimension. This is however beyond the scope of this paper and the authors will consider this in future work.

\begin{remark}\label{SDOUNI} If assumptions (A1) and (A2) hold with $p=2$, we have by continuity of the inner product that the kernel $\aau{\omega}$ of $\AAu{\omega}$ is well-defined in $L^2_{\cnum}([0,1]^2)$ and hence is uniformly H{\"o}lder continuous of order $\alpha > 1/2$ in both $u$ and $\omega$. If we thus additionally assume that the $\{\veps_t\}_{t\in \znum}$ are mean square continuous and the operator $\AAu{\omega}$ is an element of $\mathcal{B}_2$ of which the Hilbert-Schmidt component has a kernel that is continuous in its functional arguments, the error holds in uniform norm.
\end{remark}

% !TEX spellcheck = en_US
\section{Estimation}\label{expandcov}

The time-varying spectral density operator as defined in section \ref{SDO} allows to capture the complete second-order structure of a functional time series with possibly changing dynamics. In order to consider inferential techniques such as dynamic FPCA for nonstationary functional time series, functional Whittle likelihood methods or other general testing procedures, we require a consistent estimator for the time-varying spectral density operator. In this section, we present a nonparametric estimator of the time-varying spectral density operator. It should be noted that this requires a careful consideration of certain concepts on the function space, details of which are relegated to sections \ref{cumprops}-\ref{taperandl} of the Appendix, and that there are some discrepancies compared to existing results available in the Euclidean setting.

The section is structured as follows. First, we define a functional version of the segmented periodogram and derive its mean and covariance structure. We then consider a smoothed version of this operator and show its consistency. Finally, we provide a central limit theorem for the proposed estimator of the time-varying spectral density operator. Proofs of this section can be found in section \ref{proofsection5} of the Appendix.

\subsection{The functional segmented periodogram tensor}

The general idea underlying inference methods in the setting of locally stationary processes is that the process $\XT{t}$ can be considered to be close to some stationary process, say ${X}^{(u_0)}_{t}$, on a reasonably small data-segment around $u_0$. If this segment is described by $\{t:  | \frac{t}{T}-u_0| \le b_{\text{t}}/2\}$ for some bandwidth $b_{\text{t}}$, classical estimation methods from the stationary framework can be applied on this stretch. The estimated value is subsequently assigned to be the value of the parameter curve at the midpoint $u_0$ of the segment. The entire parameter curve of interest in time-direction can then be obtained by shifting the segment. We will also apply this technique in the functional setting.

First, let the length of the stretch considered for estimation be denoted by $N_T$, where $N_T$ is even and $N_T \ll T$. In the following, we will drop the explicit dependence of $N$ on $T$ and simple write $N=N_T$. Then the local version of the {\em functional Discrete Fourier Transform (fDFT)} is defined as 
\begin{equation}
\label{eq:lfdft}
\DT_{u,\omega}=
\lsum_{s=0}^{N-1}  \hN{s}\,\XT{\lfloor u\,T\rfloor-N/2+s+1}\,e^{-\im\omega s},
\end{equation}
where $\hN{s}$ is a data taper of length $N$. It is clear that $\DT_{u,\omega}$ is a $2 \pi$-periodic function in $\omega$ that takes values in $H_\cnum$. The data-taper is used to improve the finite-sample properties of the estimator \citep{Dahlhaus1988}: firstly, it mitigates spectral leakage, which is the transfer of frequency content from large peaks to surrounding areas and is also a problem in the stationary setting. Secondly, it reduces the bias that stems from the degree of nonstationarity of the process on the given data-segment, that is, the fact that we use the observations $\XT{t}$ for estimation rather than the unknown stationary process ${X}^{(u_0)}_{t}$. We define the data-taper by a function $h:[0,1]\to\rnum$ and setting $\hN{s}=h\big(\frac{s}{N}\big)$; the taper function $h$ should decay smoothly to zero at the endpoints of the interval while being essentially equal to $1$ in the central part of the interval. Thus the taper gives more weight to data-points closer to the midpoint.

As a basis for estimation of the time-varying spectral density operator, we consider the normalized tensor product of the local functional Discrete Fourier Transform. This leads to the concept of a segmented or {\em localized periodogram tensor} 
\begin{align*} 
\IT_{u,\omega} =(2\pi\,\HNN{2}(0))^{-1}\,\DT_{u,\omega}\otimes\DT_{u,\omega},
\tageq \label{eq:perop}
\end{align*}
where
\begin{equation} \label{eq:HkN}
\HNN{k}(\omega)=\lsum_{s=0}^{N-1}  \hN{s}^k\, e^{-\im\omega s}
\end{equation} 
is the finite Fourier transform of the $k$-th power of the data-taper. Given the moments are well-defined in $L^2_\cnum([0,1]^2)$,  the corresponding {\em localized periodogram kernel} is given by
\begin{align*} 
\IT_{u,\omega}(\tau,\sigma) =
\big(2\pi\HNN{2}(0)\big)^{-1}\,\DT_{u,\omega}(\tau)\,\DT_{u,-\omega}(\sigma).
\tageq \label{eq:perker}
\end{align*}
Similar to the stationary case, sufficient conditions for the existence of the higher order moments of the localized periodogram tensor are obtained from
\begin{align*}
\norm{\IT_{u,\omega}}^\rho_2
&= (2\pi\,\HNN{2}(0))^{-\rho}\,\bignorm{\DT_{u,\omega}}^{2\rho}_2, \tageq
\end{align*}
which implies that $\mean\norm{\IT_{u,\omega}}^\rho_2 <\infty$ if $\mean \bignorm{\DT_{u,\omega}}^{2\rho}_2<\infty$ or, in terms of moments of $X$, $\mean\bignorm{\XT{t}}^{2\rho}_2<\infty$.

To ease notation, we denote $t_{u,r} = \lfloor uT \rfloor-N/2+r+1$ to be the $r$-th element of the data-segment with midpoint $u$. For $u_j=j/T$ we also write $t_{j,r}=t_{u_j,r}$ and abbreviate $u_{j,r}=t_{j,r}/T$. The following result is used throughout the rest of the paper.

\begin{proposition} \label{cumX}
Suppose that assumption (A1) holds with $\AAtT{\omega} \in \mathcal{B}_{\infty}$, and additionally $\sup_{\omega_1,\ldots, \omega_{k-1}}\snorm{\F^{\varepsilon}_{\omega_1,\ldots, \omega_{k-1}}}_2 <\infty$. Then
\begin{align*} 
\cum  \big(\XT{t_{r_1}},\ldots,\XT{t_{r_k}}\big)
&=\int_{\Pi^{k}}e^{\im(\lam_1 r_1+\ldots+\lam_k r_k)}\,
\Big(\AAttT{t_{r_1}}{\lambda_1}\otimes\cdots\otimes \AAttT{t_{r_{k}}}{\lambda_{k}}\Big)\\
&\qquad\times\eta(\lam_1+\ldots+\lam_k)\,   \F^{\varepsilon}_{\lambda_1,\ldots,\lambda_{k-1}}d\lambda_1\cdots d\lambda_{k}, 
\tageq \label{eq:derivcumX}
\end{align*}
where the equality holds in the tensor product space $H_\cnum\otimes\cdots\otimes H_\cnum$. Moreover, for fixed $t\in\{1,\ldots,T\}$ and $T\in\nnum$, the $k$-th order cumulant spectral tensor of the linear functional process $\{\XT{t}\}$,
\begin{align*} 
\F^{(t,T)}_{\lam_1,..,\lam_{k-1}}
=\Big( \AAttT{t_{r_{1}}}{\lam_1} \otimes\cdots\otimes  \AAttT{t_{r_{k-1}}}{\alpha_{k-1}}\otimes \AAttT{t_{r_{k}}}{-\lam_+}\Big) \F^{\varepsilon}_{\lam_1,..,\lam_{k-1}},
\end{align*}
where $\lam_+=\lam_1+\ldots+\lam_{k-1}$, is well-defined in the tensor product space $\bigotimes_{i=1}^{k} H_\cnum$ with kernel $f^{(t,T)}_{\lambda_1,\ldots,\lambda_{k-1}}(\tau_1,\ldots,\tau_k)$. %For $k=2$ the corresponding operator $\F^{(t,T)}_\omega$ is an element of $S_1(H_\cnum)$.
\end{proposition}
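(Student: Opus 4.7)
The plan is to start from the Cram\'er representation provided by assumption (A1)(ii). Writing each $\XT{t_{r_j}}$ as $\int_{-\pi}^{\pi} e^{\im\lambda_j t_{r_j}}\,\AAttT{t_{r_j}}{\lambda_j}\,dZ_{\lambda_j}$ and invoking multilinearity of cumulants, the $k$-th order cumulant formally becomes a $k$-fold iterated stochastic integral whose integrand is the tensor product $\bigotimes_{j=1}^k \AAttT{t_{r_j}}{\lambda_j}$ acting on the joint cumulant of the increments $dZ_{\lambda_1},\ldots,dZ_{\lambda_k}$. To justify this rigorously in $\bigotimes_{j=1}^k H_\cnum$, I would first prove the identity for simple (piecewise constant) transfer operators taking finitely many values on a partition of $[-\pi,\pi]$, where the stochastic integral reduces to a finite sum and the identity reduces to multilinearity of cumulants of $H$-valued random elements. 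Then I would pass to the general case by approximating each $\AAttT{t_{r_j}}{\cdot}\in\mathcal{B}_{\infty}$ by simple functions in the $\mathcal{B}_{\infty}$-norm and using the continuity of the cumulant on the closure of this approximation, which holds because of the bound $\sup_{\omega_1,\ldots,\omega_{k-1}} \snorm{\F^{\varepsilon}_{\omega_1,\ldots,\omega_{k-1}}}_2<\infty$ combined with H\"older's inequality in Schatten norms (Proposition \ref{holderoperator}).

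The next key step is the identification of the joint cumulant measure of the orthogonal increment process $Z_\omega$ with the cumulant spectrum of the stationary noise $\veps_t$. Since $\veps_t=\int e^{\im\omega t}\,dZ_\omega$ and $\{\veps_t\}$ is strictly (or weakly up to order $k$) stationary, the $k$-th order cumulant of $(\veps_{t_1},\ldots,\veps_{t_k})$ depends only on the lag differences. Equating the spectral representation of this cumulant with the Fourier inversion of $\F^{\varepsilon}_{\lambda_1,\ldots,\lambda_{k-1}}$ (as defined in Appendix \ref{cumprops}), one sees that the spectral cumulant measure of $Z$ is supported on the hyperplane $\lambda_1+\cdots+\lambda_k\equiv 0\pmod{2\pi}$ and has density $\F^{\varepsilon}_{\lambda_1,\ldots,\lambda_{k-1}}$ along it. This is what produces the factor $\eta(\lam_1+\cdots+\lam_k)$ in \eqref{eq:derivcumX}, where $\eta$ denotes the $2\pi$-periodic Dirac comb. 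Combining this with the tensor-product structure on the integrand yields the claimed formula.

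For the second assertion, namely that $\F^{(t,T)}_{\lambda_1,\ldots,\lambda_{k-1}}$ is a well-defined element of $\bigotimes_{i=1}^k H_\cnum$ with an $L^2$ kernel, I would apply H\"older's inequality for Schatten-class operators once more: each factor $\AAttT{t_{r_j}}{\cdot}$ is bounded (in $S_\infty$) uniformly in $t,T$ by (A1) together with the fact that $\mathcal{B}_\infty$ consists of strongly measurable functions into $S_\infty(H_\cnum)$, while $\F^{\varepsilon}_{\lambda_1,\ldots,\lambda_{k-1}}$ is Hilbert--Schmidt uniformly in its arguments. Hence the product operator lies in $S_2$ of the tensor product space, which via the isomorphism between $S_2$ and $L^2$ kernels yields the claimed kernel representation $f^{(t,T)}_{\lambda_1,\ldots,\lambda_{k-1}}$.

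The main obstacle I expect is the rigorous treatment of the $k$-fold stochastic integral against the $H_\cnum$-valued orthogonal increment process and the Fubini-type interchange needed to bring the cumulant operator inside the integral signs. In finite dimensions this is standard via the Cram\'er--Leonov--Shiryaev formula, but here one must verify that the approximation by simple operator-valued integrands converges not only in mean square in $H_\cnum$ but also at the level of $k$-th order moments in the appropriate tensor product space; the uniform Hilbert--Schmidt bound on $\F^{\varepsilon}_{\lambda_1,\ldots,\lambda_{k-1}}$ together with the $\mathcal{B}_\infty$-convergence of the transfer operators, via iterated application of H\"older, should exactly supply the quantitative control needed to legitimize this exchange.
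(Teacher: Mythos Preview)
Your proposal is correct and follows essentially the same route as the paper. The paper's proof is more compressed: it directly invokes Theorem~\ref{cumWN} (which establishes that $\cum(dZ_{\lambda_1},\ldots,dZ_{\lambda_k})=\eta(\lambda_1+\cdots+\lambda_k)\,\F^{\varepsilon}_{\lambda_1,\ldots,\lambda_{k-1}}\,d\lambda_1\cdots d\lambda_k$) together with Proposition~\ref{momlin} (linearity of cumulant tensors under bounded operators) to pull the transfer operators out and identify the integrand, and then uses Proposition~\ref{holderoperator} exactly as you describe for the Hilbert--Schmidt bound. Your simple-function approximation argument is precisely the mechanism underlying the construction of the stochastic integral in Appendix~\ref{proofStochInt} and the proof of Theorem~\ref{cumWN}, so you are effectively unpacking what the paper cites as ready-made lemmas; the substance is the same.
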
 

Note that under the stronger condition $\AAtT{\omega} \in \mathcal{B}_{2}$, the tensor $\F^{(t,T)}_{\lambda_1,\ldots,\lambda_{k-1}}$ will be trace-class for all $k \ge 2$. The above proposition implies that the higher order cumulant tensor of the local fDFT can be written as
\begin{align*} 
\cum\big(\DT_{u,\omega_1},&\ldots, \DT_{u,\omega_k} \big)\\
& = \int_{\Pi^{k}}
\Big(\HN\big(\AAttT{t_{u,\bullet}}{\lambda_1},\omega_1-\lambda_1\big)
\otimes\cdots\otimes \HN\big(\AAttT{t_{u,\bullet}}{\lambda_{k}},\omega_{k}-\lambda_{k}\big)\Big) \\
&\qquad\times \eta(\lam_1+\ldots+\lam_k)\,\, \F^{\varepsilon}_{\lambda_1,\ldots,\lambda_{k-1}}\,d\lambda_1\cdots d\lambda_{k}.
\tageq \label{eq:cumDNop}
\end{align*}
Here, the function $\HN(G_{\bullet},\omega)$ and similarly $\HNN{k}(G_{\bullet},\omega)$ generalize the definitions of $\HN$ and  $\HNN{k}$ to
\begin{equation} \label{Hgs}
\HNN{k}(G_{\bullet},\omega)=\lsum_{s=0}^{N-1} \hN{s}^k\,G_{s}\,e^{-\im\omega s}
\end{equation}
with $\HN(G_{\bullet},\omega)=\HNN{1}(G_{\bullet},\omega)$, where in our setting $G_{s}\in\mathcal{B}_{\infty}$ for all $s\in\nnum_0$. For $G_{\bullet}=I_{H_\cnum}$, we get back the original definitions of $\HN$ and  $\HNN{k}$. The convolution property of $\HN$ straightforwardly generalizes to
\begin{equation}
\label{HNfunctconvolution}
\int_{\Pi}
\HNN{k}(A_{\bullet},\alpha+\gamma)\otimes\HNN{l}(B_{\bullet},\beta-\gamma)
\,d\gamma
=2\pi\,\HNN{k+l}(A_{\bullet}\otimes B_{\bullet},\alpha+\beta),
\end{equation}
where $(A_{r})_{r=0,\ldots,N-1}$ and $(B_{r})_{r=0,\ldots,N-1}$ are vectors of tensors or operators.

From the taper function $h$, we derive the smoothing kernel $\Kt$ in rescaled time $u$ by
\begin{equation}
\label{Kerneltimedef}
\Kt(x)=\SSS{\frac{1}{H_2}}\,h\Big(\SSS{x +\frac{1}{2}}\Big)^2
\end{equation}
for $x\in[-\tfrac{1}{2},\tfrac{1}{2}]$ and zero elsewhere; furthermore, we define the bandwidth $\btT=N/T$ that corresponds to segments of length $N$, and set $\KtT(x)=\tfrac{1}{\btT}\,\Kt\big(\tfrac{x}{\btT}\big)$. Finally, we define the kernel-specific constants
\[
\kappat=\int_\rnum x^2\,\Kt(x)\,dx
\qquad\text{and}\qquad
\norm{\Kt}_2^2=\int_\rnum \Kt(x)^2\,dx.
\]

The first order and second order properties of the segmented functional periodogram can now be determined. 

\begin{theorem}
\label{exp}
Suppose that assumptions (A1), (A2), (A4), and (A5) with $k\leq 4$ hold. Then the mean and covariance structure of the local functional periodogram are given by
\begin{align*}
\mean \,\innerprod{\IT_{u,\omega}g_1}{g_2}
=\innerprod{\F_{u,\omega}g_1}{g_2}+\frac{1}{2}\btT^2\,\kappat\,
\SSS{\frac{\partial^2}{\partial u^2}}\innerprod{\F_{u,\omega}g_1}{g_2}
+ o(\btT^2)+O\big(\SSS{\frac{\log(\btT\,T)}{\btT\,T}}\big),
\end{align*}
and
\begin{align*}
\Cov&\big(\innerprod{\IT_{u,\omega_1}}{ g_1 \otimes g_2},\innerprod{\IT_{u,\omega_2}}{ g_3 \otimes g_4}\big)
\\&=\HNN{2}\big(\innerprod{\F_{\frac{t_{u,\bullet}}{T},\omega_1}g_3}{g_1},\omega_1-\omega_2\big)\,
\HNN{2}\big(\innerprod{\F_{\frac{t_{u,\bullet}}{T},-\omega_1}g_4}{g_2},\omega_2-\omega_1\big)\\
&\qquad+\HNN{2}\big(\innerprod{\F_{\frac{t_{u,\bullet}}{T},\omega_1}g_4}{g_1},\omega_1+\omega_2\big)\,
\HNN{2}\big(\innerprod{\F_{\frac{t_{u,\bullet}}{T},-\omega_1}g_3}{g_2},-\omega_1-\omega_2\big)\\
&\qquad +O\big(\SSS{\frac{\log(N)}{N}}\big)+ O\big(\SSS{\frac{1}{N}}\big),
\end{align*}
for all $g_1, g_2, g_3, g_4 \in H_{\cnum}$.
\end{theorem}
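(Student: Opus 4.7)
The plan is to reduce both computations to cumulant calculations for the local fDFT $\DT_{u,\omega}$, for which Proposition \ref{cumX}, applied via equation \eqref{eq:cumDNop}, already provides closed-form expressions in terms of the transfer operators and the cumulant spectra of $\veps$.

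\textbf{Mean.} Since $\mean(\XT{t})=0$, I would start from the identity
\[
\mean\langle \IT_{u,\omega}\,g_1,g_2\rangle = \frac{1}{2\pi \HNN{2}(0)}\,\cum\big(\langle \DT_{u,\omega},g_1\rangle,\langle \DT_{u,-\omega},g_2\rangle\big).
\]
Applying \eqref{eq:cumDNop} with $k=2$ and integrating the $\eta$-delta reduces this to a single $\lambda$-integral of $\HN(\AAttT{t_{u,\bullet}}{\lambda},\omega-\lambda)\otimes\HN(\AAttT{t_{u,\bullet}}{-\lambda},-\omega+\lambda)$ against $\F^{\veps}_{\lambda}$. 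Three substitutions then finish the job: first, replace each $\AAttT{t_{u,r}}{\lambda}$ by $\AAuu{t_{u,r}/T}{\lambda}$ at uniform cost $O(1/T)$ by (A2); next, Taylor-expand $\AAuu{v}{\lambda}$ in $v$ about $u$ to second order using (A4), so that, after convolution with the symmetric kernel $\Kt$ induced by $h$, the zeroth-order term reproduces $\F_{u,\omega}$, the linear term vanishes by symmetry of $\Kt$, and the quadratic term produces $\tfrac12\,\btT^2\,\kappat\,\partial_u^2 \F_{u,\omega}$ plus $o(\btT^2)$; finally, collapse the remaining $\lambda$-integral by exploiting the concentration of $\HNN{2}$ near zero, which, combined with the H\"older continuity from (A3), produces the smoothing error $O(\log(\btT T)/(\btT T))$.

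\textbf{Covariance.} I would use the product-cumulant identity
\[
\cov(\xi_1\xi_2,\,\xi_3\xi_4)=\cum(\xi_1,\xi_3)\,\cum(\xi_2,\xi_4)+\cum(\xi_1,\xi_4)\,\cum(\xi_2,\xi_3)+\cum(\xi_1,\xi_2,\xi_3,\xi_4),
\]
applied to the four fDFT components $\xi_i=\langle \DT_{u,\pm\omega_j},g_i\rangle$ dictated by the tensor convention in \eqref{eq:perker}. Substituting \eqref{eq:cumDNop} with $k=2$ into each pair cumulant and applying the convolution identity \eqref{HNfunctconvolution} yields, after replacing each $\AAttT{t_{u,r}}{\lambda}$ by $\AAuu{t_{u,r}/T}{\lambda}$, exactly the two $\HNN{2}$-weighted expressions in the statement. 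The joint fourth-order term is read off \eqref{eq:cumDNop} with $k=4$; by (A5) the fourth-order cumulant tensor of $\veps$ is summable, so after normalisation by $(2\pi\HNN{2}(0))^2\asymp N^2$ this term contributes $O(1/N)$. The $O(\log(N)/N)$ error collects the transfer-operator approximation losses together with the H\"older-type smoothing errors from (A3) that arise in the pairwise part.

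\textbf{Main obstacle.} The delicate step is uniform control of all error terms in a tensor-valued setting. Each replacement of $\AAttT{t_{u,r}}{\lambda}$ by $\AAuu{t_{u,r}/T}{\lambda}$ contributes only $O(1/T)$, but integrating against the tapered kernels $\HNN{k}$ and against a H\"older-continuous operator spectrum generates the logarithmic factors; the key tool is a Dirichlet-type decay bound of the form $|\HNN{k}(\omega)|\lesssim N/(1+N|\omega|)$ combined with (A3). Relative to the scalar case, the difficulty is that the integrands are operator- or kernel-valued, so every estimate must be performed in $\snorm{\cdot}_2$ and uniformly in $(u,\omega,g_1,\ldots,g_4)$; the auxiliary results for $\HNN{k}(G_{\bullet},\omega)$ from the appendix (in particular the convolution identity \eqref{HNfunctconvolution}) are essential to push these scalar-style estimates through in the Hilbert--Schmidt norm.
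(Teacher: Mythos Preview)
Your plan follows the same architecture as the paper's proof: write the mean and covariance via \eqref{eq:cumDNop}, decompose the covariance by the product theorem for cumulants into two pair terms plus a fourth-order term, and handle all errors with Dirichlet-type bounds on the tapered sums. There are, however, two places where your execution diverges from the paper and where your sketch is too loose.

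First, in both the mean and the pair terms of the covariance, the paper does \emph{not} proceed in your order (replace $\AAttT{t_{u,r}}{\lambda}\to\AAuu{t_{u,r}/T}{\lambda}$, then Taylor expand in $u$, then integrate out $\lambda$). Instead it replaces $\AAttT{t_{u,r}}{\lambda}$ directly by $\AAuu{t_{u,r}/T}{\omega}$, i.e.\ it also freezes the frequency argument at $\omega$. This is what makes the convolution identity \eqref{HNfunctconvolution} immediately yield $\HNN{2}(\F_{u_{j,\bullet},\omega},0)$ (resp.\ the $\HNN{2}$-expressions in the covariance). The $O(\log N/N)$ error comes precisely from this $\lambda\to\omega$ replacement, controlled via Lemma~\ref{Dh1993A5} and the $L_N$-calculus of Lemma~\ref{Dh1993A4}, using the differentiability in (A4); your appeal to (A3) is misplaced since (A3) is not among the hypotheses of the theorem. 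Once the frequency is frozen, the paper Taylor-expands $\F_{u_{j,r},\omega}$ in $u$ (not the individual transfer operators), which avoids cross terms and produces the $\tfrac12\btT^2\kappat\,\partial_u^2\F_{u,\omega}$ contribution cleanly.

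Second, for the fourth-order cumulant you cannot simply say ``summable, hence $O(1/N)$ after normalisation'' if you are working from the frequency-domain expression \eqref{eq:cumDNop}. The paper bounds this term by applying Lemma~\ref{Dh1993A5} to each factor and then integrating the resulting product of $L_N$-functions over $\Pi^3$ with Lemma~\ref{Dh1993A4}, obtaining a bound $C\,N\log(N)^2$ before normalisation. A direct time-domain summability argument would work too, but that is not the route you announced.
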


The proof exploits assumption (A2) and is based on the theory of $L$-functions \citep[][]{Dahlhaus1983}, which allows to provide upper bound conditions on the data-taper function. Details of the extension of the latter to the functional setting can be found in section \ref{taperandl} of the Appendix. 

\subsection{Consistent estimation}

Theorem \ref{exp} shows that the segmented periodogram tensor is not a consistent estimator. In order to obtain a consistent estimator we proceed by smoothing the estimator over different frequencies. That is, we consider convolving the segmented periodogram tensor with a window function in frequency direction
\begin{align}\label{eq:smoothest}
\hatFT_{u,\omega} = \SSS{\frac{1}{\bfT}} \int_{\Pi} \Kf\Big(\SSS{\frac{\omega-\lambda}{\bfT}}\Big)\,\IT_{u,\lambda}\, d\lambda,
\end{align}
where $\bfT$ denotes the bandwidth in frequency direction. To ease notation, we also write
$\KfT(\omega) =\frac{1}{\bfT} \Kf\big(\frac{\omega}{{\bfT}}\big) $.
Additionally we use subsequently
\[
\kappaf{}=\int_\rnum \omega^2\,\Kf(\omega)\,d\omega\qquad\text{and}\qquad
\norm{\Kf}^2_2=\int_\rnum \Kf^2(\omega)\,d\omega
\]
as an abbreviation for kernel-specific constants.

\begin{theorem}[Properties of the estimator $\hatFT_{u,\omega}$] \label{expsmooth}
Suppose that assumptions (A1), (A2), and (A4) to (A7) with $p=\infty$ and $k\leq 4$. Then the estimator
\begin{align} 
\hatFT_{u,\omega}
= \int_{\Pi}\KfT(\omega-\lambda)\,\IT_{u,\lambda}\,d\lambda
\end{align}
has mean
\begin{equation}
\begin{split}
\E\innerprod{\hatFT_{u,\omega}g_1}{g_2}
&=\innerprod{\F_{u,\omega}g_1}{g_2}
+\SSS{\frac{1}{2}}\,\btT^2\,\kappat\,
\SSS{\frac{\partial^2}{\partial u^2}}\innerprod{\F_{u,\omega}g_1}{g_2}
+\SSS{\frac{1}{2}}\,\bfT^2\,\kappaf\,
\SSS{\frac{\partial^2}{\partial \omega^2}}\innerprod{\F_{u,\omega}g_1}{g_2}\\
&\qquad\qquad+o(\btT^2) + o(\bfT^2)+O\big(\SSS{\frac{\log(\btT\,T)}{\btT\,T}}\big),
\end{split}
\end{equation}
and covariance structure
\begin{equation}
\begin{split}
\label{eq:covsmooth}
\Cov&\big(\innerprod{\hatFT_{u,\omega_1}}{ g_1 \otimes g_2},\innerprod{\hatFT_{u,\omega_2}}{ g_3 \otimes g_4}\big)\\& 
=\frac{2\pi\,\norm{\Kt}^2_2}{\btT\,T}
\int_{\Pi}\KfT(\omega_1-\lambda_1)\,\KfT(\omega_2-\lambda_1)\, \innerprod{\F_{u,\lambda_1}g_3}{g_1}\, \innerprod{\F_{u,-\lambda_1}g_4}{g_2}\,d\lambda_1\\
&\qquad + \frac{2\pi\,\norm{\Kt}^2_2}{\bfT\,T} \int_{\Pi}\KfT(\omega_1-\lambda_1)\, \KfT(\omega_2+\lambda_1)\,\innerprod{\F_{u,\lambda_1}g_4}{g_1}\,\innerprod{\F_{u,-\lambda_1}g_3}{g_2}\, d\lambda_1 \\
&\qquad+O\big(\SSS{\frac{\log(\btT\,T)}{\btT\,T}}\big)
+O\big(\SSS{\frac{\btT}{T}}\big)
+O\big((\btT\,\bfT\,T)^{-2}\big)
\end{split}
\end{equation}
for all $g_1, g_2, g_3, g_4 \in H_{\cnum}$.
\end{theorem}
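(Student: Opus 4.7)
The statement follows from Theorem \ref{exp} by convolving with $\KfT$ and then carrying out (i) a Taylor expansion in frequency for the mean, and (ii) a ``Dirichlet-type concentration'' argument in the covariance. Throughout, assumption (A4) supplies the smoothness of $\mathcal{A}_{u,\omega}$ (and hence of $\F_{u,\omega}$) in both arguments required to control the Taylor remainders, (A6) provides the kernel moment identities, and (A7) ensures that the various remainder terms are dominated.

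\textbf{Mean.} By Fubini and the linearity of $\langle\cdot,\cdot\rangle$,
\[
\E\langle\hatFT_{u,\omega}g_1,g_2\rangle
=\int_\Pi \KfT(\omega-\lambda)\,\E\langle\IT_{u,\lambda}g_1,g_2\rangle\,d\lambda.
\]
Substituting the mean expansion from Theorem \ref{exp} inside the integral, the $\tfrac{1}{2}\btT^2\kappat\,\partial_u^2$ term and the $O(\log(\btT T)/(\btT T))$ error pass through (uniformly in $\lambda$), while the leading $\langle\F_{u,\lambda}g_1,g_2\rangle$ piece is treated by a change of variables $x=(\omega-\lambda)/\bfT$ together with a second-order Taylor expansion of $\lambda\mapsto\langle\F_{u,\lambda}g_1,g_2\rangle$ around $\omega$. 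The zeroth-order term reproduces $\langle\F_{u,\omega}g_1,g_2\rangle$; the first-order term vanishes by the symmetry condition $\int x\,\Kf(x)\,dx=0$ of (A6)(ii); the second-order term yields $\tfrac{1}{2}\bfT^2\kappaf\,\partial_\omega^2\langle\F_{u,\omega}g_1,g_2\rangle$; and the $o(\bfT^2)$ remainder is controlled by the uniform boundedness of $\partial_\omega^2\F_{u,\omega}$ from (A4).

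\textbf{Covariance.} Again by Fubini, the covariance equals
\[
\int_\Pi\int_\Pi \KfT(\omega_1-\lambda_1)\,\KfT(\omega_2-\lambda_2)\,
\Cov\big(\langle\IT_{u,\lambda_1},g_1\otimes g_2\rangle,\langle\IT_{u,\lambda_2},g_3\otimes g_4\rangle\big)\,d\lambda_1\,d\lambda_2,
\]
into which I substitute the two leading $\HNN{2}\cdot\HNN{2}$ terms from Theorem \ref{exp}. For the first (the second is symmetric), perform the change of variables $\gamma=\lambda_1-\lambda_2$ and write $F_s=\langle\F_{t_{u,s}/T,\lambda_1}g_3,g_1\rangle$, $G_s=\langle\F_{t_{u,s}/T,-\lambda_1}g_4,g_2\rangle$. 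Since $\Kf$ is bounded-variation with bandwidth $\bfT$ and the product $\HNN{2}(F_\bullet,\gamma)\HNN{2}(G_\bullet,-\gamma)$ concentrates in $\gamma$ on a scale $1/N\ll\bfT$ (by (A7)), I may replace $\KfT(\omega_2-\lambda_1+\gamma)$ by $\KfT(\omega_2-\lambda_1)$ up to a negligible error and then apply the convolution identity \eqref{HNfunctconvolution} to obtain $\int_\Pi\HNN{2}(F_\bullet,\gamma)\HNN{2}(G_\bullet,-\gamma)\,d\gamma=2\pi\,\HNN{4}(F_\bullet\,G_\bullet,0)$. Finally, the H\"older continuity in $u$ from (A4) lets me replace $F_s$ and $G_s$ by their values at rescaled time $u$ over the segment of width $\btT\to 0$, so that $2\pi\,\HNN{4}(F_\bullet G_\bullet,0)\approx 2\pi\,\HNN{4}(0)\,F_u G_u$, which after using $\HNN{4}(0)/\HNN{2}(0)^2\sim H_4/(NH_2^2)=\|\Kt\|_2^2/N$ and $N=\btT T$ produces the stated leading factor $2\pi\,\|\Kt\|_2^2/(\btT T)$ times $\int\KfT(\omega_1-\lambda)\KfT(\omega_2-\lambda)\,\langle\F_{u,\lambda}g_3,g_1\rangle\langle\F_{u,-\lambda}g_4,g_2\rangle\,d\lambda$.

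\textbf{Error bookkeeping and main obstacle.} Three error terms have to be collected and checked against (A7): the $O(\log(\btT T)/(\btT T))$ comes from the higher-order $\HNN{2}$-remainders in Theorem \ref{exp} convolved against the $L^1$-bounded $\KfT$; the $O(\btT/T)$ absorbs the replacement error $\F_{t_{u,s}/T}\mapsto\F_u$ on the tapered segment together with the $O(1/T)$ slippage from $\mathcal{A}^{(T)}\mapsto\mathcal{A}$ via (A2); and the $O((\btT\bfT T)^{-2})$ arises from the fourth-order cumulant contribution in the $\IT$-covariance after two $\KfT$-convolutions. The main obstacle is to make the ``concentration of $\HNN{2}(\cdot,\gamma)\HNN{2}(\cdot,-\gamma)$'' step rigorous \emph{uniformly in $u$, $\omega_1$, $\omega_2$ and in the test functions $g_i$}, because the replacement of $\KfT(\omega_2-\lambda_1+\gamma)$ by $\KfT(\omega_2-\lambda_1)$ requires a quantitative bound on the $L$-function norms of the tapered Fej\'er-type kernels on $H_\cnum$. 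This is precisely where the functional $L$-function machinery of Appendix \ref{taperandl}, combined with the H\"older control on $\mathcal{A}_{u,\omega}$ from (A3)--(A4), does the real work.
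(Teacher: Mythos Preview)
Your proposal is correct and follows essentially the same route as the paper: convolve Theorem~\ref{exp} against $\KfT$, Taylor-expand in $\omega$ for the mean, and for the covariance replace $\KfT(\omega_2-\lambda_2)$ by $\KfT(\omega_2-\lambda_1)$ using the $L$-function concentration of the tapered kernels, then replace $\F_{t_{u,s}/T}$ by $\F_u$ on the segment. Your use of the convolution identity \eqref{HNfunctconvolution} to produce $\HNN{4}(F_\bullet G_\bullet,0)$ and the relation $\HNN{4}(0)/\HNN{2}(0)^2\sim\|\Kt\|_2^2/N$ makes the emergence of the leading constant more explicit than the paper's version, which simply bounds the replacement errors and reads off the result.

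One small correction in your error bookkeeping: the $O((\btT\bfT T)^{-2})$ term does \emph{not} come from the fourth-order cumulant. In the proof of Theorem~\ref{exp} that cumulant is bounded by $C\,N\log(N)^2$, which after normalizing by $\HNN{2}(0)^2\sim N^2$ and convolving with the $L^1$-bounded $\KfT$ contributes only to the $O(\log(\btT T)/(\btT T))$ term. The $O((\btT\bfT T)^{-2})$ (with a suppressed $\log$) arises instead from your own replacement step $\KfT(\omega_2-\lambda_1+\gamma)\mapsto\KfT(\omega_2-\lambda_1)$: bounding the difference by Lipschitz continuity of $\Kf$ costs a factor $\bfT^{-2}|\gamma|$, and integrating $|\gamma|\,L_N(\gamma)^2$ gives $\log(N)$, yielding $O(\log(N)/(\bfT^2 N^2))$. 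Also, (A3) is not among the hypotheses here; the required smoothness is supplied by (A4) alone.
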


The proof follows from a multivariate Taylor expansion and an application of Lemma P4.1 of \citet{Brillinger}. We note that the covariance has greatest magnitude for $\omega_1 \pm \omega_2 \equiv 0 (\text{mod} 2\pi)$, where the weight is concentrated in a band of width $O(\bfT)$ around $\omega_1$ and $\omega_2$ respectively. The above result shows that the bandwidths need to decay to zero with an appropriate rate in order to obtain consistency.

\begin{proposition}\label{sharp}
Under assumptions (A1), (A2), and (A4) to (A8) with $p=\infty$ and $k\leq 4$, we have
\begin{align*}
\lim_{T \to \infty} \btT\,\bfT\,T\,
&\Cov\big(\innerprod{\hatFT_{u,\omega_1}}{ g_1 \otimes g_2},\innerprod{\hatFT_{u,\omega_2}}{ g_3 \otimes g_4}\big)\\
&= 2\pi\,\norm{\Kt}^2_2\,\norm{\Kf}^2_2\,\eta(\omega_1-\omega_2)\, \innerprod{\F_{u,\omega_1}g_3}{g_1}\, \innerprod{\F_{u,-\omega_1}g_4}{g_2}\\
&\qquad+2\pi\,\norm{\Kt}^2_2\,\norm{\Kf}^2_2\,\eta(\omega_1+\omega_2)\, \innerprod{\F_{u,\omega_1}g_4}{g_1}\, \innerprod{\F_{u,-\omega_1}g_3}{g_2},
\tageq  
\end{align*}
for all $g_1, g_2, g_3, g_4 \in H_{\cnum}$ and for fixed $\omega_1, \omega_2$. If $\omega_1, \omega_2$ depend on $T$ then the convergence holds provided that $\liminf_{T\to\infty}|(\omega_{1,T} \pm \omega_{2,T})\mathop{\mathrm{mod}} 2\pi|>\veps$ for some $\veps>0$. 
\end{proposition}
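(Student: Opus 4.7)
The plan is to start from the exact covariance formula stated in Theorem \ref{expsmooth}, multiply through by the scaling factor $\btT\,\bfT\,T$, and analyse the resulting terms one by one. The three remainder contributions rescale to $O(\bfT\,\log(\btT T))$, $O(\btT^2\,\bfT)$, and $O((\btT\,\bfT\,T)^{-1})$; each vanishes as $T\to\infty$ by assumption (A7), in particular because $\bfT\,\log(\btT T)\to 0$, $\btT\to 0$, and $\btT\,\bfT\,T\to\infty$.

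For the first of the two leading terms I would perform the substitution $\lambda=\omega_1-\bfT x$, under which $\KfT(\omega_1-\lambda)=\bfT^{-1}\Kf(x)$, $\KfT(\omega_2-\lambda)=\bfT^{-1}\Kf\bigl(x+(\omega_2-\omega_1)/\bfT\bigr)$, and $d\lambda=\bfT\,dx$. This rewrites the scaled leading term as
\begin{align*}
2\pi\,\norm{\Kt}_2^2
\int \Kf(x)\,\Kf\!\Bigl(x+\tfrac{\omega_2-\omega_1}{\bfT}\Bigr)\,\innerprod{\F_{u,\omega_1-\bfT x}g_3}{g_1}\,\innerprod{\F_{u,-\omega_1+\bfT x}g_4}{g_2}\,dx.
\end{align*}
Because $\Kf$ is supported on $[-1,1]$, the shifted factor vanishes identically whenever $|\omega_1-\omega_2|/\bfT>2$. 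Under the hypothesis $\liminf_T|(\omega_{1,T}-\omega_{2,T})\mathop{\mathrm{mod}}2\pi|>\veps$ (or, for fixed $\omega_j$, whenever $\omega_1\not\equiv\omega_2\!\pmod{2\pi}$), this happens for all sufficiently large $T$, so the integral is identically zero in the limit, matching $\eta(\omega_1-\omega_2)=0$. When $\omega_1\equiv\omega_2\!\pmod{2\pi}$, the integrand collapses to $\Kf(x)^2\,\innerprod{\F_{u,\omega_1-\bfT x}g_3}{g_1}\,\innerprod{\F_{u,-\omega_1+\bfT x}g_4}{g_2}$, and dominated convergence combined with continuity of $\omega\mapsto\innerprod{\F_{u,\omega}g}{g'}$ (guaranteed by the H\"older continuity of $\AAu{\omega}$ in (A3) via \eqref{eq:tvSDO}) yields the limit $\norm{\Kf}_2^2\,\innerprod{\F_{u,\omega_1}g_3}{g_1}\,\innerprod{\F_{u,-\omega_1}g_4}{g_2}$.

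The second leading term is handled by the symmetric substitution $\lambda=-\omega_1+\bfT x$, which by the analogous argument produces the companion contribution controlled by $\eta(\omega_1+\omega_2)$. Summing the two limits then gives the stated expression.

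The main obstacle is the uniform-in-$T$ treatment when $\omega_1,\omega_2$ vary with $T$. The strict separation $\liminf|(\omega_{1,T}\pm\omega_{2,T})\mathop{\mathrm{mod}}2\pi|>\veps$ is precisely what is needed to ensure the off-diagonal integrand vanishes identically for all large $T$ (so its contribution is exactly zero rather than merely $o(1)$), while an integrable envelope for dominated convergence is supplied by boundedness of $\omega\mapsto\snorm{\F_{u,\omega}}_\infty$, which follows from the trace-class property of $\F_{u,\omega}$ established under (A1)--(A3).
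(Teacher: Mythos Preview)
Your argument is essentially the same as the paper's: rescale the covariance expression from Theorem~\ref{expsmooth}, perform a change of variables in the leading integrals, exploit the compact support of $\Kf$ to kill the off-diagonal contribution, and pass to the limit on the diagonal via an approximate-identity/dominated-convergence step. Two small citation slips: the bandwidth conditions you invoke are assumption (A8), not (A7); and the continuity of $\omega\mapsto\F_{u,\omega}$ should be drawn from (A4) (twice continuous differentiability), since (A3) is not among the hypotheses of this proposition.
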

The proof follows straightforwardly from a change of variables and a functional generalization of approximate identities \citep[e.g.,][]{Edwards1967}. 

\begin{corollary} \label{sharp2}
Under the same assumptions
\begin{align*}
\Bigsnorm{\Cov \big(\hatFT_{u, \omega_1},\hatFT_{u, \omega_2}\big)}_2
=O\Big(\SSS{\frac{1}{\bfT\,\btT\,T}}\Big)
\end{align*}
uniformly in $\omega_1,\omega_2\in[-\pi,\pi]$ and $u\in[0,1]$.
\end{corollary}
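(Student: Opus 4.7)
\emph{Proof proposal.} My plan is to lift the inner-product expansion of the covariance given in Theorem~\ref{expsmooth} to a Hilbert--Schmidt bound on the covariance operator itself, and then to read off the rate $(\bfT \btT T)^{-1}$ from the kernel-smoothing integrals together with the bandwidth assumptions in (A7). The covariance $\Cov(\hatFT_{u,\omega_1}, \hatFT_{u,\omega_2})$ lives naturally in $S_2(H_\cnum \otimes H_\cnum)$ and its kernel is determined by the four-index mapping $(g_1,g_2,g_3,g_4)\mapsto \Cov(\langle \hatFT_{u,\omega_1}, g_1\otimes g_2\rangle,\langle \hatFT_{u,\omega_2}, g_3\otimes g_4\rangle)$. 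First I would identify the leading-order operator in $S_2(H_\cnum\otimes H_\cnum)$ corresponding to the two explicit terms in \eqref{eq:covsmooth}, namely (up to the permutation of tensor slots induced by swapping $g_3\leftrightarrow g_4$)
\[
\frac{2\pi\norm{\Kt}_2^2}{\btT T}\int_{\Pi}\KfT(\omega_1-\lambda)\bigl[\KfT(\omega_2-\lambda)\,\F_{u,\lambda}\otimes\F_{u,-\lambda}+\KfT(\omega_2+\lambda)\,\F_{u,\lambda}\otimes^{\!\pi}\F_{u,-\lambda}\bigr]\,d\lambda.
\]

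Second, I would use the multiplicativity $\snorm{A\otimes B}_2=\snorm{A}_2\snorm{B}_2$ together with the uniform bound $\sup_{u\in[0,1],\,\omega\in\Pi}\snorm{\F_{u,\omega}}_2<\infty$, which follows from the representation $\F_{u,\omega}=\AAu{\omega}\F^\veps_\omega\AAu{\omega}^\dagger$, the Hölder continuity of $\AAu{\omega}$ in $(u,\omega)$ assumed in (A3), and the compactness of $[0,1]\times\Pi$. Combined with the standard kernel inequality
\[
\sup_{\omega_1,\omega_2\in\Pi}\int_{\Pi}|\KfT(\omega_1-\lambda)||\KfT(\omega_2\pm\lambda)|\,d\lambda\le \norm{\KfT}_\infty\,\norm{\KfT}_1=O(\bfT^{-1}),
\]
this bounds the leading-order part in HS-norm by $O((\bfT\btT T)^{-1})$ uniformly in $u,\omega_1,\omega_2$.

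Third, I would handle the three remainder terms in \eqref{eq:covsmooth}. Tracking the proof of Theorem~\ref{exp} and Theorem~\ref{expsmooth}, the error constants should be independent of the test functions $g_1,\dots,g_4$ up to factors $\prod_i\norm{g_i}_2$, so the remainder defines a bounded bilinear form whose HS-operator norm inherits the same rates $O(\log(\btT T)/(\btT T))$, $O(\btT/T)$, $O((\btT\bfT T)^{-2})$ (with a multiplicative factor coming from switching from pointwise to uniform bounds that involves only $\sup_{u,\omega}\snorm{\F_{u,\omega}}_2$ and finitely many derivatives thereof). Invoking (A7) yields $\bfT\log(\btT T)\to 0 \Rightarrow \log(\btT T)/(\btT T)=o((\bfT\btT T)^{-1})$; $\btT^2\bfT\to 0 \Rightarrow \btT/T=o((\bfT\btT T)^{-1})$; and $\bfT\btT T\to\infty \Rightarrow (\btT\bfT T)^{-2}=o((\bfT\btT T)^{-1})$. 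Collecting all contributions gives the claimed bound.

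The main obstacle is the third step: Theorem~\ref{expsmooth} is stated only for inner products against fixed $g_i\in H_\cnum$, so one has to verify that the remainder terms are, at the operator level, bounded in HS-norm uniformly in $\omega_1,\omega_2,u$. This entails reexamining the proof of Theorem~\ref{exp} and checking that the applications of the $L$-function machinery and of the convolution identity \eqref{HNfunctconvolution} can be made in $S_2$-norm on $H_\cnum\otimes H_\cnum$ rather than only after pairing with test functions --- which relies crucially on assumptions (A1), (A2), and the fourth-order cumulant summability in (A5), via Proposition~\ref{cumX} ensuring $\F^\veps_{\lambda_1,\lambda_2,\lambda_3}$ is HS-bounded uniformly in the frequency arguments.
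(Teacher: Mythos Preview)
Your approach is essentially the same as the paper's: bound the leading integral terms via $\norm{\KfT}_\infty\norm{\KfT}_1=O(\bfT^{-1})$ together with $\sup_{u,\omega}\snorm{\F_{u,\omega}}_2<\infty$, then check that the remainder rates are dominated under the bandwidth assumptions; the paper simply points to equation~\eqref{covsmoothia} (which is already derived at the operator level in the proof of Theorem~\ref{expsmooth}) and reads off the bound. Two small labeling slips: the uniform boundedness of $\snorm{\F_{u,\omega}}_2$ comes from (A2) and (A4) here (not (A3), which is not among the hypotheses of Proposition~\ref{sharp}), and the bandwidth rates you invoke are assumption (A8), not (A7).
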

\begin{proof}
Note that $\norm{\KfT}_{\infty}=O\big(\frac{1}{\bfT}\big)$
and $\norm{\KfT}_{1}=1$, from which it is easy to see that
\[
\sup_{\omega}\Bignorm{\int_{\Pi}\KfT(\omega+\lambda)\,\KfT(\lambda)\,d\lambda}_2 =O\Big(\SSS{\frac{1}{\bfT}}\Big).
\]
Since $\snorm{\F_{u,\omega}}_{2}$ is moreover uniformly bounded in $u$ and $\omega$, the statement follows directly from equation \eqref{covsmoothia}.
\end{proof}

\begin{remark}  \label{kernelremark}
Theorem \ref{exp}, Theorem \ref{expsmooth}, Proposition \ref{sharp}, and Corollary \ref{sharp2} can be shown to hold in a stronger sense under additional assumptions. Namely, if the respective set of assumptions hold with $p=2$ then the kernel function of the transfer operator $\AAtT{\omega}$ is well-defined. If this function is continuous and the white noise process $\{\varepsilon_t\}$ is moreover mean square continuous with $\sup_{\omega_1,\ldots,\omega_{k-1}}\|f^{\varepsilon}_{\omega_1,\cdots,\omega_{k-1}}\|_{\infty} < \infty$ for $k \le 4$, the aforementioned statements hold in uniform norm.
\end{remark}

\begin{theorem}[Convergence in integrated mean square]
\label{IMSE}
Suppose that assumptions (A1),(A2), and (A4) to (A8) with $k=2,4$ hold. Then the spectral density operator is consistent in integrated mean square. More precisely, we have
\begin{align*}
\mathrm{IMSE}(\hatFT_{u,\omega})
&=\int_{\Pi} \mean\bigsnorm{ \hatFT_{u,\omega}-\F_{u, \omega}}^2_2 \,d\omega\\
&=O\big((\btT\,\bfT\,T)^{-1}\big)+ o\big(\btT^2+\bfT^2+(\btT\,T)^{-1}\log(\btT\,T)\big).
\end{align*}
Since it is uniform in $\omega \in \Pi$, we have pointwise mean square convergence where the error also satisfies
$\mean\snorm{\hatFT_{u,\omega}-\F_{u,\omega}}^2_2
=O(\normalfont{\frac{1}{b_{\text{t}} b_{\text{f}} T}})+ o\big(b_{\text{t}}^2+b_{\text{f}}^2 + \frac{\log b_{\text{t}} T}{b_{\text{t}} T} \big)$.
\end{theorem}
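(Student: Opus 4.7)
The route is a bias--variance decomposition of the mean squared error in Hilbert--Schmidt norm, followed by integration in $\omega$. Because $S_2(H_\cnum)$ is a Hilbert space and $\F_{u,\omega}$ is deterministic,
\begin{equation*}
\mean\bigsnorm{\hatFT_{u,\omega}-\F_{u,\omega}}_2^2
=\bigsnorm{\mean\hatFT_{u,\omega}-\F_{u,\omega}}_2^2
+\mean\bigsnorm{\hatFT_{u,\omega}-\mean\hatFT_{u,\omega}}_2^2,
\end{equation*}
so the two pieces can be handled separately via Theorem \ref{expsmooth}. All pointwise bounds below will come out uniform in $\omega\in\Pi$, which gives the pointwise claim automatically and lets the integration in $\omega$ preserve the rate.

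For the bias I would read the expansion in Theorem \ref{expsmooth} at the operator level by expanding in any orthonormal basis $\{e_j\}$ of $H_\cnum$. Parseval gives
\begin{equation*}
\bigsnorm{\mean\hatFT_{u,\omega}-\F_{u,\omega}}_2^2
=\sum_{i,j}\bigl|\mean\innerprod{\hatFT_{u,\omega}e_j}{e_i}-\innerprod{\F_{u,\omega}e_j}{e_i}\bigr|^2.
\end{equation*}
Plugging in the scalar bias expansion from Theorem \ref{expsmooth} coordinatewise, the dominant contributions after summation over $i,j$ reconstitute $\tfrac{1}{4}\btT^4\kappat^2\bigsnorm{\partial_u^2\F_{u,\omega}}_2^2$ and $\tfrac{1}{4}\bfT^4\kappaf^2\bigsnorm{\partial_\omega^2\F_{u,\omega}}_2^2$, both uniformly bounded in $(u,\omega)$ by (A4). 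Provided the $o(\btT^2)$, $o(\bfT^2)$ and $O(\log(\btT T)/(\btT T))$ remainders in Theorem \ref{expsmooth} are uniform in $\norm{g_1},\norm{g_2}\le1$, the basis sum turns them into an operator remainder of the same HS order. Squaring yields $O(\btT^4+\bfT^4+(\btT T)^{-2}\log^2(\btT T))$, which is $o(\btT^2+\bfT^2+(\btT T)^{-1}\log(\btT T))$ and survives integration over the bounded interval $\Pi$.

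For the variance, the same basis expansion yields
\begin{equation*}
\mean\bigsnorm{\hatFT_{u,\omega}-\mean\hatFT_{u,\omega}}_2^2
=\sum_{i,j}\Var\bigl(\innerprod{\hatFT_{u,\omega}e_j}{e_i}\bigr).
\end{equation*}
I would insert \eqref{eq:covsmooth} with $\omega_1=\omega_2=\omega$, $g_1=g_4=e_i$ and $g_2=g_3=e_j$, and collapse the double sum using $\F_{u,-\lambda}=\F_{u,\lambda}^{\dagger}$:
\begin{equation*}
\sum_{i,j}\innerprod{\F_{u,\lambda}e_j}{e_i}\innerprod{\F_{u,-\lambda}e_i}{e_j}
=\mathrm{tr}(\F_{u,\lambda}^{\dagger}\F_{u,\lambda})=\bigsnorm{\F_{u,\lambda}}_2^2,
\end{equation*}
which is uniformly bounded in $(u,\lambda)$. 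Combined with $\norm{\KfT}_{\infty}=O(\bfT^{-1})$, $\norm{\KfT}_{1}=1$ (so that $\int\KfT(\omega-\lambda)^2\,d\lambda=O(\bfT^{-1})$), the leading term is $O((\btT\bfT T)^{-1})$; the remainders in \eqref{eq:covsmooth} are of strictly smaller order under (A7)--(A8), and integration over $\Pi$ does not change the rate. Adding bias and variance produces the claimed IMSE bound, and the pointwise claim drops out from the same pointwise estimates at a single $\omega$. The main obstacle is the variance step: \eqref{eq:covsmooth} is stated for fixed test functions, so one has to verify that its error terms are uniform in the basis pair $(e_i,e_j)$ and that the infinite basis sum may be interchanged with the $\lambda$-integral. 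Uniformity is built into the proof of Theorem \ref{expsmooth}, and the $S_2$-bound on the covariance tensor furnished by Corollary \ref{sharp2} makes the basis sum finite and thereby legitimates Fubini.
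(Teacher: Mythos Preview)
Your approach is essentially the paper's: a bias--variance split in $S_2(H_\cnum)$, with the bias controlled by Theorem~\ref{expsmooth} and the variance by Corollary~\ref{sharp2}. The paper avoids the basis expansion altogether and writes both terms directly as $L^2$--integrals of the kernel, i.e.\ $\int_{[0,1]^2}|\mean\hatfT_{u,\omega}-f_{u,\omega}|^2$ and $\int_{[0,1]^2}\Var(\hatfT_{u,\omega})$, but this is the same computation in different coordinates.

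One point in your bias paragraph needs more care. You argue that if the scalar remainders in Theorem~\ref{expsmooth} are uniform in $\norm{g_1},\norm{g_2}\le 1$, then the basis sum reproduces an HS remainder of the same order. That implication fails: a bound uniform over unit test vectors only controls the operator norm of the remainder, and $\sum_{i,j}|\innerprod{R e_j}{e_i}|^2=\snorm{R}_2^2$ can be infinite even when $\snorm{R}_\infty$ is small. What you actually need---and what the proof of Theorem~\ref{expsmooth} in the appendix in fact delivers---is that the remainder is $O(\cdot)$ in Hilbert--Schmidt norm (the argument there is carried out in $\snorm{\cdot}_2$ throughout). Invoke that operator-level bound directly, or follow the paper and pass to kernels; either way the Parseval step becomes unnecessary for the bias. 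Your variance argument is fine once you note that both terms in \eqref{eq:covsmooth} contribute after the $i,j$--sum (one collapses to $\snorm{\F_{u,\lambda}}_2^2$, the other to $|\trace\F_{u,\lambda}|^2$), and both are uniformly bounded since $\F_{u,\lambda}\in S_1(H_\cnum)$.
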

The proof follows almost straightforwardly from decomposing the above in terms of its variance and its squared bias.
\textcolor{black}{\begin{remark}[Discrete observations]
In practice, functional data commonly are observed not continuously but only on a discrete grid. In this case, the above consistency results continue to hold only under additional regularity assumptions. To illustrate the effect of discrete observations, suppose that, for given $T$, we observe the random functions $X_t(\tau)$, $t=1,\ldots,T$, on the grid points
$0 \leq \tau_{1} < \ldots < \tau_{M} \leq 1$. The corresponding discrete data $x_{i,t}=X_{t}(\tau_{i})$ are interpolated to yield functions $X^M_t(\tau)$ in some $M$--dimensional subspace $H_M$ of $H$. Furthermore, let $P_M\,X_t$ be the orthogonal projection of $X_t$ onto $H_M$. Then the mean square error of the estimator $\hat{\F}_{u,\omega}(X^M)$ based on the discrete observation can be bounded by application of Minkowski's inequality by
\begin{equation}
\label{errordecomp}
\begin{split}
\E&\snorm{\hat{\F}_{u,\omega}(X^M)-\F_{u,\omega}}_2^2\\
&\leq 2\,\E\snorm{\hat{\F}_{u,\omega}(X^M)-\hat{\F}_{u,\omega}(P_M\,X)}_2^2
+2\,\E\snorm{\hat{\F}_{u,\omega}(P_M\,X)-\F_{u,\omega}}_2^2.
\end{split}
\end{equation}
Here, the first term can be interpreted as the error due to discretization. In the second term, the mean square error of $\hat{\F}_{u,\omega}(P_M\,X)$ can be rewritten as
\[
\begin{split}
\E&\snorm{\hat{\F}_{u,\omega}(P_M\,X)-\F_{u,\omega}}_2^2\\
&=\E\snorm{\hat{\F}_{u,\omega}(P_M\,X)-\F_{u,\omega}(P_M\,X)}_2^2
+\snorm{\F_{u,\omega}(P_M\,X)-\F_{u,\omega}}_2^2,
\end{split}
\]
where $\F_{u,\omega}(P_M\,X)$ is the time-varying spectral density operator of the process $P_M\,X_t$. The first term describes the estimation error and, by Parseval's equality, is bounded by
\[
\E\snorm{\hat{\F}_{u,\omega}(P_M\,X)-\F_{u,\omega}(P_M\,X)}_2^2
\leq\E\snorm{\hat{\F}_{u,\omega}(X)-\F_{u,\omega}}_2^2
\]
and hence converges to zero as $T\to\infty$ in a local stationary framework. Finally, the second term is the approximation error due to replacing the functions $X_t$ by their projections $P_M\,X_t$. Under suitable regularity conditions on the functions $X_t$ and for an increasingly dense grid, the approximation error can be made arbitrarily small. Furthermore, the discretization error $\norm{X^M_t-P_M\,X_t}_2$ converges to zero such that for an appropriate rate of $M\to\infty$, the error due to discretization in \ref{errordecomp} also tends down to zero. The detailed derivations are similar to those in section 5 of \citet{PanarTav2013a} and are therefore omitted.
\end{remark}}

\subsection{Weak convergence of the empirical process}

The results of the previous section give rise to investigating the limiting distribution of $\hatFT_{u,\omega}$, the local estimator of the spectral density operator. We will proceed by showing that joint convergence of its kernel $\hatfT_{u,\omega}$ to complex Gaussian elements in $L^2_\cnum([0,1]^2)$ can be established.

Consider the sequence of random elements $\big(\ET_{u,\omega}\big)_{T\in\nnum}$ in $L^2_\cnum([0,1]^2)$,
where
\[
\ET_{u,\omega}
=\sqrt{\btT\,\bfT\,T}\,\big(\hatfT_{u,\omega}
-\mean\big[\hatfT_{u,\omega}\big]\big)
\]
for fixed $\omega\in[-\pi,\pi]$ and $u\in[0,1]$. In order to establish convergence in $L^2_\cnum([0,1]^2)$, it is more appropriate to consider the representation of $\ET_{u,\omega}$ with respect to some orthonormal basis. For this, let $\{\psi_m\}_{m\in\nnum}$ be an orthonormal basis of $H_\cnum$. Then $\{\psi_{mn}\}_{m,n\in\nnum}$ with $\psi_{mn}=\psi_m\otimes\psi_n$ forms an orthonormal basis of $L^2_\cnum([0,1]^2)$, and $\ET_{u,\omega}$ equals
\[
\ET_{u,\omega}=\lsum_{m,n\in\nnum}
\innerprod{\ET_{u,\omega}}{\psi_{mn}}\,\psi_{mn}.
\]
Hence, the distribution of $\ET_{u,\omega}$ is fully characterized by the finite-dimensional distribution of the coefficients of its basis representation. Furthermore, weak convergence of $\ET_{u,\omega}$ will follow from the weak convergence of $\big(\innerprod{\ET_{u,\omega}}{\psi_{mn}}\big)_{m,n\in\nnum}$ in the sequence space $\ell^2_\cnum$. Subsequently, we identify $\ET_{u,\omega}$ with its dual $(\ET_{u,\omega})^*\in L^2_\cnum([0,1]^2)^*$ and write
\[
\ET_{u,\omega}(\phi)=\innerprod{\ET_{u,\omega}}{\phi}
\]
for all $\phi\in L^2_\cnum([0,1]^2)$.

To show convergence to a Gaussian functional process, we make use of the following result by \citet{CremerKadelka}, which weakens the tightness condition usually employed to prove weak convergence and generalizes earlier results by \citet{Grinblat}.

\begin{lemma}
\label{convergencecriterion}
Let $(T,\mathcal{B},\mu)$ be a measure space, let $(E,|\cdot|)$ be a Banach space, and let $(X_n)_{n\in\nnum}$ be a sequence of random elements in $L^p_E(T,\mu)$ such that
\begin{romanlist}
\item
the finite-dimensional distributions of $X_n$ converge weakly to those of a random element $X_0$ in $L^p_E(T,\mu)$ and
\item
$\DS\limsup_{n\to\infty}\mean\norm{X_n}_p^p\leq\mean\norm{X_0}_p^p$.
\end{romanlist}
Then $X_n$ converges weakly to $X_0$ in $L^p_E(T,\mu)$.
\end{lemma}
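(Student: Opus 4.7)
The plan is to invoke Prokhorov's theorem: by (i), any subsequential weak limit of $(X_n)$ must coincide with $X_0$, so it suffices to establish tightness of $\{X_n\}_{n\in\nnum}$ in $L^p_E(T,\mu)$. Condition (ii) is precisely the ingredient that delivers this tightness in the absence of a classical equicontinuity modulus.

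First, fix a sequence of finite-rank operators $\pi_k:L^p_E(T,\mu)\to L^p_E(T,\mu)$ with $\pi_k f\to f$ in $L^p$-norm for every $f$, built from a nested sequence of measurable partitions of $T$ into sets of finite $\mu$-measure together with projections onto a growing finite-dimensional subspace of $E$. For each fixed $k$, $\pi_k X_n$ takes values in a finite-dimensional Banach space and, by (i), converges weakly to $\pi_k X_0$; so $\{\pi_k X_n\}_n$ is automatically tight inside that subspace.

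Second, the tails $X_n-\pi_k X_n$ are controlled using (ii). The pivotal intermediate claim is that $\|X_n\|_p^p\to\|X_0\|_p^p$ in distribution, which is obtained by approximating $\|\cdot\|_p^p$ by functionals of the form $\sum_i |\langle\phi_i,\cdot\rangle|^p\,\alpha_i$ for a dense countable family $(\phi_i)$ in $L^q_{E^*}$ (with $1/p+1/q=1$), then applying (i) to each finite truncation and passing to the limit in the truncation level. Combining this distributional convergence of norms with the Fatou-type inequality $\mean\|X_0\|_p^p\leq\liminf_n\mean\|X_n\|_p^p$ for nonnegative random variables and the reverse bound supplied by (ii), one obtains $\mean\|X_n\|_p^p\to\mean\|X_0\|_p^p$. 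Vitali's theorem then yields uniform integrability of $\{\|X_n\|_p^p\}_n$, which together with the strong convergence $\pi_k\to\mathrm{Id}$ upgrades to $\lim_{k\to\infty}\sup_n\mean\|X_n-\pi_k X_n\|_p^p=0$.

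Combining the two ingredients delivers tightness: for prescribed $\delta>0$, choose $k$ such that $\sup_n\prob(\|X_n-\pi_k X_n\|_p>\delta)<\delta$ via Markov, and a compact $K\subset\pi_k(L^p_E(T,\mu))$ carrying the laws of $\pi_k X_n$ with probability at least $1-\delta$ uniformly in $n$; the closed $\delta$-neighborhood of $K$ is then relatively compact in $L^p_E(T,\mu)$ and absorbs $X_n$ with probability at least $1-2\delta$. Prokhorov combined with the limit identification from (i) completes the argument. The main obstacle is the analytic step showing that $\|X_n\|_p^p\to\|X_0\|_p^p$ in distribution solely from (i): it requires expressing $\|\cdot\|_p^p$ as a pointwise limit of continuous functions of countably many dual-pairings with approximation error uniform on $\|\cdot\|_p$-balls. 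This is the heart of the Grinblat/Cremer-Kadelka substitution of classical tightness by the moment condition (ii), and it leans essentially on the separability implicit in the construction of the $\pi_k$.
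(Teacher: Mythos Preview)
The paper does not prove this lemma; it is quoted from \citet{CremerKadelka} and used as a black box. Your attempt, however, has a genuine gap.

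The critical step is the claim that uniform integrability of $\{\|X_n\|_p^p\}_n$ ``together with the strong convergence $\pi_k\to\mathrm{Id}$ upgrades to $\lim_{k\to\infty}\sup_n\mean\|X_n-\pi_k X_n\|_p^p=0$.'' This implication is false as stated. Uniform integrability of the real random variables $\|X_n\|_p^p$ controls only the tails of their distributions on $[0,\infty)$; it says nothing about where the mass of $X_n$ is located inside $L^p_E(T,\mu)$. And $\pi_k\to\mathrm{Id}$ strongly means $\|f-\pi_k f\|_p\to 0$ for each \emph{fixed} $f$, with no uniformity on norm-bounded sets, since the identity on an infinite-dimensional Banach space is not compact. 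To close this one must feed condition~(i) back in: for $p=2$ one can use Pythagoras, $\mean\|X_n-\pi_k X_n\|_2^2=\mean\|X_n\|_2^2-\mean\|\pi_k X_n\|_2^2$, and handle each term with (i) and (ii) separately; but for general $p$ no such identity is available and you provide no substitute. A related issue arises upstream: the approximation of $\|\cdot\|_p^p$ by sums $\sum_i|\langle\phi_i,\cdot\rangle|^p\alpha_i$ is a Parseval-type statement that holds only for $p=2$, and your projections $\pi_k$, being built from averages over partition sets, are not functions of finitely many point evaluations $X_n(t_j)$, so condition~(i) (which concerns the joint laws of $(X_n(t_1),\ldots,X_n(t_m))$) does not directly deliver weak convergence of $\pi_k X_n$ either. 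The original Cremers--Kadelka argument takes a different route, working directly with integral functionals of the paths and a Vitali-type convergence theorem applied to $t\mapsto|X_n(t)|_E^p$ rather than attempting to recover tightness and invoke Prokhorov.
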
 

In our setting, the weak convergence of the process $\ET_{u,\omega}$ in $L^2_\cnum([0,1]^2)$ will follow from the joint convergence of
$\ET_{u,\omega}(\psi_{m_1,n_1}),\ldots,\ET_{u,\omega}(\psi_{m_k,n_k})$ for all $k\in\nnum$ and the condition
\begin{equation}
\label{cremerkadelka-condition}
\mean\bignorm{\ET_{u,\omega}}^2_2
=\lsum_{m,n\in\nnum}\mean\big|\ET_{u,\omega}(\psi_{mn})\big|^2\to
\lsum_{m,n\in\nnum}\mean\big|E_{u,\omega}(\psi_{mn})\big|^2
=\mean\bignorm{E_{u,\omega}}^2_2
\end{equation}
as $T\to\infty$. In contrast, \citet{PanarTav2013a} employ the slightly stronger condition
\[
\big|\ET_{u,\omega}(\psi_{mn})\big|^2\leq \phi_{mn}
\]
for all $T\in\nnum$ and $m,n\in\nnum$ and some sequence $(\phi_{mn})\in\ell^1$. In fact, the condition corresponds in our setting to the one given in \citet{Grinblat}. Finally, we note that condition \eqref{cremerkadelka-condition} is sufficient for our purposes,  but recently it has been shown \citep{Bogachev} that it can be further weakened to
\[
\sup_{T\in\nnum}\mean\bignorm{\ET_{u,\omega}}^2_2<\infty.
\]

For the convergence of the finite-dimensional distributions, we show convergence of the cumulants of all orders to that of the limiting process.
For the first and second order cumulants of $\ET_{u,\omega}(\psi_{mn})$, this follows from Theorem \ref{expsmooth}. It therefore remains to show that all cumulants of higher order vanish asymptotically.

\begin{proposition}
\label{convCLT}
Suppose that assumptions (A1), (A2), and (A4) to (A8) for some $k\geq 3$
hold. Then, for all $u\in[0,1]$ and for all $\omega_{i}\in[-\pi,\pi]$ and $m_i,n_i\in\nnum$ for $i=1,\ldots,k$, we have
\begin{align}
%(\btT\,\bfT\,T)^{n/2}\,
\cum\big(\ET_{u,\omega_{1}}(\psi_{m_1n_1}),\ldots,
\ET_{u,\omega_{k}}(\psi_{m_kn_k})\big)=o(1)
\end{align}
as $T\to\infty$.
\end{proposition}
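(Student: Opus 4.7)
The plan is to show that the $k$-th order joint cumulants of the smoothed localised periodogram decay strictly faster than the normalisation $(\btT\,\bfT\,T)^{k/2}$ grows, so that their product tends to zero whenever $k\geq 3$. The target is an $O((\btT\,\bfT\,T)^{-(k-1)})$ bound on the unnormalised cumulant, which after multiplication by $(\btT\,\bfT\,T)^{k/2}$ gives $O((\btT\,\bfT\,T)^{1-k/2})=o(1)$ for $k\geq 3$ by the bandwidth condition $\btT\,\bfT\,T\to\infty$ in (A8).

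First I would use multilinearity of cumulants to drop the centering (which only affects first cumulants) and to pull the smoothing kernels outside the cumulant, so that the quantity of interest reduces to
\[
(\btT\,\bfT\,T)^{k/2}\int_{[-\pi,\pi]^k}\Big(\prod_{i=1}^{k}\KfT(\omega_i-\lambda_i)\Big)\,
\cum\big(\innerprod{\IT_{u,\lambda_1}}{\psi_{m_1n_1}},\ldots,\innerprod{\IT_{u,\lambda_k}}{\psi_{m_kn_k}}\big)\,d\lambda_1\cdots d\lambda_k.
\]
Since $\innerprod{\IT_{u,\lambda_i}}{\psi_{m_in_i}}$ factors, up to the constant $(2\pi\HNN{2}(0))^{-1}$, as a product of the two scalar inner products of $\DT_{u,\lambda_i}$ against $\psi_{m_i}$ and $\psi_{n_i}$, I would apply the product theorem for cumulants of \citet{Brillinger} to expand the joint cumulant of these $k$ products as a sum over indecomposable partitions $\nu=\{\nu_1,\ldots,\nu_p\}$ of the index set $\{(i,a):1\leq i\leq k,\,a\in\{1,2\}\}$, with each block $\nu_j$ contributing the joint cumulant of the corresponding fDFTs.

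Next, using Proposition \ref{cumX} together with formula \eqref{eq:cumDNop}, each block cumulant may be written as an integral against $\F^{\varepsilon}$ of a tensor product of $\HN(\AAttT{t_{u,\bullet}}{\cdot},\cdot)$ factors, with an $\eta$-factor constraining the block frequencies to sum to zero modulo $2\pi$. By the functional $L$-function estimates recorded in the appendix, each such $\HN$-factor behaves as an approximate identity of mass $N\simeq\btT\,T$ and width $N^{-1}$, so that the indecomposability of $\nu$ forces all but one of the frequency integrations within each block to be localised and ties the surviving integrations across blocks. A combinatorial count of the type used in \citet{Dahlhaus1996a} then yields the target bound $O((\btT\,\bfT\,T)^{-(k-1)})$ for the joint cumulant of smoothed localised periodograms. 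Assumption (A2) enters at this stage to replace each $\AAttT{t_{u,r}}{\cdot}$ by the common smooth operator $\AAu{\cdot}$ up to a uniform $O(1/T)$ error, so that the scalar $L$-function estimates apply inside the tensor calculus.

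The main obstacle will be the careful combinatorial accounting of the contributions over indecomposable partitions together with the management of the $L$-function bounds when their arguments are operator-valued rather than scalar, which is precisely where the functional setting departs from the analysis of \citet{Dahlhaus1996a}. Dualisation against the basis pairs $\psi_{mn}$ is by contrast harmless, since $\norm{\psi_{mn}}_2=1$ and $\F^{\varepsilon}$ is uniformly Hilbert-Schmidt bounded under (A5), so all the bounds pass through the inner products uniformly in $m_i,n_i$ and $\omega_1,\ldots,\omega_k$.
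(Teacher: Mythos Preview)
Your proposal is correct and follows essentially the same route as the paper's proof: expand the cumulant via the product theorem over indecomposable partitions of the $2k$ fDFT factors, express each block cumulant via \eqref{eq:cumDNop}, bound the operator-valued $\HN$-factors by $L_N$ via Lemma~\ref{Dh1993A5}, and then carry out a combinatorial count. Your target bound $O((\btT\bfT T)^{-(k-1)})$ for the unnormalised cumulant is the right order (up to logarithmic factors).

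The one place where the paper is more specific than your outline is the combinatorial count itself: the paper does not appeal to a Dahlhaus-type argument in the abstract but invokes Lemma~\ref{EichlerL} (Lemma~2 of \citet{Eichler2007}), which requires splitting into the cases $M=k$ (all blocks of size two) and $M<k$ (some block of size at least three), and in each case bounding all but one or two of the $\KfT$ factors crudely by $\norm{\Kf}_\infty/\bfT$ before integrating the $L_N$ product. This interplay between the $\KfT$ and $L_N$ factors is what produces the factor $(\bfT N)^{-(k/2-1)}$, and your description of indecomposability ``tying the surviving integrations across blocks'' is the correct intuition for it but would need Lemma~\ref{EichlerL} to be made precise. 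Also, (A2) is not used to replace $\AAttT{t_{u,r}}{\cdot}$ by $\AAu{\cdot}$ in this proof; rather, the $L_N$ bound on $\HN(\AAttT{t_{u,\bullet}}{\alpha},\cdot)$ comes directly from Lemma~\ref{Dh1993A5}, which uses the differentiability in (A4).
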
 

The distributional properties of the functional process can now be summarized in the following theorem.
\begin{theorem}[Weak convergence]
\label{CLT}
Suppose that assumptions (A1), (A2), and (A4) to (A8) for all $k\in\nnum$
hold. Then
\begin{equation} 
\big(\ET_{u,\omega_j}\big)_{j=1,\ldots,J}\dconv \big(E_{u,\omega_j}\big)_{j=1,\ldots,J},
\end{equation}
where $E_{u, \omega_j}$, $j=1,\ldots,J$, are jointly Gaussian elements in $L^2_\cnum([0,1]^2)$ with means $\mean\big(E_{u, \omega_i}(\psi_{mn})\big)=0$ and covariances
\begin{equation}
\begin{split}
\Cov\Big(E_{u, \omega_i}(\psi_{mn}),E_{u,\omega_j}(\psi_{m'n'})\Big)&\\
=2\pi\,\norm{\Kt}^2_2\,\norm{\Kf}^2_2\,
\Big[\eta(\omega_i-\omega_j)&\,
\biginnerprod{\F_{u, \omega_i}\,\psi_{m'}}{\psi_{m}}\,
\biginnerprod{\F_{u,-\omega_i}\,\psi_{n'}}{\psi_{n}}\\
+\eta(\omega_i+\omega_j)&\,
\biginnerprod{\F_{u, \omega_i}\,\psi_{n'}}{\psi_{m}}\,
\biginnerprod{\F_{u,-\omega_i}\,\psi_{m'}}{\psi_{n}}\Big]
\end{split}
\label{covempproc}
\end{equation}
for all $i,j\in 1,\ldots,J$ and $m,m',n,n'\in\nnum$.
\end{theorem}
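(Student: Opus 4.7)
The plan is to invoke Lemma \ref{convergencecriterion} with $p=2$ and $E=\cnum$, whence weak convergence of each $\ET_{u,\omega_j}$ in $L^2_\cnum([0,1]^2)$ reduces to two checks: (i) joint weak convergence of the finite-dimensional distributions $\big(\ET_{u,\omega_{j}}(\psi_{m_in_i})\big)_{1\leq i\leq k,\,1\leq j\leq J}$ to those of the prescribed complex Gaussian family, and (ii) the norm bound $\limsup_{T\to\infty}\mean\bignorm{\ET_{u,\omega_j}}^2_2 \leq \mean\bignorm{E_{u,\omega_j}}^2_2$ for each $j$.

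For (i), I use the cumulant method on the complex-valued coefficient vectors. First-order cumulants vanish by centering; second-order (co)variances converge to the structure in \eqref{covempproc} via Proposition \ref{sharp} applied with $g_1=\psi_{m_i},\,g_2=\psi_{n_i},\,g_3=\psi_{m_{i'}},\,g_4=\psi_{n_{i'}}$, the normalization $\btT\,\bfT\,T$ in $\ET_{u,\omega}$ exactly cancelling the decay rate of the raw covariance; joint cumulants of order $\geq 3$ vanish asymptotically by Proposition \ref{convCLT}. Since a complex Gaussian family is characterized by its first two cumulants, this yields joint weak convergence of the finite-dimensional distributions across all basis indices and across the $J$ frequencies simultaneously.

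For (ii), Parseval's identity in the orthonormal basis $\{\psi_{mn}\}$ gives
\[
\mean\bignorm{\ET_{u,\omega}}^2_2
=\lsum_{m,n\in\nnum}\mean\big|\ET_{u,\omega}(\psi_{mn})\big|^2
=\btT\,\bfT\,T\lsum_{m,n\in\nnum}
\Var\big(\innerprod{\hatFT_{u,\omega}}{\psi_m\otimes\psi_n}\big).
\]
Proposition \ref{sharp} shows that each summand converges to $2\pi\,\norm{\Kt}^2_2\,\norm{\Kf}^2_2\,\innerprod{\F_{u,\omega}\psi_m}{\psi_m}\,\innerprod{\F_{u,-\omega}\psi_n}{\psi_n}$, and the double sum of these limits equals $2\pi\,\norm{\Kt}^2_2\,\norm{\Kf}^2_2\,\snorm{\F_{u,\omega}}_1\,\snorm{\F_{u,-\omega}}_1 = \mean\bignorm{E_{u,\omega}}^2_2$, which is finite since $\F_{u,\omega}\in S_1(H_\cnum)$. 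To interchange limit and infinite sum, I would dominate term-by-term by the diagonal factor $\innerprod{\F_{u,\omega}\psi_m}{\psi_m}\innerprod{\F_{u,-\omega}\psi_n}{\psi_n}$ extracted uniformly in $T$ from the explicit covariance expression in Theorem \ref{expsmooth}, using that the residual error terms there are controlled uniformly in $u$ and $\omega$ by the same bounds used in the proof of Proposition \ref{sharp}. Joint weak convergence of $(\ET_{u,\omega_j})_{j=1,\ldots,J}$ then follows from marginal weak convergence combined with the joint finite-dimensional convergence established in (i).

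The main technical obstacle is the dominated-convergence step in (ii). The uniform Hilbert-Schmidt variance bound of Corollary \ref{sharp2} guarantees only that the total sum is $O(1)$, not the term-by-term summability with a $T$-independent majorant. One must therefore re-enter the proof of Theorem \ref{expsmooth} and retain the explicit leading-order diagonal structure of the variance before summing. This is exactly where the weakened tightness criterion of \citet{CremerKadelka} is indispensable: the stronger pointwise $\ell^1$-domination used by \citet{PanarTav2013a} is not available in the present setting where the transfer operator may only belong to $\mathcal{B}_\infty$ (as needed to accommodate functional autoregressions) rather than $\mathcal{B}_2$.
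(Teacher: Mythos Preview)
Your proposal matches the paper's approach: Lemma~\ref{convergencecriterion} reduces the problem to finite-dimensional convergence plus the norm condition~\eqref{cremerkadelka-condition}; the former is handled exactly as you describe via the cumulant method (Proposition~\ref{sharp} for second order, Proposition~\ref{convCLT} for orders $\geq 3$), and the latter by computing $\mean\bignorm{\ET_{u,\omega}}^2_2$ directly. The paper's verification of~(ii) is in fact terser than yours---it simply writes $\mean\bignorm{\ET_{u,\omega}}^2_2=\int_{[0,1]^2}\var\big(\ET_{u,\omega}(\tau,\sigma)\big)\,d\tau\,d\sigma$ and invokes Theorem~\ref{expsmooth} without spelling out the passage to the limit---so your concern about interchanging limit and infinite sum is well placed and applies equally to the paper's integral formulation. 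One clean way to sidestep the obstacle altogether, which the paper itself mentions just before the theorem, is the Bogachev--Miftakhov weakening $\sup_T\mean\bignorm{\ET_{u,\omega}}^2_2<\infty$: this follows immediately from Corollary~\ref{sharp2} and renders the dominated-convergence step unnecessary. A minor omission in your limit computation: for $\omega\in\{0,\pm\pi\}$ the second $\eta(\omega_1+\omega_2)$ term in Proposition~\ref{sharp} also contributes, so $\mean\bignorm{E_{u,\omega}}_2^2$ picks up an extra summand there.
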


\begin{proof}[Proof of Theorem \ref{CLT}]
For condition \ref{cremerkadelka-condition}, we note that
\[
\mean\bignorm{\ET_{u,\omega}}^2_2
=\int_{[0,1]^2}
\var\big(\ET_{u,\omega}(\tau,\sigma)\big)\,d\tau\,d\sigma
=\btT\,\bfT\,T\,\bignorm{\var(\FT_{u,\omega})}^2_2.
\]
and it therefore is satisfied by Theorem \ref{expsmooth}. Together with the convergence of the finite-dimensional distributions this proves the asserted weak convergence.
\end{proof}

% !TEX spellcheck = en_US
\section{Numerical simulations}
\label{simulation-sect}

To illustrate the performance of the estimator in finite samples, we consider a time-varying functional time series with representation
\begin{equation}\label{FAR1proc}
\XT{t} = B_{\frac{t}{T},1}(\XT{t-1})+\varepsilon_t,
\end{equation}
where $B_{u,1} \in \mathcal{B}_\infty$ is continuous in $u \in [0,1]$ and where $\{\varepsilon_t\}$ is a collection of independent innovation functions. In order to generate the process, let $\{\psi_i\}_{i \in \mathbb{N}}$ be an orthonormal basis  of $H$ and denote the vector of the first $k$ Fourier coefficients of $\XT{t}$ by ${\boldsymbol{X}}^{(T)}_{t} = (\langle \XT{t}, \psi_1 \rangle, \hdots, \langle \XT{t}, \psi_k \rangle)^{'}$. Similar to \cite{Hormann2015}, we exploit that the linearity of the autoregressive operator implies the first $k$ Fourier coefficients, for $k$ large, approximately satisfy a VAR(1) equation. That is, 
\begin{equation}\label{eq:XFC}
{\boldsymbol{X}}^{(T)}_{t} \approx \mathfrak{B}_{\frac{t}{T},1}{\boldsymbol{X}}^{(T)}_{t-1}+\boldsymbol{\varepsilon}_t \quad \forall \hspace{1pt} t,T,
\end{equation}
where $\boldsymbol{\varepsilon}_{t} = (\langle \varepsilon_{t}, \psi_1 \rangle,  \hdots, \langle \varepsilon_{t}, \psi_k \rangle)^{'}$ and $\mathfrak{B}_{\frac{t}{T},1}= (\langle B_{\frac{t}{T},1}(\psi_i),\psi_j\rangle, 1 \le i,j \le k)$. Correspondingly, the local spectral density kernel will satisfy
\begin{equation*}
f^{(T)}_{u,\omega}(\tau, \sigma) \approx \lim_{i,j \to \infty}\sum_{i,j = 1}^{k} \mathfrak{f}^{(T)}_{u,\omega,i,j}  \psi_i(\tau)  \psi_j(\sigma), 
\end{equation*}
where $ \mathfrak{f}^{(T)}_{u,\omega}$ is the spectral density matrix of the Fourier coefficients in \eqref{eq:XFC}.  Implementation was done in R together with the {\tt fda} package. For the simulations, we choose the Fourier basis functions on $[0,1]$.  The construction of the estimator in \eqref{eq:smoothest} requires specification of smoothing kernels and corresponding bandwidths in time- as well as frequency direction. Although the choice of the smoothing kernels usually does not affect the performance significantly, bandwidth selection is a well-known problem in nonparametric statistics. As seen from Theorem \ref{expsmooth}, both bandwidths influence the bias-variance relation. Depending on the persistence of the autoregressive process a smaller bandwidth in frequency direction is desirable around the peak (at $\lambda=0$ for the above process), while slow changes in time direction allow for tapering (i.e., smoothing in time direction) over more functional observations. It would therefore be of interest to develop an adaptive procedure as proposed in \cite{Delft2015} to select the bandwidth parameters. Investigation of this is however beyond the scope of the present paper. In the examples below, the bandwidths were set fixed to $\btT=T^{-1/6}$ and $\bfT = 2T^{-1/5}-\btT$. We chose as smoothing kernels
\begin{equation*}
\Kt(x) =\Kf(x) = 6(\frac{1}{4}-x^2) \quad x \in [-\frac{1}{2},\frac{1}{2}],
\end{equation*}
which have been shown to be optimal in the time series setting \citep[][]{Dahlhaus1996b}.

In order to construct the matrix $\mathfrak{B}_{\frac{t}{T},1}$, we first generate a matrix $A_{u}$ with entries that are mutually independent Gaussian where the $(i,j)$-th entry has variance
\begin{equation*}
u i^{-2c}+(1-u)e^{-i-j}.
\end{equation*}
The entries will tend to zero as $i,j \to \infty$ , because the operator $B_{\frac{t}{T},1}$ is required to be bounded. The matrix $\mathfrak{B}_{u,1}$ is consequently obtained as $\mathfrak{B}_{u,1}= \eta A_{u}/\snorm{A_{u}}_{\infty}$. The value of $\eta$ thus determines the persistence of the process. Additionally, the collection of innovation functions $\{\varepsilon_t\}$ is specified as a linear combination of the Fourier basis functions with independent zero-mean Gaussian coefficients such that the $l$-th coefficient $\langle \varepsilon_{t}, \psi_l \rangle$ has variance $1/{[(l-1.5)\pi]}^2$. The parameters were set to $c=3$ and $\eta=0.4$. To visualize the variability of the estimator, figure \ref{table:FAR1nopeak} depicts the amplitude of the true spectral density kernel of the process for various values of $u$ and $\lambda$ with 20 replications of the corresponding estimator superposed for different sample sizes $T$.  For each row, the same level curves were used where each level curve has the same color-coding within that row. The first two rows of figure \ref{table:FAR1nopeak} give the different levels for the estimator around the peak in frequency direction, while the last row provides contour plots further away from the peak. Increasing the sample size leads to less variability, as can be seen from the better aligned contour lines. It can also be observed that the estimates become more stable as we move further away from the peak. Nevertheless, the peaks and valleys are generally reasonably well captured even for the contour plots in the area around the peak.
\begin{center}
\begin{figure}[!hbt]
 \vspace*{-0cm} \hspace*{-1.0cm}
\begin{tabular}{c|c|ccc}
 \multicolumn{5}{c}{$\underline{\lambda=0}$} \\
 & true & $ T = 2^{9} $ &  $T=2^{12}$ & $T = 2^{16}$ \\ \hline
\begin{sideways}\footnotesize{\hspace{1cm} $u=0.25$} \end{sideways} &\includegraphics[width=0.22\textwidth]{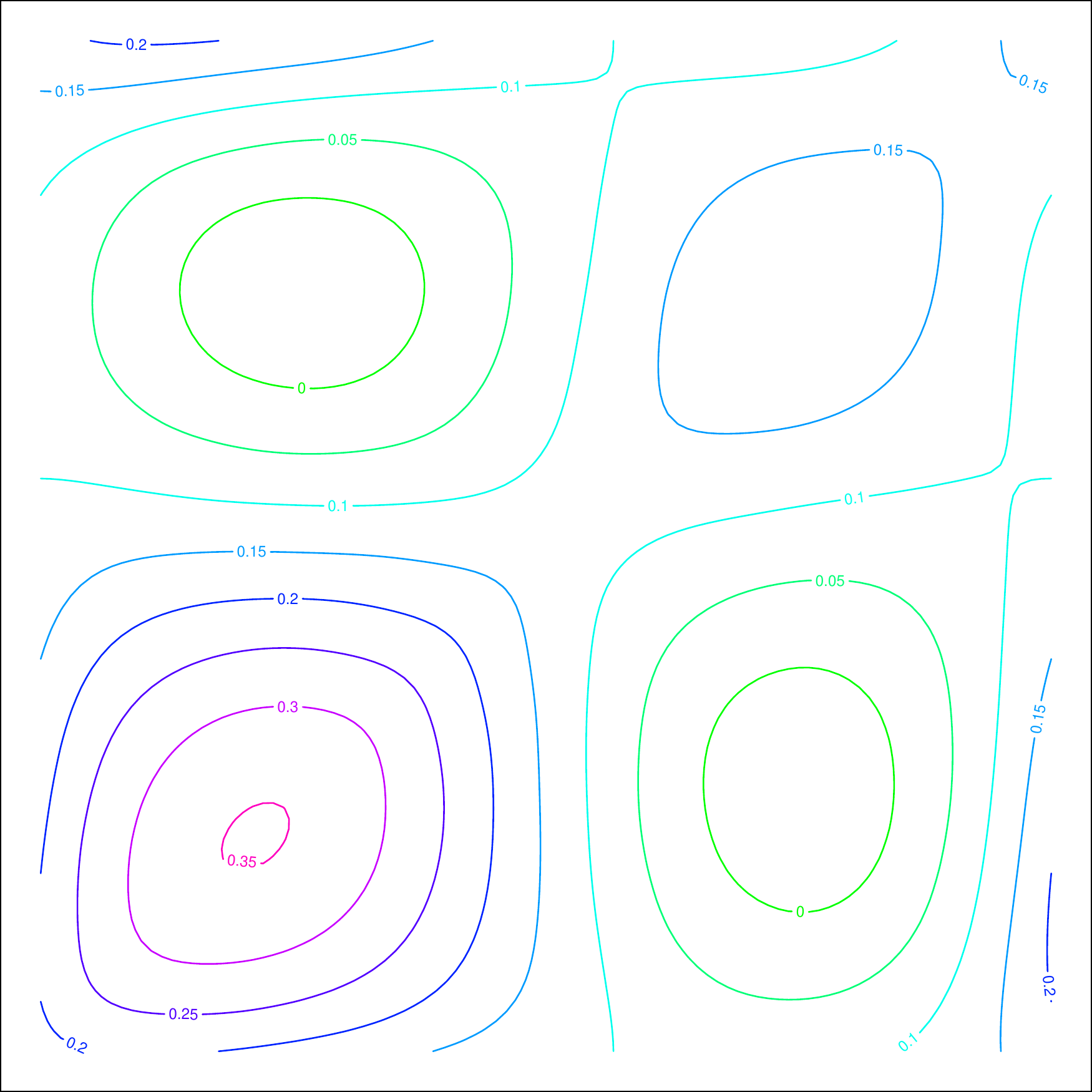}& \includegraphics[width=0.22\textwidth]{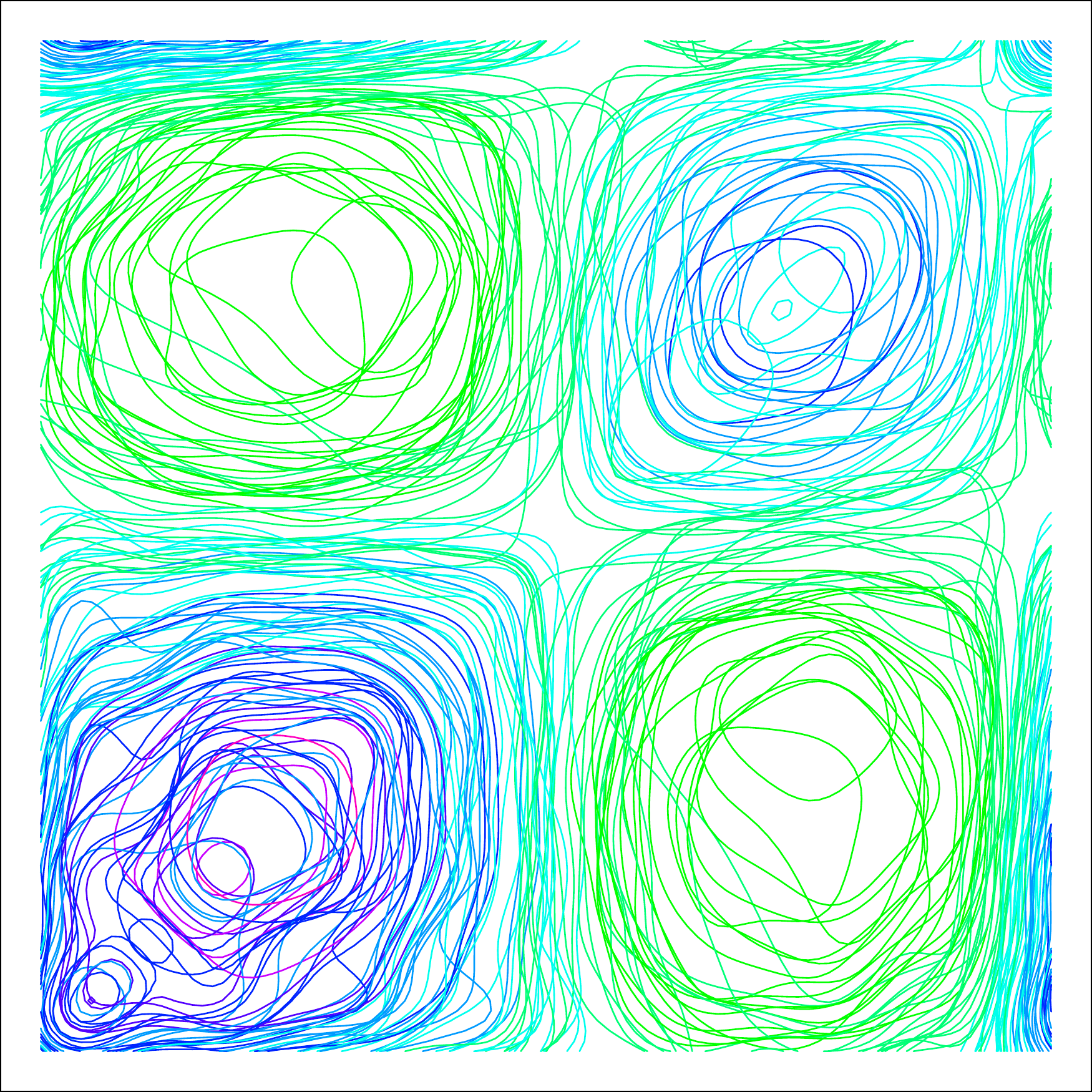}& \includegraphics[width=0.22\textwidth]{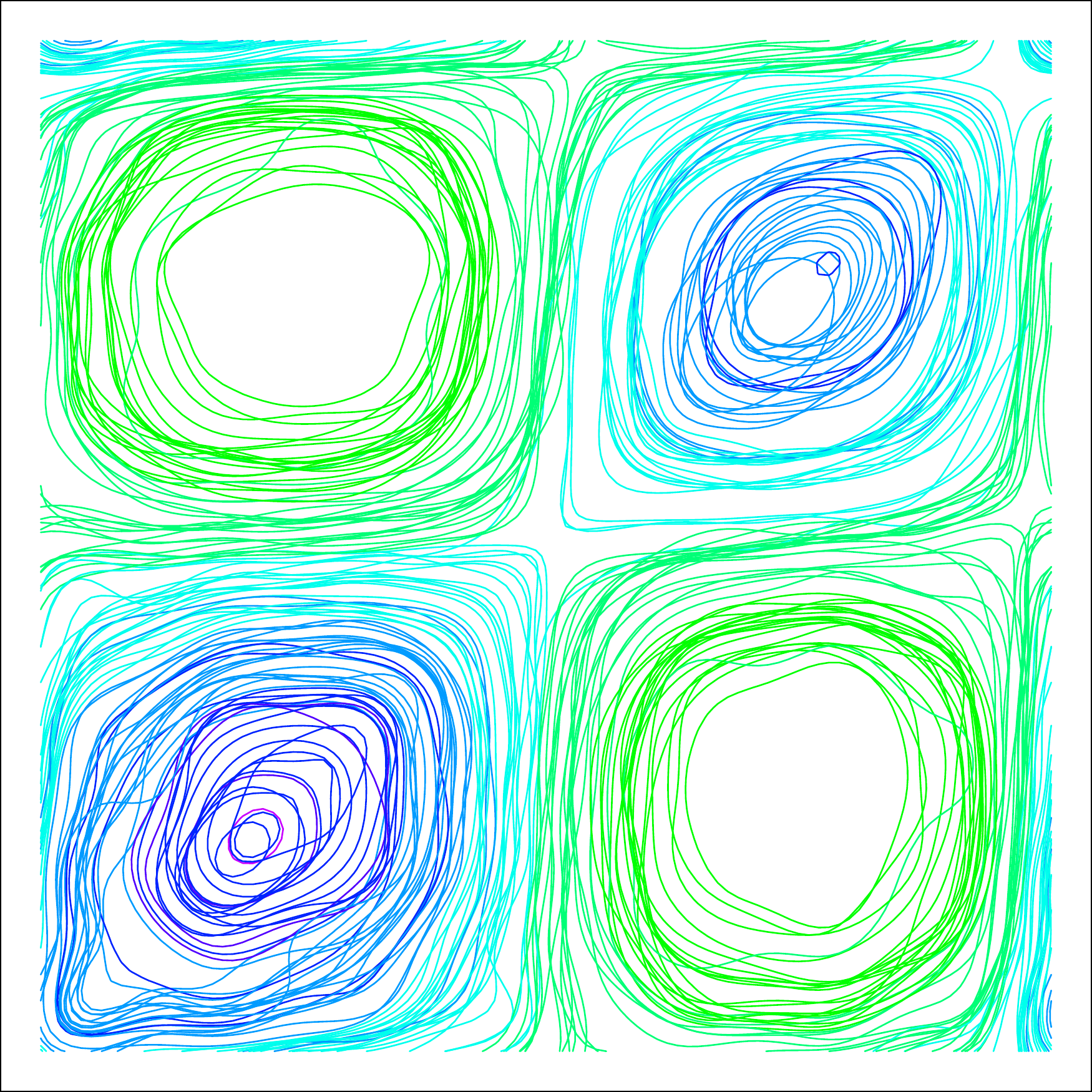}& \includegraphics[width=0.22\textwidth]{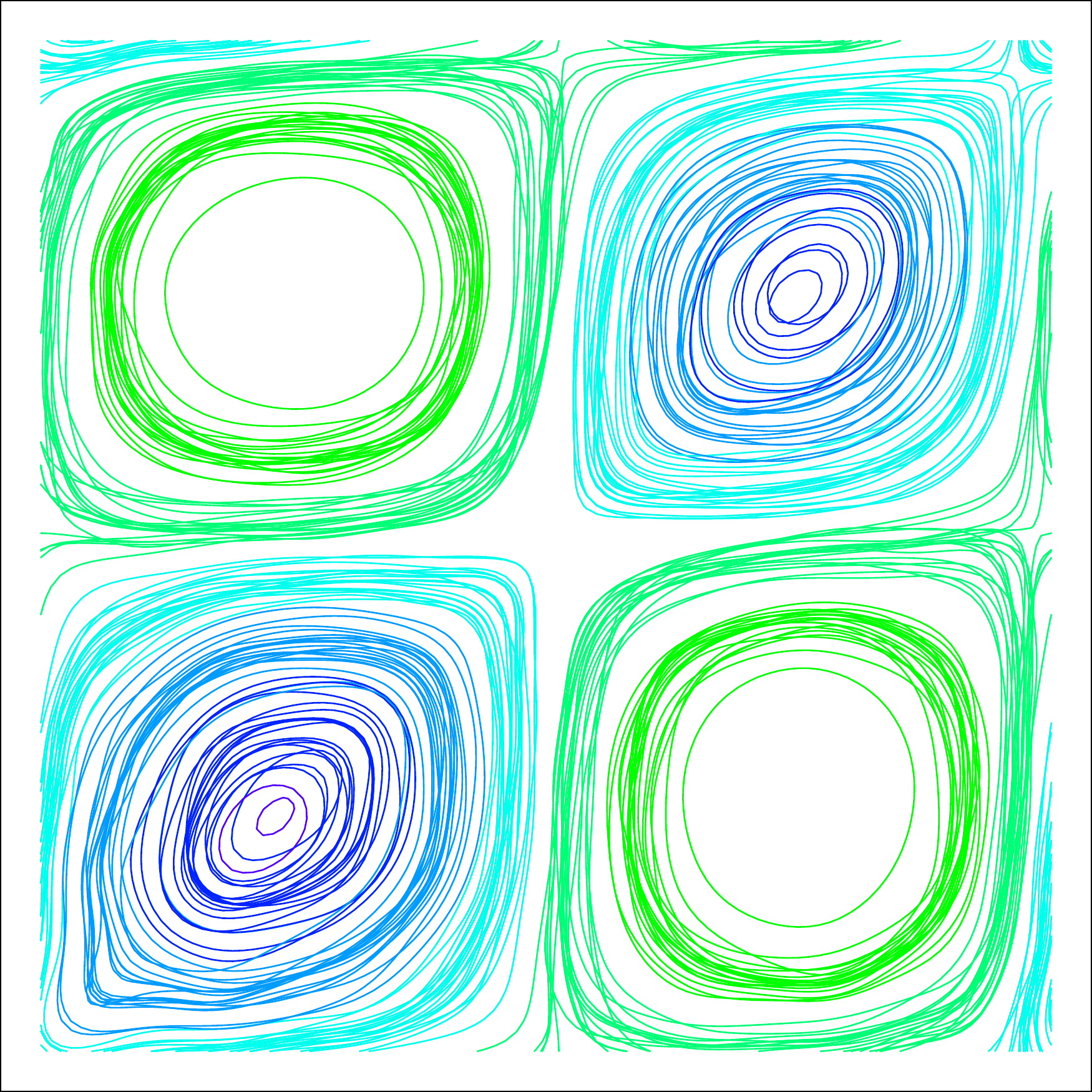} \\ 
\end{tabular}
 \vspace*{-0cm} \hspace*{-1.0cm}
\begin{tabular}{c|c|ccc}
 \multicolumn{5}{c}{$\underline{\lambda=3/10\pi}$}\\
\begin{sideways}\footnotesize{\hspace{1cm} $u=0.5$}  \end{sideways} &\includegraphics[width=0.22\textwidth]{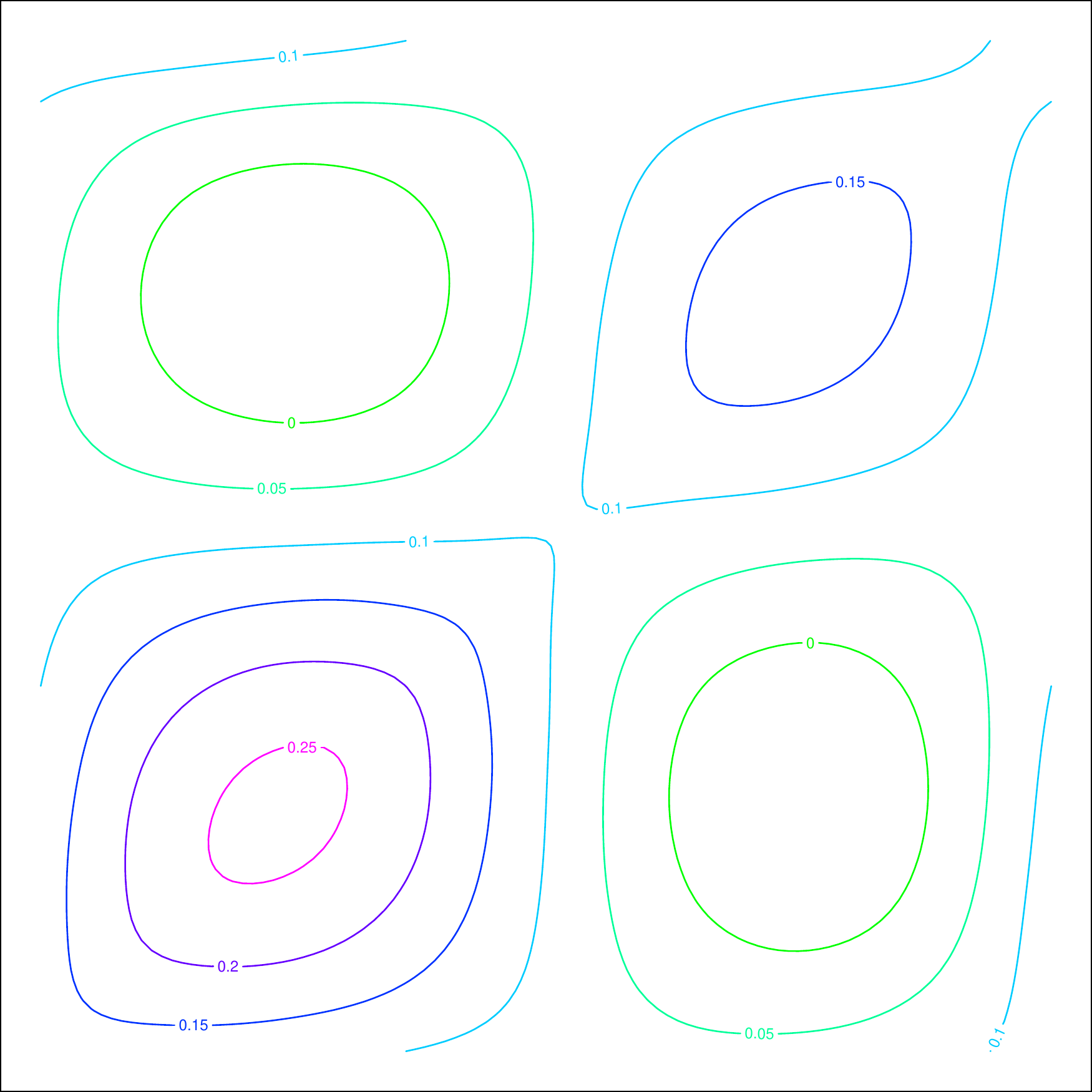}& \includegraphics[width=0.22\textwidth]{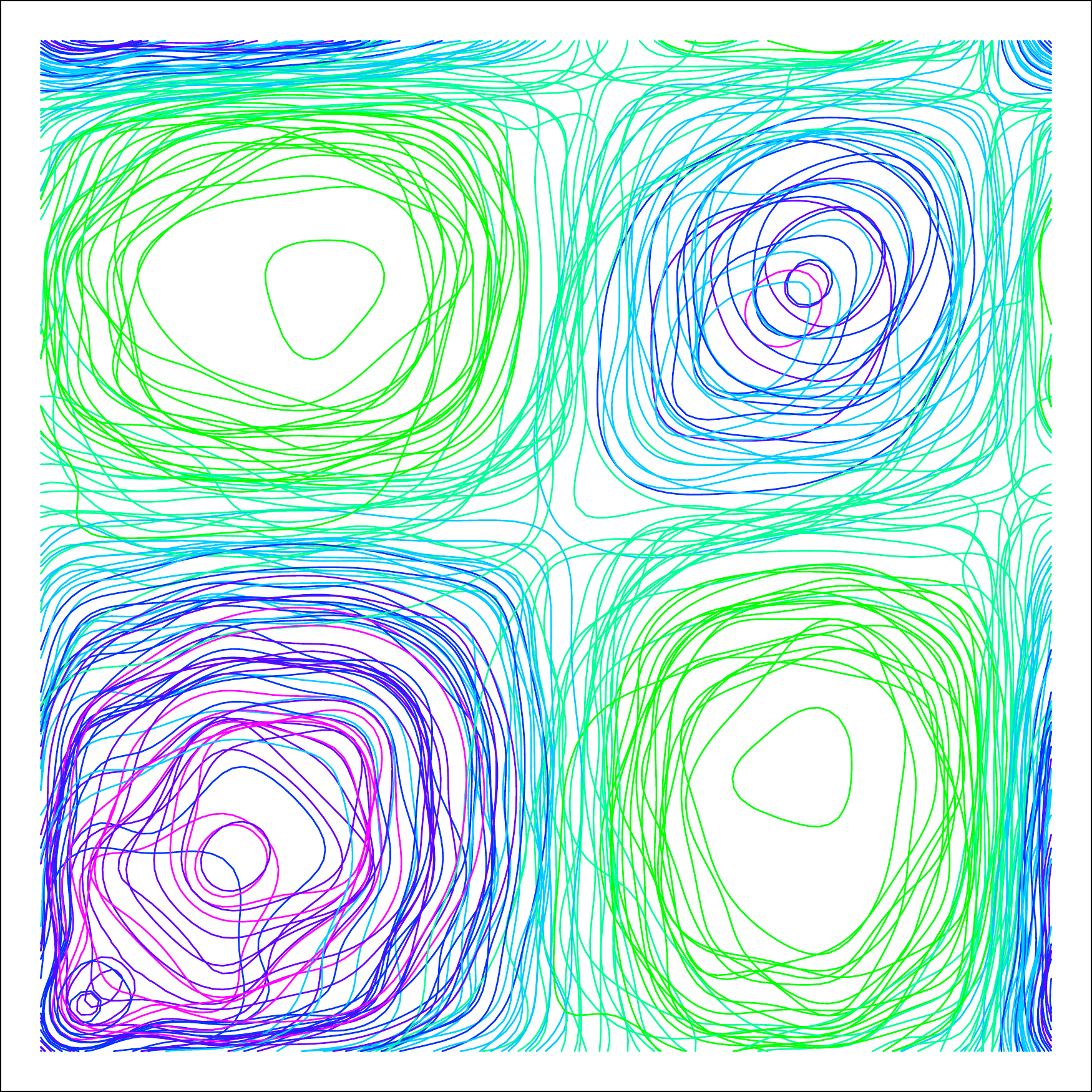}& \includegraphics[width=0.22\textwidth]{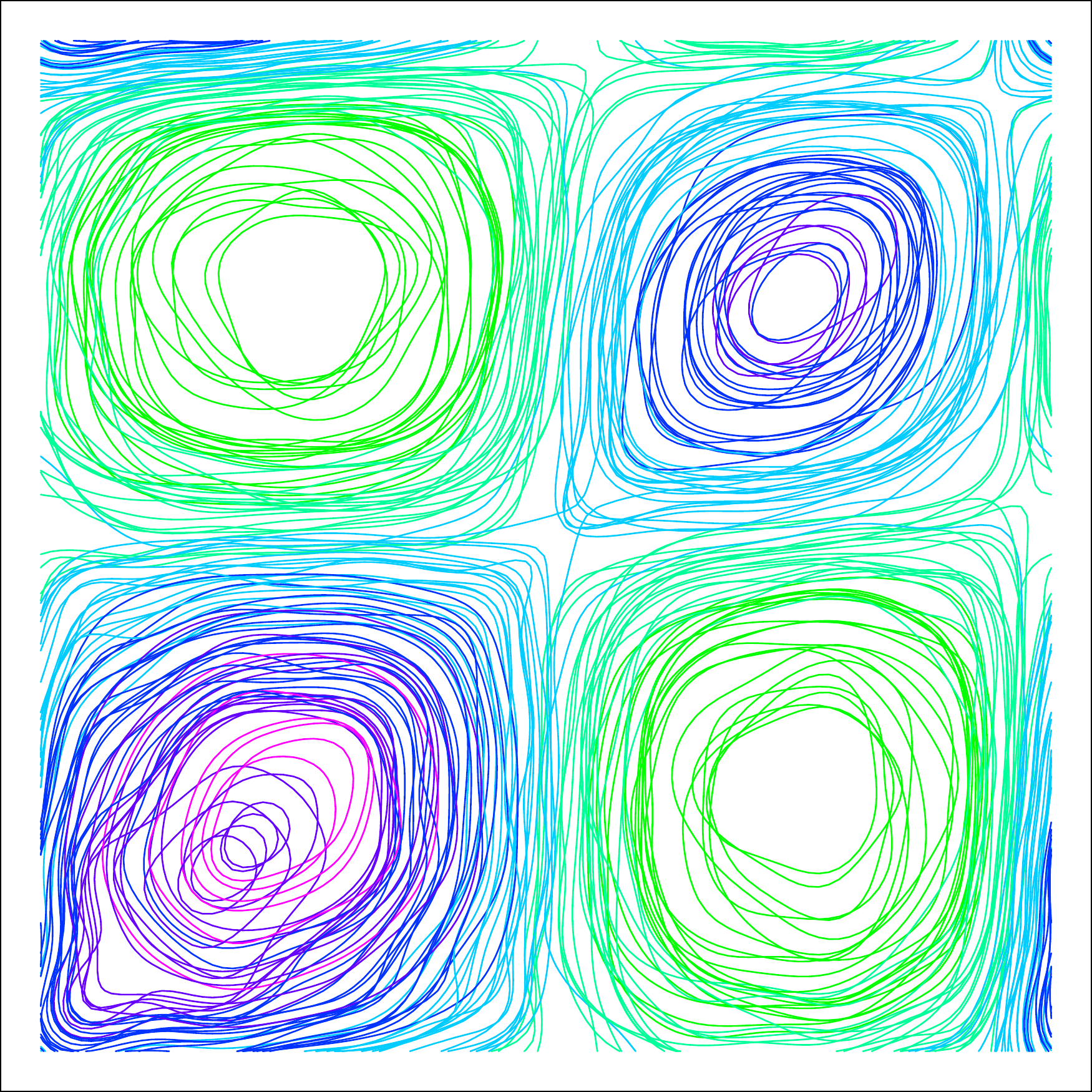}&\includegraphics[width=0.22\textwidth]{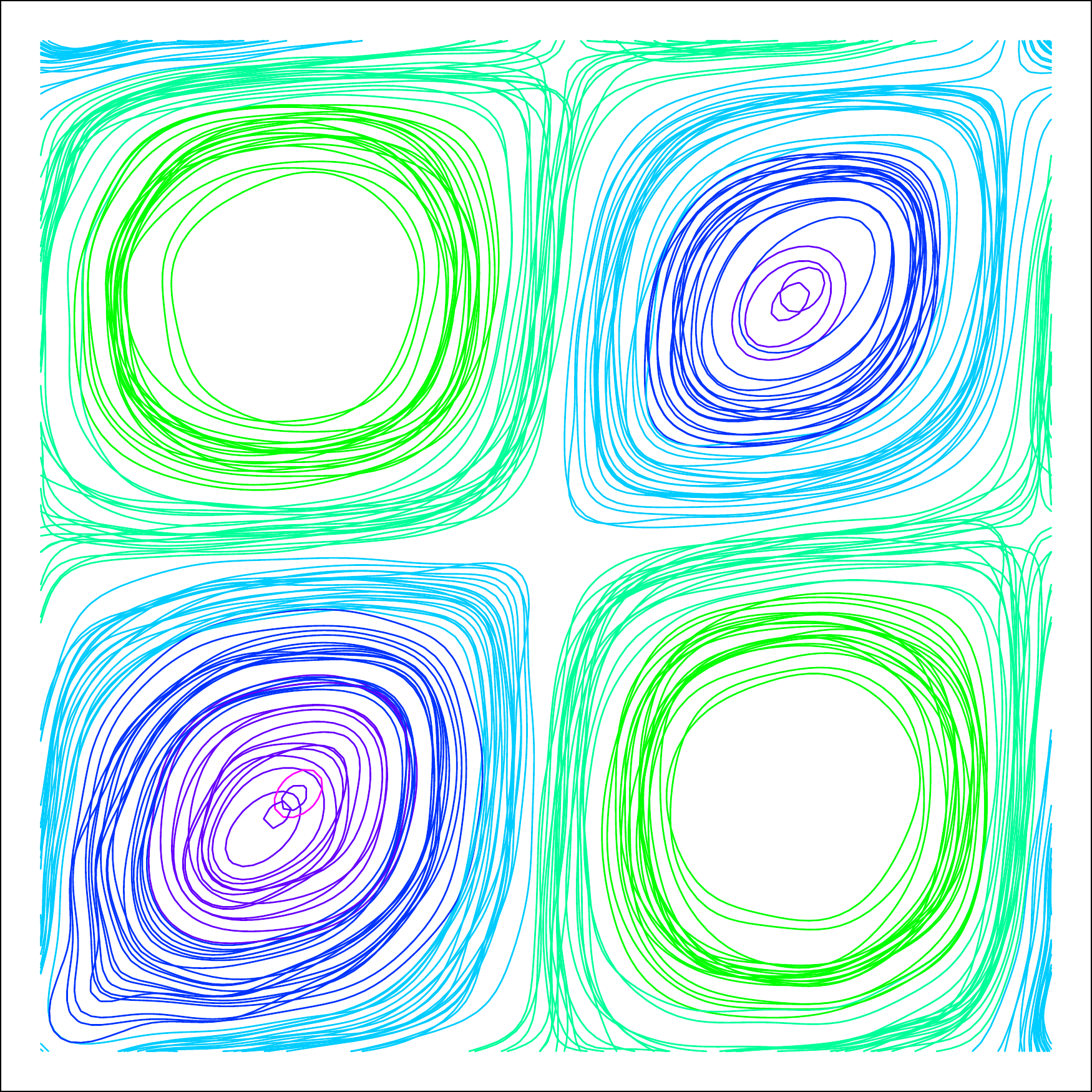} \\ 
\end{tabular}
 \vspace*{-0cm} \hspace*{-1.0cm}
\begin{tabular}{c|c|ccc}
 \multicolumn{5}{c}{$\underline{\lambda=9/10\pi}$} \\
\begin{sideways}\footnotesize{\hspace{1cm} $u=0.25$} \end{sideways} &\includegraphics[width=0.22\textwidth]{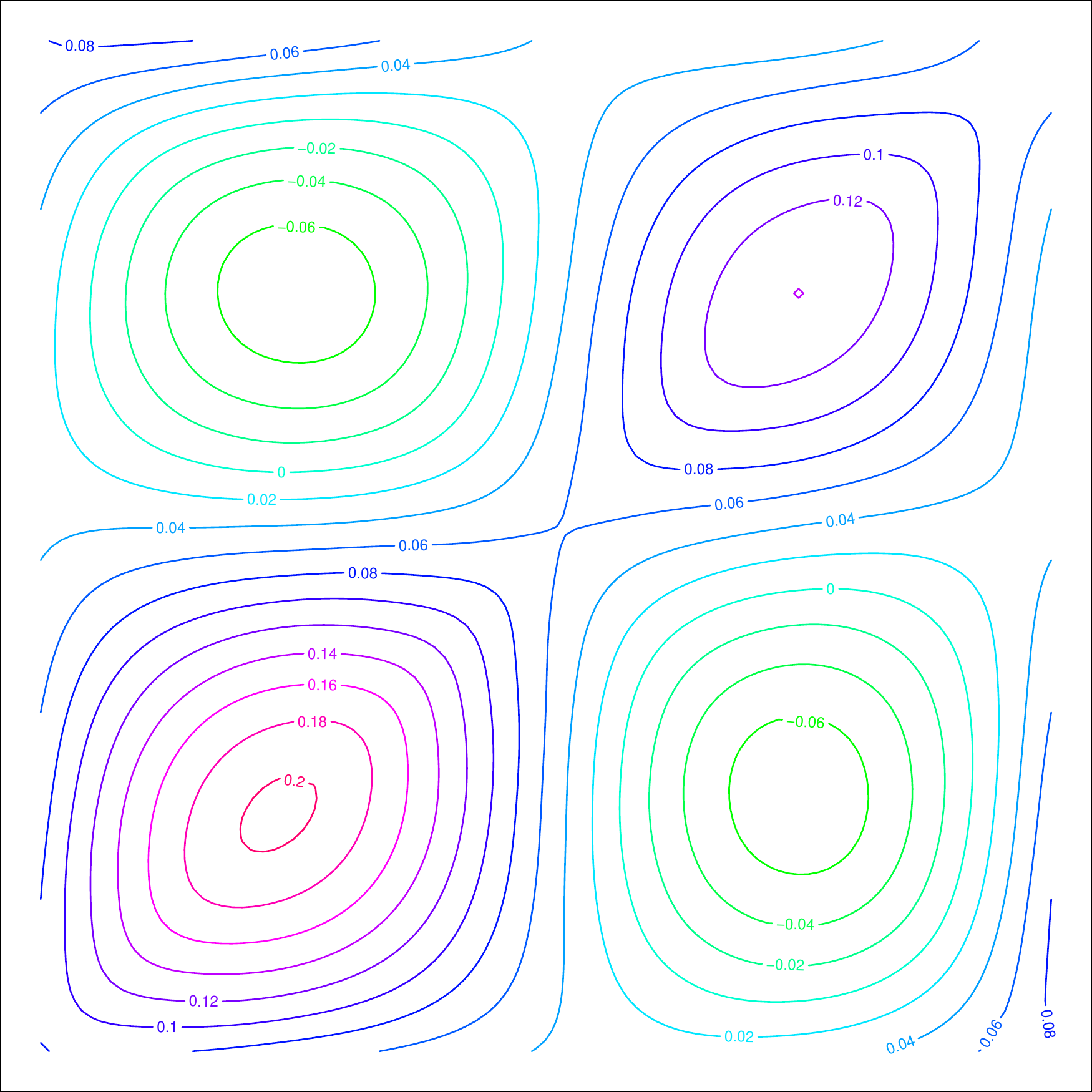}& \includegraphics[width=0.22\textwidth]{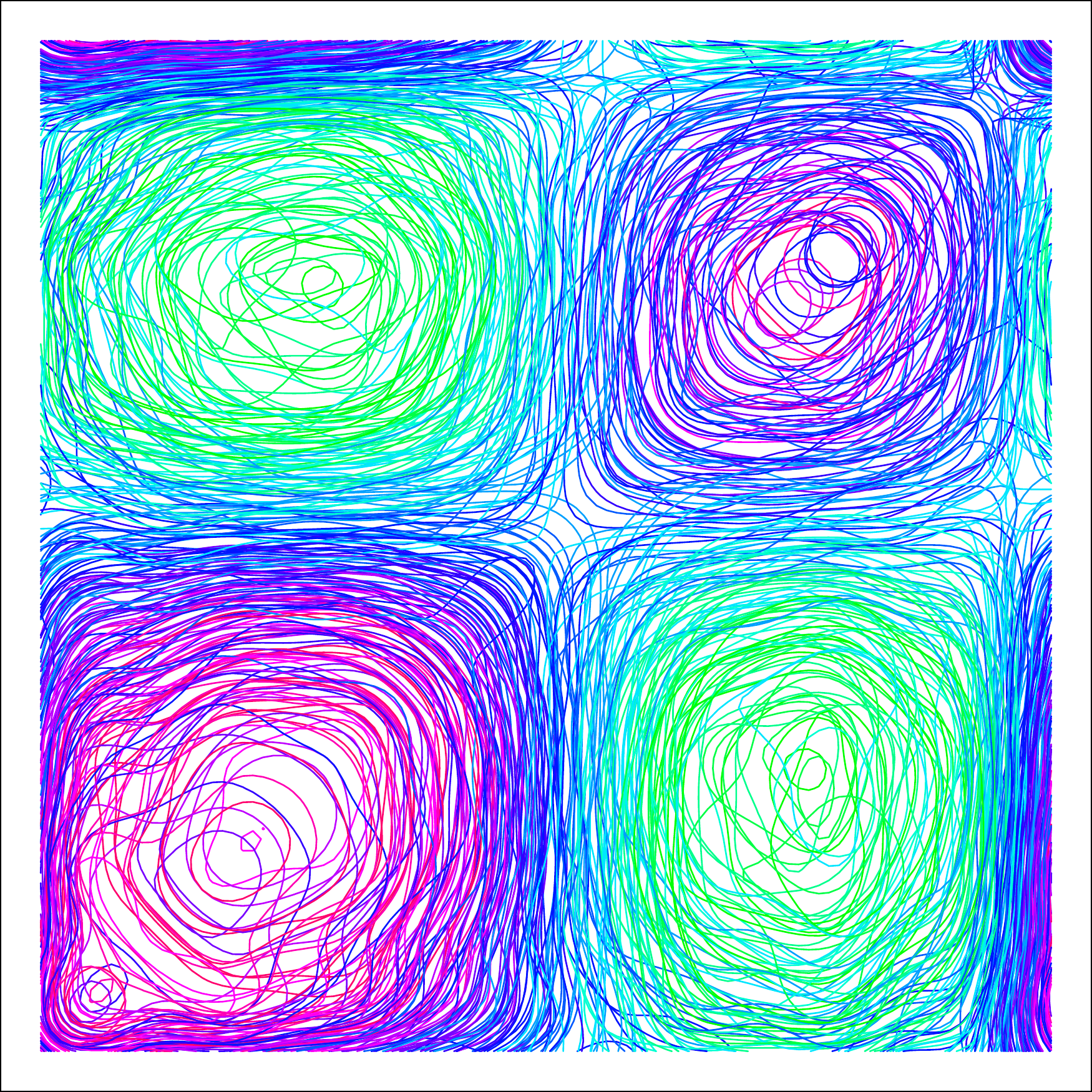}& \includegraphics[width=0.22\textwidth]{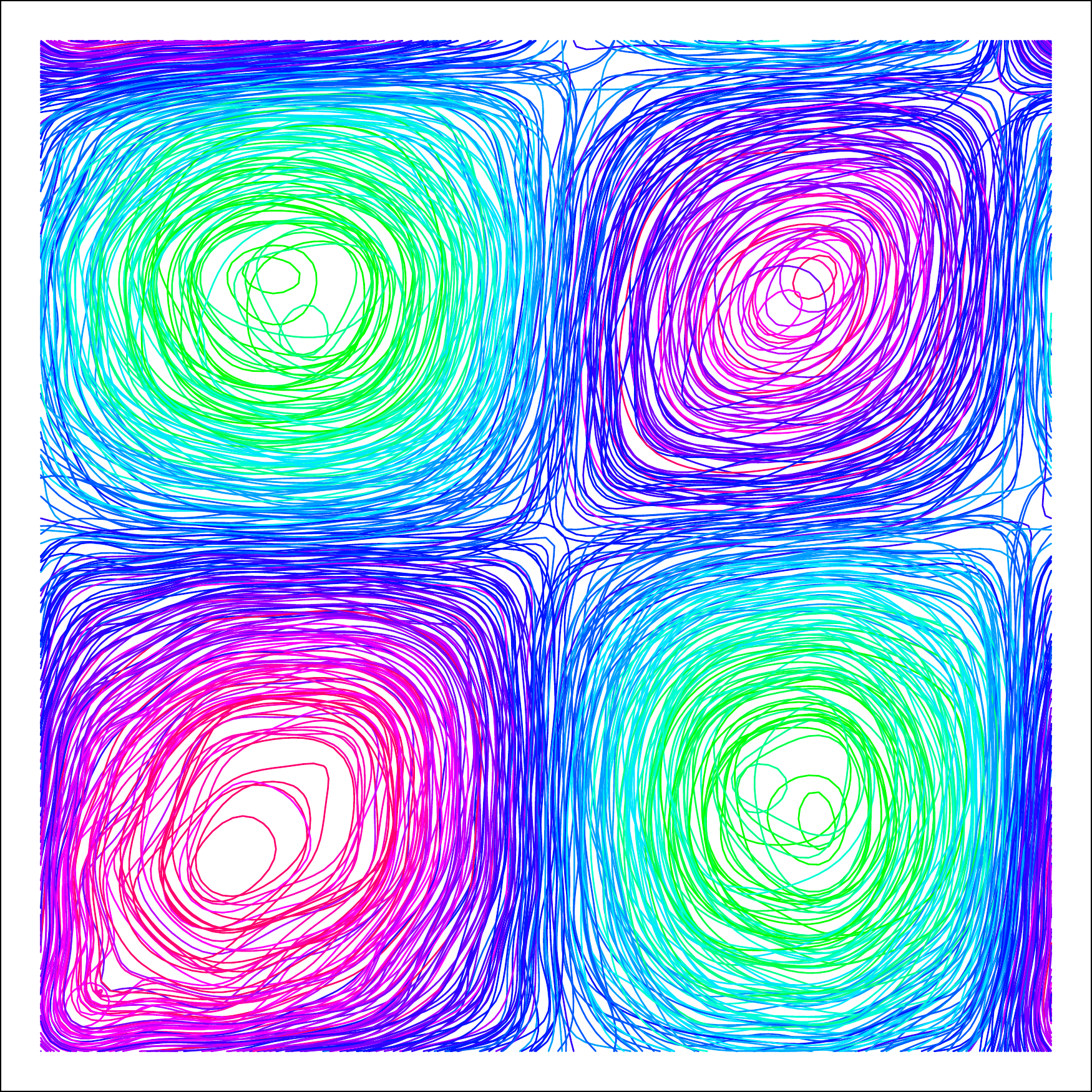}& \includegraphics[width=0.22\textwidth]{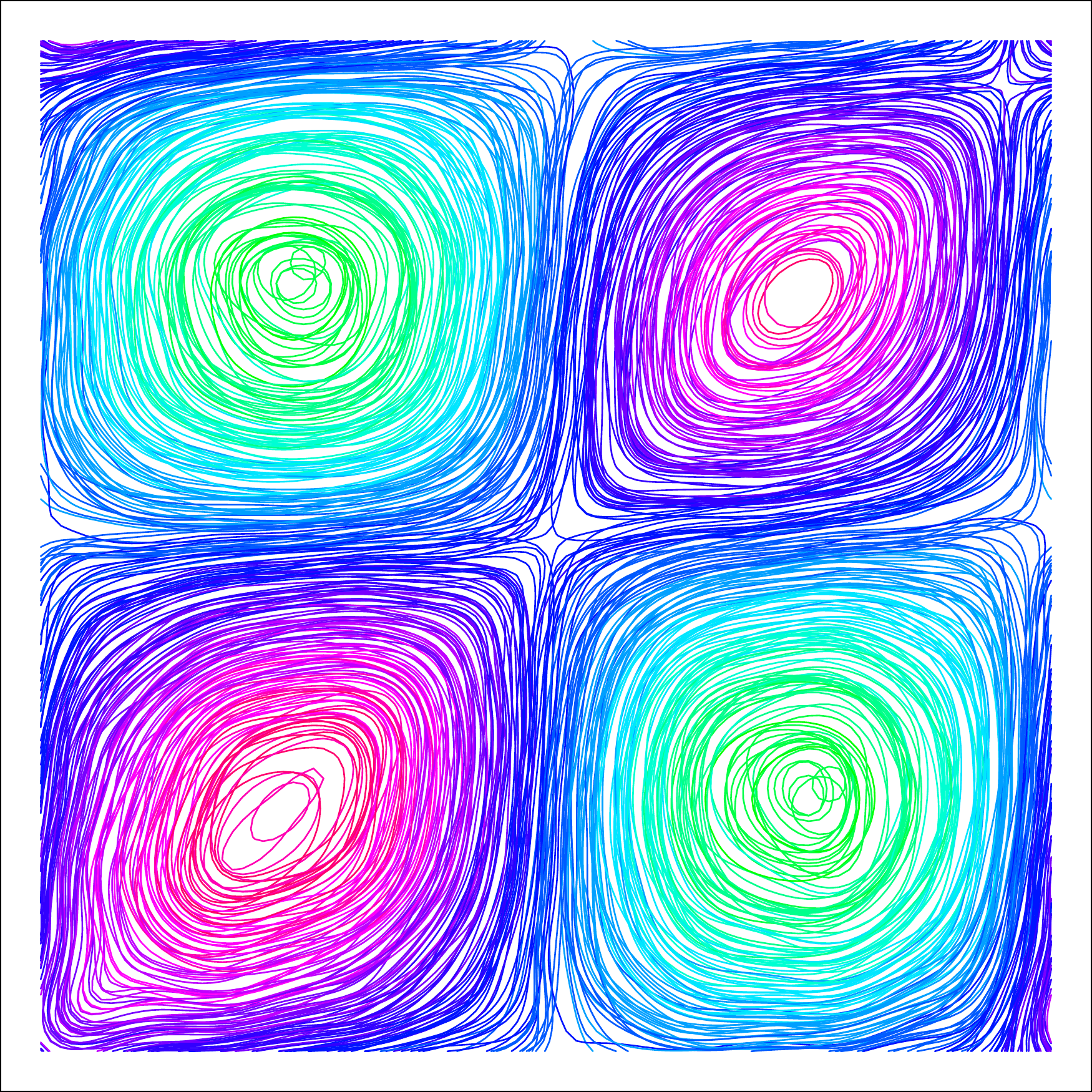} \\ 
\end{tabular}
\vspace{-5pt}
 \caption{\footnotesize{Contour plots of the true and estimated spectral density of the FAR(1) at different time points at frequencies  $\lambda=0$, $\lambda=\frac{3}{10}\pi$ and $\lambda=\frac{9}{10}\pi$.}}\label{table:FAR1nopeak}
\end{figure}
\end{center}
As a second example, we consider a FAR(2) with the location of the peak varying with time. More specifically, the Fourier coefficients are now obtained by means of a VAR(2)
\begin{equation*}
\boldsymbol{X}^{(T)}_{t} = \mathfrak{B}_{\frac{t}{T},1}\boldsymbol{X}^{(T)}_{t-1}+\mathfrak{B}_{\frac{t}{T},2}\boldsymbol{X}^{(T)}_{t-2}+\boldsymbol{\varepsilon}_t,
\end{equation*}
where $\mathfrak{B}_{u,1}=\eta_{u,1} A_{u,1}/\snorm{A_{u,1}}_{\infty}$ and $\mathfrak{B}_{u,2}=\eta_{u,2} A_{u,2}/\snorm{A_{u,2}}_{\infty}$.  The entries of the matrices $A_{u,1}$ and $A_{u,2}$ are mutually independent and are generated such that ${[A_{u,1}]}_{i,j} = \mathcal{N}(0, e^{-(i-3)-(j-3)})$ and ${[A_{u,2}]}_{i,j} = \mathcal{N}(0, {(i^{8/2}+j^{2/2})}^{-1})$, respectively. The norms are specified as 
\begin{align*} \eta_{u,1} = 0.4 \cos(1.5-\cos(\pi u)) \text{ and } \eta_{u,2} = -0.5. \end{align*} This will result in the peak to be located at $\lambda = \arccos(0.3\cos[1.5-\cos(\pi u )])$. The collection of innovation functions $\{\varepsilon_t\}$ is chosen such that the $l$-th coefficient $\langle \varepsilon_{t}, \psi_l \rangle$ has variance $1/{[(l-2.65)\pi]}^2$. Figure \ref{table:FAR2peak} provides the contour plots for different local time values where the frequency was set to $\lambda= 1.5-\cos(\pi u)$, i.e., the direction in which most change in time-direction is visible in terms of amplitude. We observe good results in terms of identifying the peaks and valleys overall where again the variability clearly reduces for $T>512$. For the value $u=0.5$,  one is really close to the location of a peak and observe wrongful detection of a small peak in the middle of the contour plot. This is an indication some over-smoothing occurs which, to some extent, is difficult to prevent for autoregressive models, even in the stationary time series case.

 \vspace*{-25pt}
\begin{center}
\begin{figure}[!h]
 \hspace*{-0cm}
\begin{tabular}{c|c|ccc}
& true  & $ T = 2^{9} $ & $T=2^{12}$  & $T = 2^{16}$ \\ \hline
\begin{sideways} \footnotesize{\hspace{0.75cm} $u=0.1$} \end{sideways} &\includegraphics[width=0.22\textwidth]{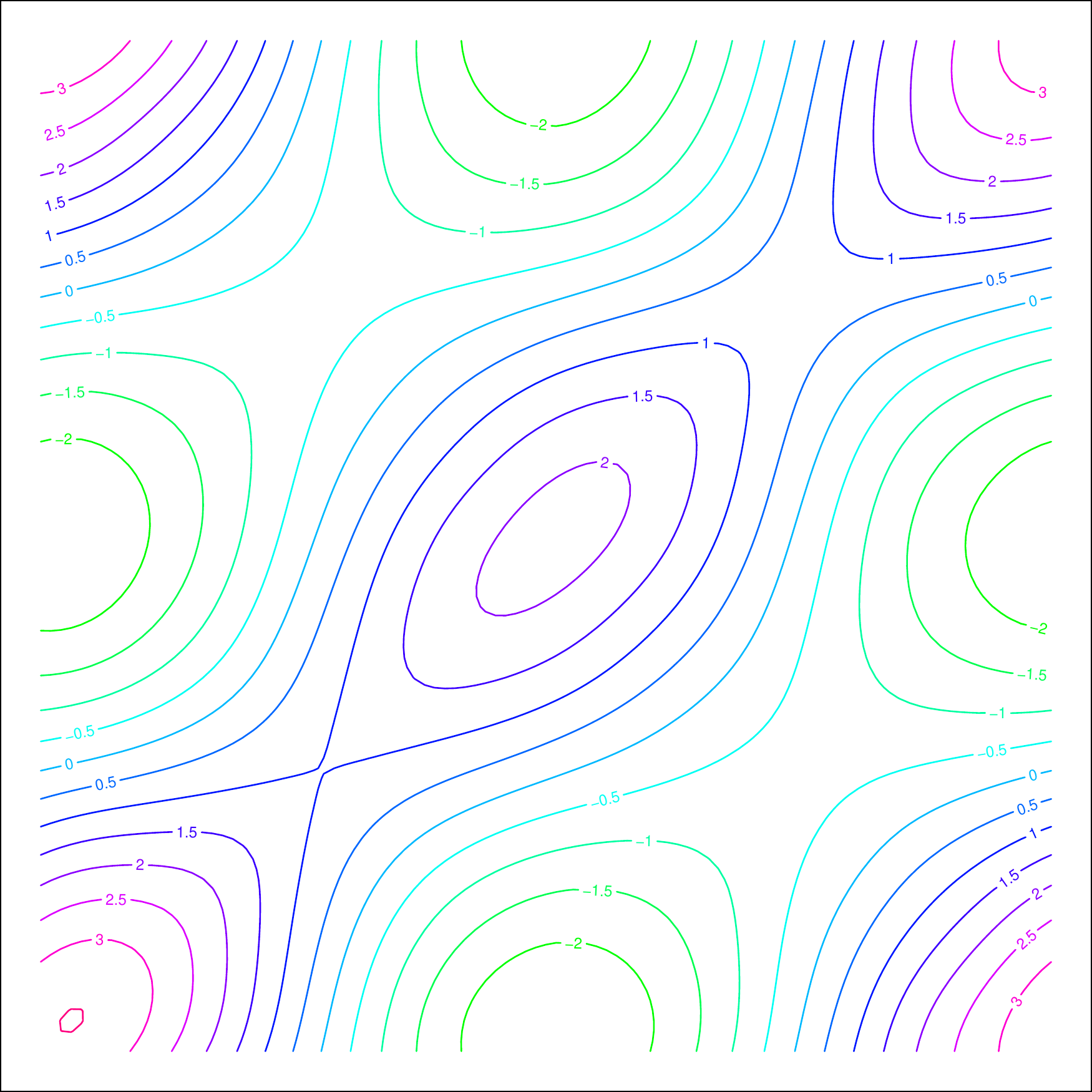}& \includegraphics[width=0.22\textwidth]{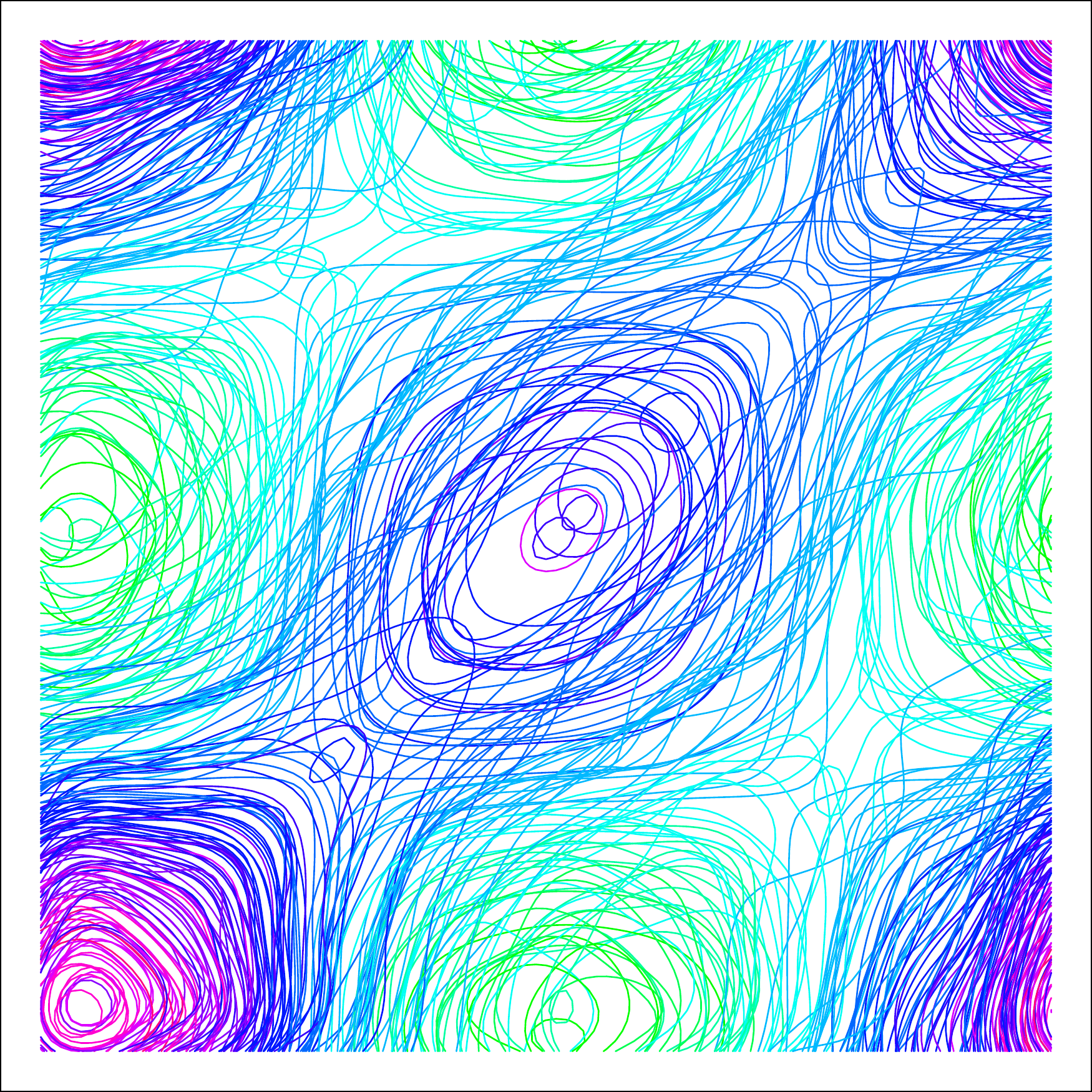}& \includegraphics[width=0.22\textwidth]{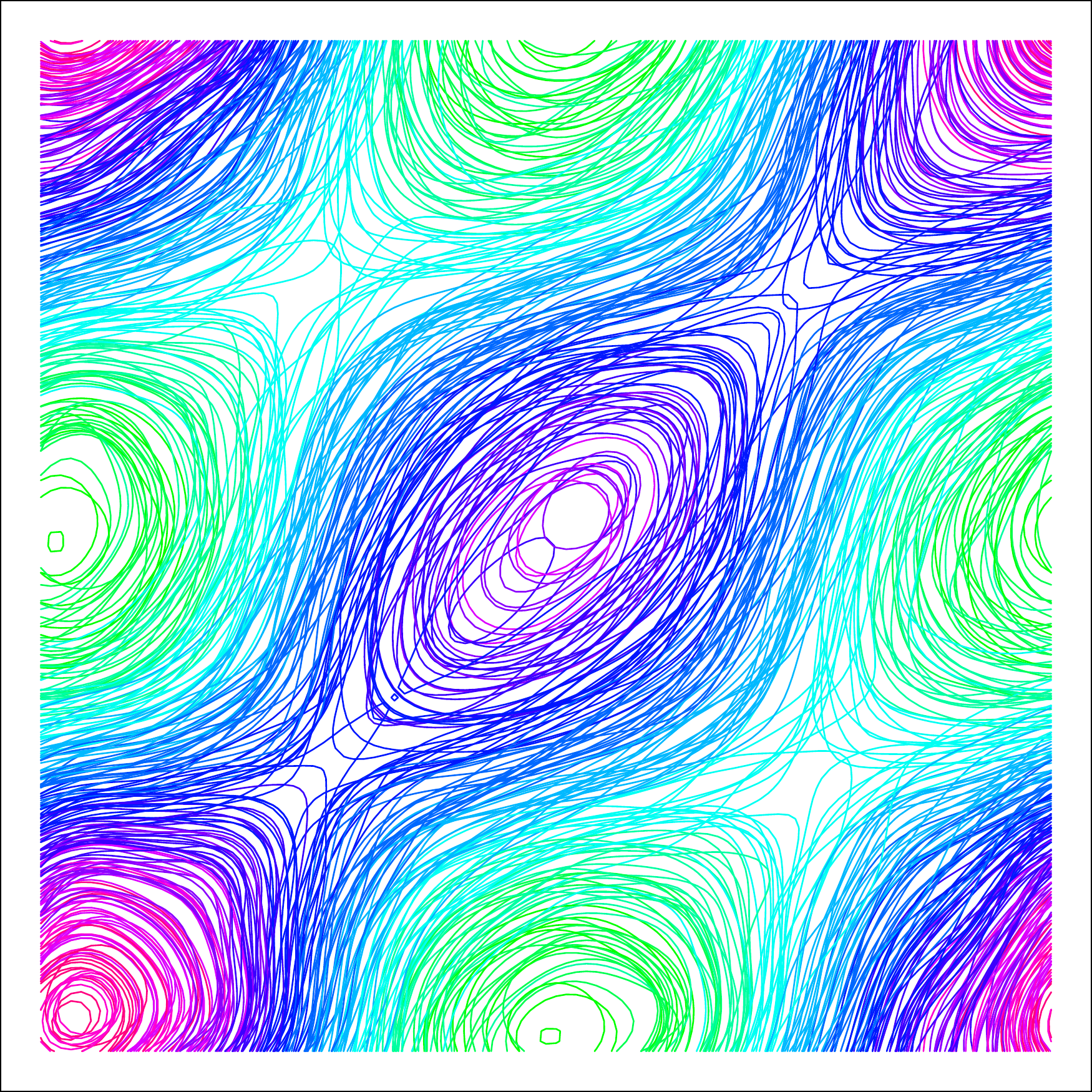}&\includegraphics[width=0.22\textwidth]{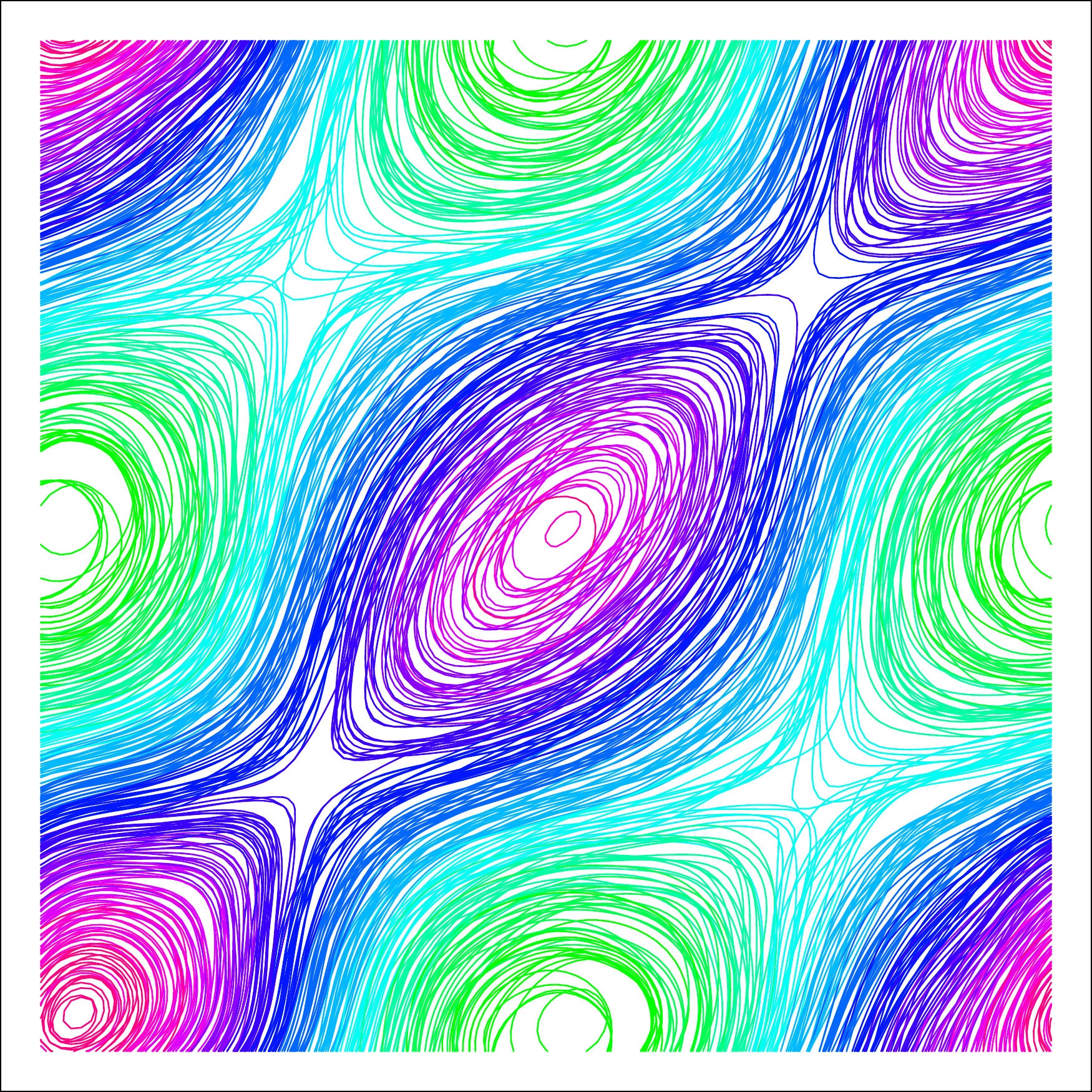} \\ 
\begin{sideways} \footnotesize{\hspace{0.75cm} $u=0.25$} \end{sideways} &\includegraphics[width=0.22\textwidth]{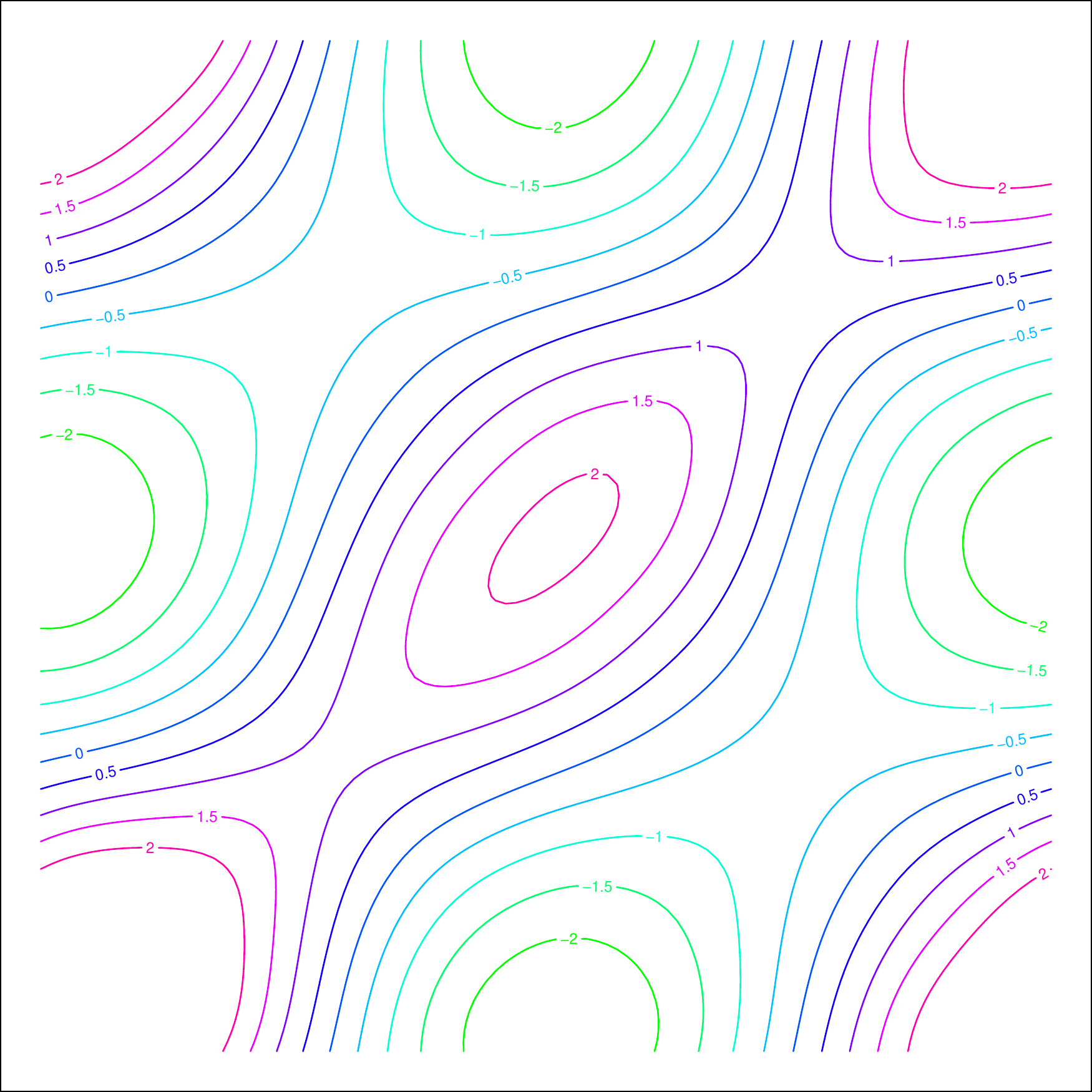}& \includegraphics[width=0.22\textwidth]{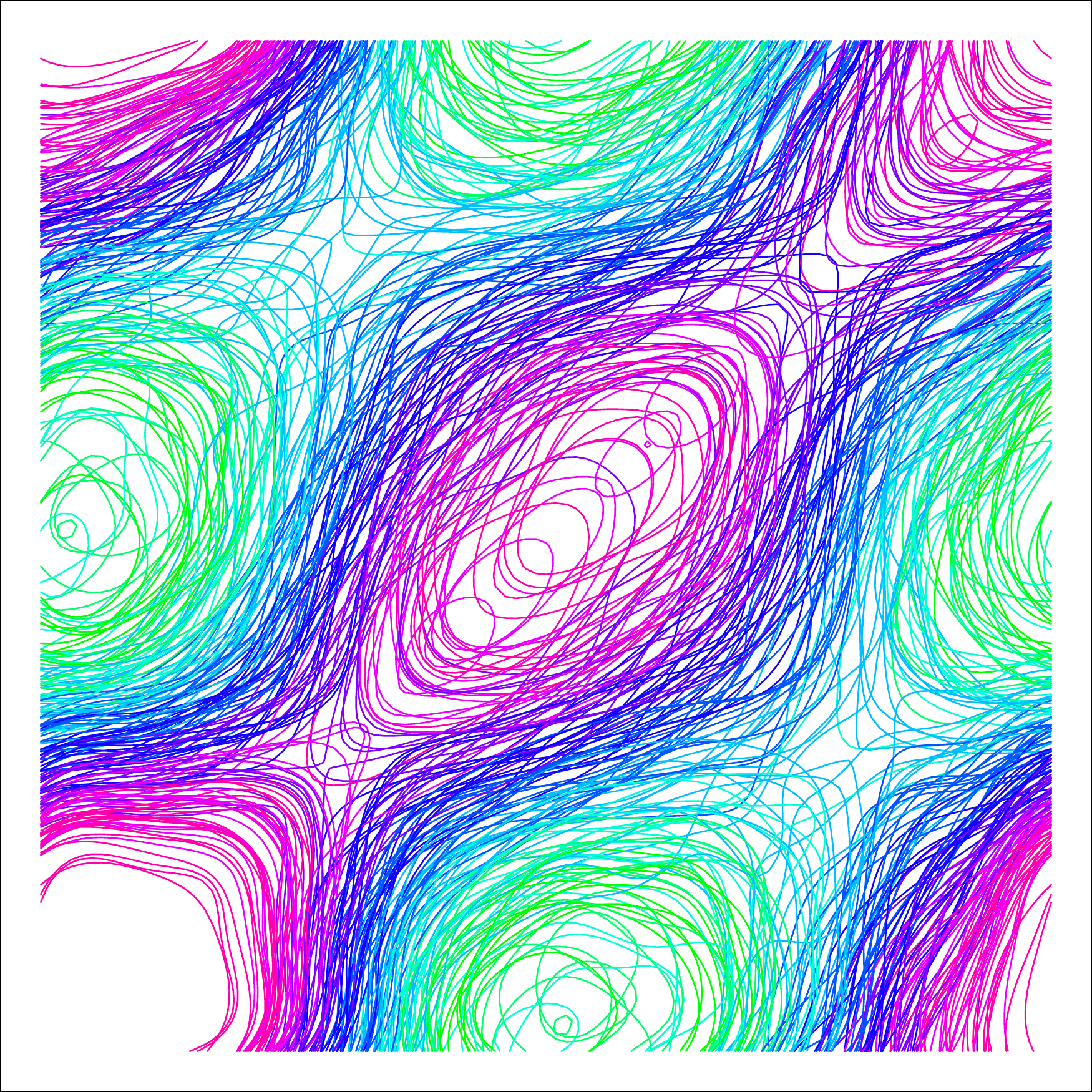}& \includegraphics[width=0.22\textwidth]{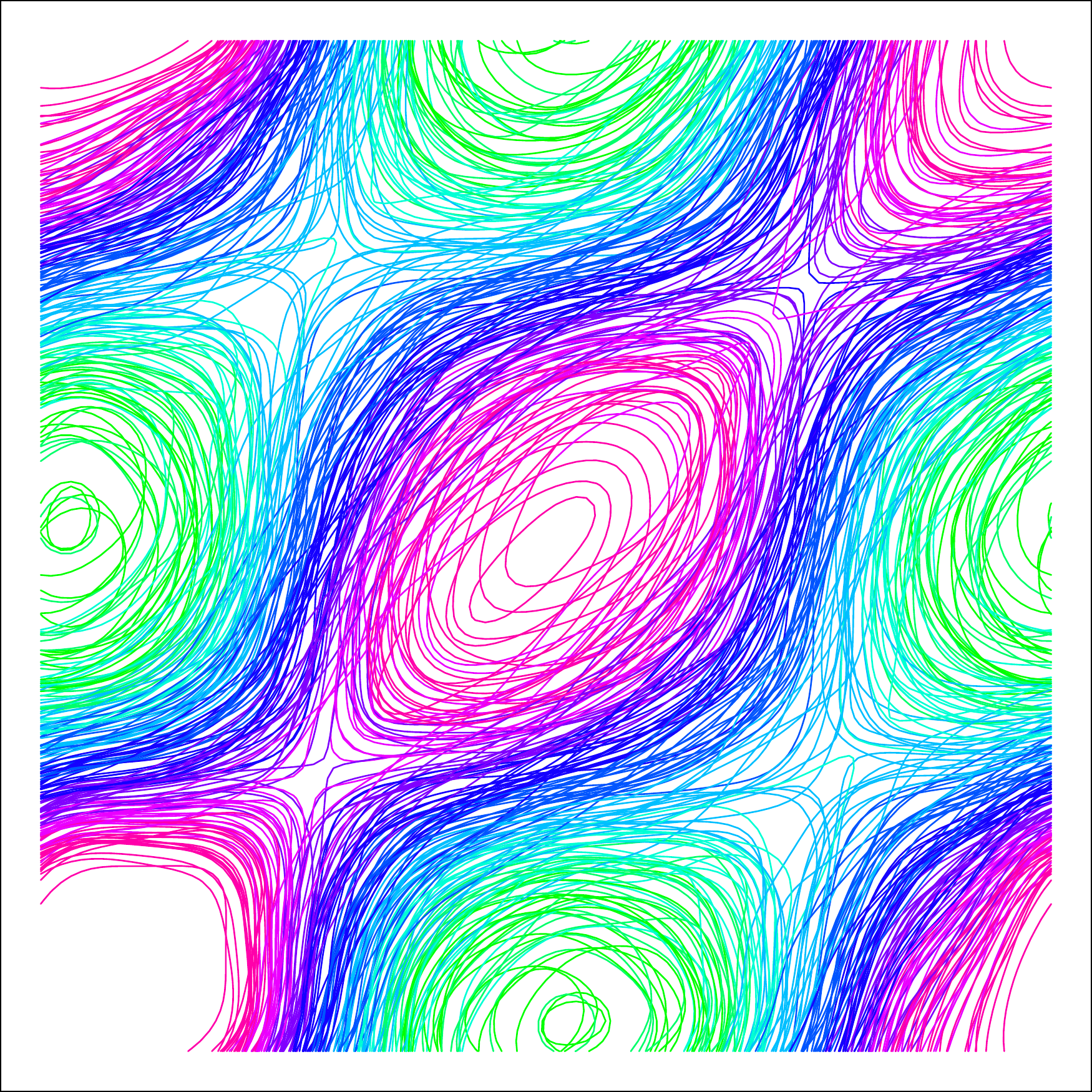}&\includegraphics[width=0.22\textwidth]{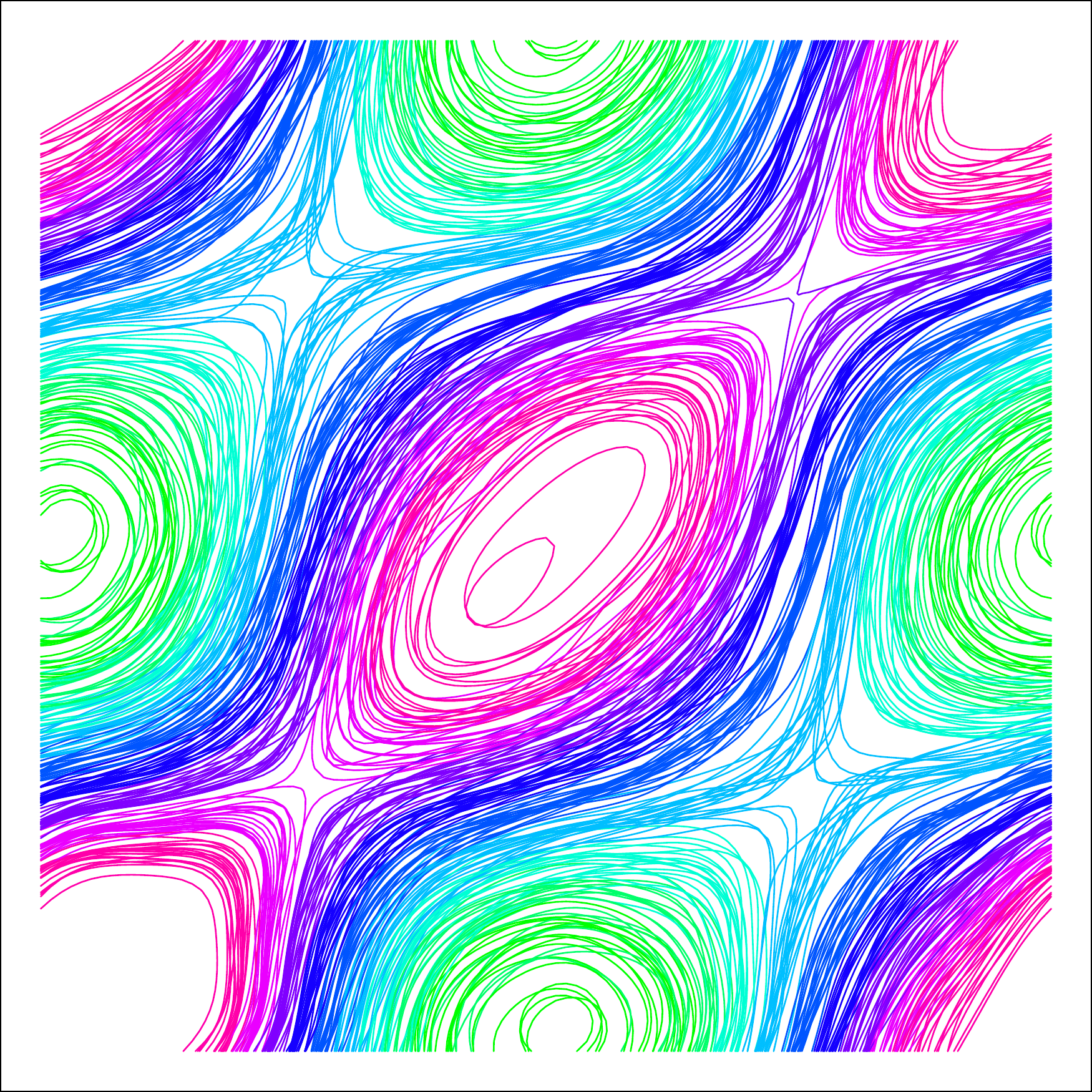} \\ 
\begin{sideways} \footnotesize{\hspace{0.75cm} $u=0.375$} \end{sideways} &\includegraphics[width=0.22\textwidth]{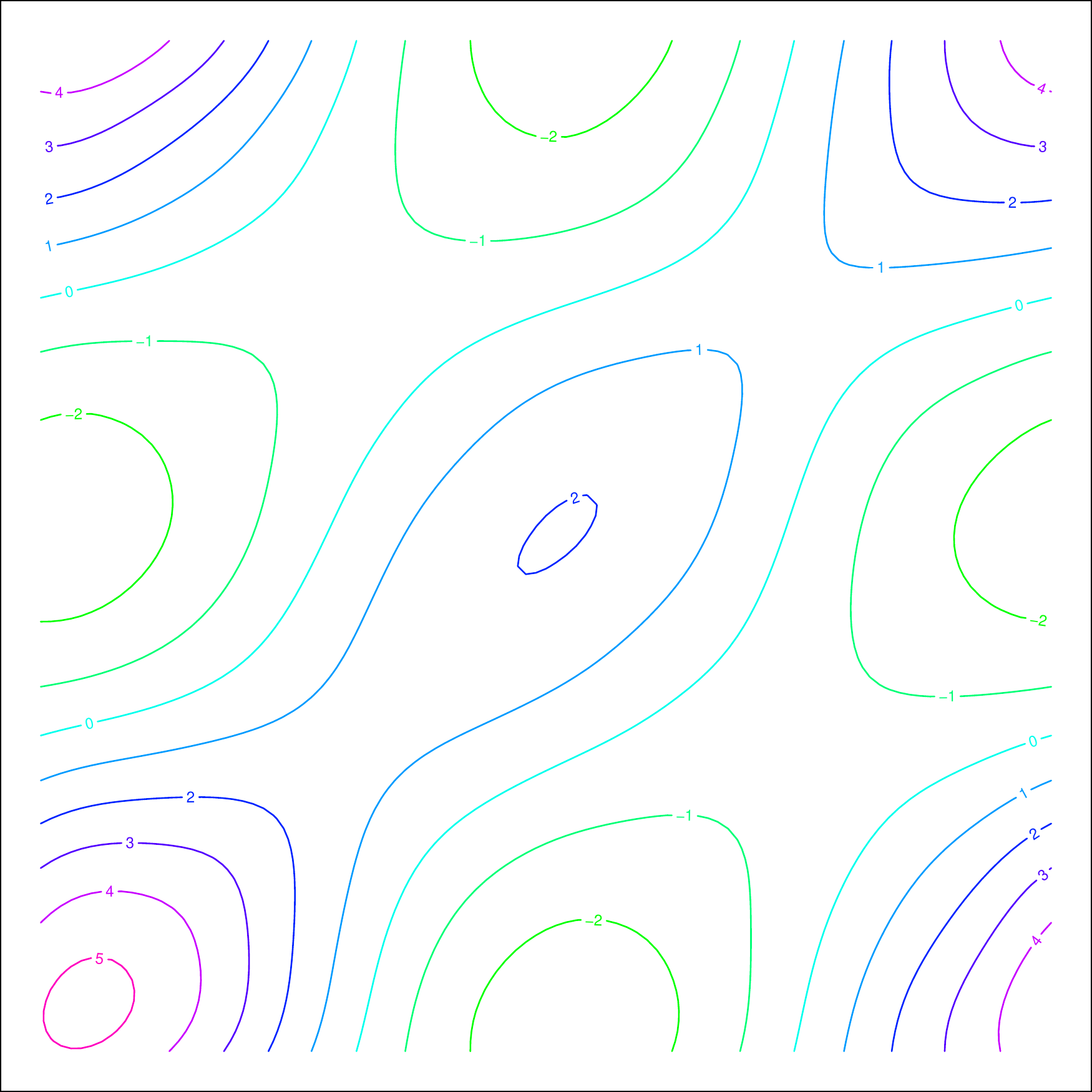}& \includegraphics[width=0.22\textwidth]{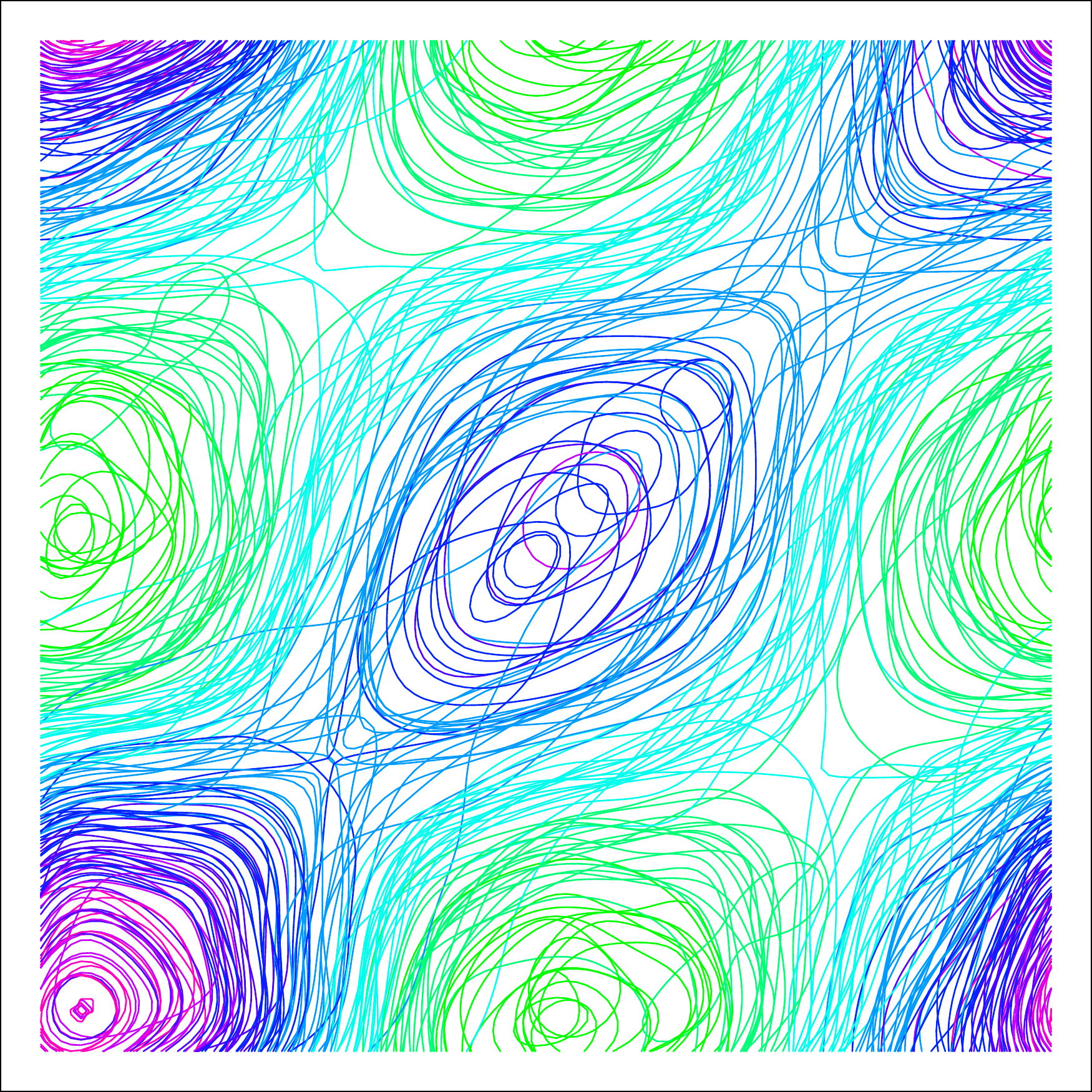}& \includegraphics[width=0.22\textwidth]{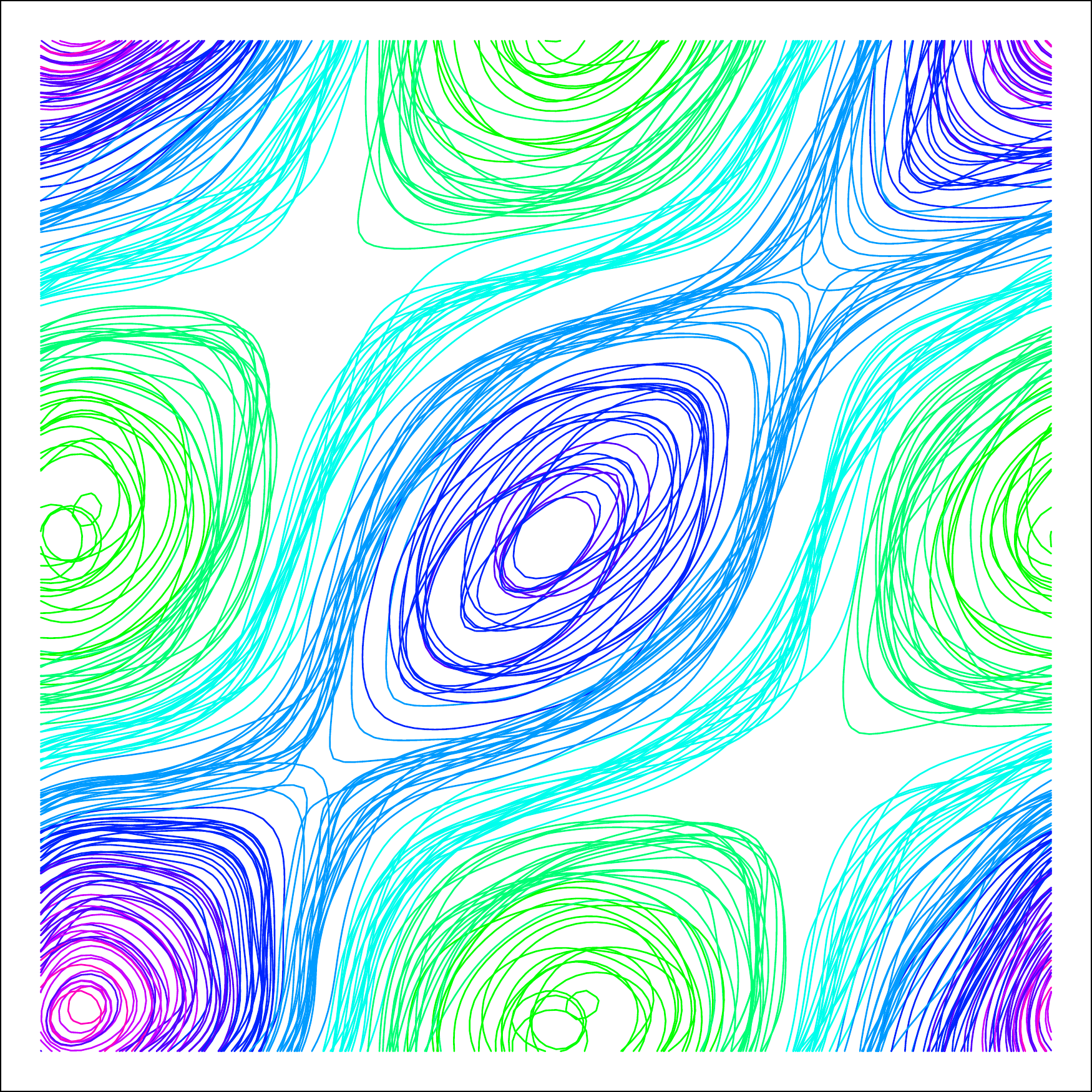}&\includegraphics[width=0.22\textwidth]{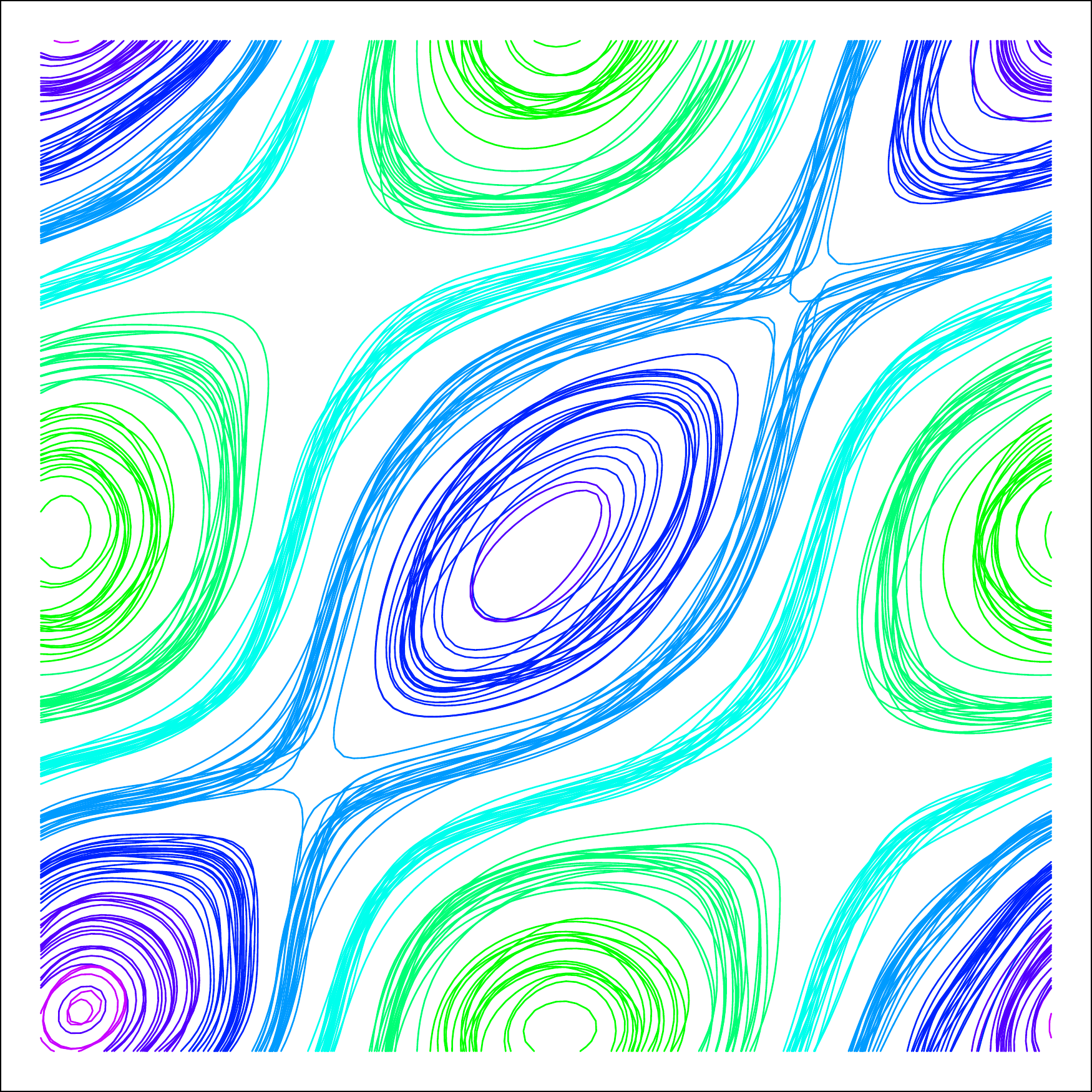} \\ 
\begin{sideways} \footnotesize{\hspace{0.75cm} $u=0.5$} \end{sideways} &\includegraphics[width=0.22\textwidth]{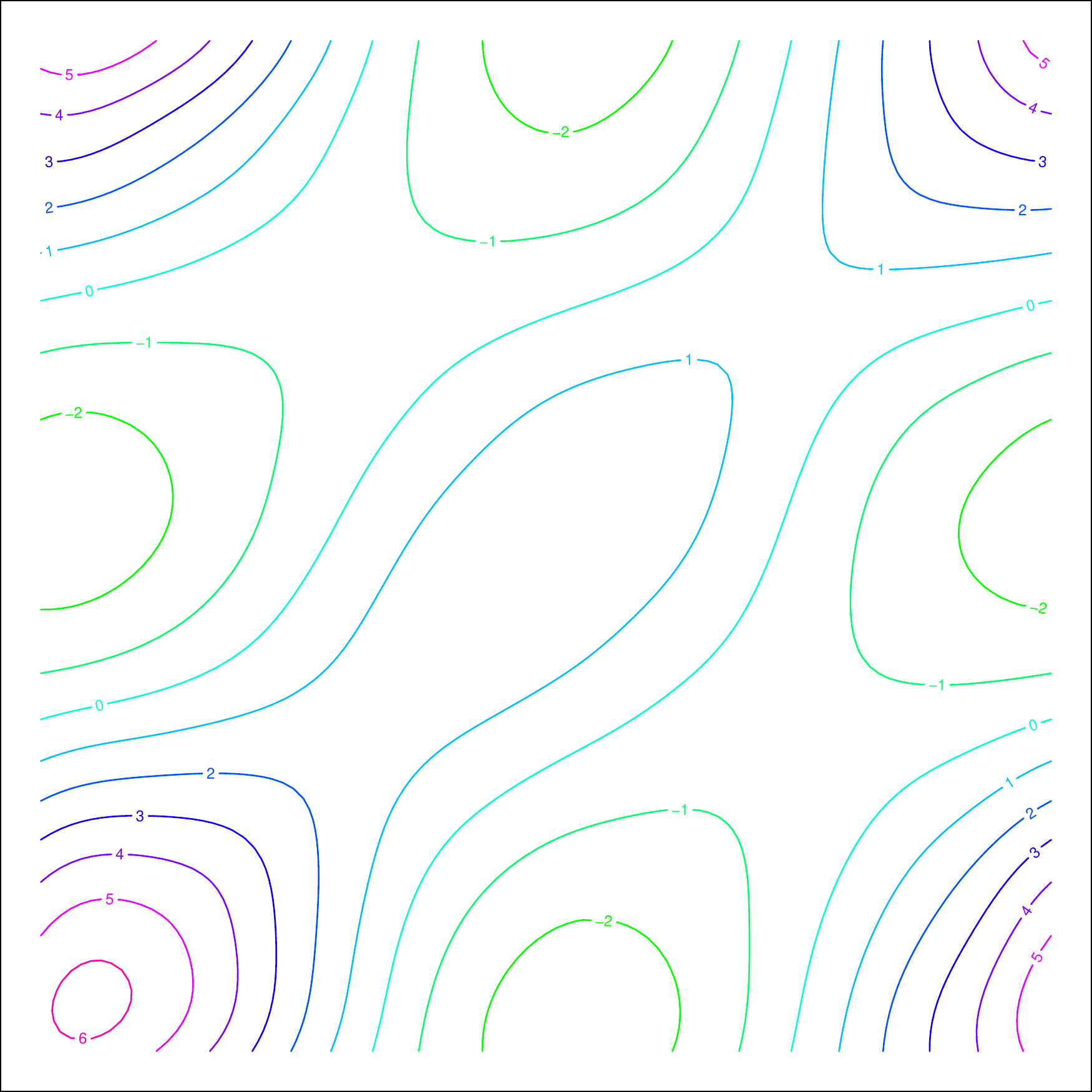}& \includegraphics[width=0.22\textwidth]{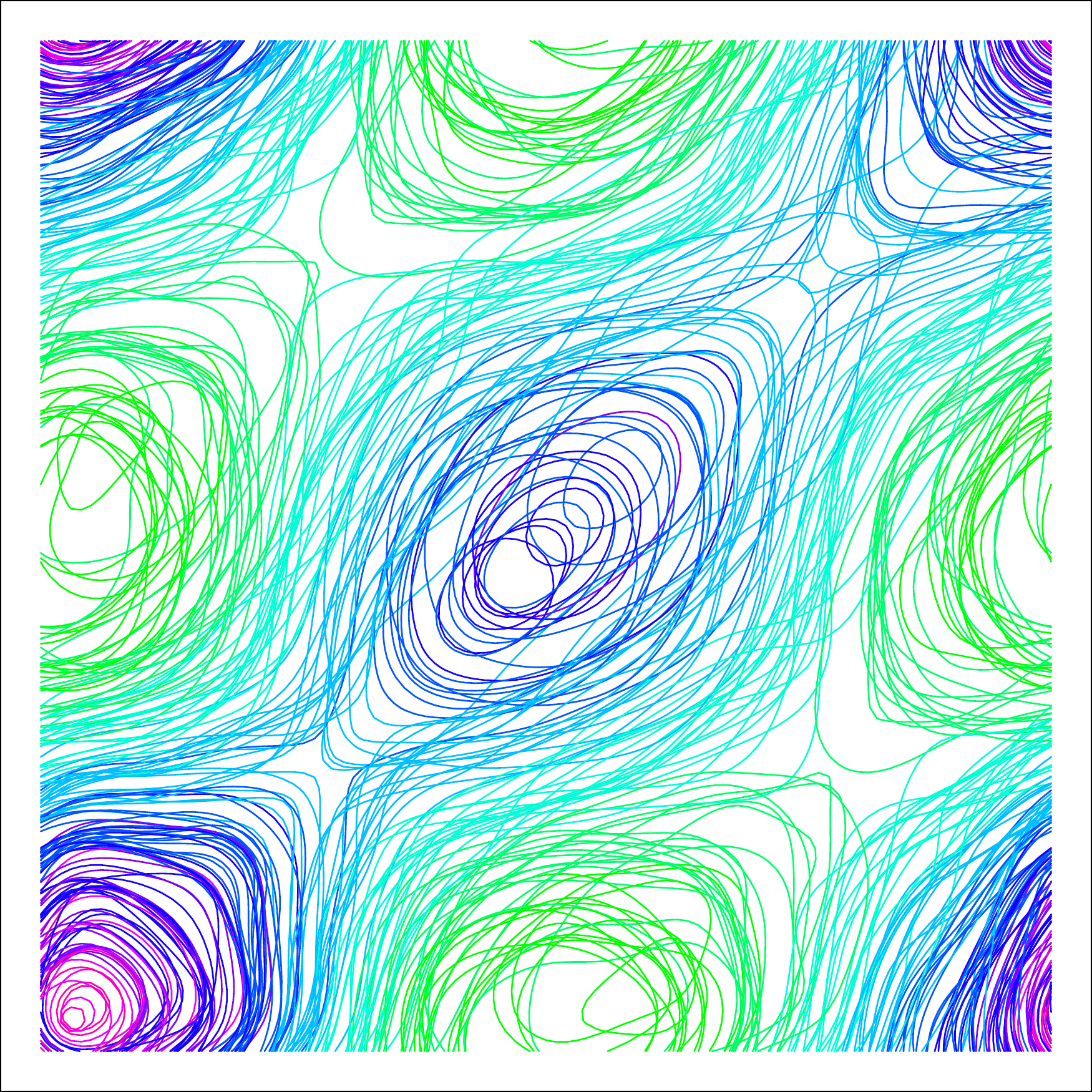}& \includegraphics[width=0.22\textwidth]{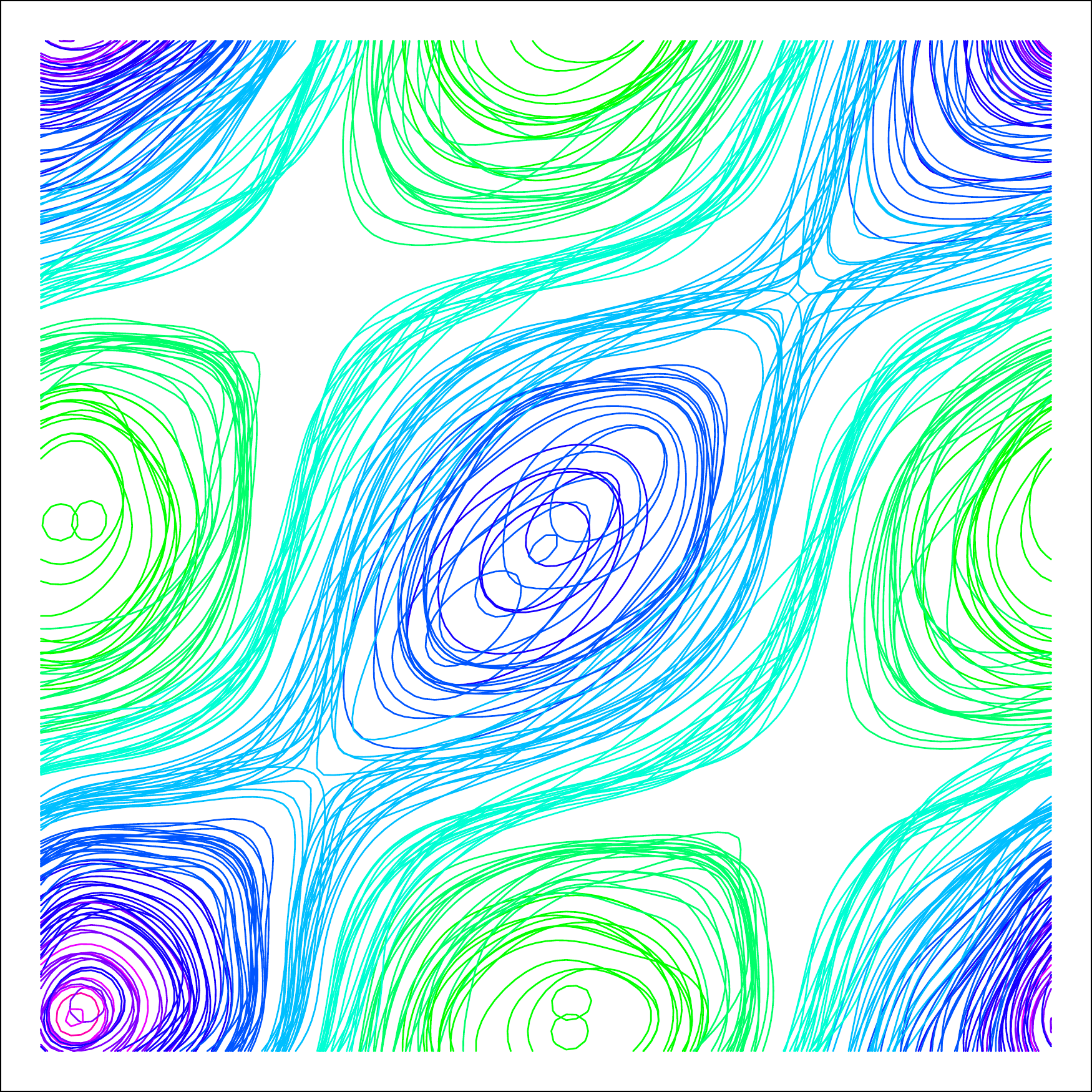}&\includegraphics[width=0.22\textwidth]{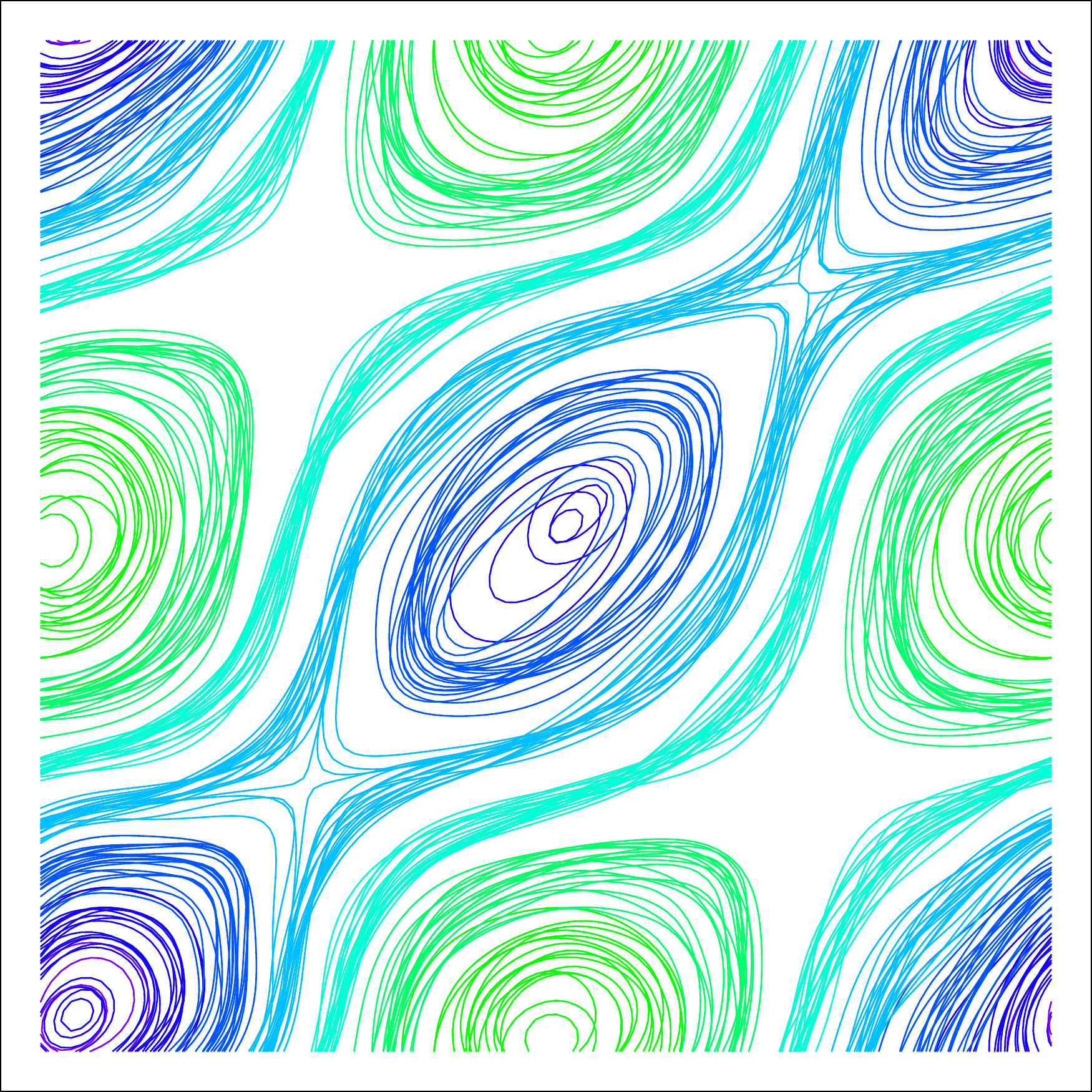} \\
\begin{sideways} \footnotesize{\hspace{0.75cm} $u=0.625$} \end{sideways}&\includegraphics[width=0.22\textwidth]{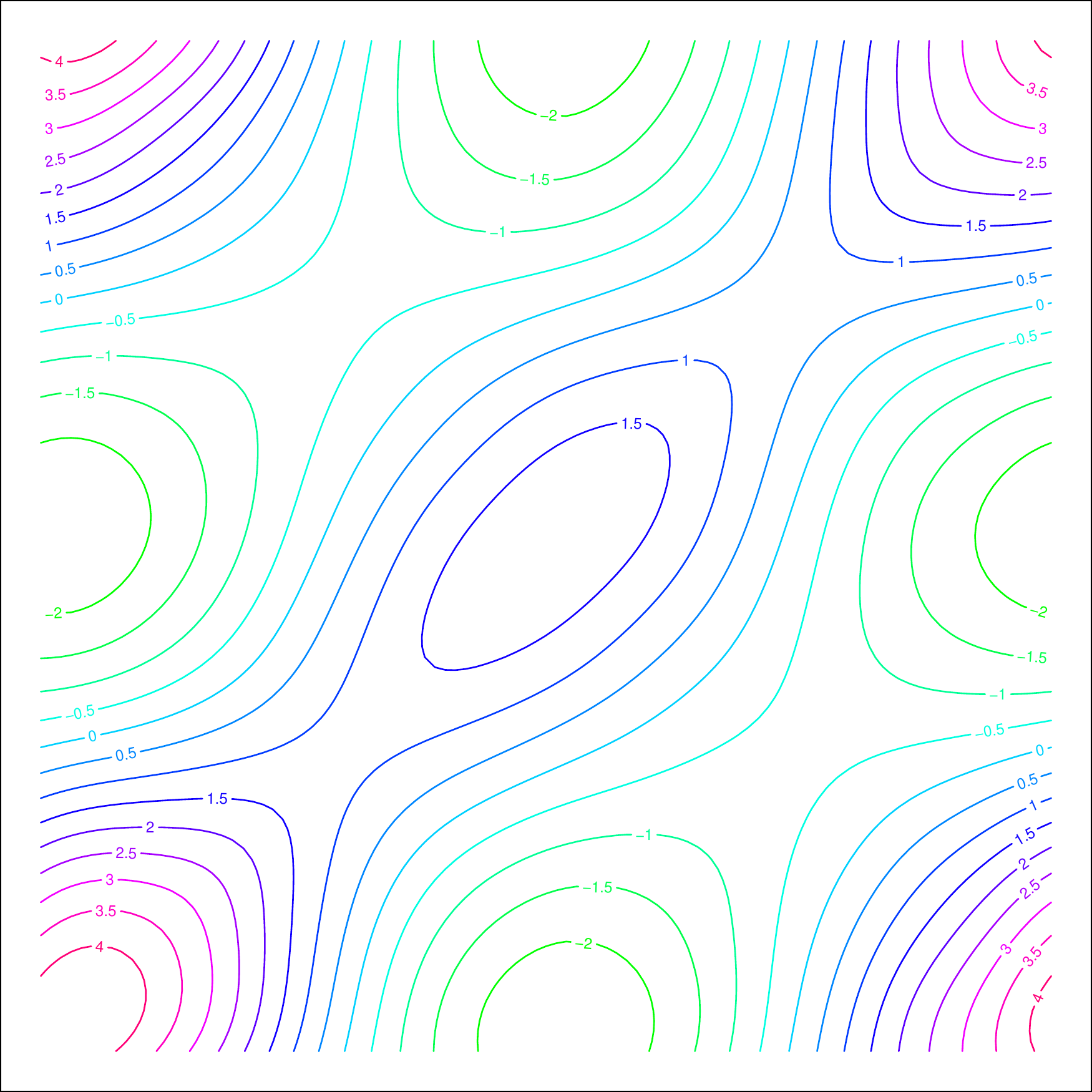}& \includegraphics[width=0.22\textwidth]{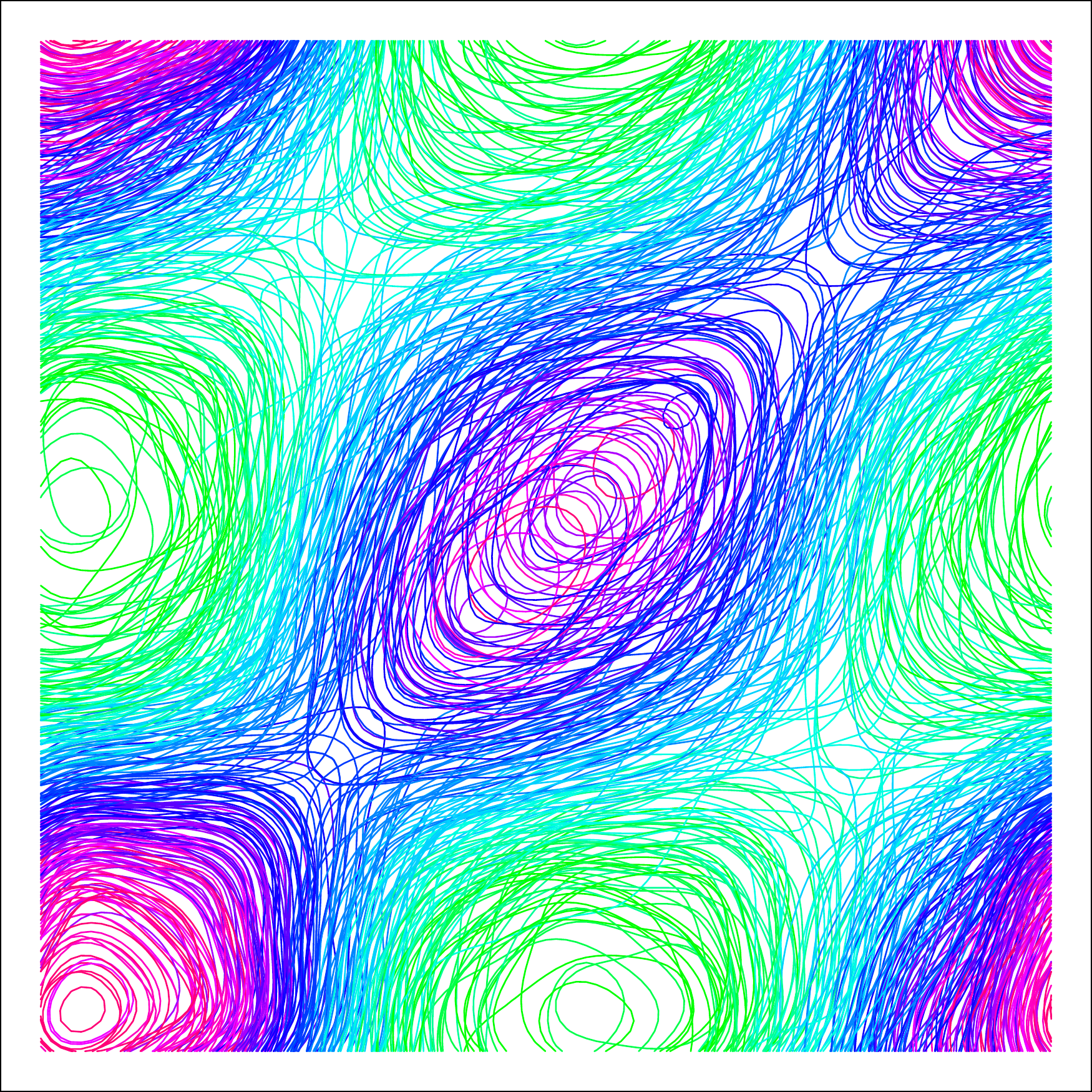}& \includegraphics[width=0.22\textwidth]{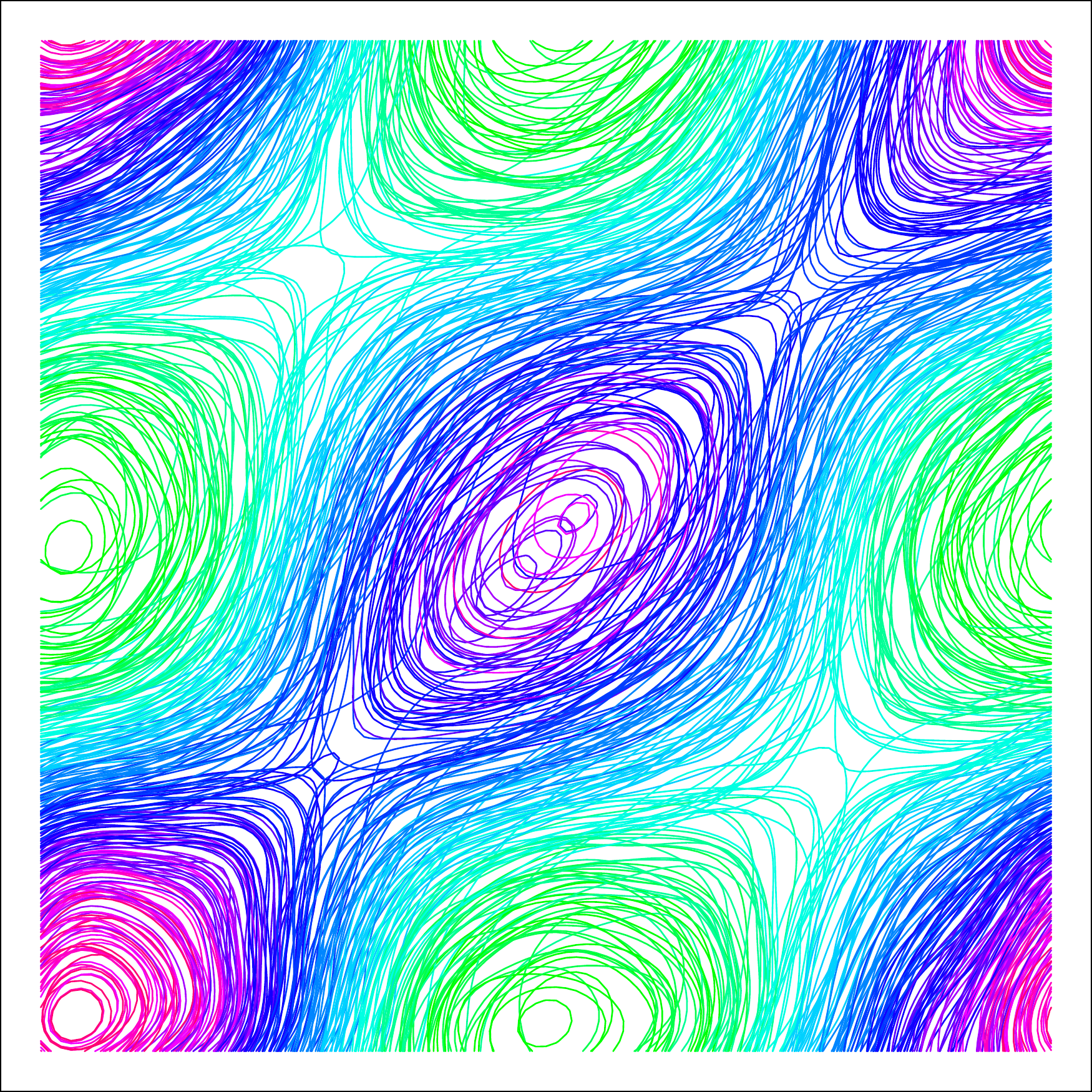}&\includegraphics[width=0.22\textwidth]{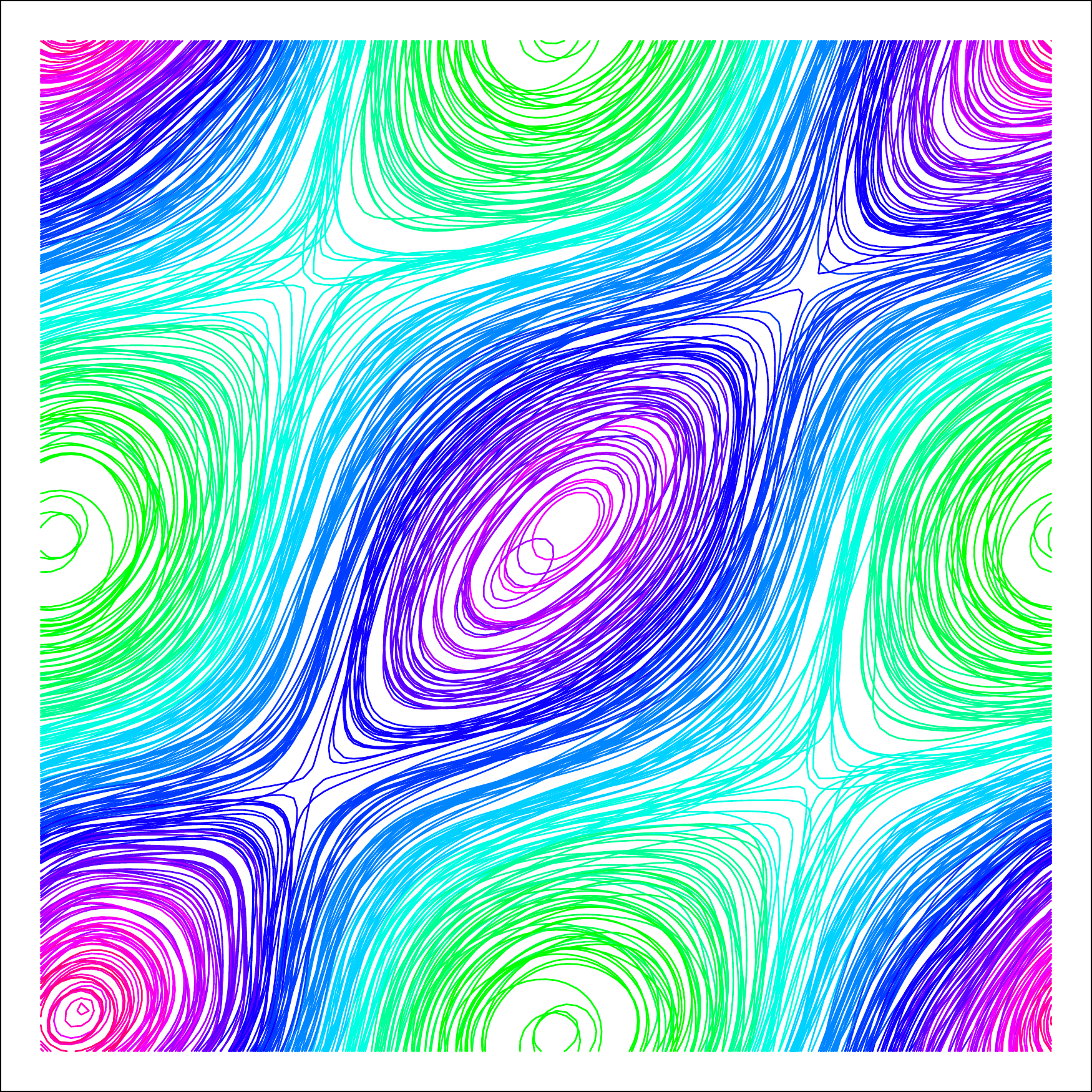} \\
\begin{sideways} \footnotesize{\hspace{0.75cm} $u=0.75$} \end{sideways}  &\includegraphics[width=0.22\textwidth]{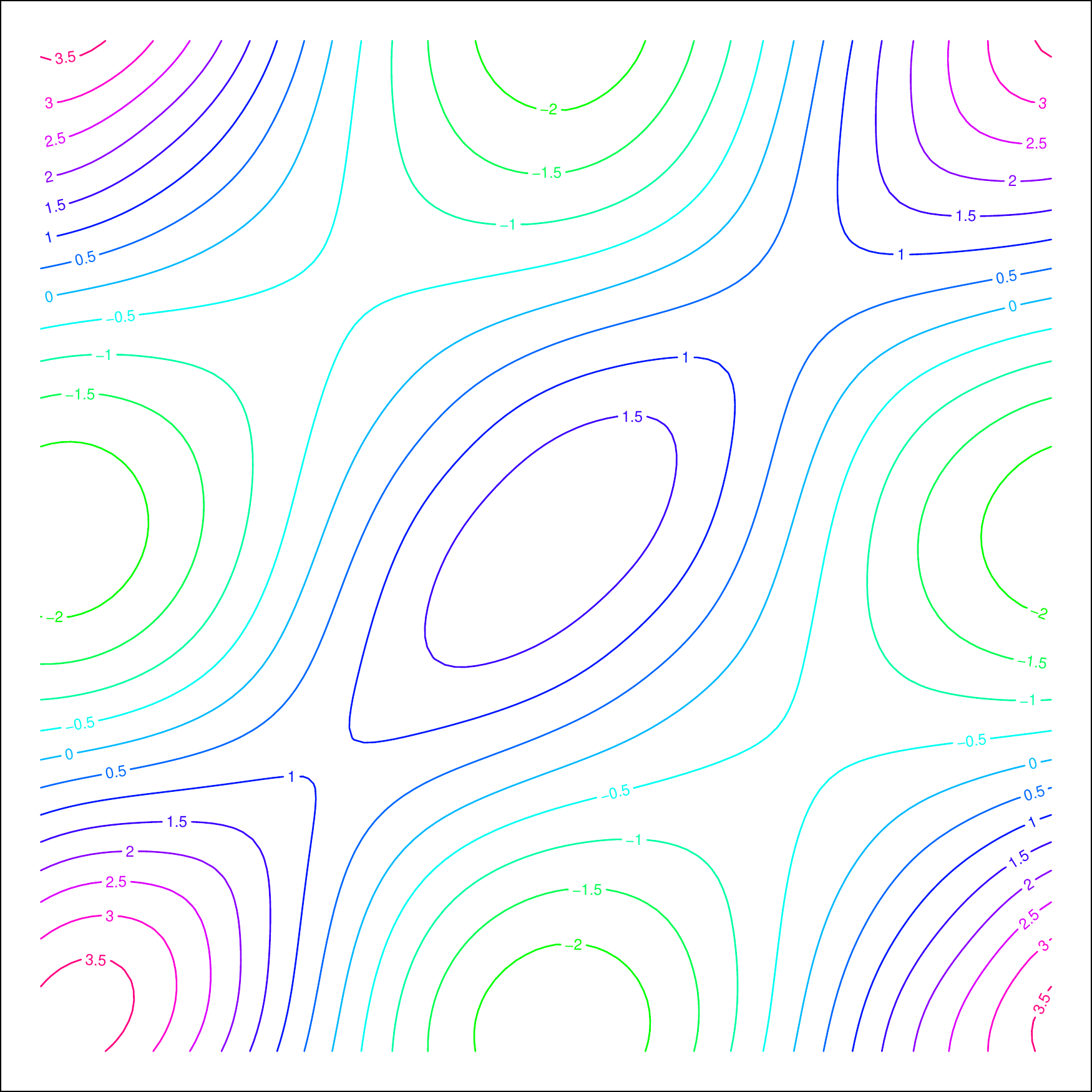}& \includegraphics[width=0.22\textwidth]{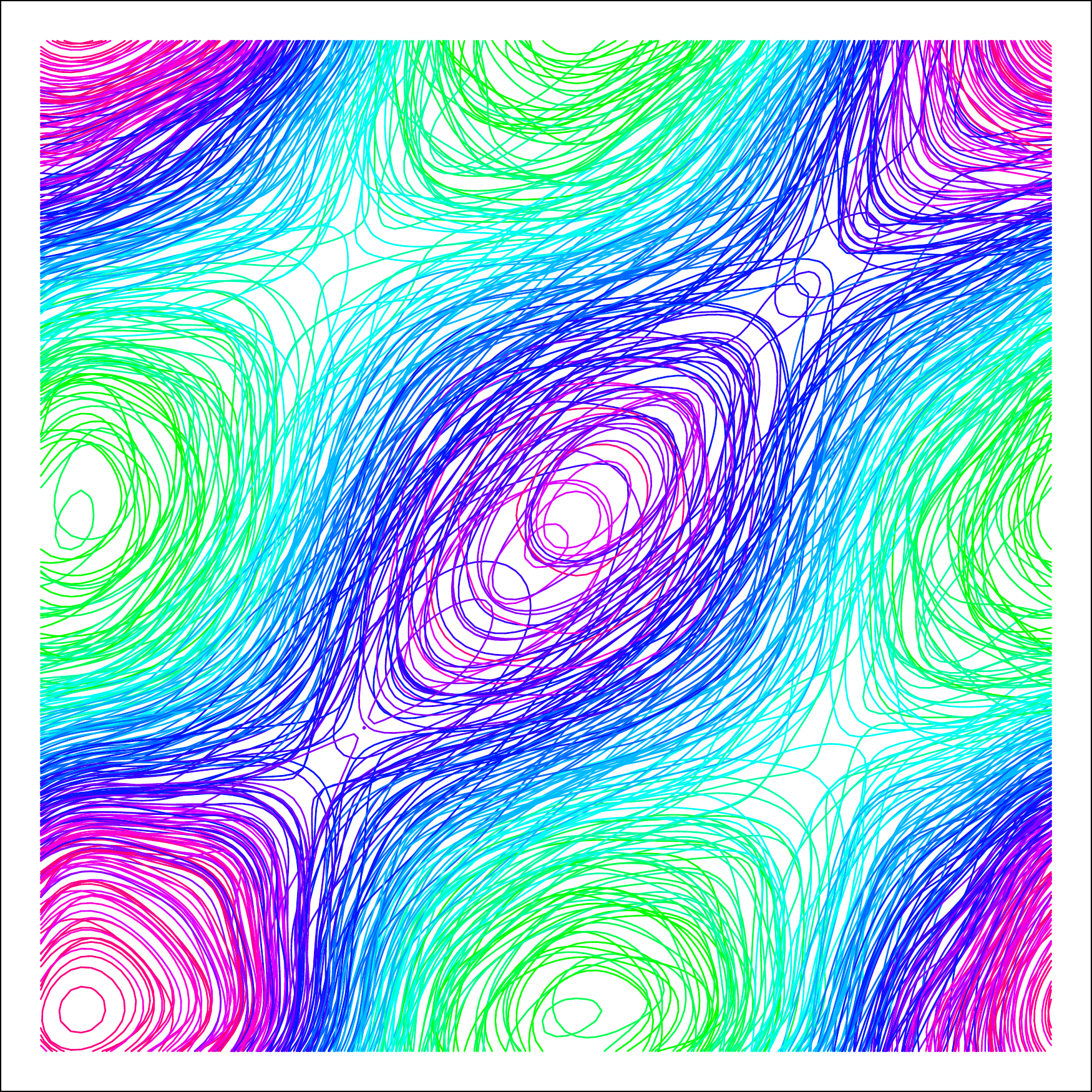}& \includegraphics[width=0.22\textwidth]{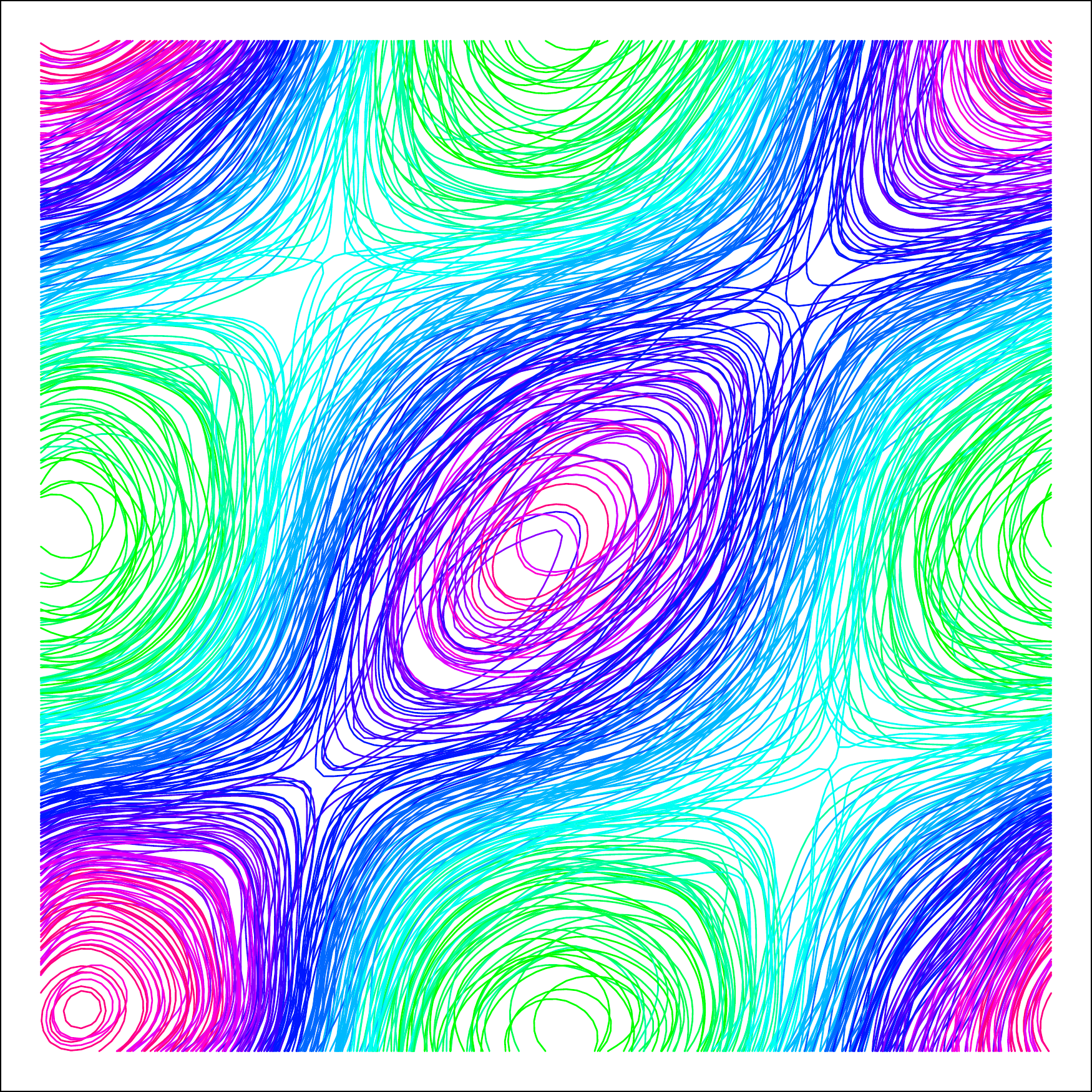}&\includegraphics[width=0.22\textwidth]{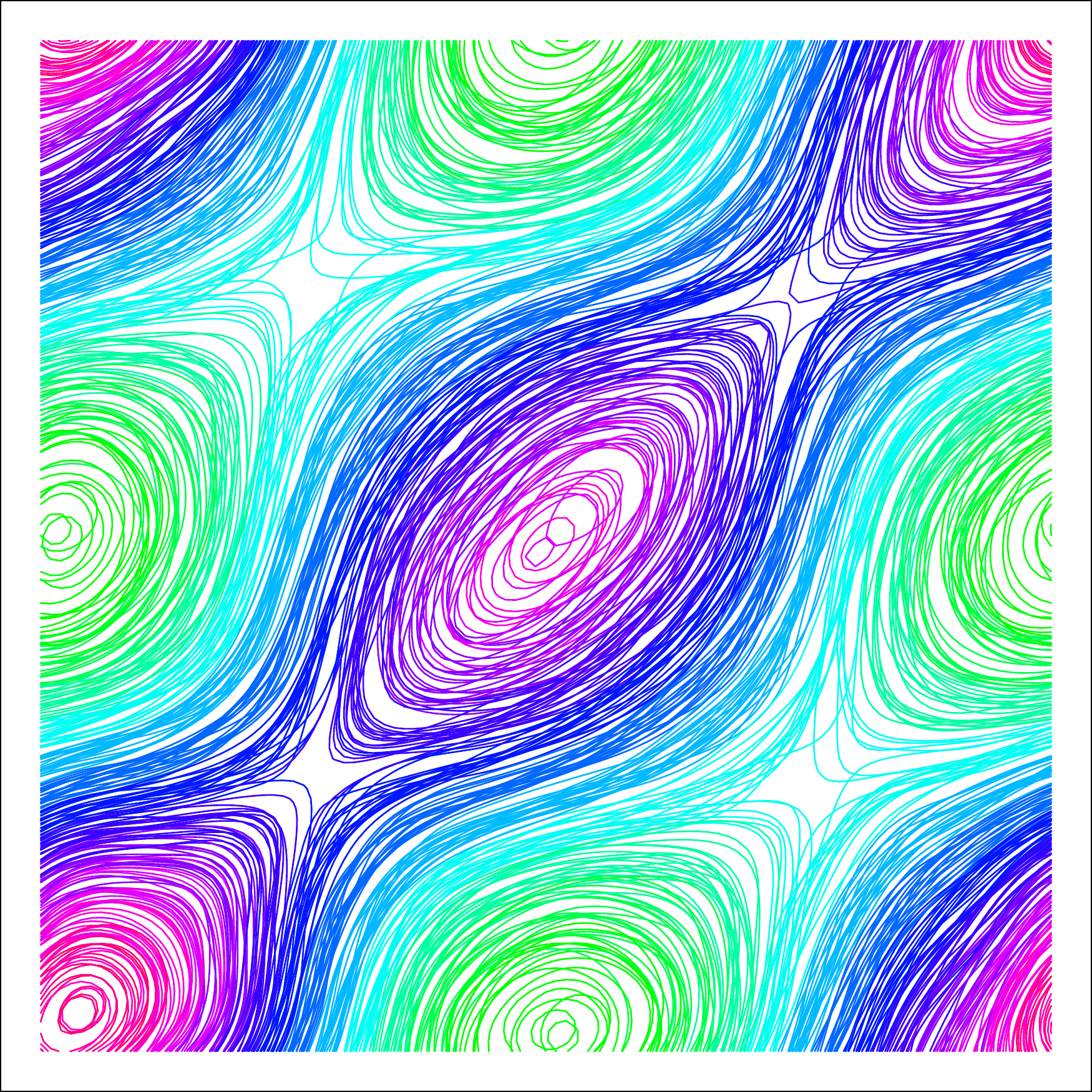} \\
\begin{sideways} \footnotesize{\hspace{0.75cm} $u=0.9$} \end{sideways}   &\includegraphics[width=0.22\textwidth]{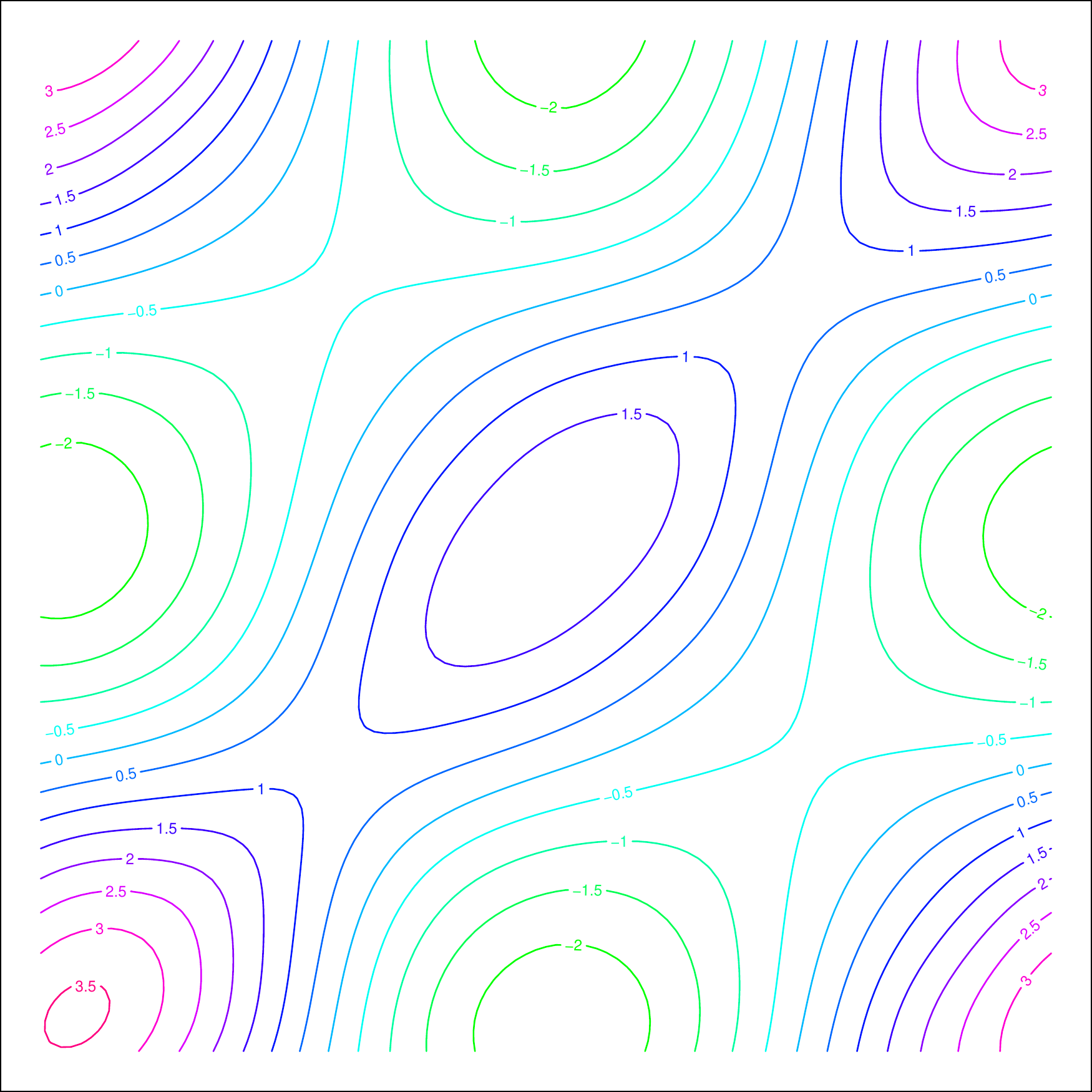}& \includegraphics[width=0.22\textwidth]{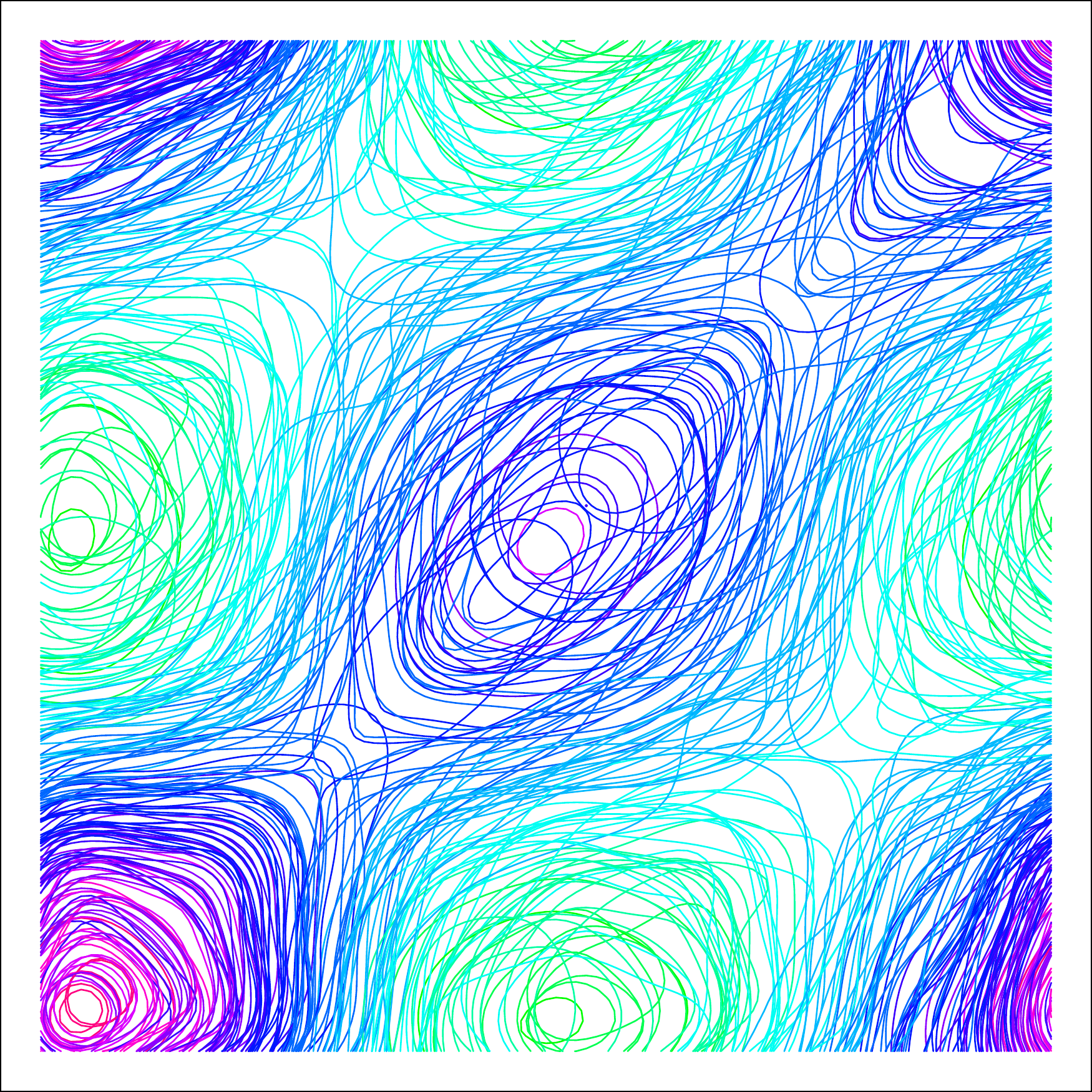}& \includegraphics[width=0.22\textwidth]{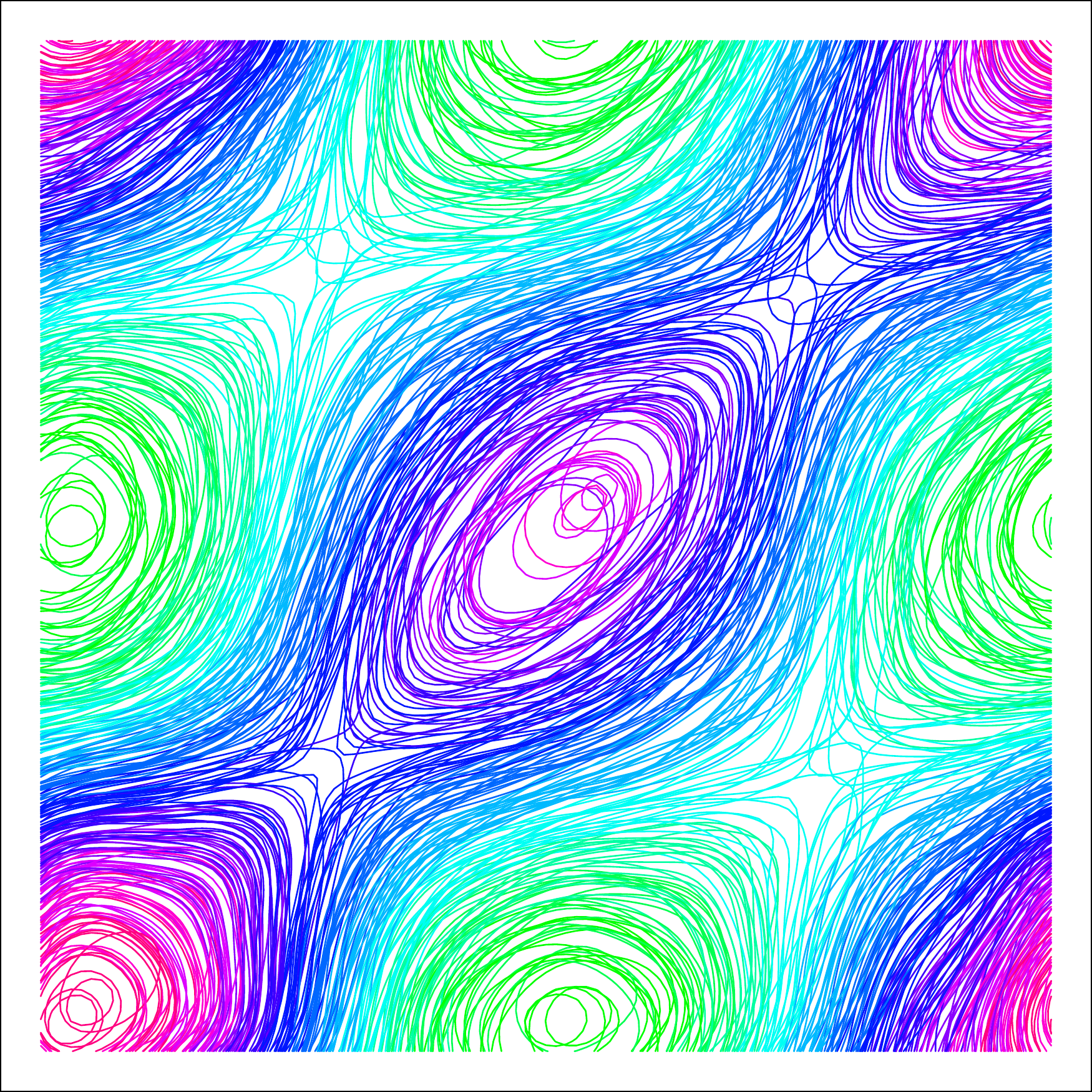}&\includegraphics[width=0.22\textwidth]{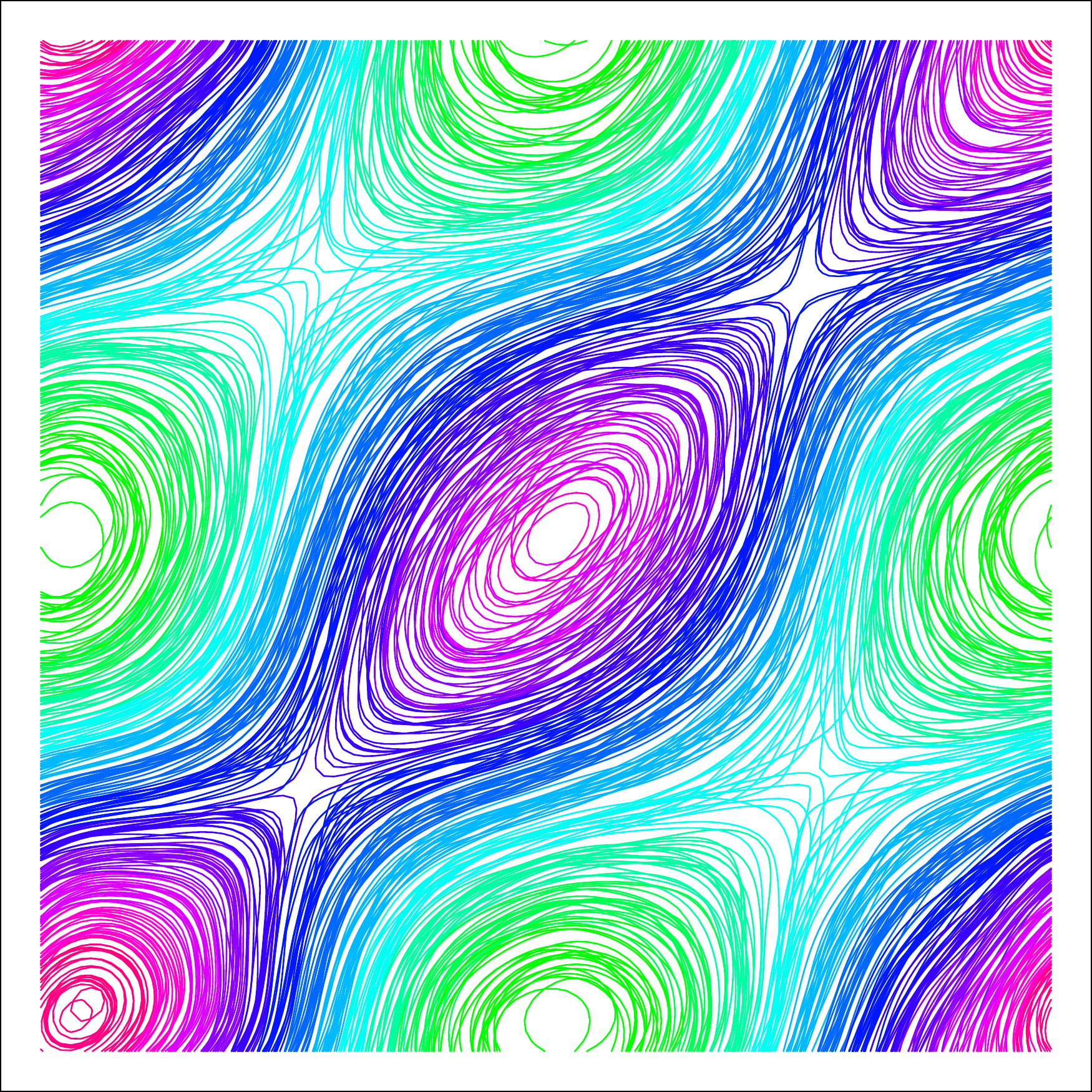} \\
\end{tabular}
\vspace{-5pt}
\caption{\footnotesize{Contour plots of the true and estimated spectral density of the FAR(2) at different time points for $\lambda = 1.5-cos(\pi u)$.}}\label{table:FAR2peak}
\end{figure}
\end{center}
\FloatBarrier

% !TEX spellcheck = en_US
\section{Concluding remarks}

This paper forms a basis for the development of statistical techniques and methods for the analysis of nonstationary functional time series. We have provided a theoretical framework for meaningful statistical inference of functional time series with dynamics that change slowly over time. For this, the notion of local stationarity was introduced for time series on the function space. We focused on a class of functional locally stationary processes for which a time-varying functional Cram{\'e}r representation exists. The second-order characteristics of processes belonging to this class are completely captured by the time-varying spectral density operator. We moreover introduced time-varying functional ARMA processes and showed that these belong to the class of locally stationary functional processes. In the last section, we considered the nonparametric estimation of the time-varying spectral density operator. To derive the asymptotic distribution, a weaker tightness criterion is used than what is common in  the existing literature. The results derived in this paper give rise to consider Quasi-likelihood methods on the function space as well as the development of prediction and appropriate dimension reduction techniques for nonstationary time series on the function space. This is left for future work.\\

\medskip
\noindent 
{\bf Acknowledgements.}
This work has been supported in part by Maastricht University, the contract ``Projet d'Actions de Recherche Concert{\'e}es'' No. 12/17-045 of the ``Communaut{\'e} fran\c{c}aise de Belgique'' and by the Collaborative Research Center ``Statistical modeling of nonlinear dynamic processes'' (SFB 823, Project A1, C1, A7) of the German Research Foundation (DFG). The authors sincerely thank the Editor and three Referees for their constructive comments that helped to produce an improved revision of the original paper. 

\appendix
\setcounter{section}{0}
\setcounter{equation}{0}
\def\theequation{A\arabic{section}.\arabic{equation}}
\def\thesection{A\arabic{section}}
\section{Proofs}
 In this appendix, we prove the main theoretical results of the paper. 
\subsection{Proofs of section \ref{section2}} \label{proofssection2}

\begin{proof}[Proof of Proposition \ref{tvMAbochnerinf}]
For fixed $t\in\{1,\ldots,T\}$ and $T\in\nnum$, let $U_{s,\omega}=e^{\im\omega (t-s)}\,\AtT{s}$. We have
\[
\mathcal{T}( U_{s,\cdot})
= \int_{-\pi}^{\pi} e^{\im\omega (t-s)}\,\AtT{s}\,dZ_{\omega}
= \AtT{s}\,\varepsilon_{t-s},
\]
where $\mathcal{T}$ is the mapping defined by linear extension of
$\mathcal{T}(U\,\mathbf{1}_{[\alpha,\beta)}) = U(Z_{\beta}-Z_{\alpha})$ (see section \ref{proofStochInt}). By definition of the operator $ U_{s,\cdot}$, $\norm{U_{s,\cdot}}^2_{\mathcal{B}_{\infty}} \leq \snorm{\AtT{s}}^2_{\infty} \int_{-\pi}^{\pi} \snorm{\mathcal{F}^{\veps}_{\omega}}_1\,d\omega<\infty$ and thus $U_{s,\cdot} \in {\mathcal{B}_{\infty}}$. Similarly, $\sum_s \mathcal{T}( U_{s,\cdot}) \in {\mathcal{B}_{\infty}}$ from which it follows that
\[
\lim_{N \to \infty}\sum_{|s|\le N}e^{\im\omega (t-s)}\,\AtT{s}
=e^{\im\omega t} \sum_{s\in\znum}e^{-\im\omega s}\,\AtT{s}=e^{\im\omega t}\,\AAtT{\omega}  \in \mathcal{B}_{\infty}.
\]
The continuity of the mapping $\mathcal{T}$ then implies 
\[
\XT{t}
=\lsum_{s} \mathcal{T}( U_{s,\cdot})
=\mathcal{T}(\lsum_s  U_{s,\cdot})
=\int_{-\pi}^{\pi} e^{\im\omega t}\,\AAtT{\omega}\,dZ_{\omega}\qquad \text{a.e. in } \Hspace.\qedhere
\]
\end{proof}

\subsection{Proofs of section \ref{section4}}\label{proofsection3}
% !TEX spellcheck = en_US
\begin{proof}[Proof of Lemma \ref{oplem}] 
We follow the lines of \citet[Theorem 5.2, Corollary 5.1]{Bosq}. To ease notation, we shall write $I$ and $O$ for the identity and zero operator on $H$, respectively while we denote the identity operator on $H^m$ by $I_{H^m}$. Consider the bounded linear operator $\tilde P_u(\lambda)$ on $H$
\begin{align*}
\tilde P_u(\lambda) = \lambda^m I-\lambda^{m-1} B_{u,1} - ...- \lambda B_{u,m-1} -B_{u,m}, \quad \lambda \in \mathbb{C}.
\end{align*}
It is straightforward to derive that, under the assumption $\sum_{j=1}^{m}\snorm{B_{u,j} }_{\infty} < 1$, non-invertibility of $\tilde P_u(\lambda)$ implies that $\lambda$ has modulus strictly less than $1$. Define the invertible matrices $U_u(\lambda)$ and $M_u(\lam)$ on the complex extension $H^m$ by
\begin{align*}
U_{u,ij}(\lambda) &= \begin{cases}
\lam^{j-i}\,I_H&\text{ if }j\geq i,\\
O_H&\text{ otherwise}
\end{cases}\\
\intertext{for $i,j=1,\ldots,m$ and}
M_u(\lambda) &= \begin{pmatrix}
  O_{H^{(m-1) \times 1}} & -I_{H^{m-1}}\\
 P_{u,0}(\lambda) & P_{u,1}(\lambda),\ldots,P_{u,m-1}(\lambda) 
\end{pmatrix},
\end{align*}
where $P_{u,0}(\lambda) = I$ and $P_{u,j}(\lambda) = \lambda_u P_{u,j-1}(\lambda) - B_{u,j}$ for $j = 1,\ldots,m$. Then
\[
M_u(\lambda)\,\Big(\lambda\,I_{H^m}-\boldsymbol{B}^{*}_{u}\Big)\,U_u(\lambda)
= \begin{pmatrix}
  I_{H^{(m-1)}} & O_{H^{(m-1)\times 1}}\\
  O_{H^{1 \times (m-1)}}& \tilde P_u(\lambda) 
\end{pmatrix},
\]
from which it follows that $(\lambda\,I_{H^m}-\boldsymbol{B}^{*}_{u})$ is not invertible when $\tilde P_u(\lambda)$ is not invertible. In other words, the spectrum $S_u$ of $\boldsymbol{B}^{*}_{u}$ over the complex extension of $H^m$, which is a closed set, satisfies
\begin{align*}
S_u = \{\lambda\in\cnum: \lambda\,I_{H^m}-\boldsymbol{B}^{*}_{u}\text{ not invertible} \}
& \subset \{\lambda\in\cnum: \tilde P_u(\lambda)\text{ not invertible} \} \\
& =  \{\lambda\in\cnum: |\lambda|<1 \}.
\end{align*}
Hence, the assumption that $\sum_{j=1}^{m}\snorm{B_{u,j} }_{\infty} < 1$ for all $u$, implies that the spectral radius of $\boldsymbol{B}^{*}_{u}$ satisfies \begin{equation}
\label{eq:sr}
r(\boldsymbol{B}^{*}_{u})
=\sup_{\lambda\in S_u}|\lambda|
= \lim_{k \to\infty}\bigsnorm{\boldsymbol{B}^{*k}_{u}}^{1/k}_{\infty}
< \frac{1}{1+\delta}
\end{equation}
for some $\delta >0$. The equality is a well-known result for the spectral radius of bounded linear operators\footnote{Gelfand's formula} and can for example be found in \citet{DunfordSchwartz}. From \eqref{eq:sr} it is now clear that there exists a $k_0 \in \znum, \alpha \in (0,1)$ and a constant $c_1$ such that for all $k \ge k_0$
\begin{align*} \tageq \label{eq:weaknorm}
\snorm{\boldsymbol{B}^{*k}_{u} }_{\infty} < c_1 \alpha^{k}.
\end{align*} 
Finally, it has been shown in \citet[p.74]{Bosq} that this is equivalent to the condition $\snorm{ \boldsymbol{B}^{*k_0}_{u}}_{\infty} <1$ for some integer $k_0 \ge 1$.
\end{proof} 

\begin{proof}[Proof of Theorem \ref{FAR}]  
The moving average representation \eqref{eq:MArep} and the difference equation \eqref{eq:DE} together imply that the process can be represented as
 \begin{align*}
\XT{t} 
 = \sum_{l=0}^{\infty} \AtT{l}  C^{-1}_{\frac{t-l}{T}} \sum_{j=0}^{m} B_{\frac{t-l}{T},j}(X_{t-l-j,T}). 
\end{align*}
Using the linearity of the operators and applying a change of variables $l' = l+j$, this can be written as
\begin{align*}
\displaybreak[0]
\XT{t}& = \sum_{l'=0}^{\infty}  \sum_{j=0}^{m}\AtT{l'-j} C^{-1}_{\frac{t-l'+j}{T}} B_{\frac{t-l'+j}{T},j}(\XT{t-l'}),
\displaybreak[0]
\end{align*}
where $\AtT{l'-j} = O_H$  for $l'<j$. For a purely nondeterministic solution we require
\begin{align} \label{eq:valuel}
\sum_{j=0}^{m}\AtT{l'-j} C^{-1}_{\frac{t-l'+j}{T}} B_{\frac{t-l'+j}{T},j}
= \begin{cases}
    I_H & \text{if } l' = 0, \\
    O_H & \text{if } l' \ne 0.
  \end{cases}
\end{align}
Because $\varepsilon_{t}$ is white noise in $\Hspace$, it has spectral representation 
\begin{equation} \label{eq:speceps}
\varepsilon_{t} = ({2\pi})^{-1/2}\int_{-\pi}^{\pi} e^{\im\omega t} dZ_{\omega}, \qquad t\in\znum. 
\end{equation}
Since a solution of the form $\eqref{eq:MArep}$ exists, we also have
\begin{equation*}
\XT{t} = \int_{-\pi}^{\pi} e^{\im\omega t} \AAtT{\omega} dZ_{\omega},
\end{equation*}
where $\AAtT{\omega} =\frac{1}{\sqrt{2 \pi}} \sum_{l=0}^{\infty} \AtT{l}\,
e^{-\im\omega l}$. Substituting the spectral representations of $\XT{t}$ and $\veps_t$ into \eqref{eq:DE}, we get together with the linearity of the operators $ B_{u,j}$ and $\AAtT{\omega}$
\begin{align*}
%\sum_{j=0}^{p} B_{\frac{t}{T},j}\big(\XT{t-j}\big)& = C_{\frac{t}{T}}(\varepsilon_{t}) \Leftrightarrow \\ 
\int_{-\pi}^{\pi}  \sum_{j=0}^{m} e^{\im\omega (t-j)}\,B_{\frac{t}{T},j}\, \AAttT{t-j}{\omega}\,dZ_{\omega}
& =  ({2\pi})^{-1/2}\int_{-\pi}^{\pi} e^{\im\omega t} \,  C_{\frac{t}{T}}\, dZ_{\omega},
\end{align*}
Given the operator $\AAuu{\frac{t}{T}}{\omega} $ satisfies equation \eqref{eq:smoothf}, the previous implies we can write 
\begin{align*}
\frac{1}{\sqrt{2\pi}}C_{\frac{t}{T}}
&=\sum_{j=0}^{m} e^{-\mathrm{i}\omega j} B_{\frac{t}{T},j}\mathcal{A}_{\frac{t}{T},\omega}\\
&=\sum_{j=0}^{m} e^{-\mathrm{i}\omega j}  B_{\frac{t}{T},j}\mathcal{A}_{\frac{t-j}{T},\omega}
+  \sum_{j=0}^{m} e^{-\mathrm{i}\omega j}  B_{\frac{t}{T},j}\big(\mathcal{A}_{\frac{t}{T},\omega}-\mathcal{A}_{\frac{t-j}{T},\omega} \big). 
\end{align*}
From the last equation, it follows that
\begin{align*} 
\sum_{j=0}^{m} e^{\im\omega (t-j)}  B_{\frac{t}{T},j}\big(\AAttT{t-j}{\omega}-\AAuu{\frac{t-j}{T}}{\omega} \big)
& = \sum_{j=0}^{m} e^{\im\omega (t-j)}  B_{\frac{t}{T},j}
\big(\AAuu{\frac{t}{T}}{\omega}-\AAuu{\frac{t-j}{T}}{\omega}\big) \\
& =C_{\frac{t}{T}}\,\Omega^{(T)}_{t,\omega}, \tageq \label{eq:omegastuff}
\end{align*}
where $\Omega^{(T)}_{t,\omega}= O_H$, $t \le 0$. We will show that this operator is of order $O(\frac{1}{T})$ in $S_p(H_\cnum)$. %\todo{This can be shortened if we use remark below} Throughout the rest of the proof, we focus on the case $p=2$. 
By Proposition \ref{Neumann}, the smooth transfer operator satisfies $\AAu{\omega}\in S_p(H_\cnum)$. Under the conditions of Theorem \ref{FAR}, we have that for any element $\psi \in H_\cnum$ and fixed $\omega \in \Pi$, the mapping $u \mapsto \AAu{\omega}(\psi)(\tau) $ is continuous and, from the properties of the $B_{u,j}$, is differentiable and has bounded derivatives with respect to $u$. By the Mean Value Theorem, we obtain 
\begin{align*}
\sup_{t, \omega}
\bigsnorm{\AAuu{\frac{t}{T}}{\omega}-\AAuu{\frac{t-j}{T}}{\omega}}_{p}\leq
\sup_{t,\omega} \sup_{ \frac{t-j}{T}<u_0<\frac{t}{T}}\Big|\SSS{\frac{j}{T}}\Big|\,
\Bigsnorm{\SSS{\frac{\partial}{\partial u }}\Au{\omega}\vert_{u=u_o}}_{p} =O\big(\SSS{\frac{1}{T}}\big),
\end{align*}
%Since $\AAu{\omega}(\psi)(\tau)=\innerprod{\aau{\omega}}{\psi}$, continuity of the inner product implies that the kernels $\aau{\omega} \in  L^2_\cnum([0,1]^2)$ are Lipschitz continuous with respect to $u$. Letting $K$ denote the Lipschitz constant, we have
%\begin{align} \label{eq:LIP} \sup_{t, \omega} \|\aauu{\frac{t}{T}}{\omega}-\aauu{\frac{t-j}{T}}{\omega}\|_2 \le K\,\big|\SSS{\frac{j}{T}}\big|,
%\end{align}
for all $\omega \in \Pi$, uniformly in $u$. 
%Hence,
%\begin{align*}
%\sup_{t, \omega} \bigsnorm{\mathcal{A}_{\frac{t}{T},\omega}
 % -\AAuu{\frac{t-j}{T}}{\omega}}_{2}
%=\sup_{t, \omega}
%\bignorm{\aau{\frac{t}{T}}{\omega}-\aau{\frac{t-j}{T}}{\omega}}_2
%= O\big(\SSS{\frac{1}{T}}\big),
%\end{align*}
It then easily follows from equation \eqref{eq:omegastuff} and Proposition \ref{holderoperator} that $\snorm{C_{\frac{t}{T}} \Omega^{(T)}_{t,\omega}}_p= O(\frac{1}{T})$ uniformly in $t, \omega$. From \eqref{eq:valuel}, we additionally have
\begin{align*} 
\sum_{l=0}^{t} \AtT{l}\,\Omega^{(T)}_{t-l,\omega}& =
 \sum_{l=0}^{t}\sum_{j=0}^{m} \AtT{l-j}\,C^{-1}_{\frac{t-l+j}{T}}\,  B_{\frac{t-l+j}{T},j}\, e^{\im\omega (t-l)}
 \big[\AAttT{t-l}{\omega}-\AAuu{\frac{t-l}{T}}{\omega} \big]\\ 
& = e^{\im\omega t}
\big[\AAtT{\omega}-\AAuu{\frac{t}{T}}{\omega} \big].
%\label{equalityop}
\end{align*}
Since the moving average operators are either in $S_{2}(H)$ or in $S_{\infty}(H)$, the above together with another application of H{\"o}lder's inequality for operators yields
\begin{align*} 
\sup_{t, \omega}\snorm{ \AAtT{\omega}-\AAuu{\frac{t}{T}}{\omega} }_{p} &
\leq \sup_{t, \omega}\Big(\bigsnorm{\AtT{0} }_{\infty} \, \bigsnorm{\Omega^{(T)}_{t,\omega}}_{p}
+ \lsum_{l=1}^{t} \bigsnorm{\AtT{l} }_p\,  \bigsnorm{\Omega^{(T)}_{t-l,\omega} }_{p}\Big)\leq \frac{K}{T}, 
\end{align*}
for some constant $K$ independent of $T$. We remark that the state space representation of the previous section allow similarly to derive that (A1) and (A2) still hold for $p=\infty$ under the weaker assumption of Lemma \ref{oplem}.
\end{proof} 

\subsection{Proofs of section \ref{section3}}\label{proofsection4}
% !TEX spellcheck = en_US
\begin{proof}[Proof of Proposition \ref{propl2rtv}]
For fixed $t$ and $T$, we have by Minkowski's inequality
\begin{align*} \label{sumdecom}
&   \sum_s \snorm{ \cum  (\XT{\lfloor{uT-s/2\rfloor}}, \XT{\lfloor{uT+s/2\rfloor}}) }_2 \\& 
= \sum_{s}  \snorm{ \frac{1}{2 \pi} \int_{{\Pi}}\big( \AAttT{\lfloor uT-s/2 \rfloor}{\lam_1} \otimes  \AAttT{\lfloor uT+s/2 \rfloor}{-\lam_1} \big) \mathcal{F}^{\varepsilon}_{\lambda_1} e^{\mathrm{i} \lambda_1 s}  d{\lambda_1} }_2\\&
=\sum_{\substack{ s: \{ ( 1 \le \lfloor{uT-s/2\rfloor} \le T ) \cup \\ (1 \le \lfloor{uT+s/2\rfloor} \le T) \}}} \snorm{ \CCT_{u,s} }_2 +\sum_{\substack{ s: \{ ( 1 \le \lfloor{uT-s/2\rfloor} \le T ) \cup \\ (1 \le \lfloor{uT+s/2\rfloor} \le T) \}^\complement}}  \snorm{\CCT_{u,s} }_2   
\intertext{where $\{\cdot\}^\complement$ denotes the complement event. Now since $\AAttT{t}{\omega} = \AAuu{0}{\omega}$ for $t < 1$ and $\AAttT{t}{\omega} = \AAuu{0}{\omega}$ for $ t>T$, we can write}
& = \sum_{s :B}  \snorm{ \frac{1}{2 \pi} \int_{{\Pi}}\big( \AAttT{\lfloor uT-s/2 \rfloor}{\lam_1} \otimes  \AAttT{\lfloor uT+s/2 \rfloor}{-\lam_1} \big) \mathcal{F}^{\varepsilon}_{\lambda_1} e^{\mathrm{i} \lambda_1 s}  d{\lambda_1} }_2 \\& 
+\sum_{s :B^{\complement}}  \snorm{ \frac{1}{2 \pi} \int_{{\Pi}}\big( \AAuu{0}{\lam_1} \otimes  \AAuu{1}{-\lam_1} \big) \mathcal{F}^{\varepsilon}_{\lambda_1} e^{\mathrm{i} \lambda_1 s}  d{\lambda_1} }_2,\tageq
\end{align*}
where $B =  \{ ( 1 \le \lfloor{uT-s/2\rfloor} \le T ) \cup (1 \le \lfloor{uT+s/2\rfloor} \le T) \}$. Because the first sum is finite, an application of proposition \ref{holderoperator} implies it can be bounded by
\[
K \sup_{t,T,\omega }\snorm{ \AAtT{\omega} }^2_{\infty} \snorm{\F^{\varepsilon}}_2 < \infty,
\]
for some constant $K$. For the second term, we note that 
\[
 \frac{1}{2 \pi} \int_{{\Pi}}\big( \AAuu{0}{\lam_1} \otimes  \AAuu{1}{-\lam_1} \big) \mathcal{F}^{\varepsilon}_{\lambda_1} e^{\mathrm{i} \lambda_1 s}  d{\lambda_1} = \cum(\Xuu{0}{t+s}, \Xuu{1}{t}).
\]
It thus corresponds to the cross-covariance operator of the two stationary processes $\Xuu{0}{t}$ and $\Xuu{1}{t}$ at lag $s$. By Propsosition \ref{MAcumrep},  we can alternatively express this as
\[ \cum(\Xuu{0}{s}, \Xuu{1}{0})  = \sum_{l,k} (A_{0,l} \otimes A_{1,k}) \cum(\varepsilon_{t+s-l}, \varepsilon_{t-k}).  
\]
Using then that $\varepsilon_t $ is functional white noise, we find for the second term in \eqref{sumdecom}
\begin{align*}
\sum_{s :B^{C}} \snorm{\CCT_{u,s} }_2   & \le  \snorm{ \sum_{l,k \in \znum} (A_{0,l} \otimes A_{1,k}) \cum(\varepsilon_{0}, \varepsilon_{0}) }_2 \\& \le  \sum_{l \in \znum} \snorm{  A_{0,l} }_{\infty}  \sum_{k \in \znum} \snorm{A_{1,k} }_{\infty}\snorm{\cum(\varepsilon_{0}, \varepsilon_{0}) }_2 < \infty.
\end{align*}
The result now follows.
\end{proof}

\subsection{Proofs of section \ref{expandcov}} \label{proofsection5}
% !TEX spellcheck = en_US
\begin{proof}[Proof of Proposition \ref{cumX}]

We have by Theorem \ref{cumWN} and by Proposition \ref{momlin}, \begin{align*}
& \cum\big(\int_{\Pi} e^{\mathrm{i} \lambda_1 r_1} \AttT{t_{r_1}}{\lambda_1}dZ_{\lambda_1},\ldots, \int_{\Pi} e^{\mathrm{i} \lambda_k r_k}\AttT{t_{r_k}}{\lambda_k}dZ_{\lambda_k}\big)  \\&
=
\int_{\Pi}  \cdots \int_{\Pi}  \cum\big(e^{\mathrm{i} \lambda_1 r_1} \AttT{t_{r_1}}{\lambda_1}dZ_{\lambda_1},\ldots, e^{\mathrm{i} \lambda_k r_k} \AttT{t_{r_k}}{\lambda_k}dZ_{\lambda_k}\big)
\\&
=
\int_{\Pi}  \cdots \int_{\Pi} \big(e^{\mathrm{i} \lambda_1 r_1} \AttT{t_{r_1}}{\lambda_1} \otimes \cdots \otimes  e^{\mathrm{i} \lambda_k r_k} \AttT{t_{r_k}}{\lambda_k}\big) \cum\big(dZ_{\lambda_1},\ldots,dZ_{\lambda_k}\big)
\\& 
=\int_{\Pi^{k}}e^{\im(\lam_1 r_1+\ldots+\lam_k r_k)}\,
\big(\AAttT{t_{r_1}}{\lambda_1}\otimes\cdots\otimes \AAttT{t_{r_{k}}}{\lambda_{k}}\big) \eta(\lam_1+\ldots+\lam_k)\,   \F^{\varepsilon}_{\lambda_1,\ldots,\lambda_{k-1}}d\lambda_1\cdots d\lambda_{k},
\end{align*}
where the equality holds in the tensor product space $H_\cnum\otimes\cdots\otimes H_\cnum$. Note that the last line corresponds to the inversion formula of the cumulant tensor of order $k$. For fixed $t\in\{1,\ldots,T\}$ and $T\in\nnum$, the $k$-th order cumulant spectral tensor of the linear functional process $\{\XT{t}\}$ can thus be given by
\begin{align} \label{eq:SKCk}
\F^{(t,T)}_{\lam_1,\ldots,\lam_{k-1}}
=\Big( \AAttT{t_{r_{1}}}{\lam_1} \otimes\cdots\otimes  \AAttT{t_{r_{k-1}}}{\alpha_{k-1}}\otimes \AAttT{t_{r_{k}}}{-\lam_+}\Big) \F^{\varepsilon}_{\lam_1,\ldots,\lam_{k-1}},
\end{align}
and is well-defined in the tensor product space $\bigotimes_{i=1}^{k} H_\cnum$. In particular, Proposition \ref{holderoperator} implies the corresponding operator is Hilbert-Schmidt for $k \ge 2$ 
\begin{align*}
& \snorm{\F_{\lambda_1,\ldots,\lambda_{2k-1}}}_2  \le \snorm{\AAttT{t_{r_{1}}}{\lam_1} \otimes\cdots\otimes  \AAttT{t_{r_{k-1}}}{\alpha_{k-1}}\otimes \AAttT{t_{r_{k}}}{-\lam_+}}_{\infty} \snorm{\mathcal{F}^\varepsilon_{\lambda_1,\ldots,\lambda_{2k-1}}}_2\\& =\big(\underset{t_{r_j, \lambda_j}}{\text{sup}}\| \AAttT{t_{r_{j}}}{\lam_j} \|_{\infty}\big)^{2k} \snorm{\mathcal{F}^\varepsilon_{\lambda_1,\ldots,\lambda_{2k-1}}}_2 < \infty.
\end{align*}
We therefore have that the kernel function $f^{(t,T)}_{\lambda_1,\ldots,\lambda_{k-1}}(\tau_1,\ldots,\tau_k)$ is a properly defined element in $L^2_\cnum([0,1]^2)$. In case $k=2$, we moreover have that $\F_{\lambda_1} \in S_1(H_\cnum)$. This follows by the fact that the $\varepsilon_t$ are white noise and thus $\snorm{\mathcal{F}^{\varepsilon}_{\lambda_1}}_1 \le \sum_t \snorm{\cum(\varepsilon_t, \varepsilon_0)}_1= \snorm{ \mathcal{C}^{\varepsilon}_0}_1  = \E\|\varepsilon_{0}\|^2_2 <\infty$. 
\end{proof}

% !TEX spellcheck = en_US
\begin{proof}[Proof of Theorem \ref{exp}]
Under assumption (A1)--(A2) we have for all $t=1,..,T$ and $T \in \mathbb{N}$ that $\XT{t}$ are locally stationary random elements in $H$. Therefore, by Proposition \ref{cumX} and \eqref{eq:cumDNop},
\begin{align*} 
\mean(\IT_{u_j,\omega})&= \SSS{\frac{1}{2\pi\,\HNN{2}(0)}}\,\cum\big(\DT_{u_j,\omega},
\DT_{u_j,-\omega}\big)\\
& = \SSS{\frac{1}{2\pi\HNN{2}(0)}}
\int_{\Pi}\Big(
\HN\big(\AAttT{t_{j,\bullet}}{\lam},\omega-\lambda\big) \otimes\HN\big(\AAttT{t_{j,\bullet}}{-\lam},\lam-\omega\big)\Big)\, \F^\veps_\lam\,d\lam.\qquad
\end{align*}
In order to replace the transfer function kernels with their continuous approximations, we write
\begin{equation} 
\begin{split}\AAttT{t_{j,r}}{\lam}&\otimes\AAttT{t_{j,s}}{-\lam}
-\AAuu{u_{j,r}}{\omega}\otimes\AAuu{u_{j,s}}{-\omega}\\ &=\big(\AAttT{t_{j,r}}{\lam}-\AAuu{u_{j,r}}{\omega}\big)
\otimes\AAttT{t_{j,s}}{-\lam}
+\AAuu{u_{j,r}}{\omega}\otimes
\big(\AAttT{t_{j,s}}{-\lam}-\AAuu{u_{j,s}}{-\omega}\big).
\end{split}
\label{eq:errorrep}
\end{equation}
We focus on finding a bound on the first term as the second term can be bounded similarly. Since $\HN(\cdot,\cdot)$ is linear in its first argument, we have by the triangle inequality
\begin{align*}
&\bigsnorm{\HN\big(\AAttT{t_{j,r}}{\lam}-\AAuu{u_{j,r}}{\omega},
\omega-\lambda\big)}_{\infty}\\
&\qquad\leq\bigsnorm{\HN\big(\AAttT{t_{j,\bullet}}{\lam}
-\AAuu{u_{j,\bullet}}{\lam},
\omega-\lambda\big)}_{\infty}
+\bigsnorm{\HN\big(\AAuu{u_{j,\bullet}}{\lam}
-\AAuu{u_{j,\bullet}}{\omega},
\omega-\lambda\big)}_{\infty}.
\end{align*}
For the first term of this expression, assumption (A2) and Lemma \ref{Dh1993A5} imply 
\begin{align*}
& \Bigsnorm{\lsum_{r=0}^{N-1} \hN{r}
\big(\AAttT{t_{j,r}}{\lam}
-\AAuu{u_{j,r}}{\lam}\big)\,
e^{-\im r(\omega-\lambda)}}_{\infty}
 \leq C\,\SSS{\frac{N}{T}} \tageq
\end{align*}
for some generic constant $C$ independent of $T$. Next, we consider the second term. Similarly as in the proof of Lemma \ref{Dh1993A5}, we have
\begin{align*}
&\HN\big(\AAuu{u_{j,\bullet}}{\lam}
-\AAuu{u_{j,\bullet}}{\omega},\omega-\lambda\big)\\
&\quad=\HN(\omega-\lambda)\,\big(\AAuu{u_j}{\lam}
-\AAuu{u_j}{\omega}\big)
+\HN(\omega-\lam)\,\big(\AAuu{u_{j,N-1}}{\lam}
-\AAuu{u_{j,N-1}}{\omega}\big)\\
&\qquad-\lsum_{r=0}^{N-1}\Big[
\Big(\AAuu{u_{j,r}}{\lam}
-\AAuu{u_{j,r-1}}{\lam}\Big)
-\Big(\AAuu{u_{j,r}}{\omega}
-\AAuu{u_{j,r-1}}{\omega}\Big)\Big]\,H_s(\omega-\lam).
\end{align*}
Since the transfer function operator is twice continuously differentiable in $u$ and $\omega$, we find by two applications of the mean value theorem
\[
\Bigsnorm{\Big(\AAuu{u_{j,r}}{\lam}
-\AAuu{u_{j,r-1}}{\lam}\Big)
-\Big(\AAuu{u_{j,r}}{\omega}
-\AAuu{u_{j,r-1}}{\omega}\Big)}_\infty
\leq\sup_{u\in[0,1],\omega\in\Pi}
\Bigsnorm{\SSS{\frac{\partial^2\AAu{\omega}}{\partial u\partial\omega}}}\,
\SSS{\frac{|\lam-\omega|}{T}}.
\]
Hence we obtain the upper bound
\[
\bigsnorm{\HN\big(\AAuu{u_{j,\bullet}}{\lam}
-\AAuu{u_{j,\bullet}}{\omega},\omega-\lambda\big)}_\infty
\leq C\,L_N(\omega-\lam)\,|\omega-\lam|
+C\,\SSS{\frac{N}{T}}\,L_N(\omega-\lam)\,|\omega-\lam|.
\]
Moreover, Lemma \ref{Dh1993A5} implies
\[
\bigsnorm{\HN\big(\AAttT{t_{j,\bullet}}{-\lam},\omega-\lam\big)}_\infty
\leq C\,L_N(\omega-\lam).
\]
With these bounds and Proposition \ref{holderoperator} and  Lemma \ref{Dh1993A4}, we now obtain
\begin{align*}
\int_{\Pi}
&\Bigsnorm{\Big(\HN\big(\AAuu{u_{j,\bullet}}{\lam}
-\AAuu{u_{j,\bullet}}{\omega},\omega-\lambda\big)\otimes
\HN\big(\AAttT{t_{j,\bullet}}{-\lam},\omega-\lam\big)\Big)\,\F^\veps_\lam}_2
\,d\lam\phantom{XXXX}\\
&\leq\int_{\Pi} \bigsnorm{\HN\big(\AAuu{u_{j,\bullet}}{\lam}
-\AAuu{u_{j,\bullet}}{\omega},\omega-\lambda\big)}_\infty\,
\bigsnorm{\HN\big(\AAttT{t_{j,\bullet}}{-\lam},\omega-\lam\big)}_\infty\,
\bigsnorm{\F^\veps_\lam}_2\,d\lam\\
&\leq
C\int_{\Pi}L_N(\omega-\lam)^2\,d\lam
\leq C\,\log(N).
\end{align*}
The second term of \eqref{eq:errorrep} is similar and thus the error from replacing $\AAttT{t_{j,r}}{\lam}$ and $\AAttT{t_{j,s}}{-\lam}$ by $\AAuu{u_{j,r}}{\omega}$ and $\AAuu{u_{j,s}}{-\omega}$, respectively, is of order $O\big(\frac{\log(N)}{N})$ in $L^2$.

The expectation of the periodogram tensor can therefore be written as
\begin{align*}
\mean(\IT_{u_j,\omega})
&=\SSS{\frac{1}{2\pi \HNN{2}(0)}}\int_{\Pi}
\Big(\HN\big(\AAttT{t_{j,\bullet}}{\lam},\omega-\lam\big)\otimes
\HN\big(\AAttT{t_{j,\bullet}}{-\lam},\lam-\omega\big)\Big)\,\F^\veps_\lam\,
d\lam\\
&=\SSS{\frac{1}{2\pi \HNN{2}(0)}}\int_{\Pi}
\Big(\HN\big(\AAuu{u_{j,\bullet}}{\omega},\omega-\lam\big)\otimes
\HN\big(\AAuu{u_{j,\bullet}}{-\omega},\lam-\omega\big)\Big)\,\F^\veps_\lam\,
d\lam+R_T\\
%\intertext{and further by \eqref{HNfunctconvolution} and noting that $\{\veps_t\}$ is a white noise process with spectral tensor $\F^\veps_\lam$ being constant in $\lam$}
&=\SSS{\frac{1}{\HNN{2}(0)}}
\HNN{2}\big(\AAuu{u_{j,\bullet}}{\omega}\otimes
\AAuu{u_{j,\bullet}}{-\omega},0\big)\,\F^\veps_\omega+R_T\\
&=\SSS{\frac{1}{\HNN{2}(0)}}
\HNN{2}\big(\F_{u_{j,\bullet},\omega},0\big)+R_T
\end{align*}
where the remainder term $R_T$ is of order $O\big({\frac{\log(N)}{N}}\big)$.
Correspondingly, the local periodogram kernel is given by
\begin{align*}
\mean(\IT_{u_j,\omega}(\tau,\sigma)
&=\SSS{\frac{1}{\HNN{2}(0)}}
\lsum_{r=1}^{N} \hN{r}^2\,f_{u_{j,r},\omega}(\tau,\sigma)
+O\Big(\SSS{\frac{\log(N)}{N}}\Big).
\end{align*}
Since by the conditions of the theorem, the operator-valued function $\AAu{\omega}$ is twice continuously differentiable with respect to $u$, Theorem \ref{ProdBan} implies that the spectral density operator $\F_{u, \omega}$ is also twice continuously differentiable in $u\in(0,1)$. Hence, by a Taylor approximation of $\F_{u_{j,r},\omega}$ about $u_j$, we find for the mean of the periodogram tensor
\[
\mean(\IT_{u_j,\omega})
=\F_{u_j,\omega}+\frac{1}{2}\btT^2\,\kappat\,
\SSS{\frac{\partial^2 \F_{u,\omega}}{\partial u^2}}\Big|_{u=u_j}
+O\Big(\frac{\log(N)}{N}\Big),
\]
where we have used the definition in \eqref{Kerneltimedef} of the smoothing kernel $\Kt$ in time direction. As by the assumption on the taper function this kernel is symmetric about zero, the first order term in the Taylor approximation is zero.

This proves the first part of from Theorem \ref{exp}. For the covariance, we note that the product theorem for cumulants  (see section \ref{cumprops}) and the fact that the means are zero imply 
\begin{equation}
\label{eq:covariancep}
\begin{split}
\Cov & \big(\IT_{u_j,\omega_1},\IT_{u_j,\omega_2}\big)\\
&=\SSS{\frac{1}{4\pi^2\HNN{2}(0)^2}}\,\Big[
\cum\big(\DT_{u_j,\omega_1},\DT_{u_j,-\omega_1},
\DT_{u_j,-\omega_2},\DT_{u_j,\omega_2}\big)\\
&\qquad+S_{1423}\Big(
\cum\big(\DT_{u_j,\omega_1},\DT_{u_j,\omega_2}\big)\otimes
\cum\big(\DT_{u_j,-\omega_1},\DT_{u_j,-\omega_2}\big)\Big)\\
&\qquad+S_{1324}\Big(
\cum\big(\DT_{u_j,\omega_1},\DT_{u_j,-\omega_2}\big)\otimes
\cum\big(\DT_{u_j,-\omega_1},\DT_{u_j,\omega_2}\big)\Big)\Big],
\end{split}
\end{equation}
where $S_{ijkl}$ denotes the permutation operator on $\otimes_{i=1}^4 L^2_\cnum([0,1])$ that permutes the components of a tensor according to the permutation $(1,2,3,4)\mapsto(i,j,k,l)$, that is, $S_{ijkl}(x_1\otimes\cdots\otimes x_4)=x_i\otimes\cdots\otimes x_l$.

We first show that the first term of this expression is of lower order than the other two. By \eqref{eq:cumDNop}, the cumulant is equal to
\[
\begin{split}
\int_{{\Pi}^4}&\HN\big(\AAttT{t_{j,{\cdot}}}{\lam_1},\omega_1-\lam_1\big)
\otimes\HN\big(\AAttT{t_{j,{\cdot}}}{\lam_2},-\omega_1-\lam_2\big)
\otimes\HN\big(\AAttT{t_{j,{\cdot}}}{\lam_3},-\omega_2-\lam_3\big)\\
&\otimes\HN\big(\AAttT{t_{j,{\cdot}}}{\lam_4},\omega_2-\lam_4\big)\,
\eta(\lam_1+\ldots+\lam_4)\,\F^\veps_{\lam_1,\lam_2,\lam_3}\,d\lam_1\cdots d\lam_4
\end{split}
\]
and hence, by Lemma \ref{Dh1993A5}, is bounded in $L^2$-norm by
\begin{align*} 
C\int_{{\Pi}^3}&
L_N(\omega_1-\lam_1)\,L_N(-\omega_1-\lam_2)\,L_N(-\omega_2-\lam_3)\,
L_N(\lam_1+\lam_2+\lam_3+\omega_2)\,d\lam1\,d\lam_2\,d\lam_3\\
&\leq
C\,\log(N)^2\int_{{\Pi}^3}
L_N(\omega_2+\lam_3)^2\,d\lam_3\leq C\,N\,\log(N)^2.
\end{align*}

Next we consider the second term of \eqref{eq:covariancep}. A similar derivation as for the expectation of the periodogram tensor shows that the term equals
\begin{align*}
\int_{\Pi^2}&
\HN\big(\AAttT{t_{j,\bullet}}{\lam_1},\omega_1-\lam_1\big)
\otimes\HN\big(\AAttT{t_{j,\bullet}}{-\lam_1},\omega_2+\lam_1\big)
\otimes\HN\big(\AAttT{t_{j,\bullet}}{\lam_2},-\omega_1-\lam_2\big)\\
&\qquad\qquad\otimes
\HN\big(\AAttT{t_{j,\bullet}}{-\lam_2},\lam_2-\omega_2\big)\,
\F^\veps_{\lam_1}\otimes\F^\veps_{\lam_2}\,d\lam_1\,d\lam_2\\
&=\int_{\Pi^2}
\HN\big(\AAuu{u_{j,\bullet}}{\lam_1},\omega_1-\lam_1\big)
\otimes\HN\big(\AAuu{u_{j,\bullet}}{-\lam_1},\omega_2+\lam_1\big)
\otimes\HN\big(\AAuu{u_{j,\bullet}}{\lam_2},-\omega_1-\lam_2\big)\\
&\qquad\qquad\otimes\HN\big(\AAuu{u_{j,\bullet}}{-\lam_2},\lam_2-\omega_2\big)\,
\F^\veps_{\lam_1}\otimes\F^\veps_{\lam_2}\,d\lam_1\,d\lam_2+R_T\\
&=\HNN{2}\big(\AAuu{u_{j,\bullet}}{\omega_1}\otimes\AAuu{u_{j,\bullet}}{-\omega_1},\omega_1-\omega_2\big)
\otimes
\HNN{2}\big(\AAuu{u_{j,\bullet}}{\omega_1}\otimes\AAuu{u_{j,\bullet}}{-\omega_1},\omega_2-\omega_1\big)\\
&\qquad\qquad\times\F^\veps_{\omega_1}\otimes\F^\veps_{-\omega_1}\\
&=\HNN{2}\big(\F_{u_{j,\bullet},\omega_1},\omega_1-\omega_2\big)
\otimes
\HNN{2}\big(\F_{u_{j,\bullet},-\omega_1},\omega_2-\omega_1\big)
\end{align*}
Proceeding in an analogous matter for the third term of (\ref{eq:covariancep}), we obtain the stated result.

\end{proof}

\begin{proof}[Proof of Theorem \ref{expsmooth}]
By Theorem \ref{exp}, the expectation of the periodogram tensor can be written as
\begin{align*}
\mean(\IT_{u_j,\omega})
&=\HNN{2}\big(\F_{u_{j,\bullet},\omega},0\big)+R_T = \SSS{\frac{1}{\HNN{2}(0)}}
\lsum_{r=1}^{N} \hN{r}^2\,\F_{u_{j,r},\omega}+O\Big(\SSS{\frac{\log(N)}{N}}\Big).
\end{align*}
where the remainder term $R_T$ is of order $O\big({\frac{\log(N)}{N}}\big)$.
Because the operator-valued function $\AAu{\omega}$ is twice differentiable with respect to both $u$ and $\omega$, it follows from Theorem \ref{ProdBan} that the tensor $\F_{u,\omega}$ is twice continuously differentiable in both $u,$ and $\omega$. We can therefore apply a Taylor expansion of $\F_{u_{j,r},\omega}$ about to the point $x=(u_j,\omega_o)$ to obtain
\begin{align*} \label{eq:Taylor}
\F_{u_{j,r},\omega} & = \F_{u_j,\omega_o} +\Big(\frac{r-N/2}{T}\Big)\frac{\partial}{\partial u} \F_{u,\omega}\Big|_{(u,\omega)=x}+(\omega-\lam) \frac{\partial}{\partial \omega} \F_{u,\omega}\Big|_{(u,\omega)=x} \\&
=\frac{1}{2}\Big(\frac{r-N/2}{T}\Big)^2 \frac{\partial^2}{\partial u^2} \F_{u,\omega}\Big|_{(u,\omega)=x}+\frac{1}{2}(\omega-\lam)^2 \frac{\partial^2}{\partial \omega^2} \F_{u,\omega}\Big|_{(u,\omega)=x} 
\\ & +\Big(\frac{r-N/2}{T}\Big)(\omega-\lam) \Big( \frac{\partial^2}{\partial u \partial \omega} \F_{u,\omega}\Big|_{(u,\omega)=x}+ \frac{\partial^2}{\partial \omega\partial u } \F_{u,\omega}\Big|_{(u,\omega)=x}
\Big)+ R_{T,p}, \tageq
\end{align*}
where the remainder can generally be bounded by 
\begin{align*} \label{Rwind} R_{T,p} & =\sum_{i_1,i_2 \in \nnum:i_1+i_2 > p} \frac{\btT^{i_1} |\omega-\alpha|^{i_2}}{(i_1)!(i_2)!} \sup_{u,\omega}\Big\| \frac{ {\partial}^{i_1+i_2}}{\partial u^{i^1}\partial \omega^{i_2} }f_{u,\omega}\Big\|_2 
\\ & =\lsum_{i_1,i_2 \in \nnum:i_1+i_2=p} o(\btT^{i_1} |\omega-\alpha|^{i_2} ) \qquad p \ge 2. \tageq 
\end{align*}
In order to derive the mean of the estimator, we set $v \btT =\frac{r-N/2}{T}$ and recall that the taper function relates to a smoothing kernel $\Kt$  in time direction by
\begin{equation}
\Kt(v)=\SSS{\frac{1}{H_2}}\,h^2\Big(\SSS{\frac{v+1}{2}}\Big)
\end{equation}
for $v\in[-\tfrac{1}{2},\tfrac{1}{2}]$ with bandwidth $\btT=N/T$. It then follows from \eqref{eq:Taylor} that a Taylor expansion about to the point $x=(u_j,\omega_o)$ yields
\begin{align*}
& \mean (\hatFT_{u_j,\omega_o})=\F_{u_j,\omega_o} + \sum_{i=1}^{2} \frac{1}{i! }\btT^i \int v^i \Kt (v) dv \, \int_{\Pi}  \Kf(\alpha) d\alpha \frac{\partial^i}{\partial u^i} \F_{u,\omega}\Big|_{(u,\omega)=x}  \\&+ 
\sum_{i=1}^{2} \frac{1}{i!} \bfT^{i} \int_{\Pi}\alpha^i \Kf(\alpha) d\alpha  \int \Kt(v)dv  \frac{\partial^i}{\partial \omega^i}\F_{u\,\omega}\Big|_{(u,\omega)=x}\\&+
\frac{1}{2}\btT \bfT \int v \Kt (v) dv \, \int_{\Pi} \alpha \Kf(\alpha) d\alpha  \Big(\frac{\partial^2}{\partial u \partial{\omega}} \F_{u,\omega}\Big|_{(u,\omega)=x}+ \frac{\partial^2}{\partial {\omega} \partial{u}} \F_{u,\omega}\Big|_{(u,\omega)=x}\Big)
+ R_{T,p}.
\end{align*}
Because the smoothing kernels are symmetric around $0$, we obtain 
\begin{align*}
\mean (\hatFT_{u_j,\omega_o})&=\F_{u_j,\omega_o} +\frac{1}{2} \btT^2 \kappat{t}\frac{\partial^2}{\partial u^2} \F_{u,\omega}\Big|_{(u,\omega)=x} + 
\frac{1}{2} \bfT^{2} \kappaf{2,1}  \frac{\partial^2}{\partial \omega^2}\F_{u\,\omega}\Big|_{(u,\omega)=x}\\& +o(\btT^2) + o(\bfT^2)+O\big(\SSS{\frac{\log(\btT\,T)}{\btT\,T}}\big),\tageq
\end{align*}
where the error terms follow from \eqref{Rwind} and Theorem \ref{exp}, respectively. This establishes Result $i)$ of Theorem \ref{expsmooth}. \\

For the proof of the covariance structure, we note that
\[
\cov\big(\hatFT_{u,\omega_1},\hatFT_{u,\omega_2}\big)
=\int_{\Pi^2}\KfT(\omega_1-\lambda_1)\,\KfT(\omega_2-\lambda_2)\,\cov\big( \IT_{u,\lambda_1},\IT_{u,\lambda_2}\big)\, d\lambda_1\,d\lambda_2,
\]
where by Theorem \ref{exp}
\begin{align*}
\Cov&\big(\IT_{u,\lam_1},\IT_{u,\lam_2}\big)\\
&=\SSS{\frac{1}{4\pi^2\,\HNN{4}(0)^2}}\,\Big[S_{1423}
\Big(\HNN{2}\big(\F_{{u_{\bullet}},\lam_1},\lam_1-\lam_2\big)
\otimes\HNN{2}\big(\F_{{u_{\bullet}},-\lam_1},\lam_2-\lam_1\big)\Big)\\
&\qquad
+S_{1324}\Big(\HNN{2}\big(\F_{{u_{\bullet}},\lam_1},\lam_1+\lam_2\big)
\otimes
\HNN{2}\big(\F_{{u_{\bullet}},-\lam_1},-\lam_1-\lam_2\big)\Big)\Big]
+O\Big(\SSS{\frac{\log(N)}{N}}\Big).
\end{align*}
We treat the two terms of the covariance tensor separately. Starting with the first term, we have
\begin{align*}
&\biggnorml\int_{\Pi^2}
\KfT(\omega_1-\lambda_1)\Big[\KfT(\omega_2-\lambda_2)\,
\big[\HNN{2}\big(F_{{u_{\bullet}},\lam_1},\lam_1-\lam_2\big)\,
\otimes\HNN{2}\big(\F_{{u_{\bullet}},-\lam_1},\lam_2-\lam_1\big)\big]\\
&\qquad\qquad\qquad\qquad-\KfT(\omega_2-\lambda_1)\,
\big|\HNN{2}(\lam_1-\lam_2)\big|^2\,\big(\F_{u,\lam_1}
\otimes\F_{u,-\lam_1}\big)\Big]\,d\lambda_1\,d\lambda_2\biggnormr_2\\
&\leq\biggnorml\int_{\Pi^2}
\KfT(\omega_1-\lambda_1)\,\KfT(\omega_2-\lambda_2)\,
\Big[\HNN{2}\big(F_{{u_{\bullet}},\lam_1},\lam_1-\lam_2\big)\,
\otimes\HNN{2}\big(\F_{{u_{\bullet}},-\lam_1},\lam_2-\lam_1\big)\\
&\qquad\qquad\qquad\qquad
-\big|\HNN{2}(\lam_1-\lam_2)\big|^2\,\big(\F_{u,\lam_1}
\otimes\F_{u,-\lam_1}\big)\Big]\,d\lambda_1\,d\lambda_2\biggnormr_2\\ 
&\qquad+\biggnorml\int_{\Pi^2}
\KfT(\omega_1-\lambda_1)
\big[\KfT(\omega_2-\lambda_2)-\KfT(\omega_2-\lambda_1)\big]\\
&\qquad\qquad\qquad\qquad
\times\big|\HNN{2}(\lam_1-\lam_2)\big|^2\,\big(\F_{u,\lam_1}
\otimes\F_{u,-\lam_1}\big)\,d\lambda_1\,d\lambda_2\biggnormr_2.
\end{align*}
Since $\F_{u,\lam}$ is uniformly Lipschitz continuous in $u$, we have
$\snorm{\F_{{u_{r}},\lam}-\F_{u,\lam}}_2\leq C\,\tfrac{N}{T}$ and hence the first term on the right hand side is bounded by
\[
C\,\int_{\Pi^2}
\bfT^2\,L_{\frac{1}{\bfT}}(\omega_1-\lambda_1)^2\,
L_{\frac{1}{\bfT}}(\omega_2-\lambda_2)^2\,
L_N(\lam_1-\lam_2)^2\,\SSS{\frac{N}{T}}\,d\lambda_1\,d\lambda_2
\leq C\,\SSS{\frac{N^2}{\bfT\,T}}.
\]
For the second term, we exploit uniform Lipschitz continuity of the kernel function $\Kf$ to get the upper bound
\[
C\,\int_{\Pi^2}
\KfT(\omega_1-\lambda_1)^2\,\bfT^{-2}\,|\lam_1-\lam_2|\,
L_N(\lam_1-\lam_2)^2\,d\lambda_1\,d\lambda_2
\leq
C\,\SSS{\frac{\log(N)}{\bfT^2}}.
\]
In total we obtain
\[
\bigsnorm{\cov\big(\hatFT_{u,\omega_1},\hatFT_{u,\omega_2}\big)}_2
=O\Big(\SSS{\frac{\log(N)}{\bfT^2\,N^2}}\Big)
+O\Big(\SSS{\frac{1}{\bfT\,T}}\Big)
+O\Big(\SSS{\frac{\log(N)}{N}}\Big)
\]
uniformly in $\omega_1,\omega2\in[-\pi,\pi]$ and $u\in[0,1]$.
\end{proof}

\begin{proof}[Proof of Proposition \ref{sharp}]
A change of variables shows that \eqref{eq:covsmooth} can be written as   
\begin{align*}
\btT\,&\bfT\,T\,\Cov\big(\innerprod{\hatFT_{u,\omega_1}}{ g_1 \otimes g_2}_{H_\cnum\otimes H_\cnum},\innerprod{\hatFT_{u,\omega_2}}{ g_3 \otimes g_4}_{H_\cnum\otimes H_\cnum}\big) \\
& = 2\pi\,\bfT\,\norm{\Kt}^2_2 \int_{\Pi}\KfT(\omega_1-\omega_2-\lambda)\,\KfT(\lambda)\, \innerprod{\F_{u,\omega_2-\lambda}\,g_3}{g_1}\,\innerprod{\F_{u,-\omega_2-\lambda}\,g_4}{g_2}\,d\lambda\\
&\qquad + 2\pi\,\bfT\,\norm{\Kt}^2_2  \int_{\Pi}\KfT(\omega_1+\omega_2-\lambda)\,\KfT(\lambda) \, \innerprod{\F_{u,-\omega_2+\lambda}\,g_4}{g_1}\,\innerprod{\F_{u, \omega_2-\lambda}\,g_3}{g_2}\, d\lambda\\
&\qquad + O\big(\bfT\log(\btT\,T)\big)+O(\btT^2\,\bfT)+O\big((\btT\,\bfT\,T)^{-1}\big).
\tageq \label{covsmoothia}
\end{align*}
The error terms will tend to zero under assumption (A8). Since the product of the two kernels in the first integral is exactly zero whenever $|\lambda-(\omega_1-\omega_2)|>\bfT$ or $\lambda>\bfT$, the first integral vanishes for large enough $T$ unless $\omega_1=\omega_2$. For $\omega_1=\omega_2$, the integral in the first term becomes 
\begin{align*}
\int_{\Pi}&\KfT(-\lambda)\,\KfT(\lambda)\,\innerprod{\F_{u,\omega_1+\lambda}\,g_3}{g_1}\,\innerprod{\F_{u, -\omega_1-\lambda}\,g_4}{g_2}\,d\lambda\\
\intertext{and further by symmetry of the kernel}
&=\int_{\Pi}\KfT(\lambda)^2\,\innerprod{\F_{u,\omega_1+\lambda}\,g_3}{g_1}\,
\innerprod{\F_{u,-\omega_1-\lambda}}{g_4}{g_2}\,d\lambda.
\end{align*}
We note that $\norm{\Kf}^{-2}_2\,\KfT(\lambda)^2$ satisfies the properties of an approximate identity \citep[e.g.,][]{Edwards1967}. Hence application of Lemma F.15 of \citet{PanarTav2013a},
which covers approximate identities in a functional setting, yields that the integral converges to
\begin{align*}
\norm{\Kf}^2_2\,\innerprod{\F_{u,\omega_1}\,g_3}{g_1} \, \innerprod{\F_{u,-\omega_1}\,g_4}{g_2},
\end{align*}
with respect to $\norm{\cdot}_2$. Since the integral in the second term in \eqref{covsmoothia} vanishes unless $\omega_1=-\omega_2$, we can apply a similar argument, which proves the proposition.
\end{proof}
%%%%%%%%%%%%%%%%%%%

\begin{proof}[Proof of Theorem \ref{IMSE}]
We decompose the difference in terms of its variance and its squared bias. That is,
\begin{align*}
\int_{\Pi} & \mean \snorm{\hatFT_{u, \omega} - \mean\hatFT_{u,\omega} +\mean\hatFT_{u,\omega}-\F_{u,\omega}}^2_2\,d\omega\\
&= \int_{\Pi} \mean \snorm{\hatFT_{u, \omega} - \mean\hatFT_{u,\omega} }^2_2\, d\omega
+\int_{\Pi} \mean\snorm{\mean\hatFT_{u,\omega}-\F_{u,\omega} }^2_2\, d\omega \tageq \label{VBd}.
\end{align*}
The cross term cancels which is easily seen by noting that $\mean\big(\hatFT_{u,\omega}-\mean(\hatFT_{u,\omega})\big)=O_{H_\cnum}$ and hence
\[
\mean\Big(\biginnerprod{\hatFT_{u,\omega}-\mean(\hatFT_{u,\omega})}%
{\mean(\hatFT_{u,\omega})-\F_{u,\omega}}_{H_\cnum\otimes H_\cnum}\Big)=0
\]
for all $u\in[0,1]$ and $\omega\in[-\pi,\pi]$. Consider the first term of \eqref{VBd}. Self-adjointness of $\hatFT_{u,\omega}$ and $\E\|\XT{t}\|^4_2 <\infty$ imply that
\begin{align*}
\trace\big(\Cov(\hatFT_{u,\omega},\hatFT_{u,\omega})\big)&= 
\sum_{n,m=1}^{\infty}\biginnerprod{\E[\big(\hatFT_{u,\omega}-\E(\hatFT_{u,\omega})\big){\otimes}\big(\hat{\F}^{(T)}_{u,\omega}-\E(\hatFT_{u,\omega})\big)\psi_{nm}}{\psi_{nm}}\\& =
\E\sum_{n,m=1}^{\infty}\big|\innerprod{\big(\hatFT_{u,\omega}-\E(\hatFT_{u,\omega})\big)}{\psi_{nm}}\big|^2_2 = \E\snorm{\hatFT_{u,\omega}-\E\hatFT_{u,\omega}}^2_2 < \infty
\end{align*}
for some orthonormal basis $\{\psi_{nm}\}$ of $L^2_\cnum([0,1]^2)$. By Fubini's theorem and Corollary \ref{sharp2}, we thus find
\begin{align*}
\int_{\Pi}\mean\snorm{\hatFT_{u,\omega}-\mean\hatFT_{u,\omega} }^2_2\, d\omega
=\int_{\Pi}\int_{[0,1]^2} \Var(\hatfT_{u,\omega} (\tau, \sigma) )\,d\tau \,d\sigma\,d\omega =O\Big(\SSS{\frac{1}{b_{\text{t}} b_{\text{f}} T}}\Big).
\end{align*}
Theorem \ref{expsmooth} then yields for the second term of \eqref{VBd}
\begin{align*}
\int_{\Pi}\snorm{\F_{u,\omega}-\mean\hatFT_{u,\omega} }^2_2\,d\omega
&=\int_{\Pi}\int_{[0,1]^2} \big|f_{u,\omega}(\tau,\sigma)-\mean\hatfT_{u,\omega}(\tau, \sigma)\big|^2\,d\tau\,d\sigma\,d\omega\\ &=O\Big(b_{\text{t}}^2+b_{\text{f}}^2
+\SSS{\frac{\log b_{\text{t}} T}{b_{\text{t}} T}}\Big)^2. 
\end{align*}
\end{proof}

%%%%%%%%%%%%%%%%%%%%%%%%
\begin{proof}[Proof of Proposition \ref{convCLT}]
We have
\begin{equation}
\begin{split}
\cum&\big(\ET_{u,\omega_1}(\psi_{m_1n_1}),\ldots,
\ET_{u,\omega_k}(\psi_{m_kn_k})\big)
=\SSS{\frac{(\btT\bfT T)^{k/2}}{\HNN{2}(0)^k}}\int_{{\Pi}^k}
\lprod_{j=1}^{k}\KfT(\omega_j-\lambda_j)\\
&\times
\cum\Big(\DT_{u,\omega_1}(\psi_{m_1})\,\DT_{u,-\omega_1}(\psi_{n_1}),\ldots,
\DT_{u,\omega_k}(\psi_{m_k})\,\DT_{u,-\omega_k}(\psi_{n_k})\Big)\,
d\lam_1\cdots d\lam_k,
\end{split}
\label{eq:hocum1}
\end{equation}
where $\DT_{u,\omega}(\phi)=\biginnerprod{\DT_{u,\omega}}{\phi}$ for $\phi\in L^2_\cnum([0,1])$. Application of the product theorem for cumulants \citep[e.g.][Theorem 2.3.2]{Brillinger} yields
for the cumulant
\begin{equation}
\begin{split}
\cum\big(\DT_{u,\omega_1}(\psi_{m_1})\,\DT_{u,-\omega_1}(\psi_{n_1}),&\ldots,
\DT_{u,\omega_k}(\psi_{m_k})\,\DT_{u,-\omega_k}(\psi_{n_k})\Big)\\
&=\sum_{i.p.}
\lprod_{l=1}^{M}\cum\big(\DT_{u,\gamma_{p}}(\psi_{r_{p}}),p\in P_l\big),
\end{split}
\label{cumDDD}
\end{equation}
where the summation extends over all indecomposable partitions
$P=\{P_1,..., P_M\}$ of the table
\[
\begin{matrix}
  (1,0) & (1,1)\\
  \vdots  & \vdots \\
  (k,0) & (k,1)
 \end{matrix},
\]
and, for $p=(i,j)$, $\gamma_{p}=\gamma_{ij}=(-1)^j\,\lam_i$ as well as $r_p=r_{ij}=m_i^{1-j}n_i^{j}$ for $i=1,\ldots,k$ and $j\in\{0,1\}$. For the next steps, we further denote the elements of $P_l$ with $|P_l|=d_l$ by $p_{l1},\ldots,p_{ld_l}$. Then, by \eqref{eq:cumDNop}, we obtain further for the above cumulant
\begin{equation}
\label{cumDDDnext}
\begin{split}
\sum_{i.p.}\lprod_{l=1}^M
\int_{\Pi^{d_l-1}}\int_{[0,1]^{d_l}}
&\Big[\lotimes_{s=1}^{d_l}
\HN\big(\AAttT{t_{u,\bullet}}{\alpha_{s}},\gamma_{p_{ls}}-\alpha_{s}\big)
\F^\veps_{\alpha_{1},\ldots,\alpha_{d_l-1}}\Big](\tau_{1},\ldots,\tau_{d_l})\\
&\times\lprod_{s=1}^{d_l}\overline{\psi_{r_{p_{ls}}}(\tau_s)}\,
d\tau_1\cdots d\tau_{d_l}\,\eta(\alpha_1+\ldots+\alpha_{dl})\,
d\alpha_1\cdots d\alpha_{d_l}.
\end{split}
\end{equation}
Noting that the inner integral is a inner product in the tensor product space, we get
\begin{align*}
\Big|&\Biginnerprod{\lotimes_{s=1}^{d_l}
\HN\big(\AAttT{t_{u,\bullet}}{\alpha_{s}},\gamma_{p_{ls}}-\alpha_{s}\big)
\F^\veps_{\alpha_{1},\ldots,\alpha_{d_l-1}}}{
\otimes_{s=1}^{d_l}\psi_{r_{p_{ls}}}}\Big|\\
&\leq\Bigsnorm{\lotimes_{s=1}^{d_l}\HN\big(\AAttT{t_{u,\bullet}}{\alpha_{s}},
\gamma_{p_{ls}}-\alpha_{s}\big)\,
\F^\veps_{\alpha_{1},\ldots,\alpha_{d_l-1}}}_2\,
\Bignorm{\lotimes_{s=1}^{d_l}\psi_{r_{p_{ls}}}}_2\\
&\leq\Bigsnorm{\lotimes_{s=1}^{d_l}\HN\big(\AAttT{t_{u,\bullet}}{\alpha_{s}},
\gamma_{p_{ls}}-\alpha_{s}\big)}_\infty\,
\bigsnorm{\F^\veps_{\alpha_{1},\ldots,\alpha_{d_l-1}}}_2\,
\lprod_{s=1}^{d_l}\bignorm{\psi_{r_{p_{ls}}}}_2.
\end{align*}
Noting that by Lemma \ref{Dh1993A5}
\[
\Bigsnorm{\HN\big(\AAttT{t_{u,\cdot}}{\alpha_{s}},
\gamma_{p_  {ls}}-\alpha_{s}\big)}_\infty
\leq K\,L_N(\gamma_{p_{ls}}-\alpha_{s})
\]
for some constant $K$, we get together with $ \snorm{\F^\veps_{\alpha_{1},\ldots,\alpha_{d_l-1}}}_2\leq K'$
as an upper bound for \eqref{cumDDDnext}
\begin{align*}
K\sum_{i.p.}\lprod_{l=1}^M&
\int_{\Pi^{d_l}}
\lprod_{s=1}^{d_l}L_N(\gamma_{p_{ls}}-\alpha_{s})\,
\eta(\alpha_1+\ldots+\alpha_{dl})\,d\alpha_1\cdots d\alpha_{d_l}\\
\intertext{and further by repeated use of Lemma \ref{Dh1993A4}(v)}
&\leq K\sum_{i.p.}\lprod_{l=1}^M
L_N(\bar\gamma_{l})\,\log(N)^{d_l-1}
\leq K\,\log(N)^{2k-M}\sum_{i.p.}\lprod_{l=1}^{M}L_N(\bar\gamma_{l}).
\end{align*}
Substituting the upper bound for the cumulant in \eqref{eq:hocum1} and noting that $\frac{1}{N}\HNN{2}(0)\to\norm{h}^2_2$ as $N\to\infty$,% and $\KfT(\omega)\leq C\,L_{M_T},
we find
\begin{equation}
\begin{split}
\big|\cum&(\ET_{u,\omega_1}(\psi_{m_1n_1}),\ldots,
\ET_{u,\omega_k}(\psi_{m_kn_k})\big)\big|\\
&\leq\SSS{\frac{C\,\bfT^{k/2}\,\log(N)^{2k-M}}{N^{k/2}}}\sum_{i.p.}
\int_{{\Pi}^k}\lprod_{j=1}^{k}\KfT(\omega_j-\lam_j)\,\lprod_{l=1}^{M}L_N(\bar\gamma_{l})\,d\lam_1\cdots d\lam_k.
\end{split}
\end{equation}
It is sufficient to show that for each indecomposable partition $\{P_1,\ldots,P_M\}$ the corresponding term in the above sum tends to zero. First, suppose that $M=k$. Bounding the factors $\KfT(\omega_i-\lam_i)$ by $\norm{\Kf}_\infty/\bfT$ for $i=2,\ldots,k$ and integrating over $\lam_3,\ldots,\lam_{k}$, we obtain by Lemma \ref{EichlerL}(i) as an upper bound
\begin{align*}
&\SSS{\frac{C\,\log(N)^{2k-M}}{\bfT^{k/2-1}N^{k/2}}}\,
\int_{\Pi^2}\KfT(\omega_{1}-\lam_{1})\,
L_N(\lam_{1}\pm\lam_{2})^2\,d\lam_1\,d\lam_2\\
&\qquad\leq
\SSS{\frac{C\,\log(N)^{2k-2}}{\bfT^{k/2-2}N^{k/2}}}\,\int_{\Pi^2}
L_{\bfT^{-1}}(\omega_{1}-\lam_{1})^2\,
L_N(\lam_{1}\pm\lam_{2})^2\,d\lam_2\,d\lam_1\\
&\qquad\leq
\SSS{\frac{C\,\log(N)^{2k-2}}{\bfT^{k/2-2}N^{k/2}}}\,\int_{\Pi}
N\,L_{\bfT^{-1}}(\omega_{1}-\lam_{1})^2\,d\lam_1
\leq
\SSS{\frac{C\,\log(N)^{2k-2}}{(\bfT N)^{k/2-1}}},
\end{align*}
where we have $\KfT(\omega)\leq \bfT\,L_{\bfT^{-1}}(\omega)$ and repeatedly Lemma \ref{Dh1993A4}(iv). Next, if $M<k$ we select variables $\lam_{i_1},\ldots,\lam_{i_{k-2}}$ according to Lemma \ref{EichlerL}(ii) and bound all corresponding factors $\KfT(\omega_{i_j}-\lam_{i_j})$ for $j=1,\ldots,k-2$ by $\norm{\Kf}_\infty/\bfT$. Then integration over the $k-2$ selected variables yields the upper bound
\begin{align*}
&\SSS{\frac{C\,\log(N)^{3k-M-2}}{\bfT^{k/2-2}\,N^{k/2-1}}}\,
\int_{\Pi^2}\KfT(\omega_{i_{k-1}}-\lam_{i_{k-1}})\,\KfT(\omega_{i_k}-\lam_{i_k})
\,d\lam_{i_{k-1}}\,d\lam_{i_k}\\
&\qquad\leq\SSS{\frac{C\,\bfT\,\log(N)^{3k-M-2}}{\bfT^{k/2-1}\,N^{k/2-1}}},
\end{align*}
since $\norm{\KfT}_1=1$. Since $\bfT\,N=\bfT\,\btT\,T\to\infty$ and $k/2-1>0$,
the upper bounds tend to zero as $T\to\infty$, which completes the proof.
\end{proof}
%%%%%%%%%%%%%%%%%%%%%%%%%%%%%%%%%%%%%%%%%%%%%%%%%%%%%%%%%%%%%%%%%%%%%%%%%%%%%%%%%%%%%%%%%%%%%%%

\setcounter{section}{-1}
\setcounter{equation}{0}
\def\theequation{B\arabic{section}.\arabic{equation}}
\def\thesection{B}
\section{}In Appendix B, we provide additional technical material necessary to complete the proofs of the main part of this paper. Section \ref{properties} contains background material on operator theory. Section \ref{cumprops} provide background on the higher order structure of random functions as well as an important result on the existence of a stochastic integral, necessary to define functional Cram{\'e}r representations. Section \ref{taperandl} introduces the necessary background on tapering on function spaces. 
\setcounter{section}{0}
\setcounter{equation}{0}
\def\theequation{B\arabic{section}.\arabic{equation}}
\def\thesection{B\arabic{section}}

% !TEX spellcheck = en_US
\section{Some operator theory} \label{properties}

We start with a general characterization of a tensor product of a finite sequence of vector spaces, which in particular holds for sequences of Hilbert spaces. 
\begin{definition}[Algebraic tensor product of Banach spaces]\label{Algtensor}
Given a finite sequence of vector spaces $V_1,\ldots, V_k$ over an arbitrary field $\mathbb{F}$, we define the {\em algebraic tensor product}
$V_1 \otimes \cdots \otimes V_k$ as a vector space with a multi--linear map $V_1 \times \cdots \times V_k \to W$ given by $(f_1,\ldots, f_k) \to (f_1 \otimes \cdots \otimes f_k)$ such that, for every linear map $\mathcal{T}:  V_1 \times \cdots \times V_k \to W$, there is unique k-linear map $\tilde{\mathcal{T}}: V_1 \times \cdots \times V_k \to W$ that satisfies
\begin{align*}
\mathcal{T}(f_1,\ldots, f_k) =\tilde{\mathcal{T}}(f_1 \otimes \cdots \otimes f_k).
\end{align*}
\end{definition}
Here, uniqueness is meant up to isomorphisms. The tensor product can be viewed as a linearized version of the product space $V_1 \times \cdots \times V_k$ satisfying equivalence relations of the form $a(v_1,v_2) \sim (a v_1,v_2) \sim (v_1,av_2)$ where $a \in \mathbb{K}$ and $v_1 \in V_1 , v_2 \in V_2$, which induce a quotient space. These relationships uniquely identify the points in the product space $V_1 \times \ldots \times V_k$ that yield multi--linear relationships. In a way, the tensor product $\bigotimes_{j=1}^{k} V_j$ can thus be viewed as the `freest' way to put the respective different vector spaces $V_1,\ldots,V_k$ together. We mention in particular that the algebraic tensor product satisfies the associative law, i.e., $(V_1 \otimes V_2 ) \otimes V_3 = V_1 \otimes (V_2  \otimes V_3)$, and hence it will often be sufficient to restrict attention to $k=2$. \\

The algebraic tensor product of two Hilbert spaces $H_1$ and $H_2$ is itself not a Hilbert space. We can however construct a Hilbert space by considering the inner product acting on $H_1 \otimes H_2$ given by
\begin{align*}
\langle x \otimes y, x' \otimes y'  \rangle_{H_1 \otimes H_2} =\langle x, x' \rangle \langle y,y' \rangle, \quad x,x' \in H_1,\, y,y' \in H_2
\end{align*}
and then taking the completion with respect to the induced norm $\|\cdot\|_{H_1 \otimes H_2}$.
The completed space, denoted by $H_1 \widehat{\otimes} H_2$, is identifiable with the Hilbert-Schmidt operators and is referred to as the {\em Hilbert Schmidt tensor product}. Throughout this work, when reference is made to the tensor product space of Hilbert spaces we mean the latter space. When no confusion can arise, we shall moreover abuse notation slightly and denote $H_1 \widehat{\otimes} H_2$ simply by $H_1 \otimes H_2$.

\begin{definition} \label{tensorproductop}
The tensor product $(A \otimes B) \in S_P(H)\otimes S_p(H)\cong S_{p}(S_{p}(H))$  between two operators $A,B \in S_{p}(H)$ is defined as 
\begin{align}\label{eq:tensor1abc}
(A\, {\otimes} \,B)(x \otimes y) =  Ax \otimes By,
\end{align} for $x,y \in H$. It follows straightforwardly from the property
\begin{align}
(x \otimes y)z = \langle z, y\rangle x, \quad z \in H,
\end{align} that for any $C \in S_p(H)$, we have the identity
\begin{align}\label{eq:tensorabc}
(A \otimes B)C =  ACB^{\dagger},
\end{align}
where $ B^{\dagger}$ denote the adjoint operator of $B$. 
\end{definition}

\begin{proposition}[H{\"o}lder's Inequality for operators] \label{holderoperator}
\textit{Let $H$ be a separable Hilbert space and $A,B \in S_{\infty}(H)$. Then the composite operator $AB$ also defines a bounded linear operator over $H$, i.e.,  $AB \in S_{\infty}(H)$. This operation satisfies the associative law. Moreover, let $1 \le p,q,r \le \infty$, such that $\frac{1}{r}=\frac{1}{q}+\frac{1}{p}$. If $A \in S_q(H)$ and $B \in S_p(H)$ then $AB \in S_r(H)$ and 
\begin{align*}
\snorm{AB}_r \le \snorm{A}_q \snorm{B}_p.
\end{align*}}
\end{proposition}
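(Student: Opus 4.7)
My plan is to reduce the result to two classical facts: Horn's log-majorization inequality for singular values of products of compact operators, and the ordinary H\"older inequality for sequences in $\ell^p$. The boundedness and associativity claims are essentially book-keeping. Specifically, for $A,B \in S_\infty(H)$, the estimate
\[
\norm{(AB)x} \leq \snorm{A}_\infty\,\norm{Bx} \leq \snorm{A}_\infty\,\snorm{B}_\infty\,\norm{x}
\]
shows that $AB \in S_\infty(H)$ with $\snorm{AB}_\infty \leq \snorm{A}_\infty \snorm{B}_\infty$, and associativity of the composition of functions gives $A(BC)=(AB)C$ immediately. This handles the case $p=q=r=\infty$.

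Next I would dispose of the cases where exactly one of $p,q$ equals $\infty$. If $p=\infty$ so that $r=q$, then using the singular value characterization $s_n(T)=\inf\{\snorm{T-K}_\infty : \rank(K)<n\}$, one checks directly that $s_n(AB) \leq s_n(A)\,\snorm{B}_\infty$ (and symmetrically $s_n(AB)\leq \snorm{A}_\infty\,s_n(B)$ when $q=\infty$). Taking $\ell^r$-norms of both sides yields $\snorm{AB}_r \leq \snorm{A}_q\,\snorm{B}_p$ in these endpoint cases.

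The main substantive step is the general finite case $1\leq p,q<\infty$. Here I would invoke the Horn inequality
\[
\lprod_{k=1}^n s_k(AB) \leq \lprod_{k=1}^n s_k(A)\,s_k(B), \qquad n\geq 1,
\]
which is standard for compact operators on a separable Hilbert space. Combining this with Weyl's theorem (weakly log-majorization implies weak majorization under any function $\phi$ with $\phi\circ\exp$ convex and nondecreasing), applied to $\phi(x)=x^r$, yields
\[
\lsum_{k=1}^\infty s_k(AB)^r \leq \lsum_{k=1}^\infty s_k(A)^r\,s_k(B)^r.
\]
Now I would apply the classical H\"older inequality for nonnegative sequences with conjugate exponents $q/r$ and $p/r$ (which are conjugate precisely because $1/r=1/q+1/p$) to obtain
\[
\lsum_{k=1}^\infty s_k(A)^r\,s_k(B)^r \leq \Big(\lsum_k s_k(A)^q\Big)^{r/q} \Big(\lsum_k s_k(B)^p\Big)^{r/p}= \snorm{A}_q^r\,\snorm{B}_p^r.
\]
Taking $r$-th roots gives the claim, and in particular shows $AB \in S_r(H)$.

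The only step requiring genuine input from operator theory is Horn's log-majorization inequality; everything else is packaging. If one preferred a more self-contained route, one could instead prove the sharper Ky Fan inequality $\sum_{k=1}^n s_k(AB) \leq \sum_{k=1}^n s_k(A)\,s_k(B)$ via a variational characterization of partial sums of singular values, and then deduce the $S_r$ bound by interpolation; but the Horn plus H\"older argument outlined above is the most economical.
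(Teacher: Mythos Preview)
Your argument is correct and follows the standard route to H\"older's inequality for Schatten classes: Horn's log-majorization for singular values of products, followed by Weyl's majorization lemma to convert this into the additive inequality $\sum_k s_k(AB)^r \le \sum_k s_k(A)^r s_k(B)^r$, and finally the scalar H\"older inequality for sequences with conjugate exponents $q/r$ and $p/r$. The endpoint cases are handled cleanly via the approximation characterization of singular values.

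The paper itself does not prove this proposition; it is stated in the appendix as a background fact from operator theory and used repeatedly throughout the main arguments without justification. So there is no ``paper's proof'' to compare against---your write-up would in fact supply what the paper omits. If anything, one could note that the result is classical (it appears, for instance, in Simon's \emph{Trace Ideals and Their Applications} or Gohberg--Krein), which is presumably why the authors felt no need to include a proof.
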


\begin{proposition} \label{propertieskernelsop}
\textit{Let $H = L^2_{\cnum}(T,\mu)$ be a separable Hilbert space, where $(T,\mu)$ is a measure space. The functions $a,b,c \in L^2_\cnum(T\times T,\mu\otimes\mu)$ induce operators $A,B,C$ on $H$ such that for all $x \in H$ }
\begin{align}\label{eq:kernel} 
Ax(\tau) =  \int_\mathcal{D} a(\tau, \sigma) x(\sigma) d\mu(\sigma),
\end{align}
and the composition operator $AB$ has kernel
\begin{align}\label{eq:kernelsingle} 
[AB](\tau, \sigma) =  \int_\mathcal{D} a(\tau, \mu_1) b(\mu_1, \sigma) d\mu_1,
\end{align}
for all $\tau,\sigma \in T$ $\mu$-almost everywhere. The tensor product operator $(A \otimes B) \in  S_{2}(S_{2}(H))$ in composition with $C$ has kernel
\begin{align}\label{eq:kerneldouble}
[(A \otimes B)C](\tau, \sigma) =  \int_\mathcal{D} \int_\mathcal{D} a(\tau, \mu_1) \overline{b(\sigma, \mu_2)} c(\mu_1, \mu_2) d\mu_1 d\mu_2.
\end{align}
Because $(A \otimes B)C$ has a well defined kernel in $L^2_\cnum(T\times T,\mu\otimes\mu)$, it can moreover be viewed as an operator on $H$. Using identity \eqref{eq:tensorabc}, this is the operator $ACB^{\dagger}$, where $B^{\dagger}$ has kernel $b^{\dagger}(\mu_2, \sigma) = \overline{b(\sigma,\mu_2)}$. 
\end{proposition}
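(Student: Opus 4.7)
My plan is to verify the two kernel identities by direct computation on $L^2_\cnum(T\times T,\mu\otimes\mu)$, exploiting the isomorphism between Hilbert-Schmidt operators on $H$ and their kernel representations, and then reduce the tensor product formula to the composition formula via the identity \eqref{eq:tensorabc}.

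First, for the composition formula \eqref{eq:kernelsingle}, I would apply the operator $B$ followed by $A$ to an arbitrary $x \in H$, using the kernel representation \eqref{eq:kernel}. Writing $(AB)x(\tau)=A(Bx)(\tau)=\int a(\tau,\mu_1)(Bx)(\mu_1)\,d\mu(\mu_1)=\int a(\tau,\mu_1)\int b(\mu_1,\sigma)\,x(\sigma)\,d\mu(\sigma)\,d\mu(\mu_1)$, an application of Fubini's theorem (justified by $a,b\in L^2(T\times T)$ and $x\in L^2(T)$ via the Cauchy-Schwarz inequality, which gives joint integrability) allows interchange of the order of integration and yields the kernel $\int a(\tau,\mu_1)b(\mu_1,\sigma)\,d\mu(\mu_1)$.

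Next, for \eqref{eq:kerneldouble}, the key observation is the identity $(A\otimes B)C = ACB^{\dagger}$ from \eqref{eq:tensorabc}, which reduces the problem to a two-fold application of the composition formula just derived. The adjoint $B^{\dagger}$ has kernel $b^{\dagger}(\mu_2,\sigma)=\overline{b(\sigma,\mu_2)}$, which follows from the definition of the adjoint on $H=L^2_\cnum(T,\mu)$ together with Fubini. Then $[ACB^{\dagger}](\tau,\sigma) = \int [AC](\tau,\mu_2)\,b^{\dagger}(\mu_2,\sigma)\,d\mu(\mu_2) = \int\int a(\tau,\mu_1)\,c(\mu_1,\mu_2)\,\overline{b(\sigma,\mu_2)}\,d\mu(\mu_1)\,d\mu(\mu_2)$. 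Rearranging the factors, which is permissible since they are scalars, yields exactly \eqref{eq:kerneldouble}. Finally, the statement that $(A\otimes B)C$ may be viewed as an operator on $H$ follows from the fact that its kernel lies in $L^2_\cnum(T\times T,\mu\otimes\mu)$ (which is a consequence of H\"older's inequality for operators, Proposition \ref{holderoperator}), so that it is itself Hilbert-Schmidt.

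The main technical point — and the only one requiring any care — is the justification of Fubini in the composition step, which relies on the square integrability of the kernels together with the Cauchy-Schwarz estimate $\int |a(\tau,\mu_1)|\,|b(\mu_1,\sigma)|\,d\mu(\mu_1) \leq \|a(\tau,\cdot)\|_2\,\|b(\cdot,\sigma)\|_2$, finite for $\mu\otimes\mu$-almost all $(\tau,\sigma)$. Once Fubini is in place, both identities follow from purely algebraic manipulations. All other assertions (the adjoint kernel formula, the membership of the composed kernel in $L^2$) are standard consequences of Hilbert-Schmidt theory and do not present any obstacle.
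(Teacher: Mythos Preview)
Your argument is correct. The paper states this proposition without proof, treating it as standard background in operator theory; your verification via Fubini for the composition kernel and then reduction of \eqref{eq:kerneldouble} to a two-fold composition through the identity $(A\otimes B)C=ACB^{\dagger}$ is exactly the route the proposition itself signals in its final sentence, and the Cauchy--Schwarz justification of Fubini is the appropriate technical check.
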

\begin{corollary} \label{tensorprodHS}
Let $A_i,  i = 1,\cdots,k$ for k finite belong to $S_p(H)$ and let \begin{align*} {\boldsymbol{\psi}} = (\psi_{1} \otimes \cdots \otimes \psi_{k} ) \end{align*} be an element of $ \bigotimes_{i=1}^{k} H$. Then we have that the linear mapping \begin{align*}\mathcal{A} = \big(A_1 \otimes ...\otimes  A_k\big)\end{align*} satisfies $i)\, \|\mathcal{A}\boldsymbol{\psi}\|_2 <\infty$ and $ii)\, \snorm{ \mathcal{A}}_p <\infty$.
\end{corollary}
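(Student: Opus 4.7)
The plan is to handle the two statements separately, both by reducing to the elementary tensor structure on $\bigotimes_{i=1}^{k} H$ together with its Hilbert--Schmidt inner product.

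For (i), the strategy is to apply the defining property of the tensor product operator on an elementary tensor (Definition \ref{tensorproductop} iterated), which gives $\mathcal{A}\boldsymbol{\psi} = A_1\psi_1 \otimes \cdots \otimes A_k\psi_k$. Under the Hilbert--Schmidt inner product, the norm of an elementary tensor factorizes, so that
\[
\|\mathcal{A}\boldsymbol{\psi}\|_2^2 \;=\; \biginnerprod{A_1\psi_1 \otimes \cdots \otimes A_k\psi_k}{A_1\psi_1 \otimes \cdots \otimes A_k\psi_k} \;=\; \lprod_{i=1}^{k}\|A_i\psi_i\|_2^{2}.
\]
Since $S_p(H)\subseteq S_\infty(H)$ for every $1\leq p\leq\infty$, each $A_i$ is bounded, hence $\|A_i\psi_i\|_2\leq\snorm{A_i}_\infty\|\psi_i\|_2<\infty$, giving (i).

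For (ii), the aim is to establish the multiplicativity
\[
\snorm{A_1\otimes\cdots\otimes A_k}_p \;=\; \lprod_{i=1}^{k}\snorm{A_i}_p,
\]
from which finiteness follows immediately. For $p=\infty$ this is a direct consequence of the operator norm together with the factorization in (i) applied to unit elementary tensors and a density argument. For $1\leq p<\infty$ I would use the singular value decomposition \eqref{eq:svd} of each $A_i$, say $A_i\,f=\sum_{n} s_n(A_i)\,\innerprod{f}{\psi^{(i)}_n}\,\phi^{(i)}_n$. The families $\{\phi^{(1)}_{n_1}\otimes\cdots\otimes\phi^{(k)}_{n_k}\}$ and $\{\psi^{(1)}_{n_1}\otimes\cdots\otimes\psi^{(k)}_{n_k}\}$ form orthonormal bases of $\bigotimes_{i=1}^k H$, and by applying $A_1\otimes\cdots\otimes A_k$ to the second of these bases and using (i), one obtains the SVD of $\mathcal{A}$ with singular values $\prod_i s_{n_i}(A_i)$ indexed by $(n_1,\ldots,n_k)\in\nnum^k$. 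Therefore
\[
\snorm{\mathcal{A}}_p^p \;=\; \lsum_{n_1,\ldots,n_k}\lprod_{i=1}^{k} s_{n_i}(A_i)^p \;=\; \lprod_{i=1}^{k}\lsum_{n_i} s_{n_i}(A_i)^p \;=\; \lprod_{i=1}^{k}\snorm{A_i}_p^p,
\]
by Tonelli's theorem on the product counting measure.

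The main obstacle is the case $1\leq p<\infty$ in (ii): one has to verify carefully that the claimed orthonormal systems arising from the SVDs of the individual operators genuinely diagonalize $\mathcal{A}$ in $\bigotimes_{i=1}^k H$ (in particular that no additional singular values appear and that the multi-index sum converges in the stated product form). The case $k=2$ is the essential one and the general statement then follows by iterating the associativity of the tensor product, as recalled in Definition \ref{Algtensor}.
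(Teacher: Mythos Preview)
Your proposal is correct. The overall strategy—reducing both claims to the multiplicativity $\snorm{A_1\otimes\cdots\otimes A_k}_p=\prod_i\snorm{A_i}_p$—matches the paper, but the execution differs in two respects worth noting.

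For (i), you compute $\mathcal{A}\boldsymbol{\psi}=A_1\psi_1\otimes\cdots\otimes A_k\psi_k$ directly and factor the tensor norm; the paper instead bounds $\|\mathcal{A}\boldsymbol{\psi}\|_2\le\snorm{\mathcal{A}}_\infty\|\boldsymbol{\psi}\|_2$ via Proposition~\ref{holderoperator} and then invokes the multiplicativity of the operator norm. Your route is more elementary and avoids appealing to part (ii) (the $p=\infty$ case) inside the proof of (i), whereas the paper's argument has the advantage of applying to any $\boldsymbol{\psi}\in\bigotimes_i H$, not just elementary tensors.

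For (ii), the paper simply asserts the identity $\snorm{A_1\otimes A_2}_p=\snorm{A_1}_p\snorm{A_2}_p$ as known and illustrates only the case $p=2$ via the kernel representation of Proposition~\ref{HSkernel} and Fubini. Your SVD argument is more general: it proves the identity for all $1\le p<\infty$ in one stroke by exhibiting the singular values of $\mathcal{A}$ as the multi-indexed products $\prod_i s_{n_i}(A_i)$. The concern you flag—that the tensor product of the individual SVD bases really does diagonalize $\mathcal{A}$ with no missing singular values—is handled by the fact that $\{\psi^{(1)}_{n_1}\otimes\cdots\otimes\psi^{(k)}_{n_k}\}$ is a complete orthonormal basis of $\bigotimes_i H$, so the expansion you write is indeed the SVD of $\mathcal{A}$.
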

\begin{proof}[Proof of Corollary \ref{tensorprodHS}]

For $i)$, we have by proposition \ref{holderoperator},
\begin{align*} 
 \|\mathcal{A}\boldsymbol{\psi}\|_2& =  \|  \big(A_1 \otimes ...\otimes  A_k\big) \boldsymbol{\psi}\|_2 =  \snorm{  \big(A_1 \otimes ...\otimes  A_k\big)}_{\infty} \|\boldsymbol{\psi}\|_2 \\&\displaybreak[0]  \le  \snorm{  \big(A_2 \otimes ...\otimes  A_k\big)}_{\infty}  \snorm{ A_1}_{\infty} \|\boldsymbol{\psi}\|_2 \le \prod_{i=1}^{k}\snorm{A_i}_{\infty}\|\boldsymbol{\psi}\|_2 \le \prod_{i=1}^{k}\snorm{A_i}_{p}\|\boldsymbol{\psi}\|_2  < \infty. 
\end{align*}
In case $p=2$, the latter equals $ \prod_{i=1}^{k}\|a_i\|_{2}\|\boldsymbol{\psi}\|_2$ by proposition \ref{HSkernel}. 
Property $ii)$ holds since for any $A_1, A_2 \in S_p(H)$, we have $\snorm{A_1 \otimes A_2}_p = \snorm{A_1}_p \snorm{A_2}_p$. To illustrate the second property, observe that if $p=2$ we obtain
\begin{align*}
& \|\mathcal{A}\|^2_2 = \|A_1 \otimes ...\otimes  A_k\|^2_2 = \int_{[0,1]^{2k}}| a_1(\tau_1, \mu_1)..a_k(\tau_k,\mu_k)|^2d\tau_1..d\tau_k d\mu_1..d\mu_k 
%\\& =\int_{[0,1]^{2k}} a_1(\tau_1, \mu_1) \overline{a_1(\tau_1, \mu_1)}..a_k(\tau_k,\mu_k)\overline{a_k(\tau_k,\mu_k)} d\tau_1..d\tau_k d\mu_1..d\mu_k 
\\& =  \int_{[0,1]^{2}} a_1(\tau_1, \mu_1) \overline{a_1(\tau_1, \mu_1)}d\tau_1, d\mu_1 .. \int_{[0,1]^{2}}a_k(\tau_k,\mu_k)\overline{a_k(\tau_k,\mu_k)} d\tau_k d\mu_k \\& = \|a_1\|^2_2 ..\|a_k\|^2_2 < \infty. 
\end{align*}
\end{proof}

\begin{proposition}[Neumann series]
\label{Neumann}
\textit{Let $A$ be a bounded linear operator on $H$ and $I_H$ be the identity operator. If $\snorm{ A}_{\infty} < 1$, the operator $I_H-A$ has a unique bounded inverse on $H$ given by }
\begin{align} 
(I_H-A)^{-1} = \sum_{k=0}^{\infty} A^k.
\end{align}
If $A \in S_2(H)$ with $\snorm{A}_2 <1$, then this equality holds in Hilbert-Schmidt norm. 
\end{proposition}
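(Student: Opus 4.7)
The statement is the classical Neumann series theorem adapted to the two Schatten classes of interest ($p=\infty$ and $p=2$), and the proof will be a direct verification built on H\"older's inequality for operators (Proposition \ref{holderoperator}) and completeness of $S_p(H)$.

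The plan is to show first that the partial sums $S_n=\sum_{k=0}^{n}A^k$ form a Cauchy sequence in $S_\infty(H)$. Indeed, Proposition \ref{holderoperator} with $p=q=r=\infty$ gives $\snorm{A^k}_\infty\le\snorm{A}_\infty^k$, so for $m>n$
\[
\snorm{S_m-S_n}_\infty\le\sum_{k=n+1}^{m}\snorm{A}_\infty^{k}
\le\frac{\snorm{A}_\infty^{n+1}}{1-\snorm{A}_\infty}\to 0
\]
as $n\to\infty$. Since $S_\infty(H)$ is complete, there is $S\in S_\infty(H)$ with $S_n\to S$ in operator norm. A direct telescoping computation yields the identity $(I_H-A)S_n=S_n(I_H-A)=I_H-A^{n+1}$, and continuity of left and right composition in $S_\infty(H)$ together with $\snorm{A^{n+1}}_\infty\le\snorm{A}_\infty^{n+1}\to 0$ give $(I_H-A)S=S(I_H-A)=I_H$. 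Uniqueness of the bounded inverse is then standard: any two two-sided inverses $B,C$ satisfy $B=B(I_H-A)C=C$.

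For the Hilbert--Schmidt statement, the point is that $I_H\notin S_2(H)$ unless $H$ is finite-dimensional, so the equality must be read as convergence of the series $\sum_{k=1}^{\infty}A^k$ in $\snorm{\cdot}_2$, with $(I_H-A)^{-1}=I_H+\sum_{k=1}^{\infty}A^k$. Applying Proposition \ref{holderoperator} with $q=\infty$, $p=r=2$ and using $\snorm{A}_\infty\le\snorm{A}_2<1$ gives
\[
\snorm{A^k}_2\le\snorm{A}_\infty^{k-1}\snorm{A}_2\le\snorm{A}_2^{k},
\qquad k\ge 1,
\]
so the partial sums $\tilde S_n=\sum_{k=1}^{n}A^k$ are Cauchy in the Hilbert space $S_2(H)$ and converge there to some $\tilde S\in S_2(H)$. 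Since $\snorm{\cdot}_\infty\le\snorm{\cdot}_2$, they also converge to $\tilde S$ in $S_\infty(H)$, and combining with the first part identifies $\tilde S=(I_H-A)^{-1}-I_H$, giving the claimed equality in Hilbert--Schmidt norm.

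The only subtlety worth flagging is precisely this reinterpretation of the HS case, because the raw series $\sum_{k=0}^{\infty}A^k$ cannot converge in $S_2(H)$; the remaining work is routine and uses nothing beyond submultiplicativity of Schatten norms and completeness of $S_p(H)$ for $p\in\{2,\infty\}$.
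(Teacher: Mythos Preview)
Your proof is correct and follows the same approach as the paper: bound $\snorm{A^k}_p$ geometrically via submultiplicativity, conclude the partial sums are Cauchy in the complete space $S_p(H)$, and verify the limit is a two-sided inverse via the telescoping identity $(I_H-A)S_n=I_H-A^{n+1}$. The paper only writes out the $S_2$ case and in fact glosses over the point you flag---its partial sums $\sum_{k=0}^n A^k$ include $I_H\notin S_2(H)$, so strictly speaking they are not elements of $S_2(H)$ and the ``limit $A^*$ in $S_2(H)$'' must be read exactly as you interpret it, namely convergence of $\sum_{k\ge 1}A^k$ in $\snorm{\cdot}_2$ with $(I_H-A)^{-1}-I_H\in S_2(H)$; your version is the cleaner formulation of the same argument.
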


\begin{proof}
We only show the case $A \in S_2(H)$. Note that the space $S_2(H)$ is a Hilbert space. Then for $m<n$,
\begin{align*}
\Bigsnorm{\lsum_{k=0}^{m} A^k - \lsum_{k=0}^{n}A^k}_2
\leq \lsum_{k=m+1}^{n}\bigsnorm{A}^k_2
\leq \frac{\snorm{A}^{m+1}_2}{1-\snorm{A}_2},
\end{align*}
which shows that the partial sum forms a Cauchy sequence and hence has a limit
$A^*$ in $S_2(H)$. Furthermore, we have
\[
(I_H-A)\,A^*
=\lim_{n\to\infty}(I_H-A)\,\lsum_{k=0}^n A^n
=\lim_{n\to\infty}\big(I_H-A^{n+1})=I_H
\]
in $S_2(H)$, which shows that $A^*$ is the inverse of $I_H-A$.
\end{proof}

\begin{proposition}[Hilbert-Schmidt operators as kernel operator] \label{HSkernel}
Let $H = L^2_{\cnum}(T,\mu)$ be a separable Hilbert space, where $(T,\mu)$ is a measure space, and let $A$ be an operator on $H$. Then $A\in S_2(H)$ if and only if it is an integral operator, that is, there exists a function $a\in L^2_\cnum(T\times T,\mu\otimes\mu)$ such that 
\[
A\,x(\tau) = \int a(\tau,\sigma)\, x(\sigma)\,d\mu(\sigma)
\]
for all $\tau\in T$ $\mu$-almost everywhere. Moreover, we have $\snorm{A}_2=\norm{a}_2$.
\end{proposition}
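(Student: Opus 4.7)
The plan is to establish the two directions of the equivalence separately, in both cases expressing objects in terms of an orthonormal basis of $H$ and invoking Parseval's identity to identify the Hilbert-Schmidt norm with the $L^2$ norm of the kernel.

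For the first direction, suppose $a \in L^2_\cnum(T\times T,\mu\otimes\mu)$. First I would observe that by Fubini's theorem, the section $a(\tau,\cdot)$ belongs to $L^2_\cnum(T,\mu)$ for $\mu$-almost every $\tau$, so that the integral defining $A\,x(\tau)$ exists by Cauchy-Schwarz. Fixing an orthonormal basis $\{e_n\}_{n\in\nnum}$ of $H$, the system $\{e_n\otimes\overline{e_m}\}_{n,m\in\nnum}$ defined by $(e_n\otimes\overline{e_m})(\tau,\sigma)=e_n(\tau)\,\overline{e_m(\sigma)}$ forms an orthonormal basis of $L^2_\cnum(T\times T,\mu\otimes\mu)$. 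Writing $A\,e_m(\tau)=\innerprod{a(\tau,\cdot)}{\overline{e_m}}$ and applying Parseval's identity to $a(\tau,\cdot)$ yields $\sum_m|A\,e_m(\tau)|^2=\norm{a(\tau,\cdot)}_2^2$, and integration in $\tau$ (justified by Tonelli's theorem) gives
\[
\snorm{A}_2^2=\lsum_m\norm{A\,e_m}_2^2
=\int_T\norm{a(\tau,\cdot)}_2^2\,d\mu(\tau)=\norm{a}_2^2<\infty,
\]
so $A\in S_2(H)$ with $\snorm{A}_2=\norm{a}_2$.

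For the converse, suppose $A\in S_2(H)$. I would use the singular value decomposition \eqref{eq:svd}: there exist orthonormal bases $\{\phi_n\},\{\psi_n\}$ of $H$ and a sequence $\big(s_n(A)\big)_{n\in\nnum}\in\ell^2$ such that $A\,f=\sum_n s_n(A)\innerprod{f}{\psi_n}\phi_n$. Define the candidate kernel
\[
a(\tau,\sigma)=\lsum_n s_n(A)\,\phi_n(\tau)\,\overline{\psi_n(\sigma)}.
\]
Since the functions $(\phi_n\otimes\overline{\psi_n})_{n\in\nnum}$ are orthonormal in $L^2_\cnum(T\times T,\mu\otimes\mu)$ and $\sum_n|s_n(A)|^2=\snorm{A}_2^2<\infty$, the series converges in $L^2_\cnum(T\times T,\mu\otimes\mu)$ to an element $a$ with $\norm{a}_2=\snorm{A}_2$. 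The integral operator associated with $a$ via \eqref{eq:kernel} then agrees with $A$: indeed, for $f\in H$, using Cauchy-Schwarz and the $L^2$-convergence of the partial sums of $a$, one may interchange the integral with the sum to obtain $\int a(\tau,\sigma)f(\sigma)\,d\mu(\sigma)=\sum_n s_n(A)\innerprod{f}{\psi_n}\phi_n(\tau)=A\,f(\tau)$ for $\mu$-almost every $\tau$.

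The main technical obstacle is the verification that the kernel recovered from the SVD actually represents $A$ pointwise (almost everywhere) rather than merely as an element of $L^2$; this requires a careful justification of the interchange of summation and integration, for which it is enough to test against an arbitrary $f\in H$ and apply Parseval together with the $\ell^2$-summability of the singular values. Uniqueness of the kernel (up to $\mu\otimes\mu$-null sets) follows because $a\mapsto A$ is an isometry onto $S_2(H)$, so if two kernels induced the same operator, their difference would induce the zero operator and hence have zero Hilbert-Schmidt norm, i.e.\ be null in $L^2_\cnum(T\times T,\mu\otimes\mu)$.
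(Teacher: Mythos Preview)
Your proof is correct and, for the forward direction (integral operator $\Rightarrow$ Hilbert--Schmidt), follows exactly the paper's argument: write $A\,e_m(\tau)$ as an inner product with the section $a(\tau,\cdot)$, apply Parseval, and integrate in $\tau$ via monotone/dominated convergence.

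For the converse direction the paper takes a slightly different route. Rather than invoking the singular value decomposition, it fixes a single orthonormal basis $\{\psi_n\}$ and defines the kernel directly as
\[
a(\tau,\sigma)=\lsum_{n=1}^{\infty} A\psi_n(\tau)\,\overline{\psi_n(\sigma)},
\]
using $\sum_n\norm{A\psi_n}^2<\infty$ and the orthonormality of $\{\overline{\psi_n}\}$ to obtain $L^2$-convergence; the identity $Ax=\int a(\cdot,\sigma)x(\sigma)\,d\mu(\sigma)$ then follows by expanding $x$ in $\{\psi_n\}$ and applying dominated convergence. Your SVD-based construction is of course equivalent (indeed, if one chooses $\{\psi_n\}$ to be the right singular vectors, the two formulas coincide), and it has the mild advantage that the norm identity $\norm{a}_2=\snorm{A}_2$ is immediate from $\sum_n s_n(A)^2=\snorm{A}_2^2$. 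The paper's version is slightly more self-contained in that it avoids appealing to \eqref{eq:svd}. Either way the technical point you flag---that the interchange of sum and integral is justified via $L^2$-convergence of the partial kernels rather than pointwise---is handled correctly.
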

\begin{proof}
First, suppose $A$ is an integral operator on $H$ with kernel $a \in L^2_\cnum(T\times T,\mu\otimes\mu)$. Because $H$ is separable, it has a countable orthonormal basis $\{\psi_n\}_{n\in\nnum}$. For fixed $\tau \in M$, the function $a_{\tau}(\sigma) = a(\tau,\sigma)$ defines a measurable function on $ L^2_\cnum(T,\mu)$. We can therefore write
\begin{align*}
A\psi_n(\tau) = \int a(\tau, \sigma) \psi_n(\sigma) \mu(\sigma) = \langle a_\tau, \overline{\psi_n} \rangle.
\end{align*}
Observe that $\{\overline{\psi_n}\}_{n \ge 1}$ also forms a orthonormal basis of $H$. An application of the Cauchy-Schwarz Inequality gives $|\langle a_{\tau}, \overline{\psi_n} \rangle |^2  \le \| a_{\tau}\|^2 \|  \overline{\psi_n} \|^2 < \infty$ and therefore 
\begin{align*}
\sum_{n=1}^{m}|\langle a_{\tau}, \overline{\psi_n} \rangle |^2 \le  \sum_{n=1}^{\infty}|\langle a_{\tau}, \overline{\psi_n} \rangle |^2 = \| a_{\tau}\|_2^2 < \infty,
\end{align*}
by Parseval's Identity. Hence, as a corollary of the Monotone and Dominated Convergence Theorem we find
\begin{align*}
\snorm{A}^2_2 & = \sum_{n=1}^{\infty}\|A \psi_n \|^2 =  \lim_{m \to \infty} \sum_{n=1}^{m} |\langle a_{\tau},\overline{\psi}_n \rangle |^2 d\tau = \int  \lim_{m \to \infty} \sum_{n=1}^{m}  |\langle a_{\tau},\overline{\psi}_n \rangle |^2 d\tau \displaybreak[0]\\& =\int \|a_{\tau}\|^2 d\tau = \int \int |a(\tau,\sigma)|^2 d\sigma d\tau  = \|a\|^2_2< \infty, 
\end{align*}
showing $A$ is Hilbert Schmidt and $\snorm{A}_2=\|a\|_2$. Now suppose $A$ is Hilbert Schmidt. In this case, we have by definition $\sum_{n=1}^{\infty}\|A \psi_n \|^2 < \infty $ and consequently the series $\sum_{n=1}^{\infty} A \psi_n$ converges in $ L^2_\cnum(T,\mu)$. Therefore the function 
\[a(\tau, \sigma) = \sum_{n=1}^{\infty} A \psi_n(\tau) \overline{\psi_n(\sigma)}\] 
will be well-defined on $L^2_\cnum(T\times T,\mu\otimes\mu)$. Hence, for any element $x \in  L^2_\cnum(T,\mu)$, the Dominated Convergence Theorem yields
\begin{align*}
Ax(\tau) &= A\Big(\lim_{m \to \infty}\sum_{n= 1}^{m} \langle x, \psi_n \rangle \psi_n \Big)(\tau) = \lim_{m \to \infty}\sum_{n= 1}^{m}\langle x, \psi_n \rangle A \psi_n(\tau) \\& =\lim_{m \to \infty}\sum_{n= 1}^{m}\Big( \int x(\sigma) \overline{\psi_n(\sigma) }d\sigma\Big) A \psi_n(\tau) = \lim_{m \to \infty}\Big( \int x(\sigma)\sum_{n= 1}^{m}  \overline{\psi_n(\sigma)} A \psi_n(\tau) d\sigma\Big)\\&
\int x(\sigma)\sum_{n\ge 1} \overline{\psi_n(\sigma)} A \psi_n(\tau) d\sigma = \int x(\sigma)a(\tau,\sigma) d\sigma. 
\end{align*}
\end{proof}

\begin{theorem}[Product Rule on Banach spaces]
\label{ProdBan}
Let $E, F_1, F_2, G$ be Banach spaces and let $U \subset E$ be open. Suppose that $f:U \to F_1$, and  $G:U \in F_2$ are Fr{\'e}chet differentiable of order $k$. Let $Z(\cdot,\cdot) : F_1 \times F_2 \to G$ be a continuous bilinear map. Then, $Z(f,g) :U \to G$ is Fr{\'e}chet differentiable of order $k$ and 
\begin{align}
\frac{\partial Z}{\partial u}(f(u),g(u)) = Z(\frac{\partial f(u)}{\partial u}, g(u))+  Z(f(u),\frac{\partial g(u)}{\partial u}).
\end{align} 
\end{theorem}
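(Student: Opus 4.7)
The plan is to prove the statement by induction on the order of differentiability $k$, starting with the base case $k=1$ and using the linearity properties of Fr{\'e}chet derivatives together with continuity of the bilinear map $Z$. Since $Z$ is continuous and bilinear, there exists a constant $M<\infty$ with $\|Z(x,y)\|_G \le M\,\|x\|_{F_1}\,\|y\|_{F_2}$ for all $x\in F_1$, $y\in F_2$; this is the only analytic input beyond the usual Fr{\'e}chet calculus and will be used to control every remainder term.

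For $k=1$, I would fix $u\in U$ and consider, for $h\in E$ with $\|h\|$ small enough that $u+h\in U$,
\[
Z(f(u+h),g(u+h))-Z(f(u),g(u))
=Z(f(u+h)-f(u),g(u+h))+Z(f(u),g(u+h)-g(u)).
\]
Writing $f(u+h)=f(u)+Df(u)h+r_f(h)$ and $g(u+h)=g(u)+Dg(u)h+r_g(h)$ with $\|r_f(h)\|,\|r_g(h)\|=o(\|h\|)$, and expanding by bilinearity, the main terms are $Z(Df(u)h,g(u))+Z(f(u),Dg(u)h)$, which is linear and bounded in $h$. The remaining terms are $Z(Df(u)h,Dg(u)h)$, $Z(Df(u)h,r_g(h))$, $Z(r_f(h),g(u+h))$ and $Z(f(u),r_g(h))$; the bound on $Z$ together with the boundedness of $Df(u)$, $Dg(u)$ and the continuity of $g$ at $u$ show that each of these is of order $o(\|h\|)$ in the norm of $G$. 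This identifies $Z(f,g)$ as Fr{\'e}chet differentiable at $u$ with the asserted derivative.

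For the inductive step, suppose the result holds for all continuous bilinear maps and all pairs of mappings that are Fr{\'e}chet differentiable of order at most $k-1$. For order $k$ mappings $f,g$, the derivative computed above is
\[
D[Z(f,g)](u)=Z(Df(u),\,g(u))+Z(f(u),\,Dg(u)),
\]
viewed as an element of $\mathcal{L}(E,G)$. Each summand is the composition of the continuous bilinear map $Z$, viewed appropriately as a bilinear map valued in $\mathcal{L}(E,G)$ (using the continuous bilinear evaluation map $\mathcal{L}(E,F_i)\times F_j\to\mathcal{L}(E,G)$ obtained from $Z$), with pairs of maps that are Fr{\'e}chet differentiable of order $k-1$ by assumption. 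The induction hypothesis then yields $k-1$-fold differentiability of each term and hence $k$-fold differentiability of $Z(f,g)$, completing the proof.

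The main obstacle, such as it is, is bookkeeping rather than analysis: one has to reinterpret $Z$ at each inductive step as a continuous bilinear map between the appropriate spaces of multilinear forms so that the induction hypothesis can be applied cleanly. The actual estimates are immediate once the bilinear bound $\|Z(x,y)\|_G\le M\|x\|\|y\|$ is in hand, since every cross term in the expansion carries either two factors of size $O(\|h\|)$ or one factor of size $o(\|h\|)$.
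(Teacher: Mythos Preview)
Your proof is correct and follows the standard route: the bilinear bound $\|Z(x,y)\|_G\le M\|x\|\|y\|$ controls all cross terms in the base case, and the inductive step works once $Z$ is reinterpreted as a continuous bilinear map into $\mathcal{L}(E,G)$. The paper itself does not give a proof of this theorem at all; it simply refers the reader to \citet{Nelson1969}. So you have supplied more than the paper does, and there is nothing to compare against.
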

For the proof, see for example \citet{Nelson1969}.

\section{Moment and cumulant tensors} 
\label{cumprops}

Let $X$ be a random element on a probability space $(\Omega,\aalg,\prob)$ that takes values in a separable Hilbert space $H$. More precisely, we endow $H$ with the topology induced by the norm on $H$ and assume that $X:\Omega\to H$ is Borel-measurable. Then the mean $\mean(X)$ of $X$ in $H$ exists and is given by
\[
\mean(X)=\lsum_{i\in\nnum}\mean\big(\innerprod{X}{\psi_i}\big)\,\psi_i,
\]
where $(\psi_i)_{i\in\nnum}$ is an orthonormal basis of $H$, provided that $\mean(\norm{X}^2_2)<\infty$.

For higher moments, it is appropriate to consider these as tensors in a tensor product space $H\otimes\cdots\otimes H$ of appropriate dimension. More precisely, let $X_1,\ldots,X_k$ be random elements in $H$. Then we define the moment tensor $\mean(X_1\otimes\cdots\otimes X_k)$ by
\[
\mean(X_1\otimes\cdots\otimes X_k)=\sum_{i_1,\ldots,i_k\in\nnum}
\mean\Big(\lprod_{j=1}^k \innerprod{X_j}{\psi_{i_j}}\Big)\,
\psi_{i_1}\otimes\cdots\otimes\psi_{i_k}.
\]
Similarly, we define the cumulant tensor $\cum(X_1,\ldots,X_k)$ by
\begin{align}\label{eq:CovOP}
\cum(X_1,\ldots,X_k)=\sum_{i_1,\ldots,i_k\in\nnum}
\cum\big(\innerprod{X_1}{\psi_{i_1}},\ldots,\innerprod{X_k}{\psi_{i_k}}\big)\,
\psi_{i_1}\otimes\cdots\otimes\psi_{i_k}.
\end{align}
The cumulants on the right hand side are as usual given by
\[
\cum\big(\innerprod{X_1}{\psi_{i_1}},\ldots,\innerprod{X_k}{\psi_{i_k}}\big)
=\sum_{\nu=(\nu_1,\ldots,\nu_p)}(-1)^{p-1}\,(p-1)!\,\lprod_{r=1}^p
\mean\Big(\lprod_{j\in\nu_r}\innerprod{X_j}{\psi_{i_j}}\Big),
\]
where the summation extends over all unordered partitions $\nu$ of $\{1,\ldots,k\}$. 

More generally, we also require the case where the $X_i$ are themselves tensors, that is, $X_i=\lotimes_{j=1}^{l_i}X_{ij}$, $i=1,\ldots,k$, for random elements $X_{ij}$ in $H$ with $j=1,\ldots,l_i$ and $i=1,\ldots,k$. In this case, the joint cumulant tensor $\cum(X_1,\ldots,X_k)$ is given by an appropriate generalization of the product theorem for cumulants \citep[Theorem 2.3.2]{Brillinger} to the tensor case,
\[
\cum(X_1,\ldots,X_k)=\!\!\!\sum_{r_{11},\ldots,r_{kl_k}\in\nnum}
\sum_{\nu=(\nu_1,\ldots,\nu_p)}\lprod_{n=1}^{p}
\cum\big(\innerprod{X_{ij}}{\psi_{r_{ij}}}|(i,j)\in \nu_n\big)\,
\psi_{r_{11}}\otimes\cdots\otimes\psi_{r_{kl_k}},
\]
where the summation extends over all indecomposable partitions $\nu=(\nu_1,\ldots,\nu_p)$ of the table
\[
\begin{matrix}
(1,1) &\cdots& (1,l_1)\\
\vdots&\ddots& \vdots\\
(k,1) &\cdots& (k,l_k).
\end{matrix}
\]
Formally, we also abbreviate this by
\begin{equation}
\label{prodcum}
\cum(X_1,\ldots,X_k)
=\sum_{\nu=(\nu_1,\ldots,\nu_p)}S_\nu\Big(\lotimes_{n=1}^{p}
\cum\big(X_{ij}|(i,j)\in\nu_n\big)\Big),
\end{equation}
where $S_\nu$ is the permutation that maps the components of the tensor back into the original order, that is,
$S_\nu\big(\otimes_{r=1}^p\otimes_{(i,j)\in\nu_r} X_{ij}\big)
=X_{11}\otimes\cdots\otimes X_{kl_k}$.

Next, let $A_1,\ldots,A_k$ linear bounded operators on $H$. As in Appendix \ref{properties}, let $A_1\otimes\cdots\otimes A_k$ be the operator on $H\otimes\cdots\otimes H$ given by
\[
(A_1\otimes\cdots\otimes A_k)(x_1\otimes\cdots\otimes x_k)
=(A_1\,x_1)\otimes\cdots\otimes(A_k\,x_k)
\]
for all $x_1,\ldots,x_k\in H$. The next proposition states that moment tensors---and hence also cumulant tensors by the above definitions---transform linearly.

\begin{proposition} \label{momlin}
Let $A_1,\ldots, A_k$ be bounded linear operators on $H$ and $X_1, \ldots, X_k$ be random elements in $H$. Then
\begin{equation}
\big(A_1\otimes\cdots\otimes A_k\big)\,
\mean\big(X_1 \otimes \cdots \otimes X_k\big)
= \mean\big((A_1\,X_1) \otimes \cdots \otimes (A_k\,X_k)\big).
\end{equation}
\end{proposition}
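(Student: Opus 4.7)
The plan is to reduce the claim to the fact that bounded linear operators on a Hilbert space commute with Bochner expectation. The first step is to identify $A_1\otimes\cdots\otimes A_k$, which a priori is only defined on elementary tensors by $\lotimes_j x_j\mapsto\lotimes_j A_j x_j$, as a genuine bounded linear operator on the Hilbert--Schmidt tensor product $H^{\otimes k}$. Well-definedness of the extension by linearity and continuity follows from the universal property recalled in Definition \ref{Algtensor}, and one obtains the operator-norm bound $\snorm{A_1\otimes\cdots\otimes A_k}_\infty\leq\lprod_j\snorm{A_j}_\infty$, which is finite by assumption.

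With this operator in hand, the identity follows from two observations. The first is the defining multilinear property of the tensor product, applied pointwise on $\Omega$, namely
\[
(A_1\otimes\cdots\otimes A_k)\big(X_1\otimes\cdots\otimes X_k\big)
=(A_1 X_1)\otimes\cdots\otimes(A_k X_k)\qquad\text{a.s.},
\]
which is simply the statement that $A_1\otimes\cdots\otimes A_k$ acts as advertised on the (random) elementary tensor $X_1\otimes\cdots\otimes X_k$. The second is that any bounded linear operator on a Hilbert space commutes with the Bochner integral, so that applying expectation to both sides of the above display yields exactly the assertion of the proposition.

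The only non-cosmetic point that needs verification is the Bochner integrability of the random element $X_1\otimes\cdots\otimes X_k$ in $H^{\otimes k}$. By Proposition \ref{HSkernel} its Hilbert--Schmidt norm factorises as $\norm{\lotimes_j X_j}_2=\lprod_j\norm{X_j}_2$, so integrability is implicit in the existence of the moment tensor appearing in the statement. If one prefers to avoid vector-valued integration altogether, the claim can instead be verified in coordinates: pairing both sides with basis tensors $\psi_{l_1}\otimes\cdots\otimes\psi_{l_k}$ and using $\innerprod{A_j\psi_{i_j}}{\psi_{l_j}}=\innerprod{\psi_{i_j}}{A_j^\dagger\psi_{l_j}}$, both sides reduce after a Fubini interchange to the same scalar double series $\sum_{i_1,\ldots,i_k}\mean\big(\lprod_j\innerprod{X_j}{\psi_{i_j}}\big)\lprod_j\innerprod{A_j\psi_{i_j}}{\psi_{l_j}}$, and this interchange is the only step requiring genuine work; it is justified by the moment hypotheses implicit in the definition of the moment tensor.
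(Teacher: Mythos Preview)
Your proposal is correct. The paper takes the second route you sketch as an alternative: it works entirely in coordinates, expanding $\mean(X_1\otimes\cdots\otimes X_k)$ in the basis $\{\psi_{i_1}\otimes\cdots\otimes\psi_{i_k}\}$, applying $A_1\otimes\cdots\otimes A_k$ to each basis tensor, re-expanding each $A_j\psi_{i_j}$ in the basis, and then collapsing the resulting double series back to $\mean\big((A_1X_1)\otimes\cdots\otimes(A_kX_k)\big)$.

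Your primary argument via commutation of bounded operators with the Bochner integral is cleaner and more conceptual, but it quietly identifies the paper's moment tensor---defined as a basis series---with a Bochner expectation in $H^{\otimes k}$. That identification holds under the integrability you flag, but the paper's computation has the minor advantage of staying strictly within its own coordinate definition and never invoking vector-valued integration theory. Conversely, your approach makes transparent that nothing special about tensors is happening: the result is just $A\,\mean Y=\mean(AY)$ for a bounded $A$, specialised to $Y=\lotimes_j X_j$ and $A=\lotimes_j A_j$.
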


\newcommand{\bigginnerprod}[2]{\bigg\langle{#1},{#2}\bigg\rangle}

\begin{proof}
Let $\{\psi_i\}_{i\in\nnum}$ be an orthonormal basis of $H$. Using the definition of a moment tensor, we get
\begin{align*}
(A_1\otimes&\cdots\otimes A_k)\,\mean(X_1\otimes\cdots\otimes X_k)\\
&=\sum_{i_1,\ldots,i_k\in\nnum}
\mean\Big(\lprod_{j=1}^{k}\innerprod{X_j}{\psi_{i_j}}\Big)\,
(A_1\,\psi_{i_1})\otimes\cdots\otimes(A_k\,\psi_{i_k})\\
\intertext{and further, by representing $A_j\,\psi_{i_j}$ with respect to the chosen orthonormal basis,}
&=\sum_{i_1,\ldots,i_k\in\nnum}\sum_{n_1,\ldots,n_k\in\nnum}
\mean\Big(\lprod_{j=1}^{k}\innerprod{X_j}{\psi_{i_j}}\Big)
\lprod_{j=1}^{k}\innerprod{A_j\psi_{i_j}}{\psi_{n_j}}
(\psi_{n_1}\otimes\cdots\otimes\psi_{n_k})\\
&=\sum_{n_1,\ldots,n_k\in\nnum}\mean\bigg[\lprod_{j=1}^{k}
\bigginnerprod{A_j\bigg(
\lsum_{i_j\in\nnum}\innerprod{X_j}{\psi_{i_j}}\bigg)}{\psi_{n_j}}\,\bigg]
\big(\psi_{n_1}\otimes\cdots\otimes\psi_{n_k}\big)\\
&=\sum_{n_1,\ldots,n_k\in\nnum}\mean\Big[\lprod_{j=1}^{k}
\biginnerprod{A_j\,X_j}{\psi_{n_j}}\Big]\,
\big(\psi_{n_1}\otimes\cdots\otimes\psi_{n_k}\big)\\
&=\mean\big((A_1\,X_1)\otimes\cdots\otimes(A_k\,X_k)\big),
\end{align*}
where we have used linearity of the operators, of the inner product, and of the ordinary mean.
\end{proof}

As a direct consequence of the above proposition, we also have linearity of cumulant tensors. More precisely, for $i=1,\ldots,k$, let $X_i$ be a random tensor in $\bigotimes_{j=1}^{k} H$ and let $A_i$ be a linear bounded operator on the same tensor product space. Then
\begin{equation}
\label{cumlin}
\big(A_1\otimes\cdots\otimes A_k\big)\,\cum\big(X_1,\ldots,X_k\big)
=\cum\big(A_1\,X_1,\ldots,A_k\,X_k\big).
\end{equation}

\subsection{Higher order dependence under functional stationarity} \label{Highorderfuncstat}
For stationary processes, we follow convention and write the cumulant tensor \eqref{eq:CovOP} as a function of $k-1$ elements, i.e., we denote it by $\mathcal{C}_{t_1,\ldots,t_{k-1}}$. Under regularity conditions this operator is in $S_2(H)$ and we can define the corresponding $k$-th order cumulant kernel  $\cumk{k}$ of the process $X$  by
\begin{align}\label{eq:cumker}
\mathcal{C}_{t_1,\ldots,t_{k-1}} = \sum_{i_1,\ldots,i_k\in\nnum} \int_{[0,1]^k} \cumker{k} \lprod_{j=1}^k \psi_{i_j}(\tau_j) d\tau_1 \cdots d\tau_k \psi_{i_1} \otimes \cdots \otimes \psi_{i_k}
\end{align}
A sufficient condition that is often imposed for this to hold is $\mean\norm{X_0}^k_2 < \infty$. Similar to the second-order case, the tensor \eqref{eq:CovOP} will form a Fourier pair with a $k$-th order cumulant spectral operator given summability with respect to $\snorm{ \cdot }_p$ is satisfied. The $k$-th order cumulant spectral tensor is specified as
\begin{align}
\mathcal{F}_{\omega_1,..,\omega_{k-1}}
=(2\pi)^{1-k} \sum_{t_1,..,t_{k-1}\in\znum} \mathcal{C}_{t_1,..,t_{k-1}}\,
\exp\Big(-\im\lsum_{j=1}^{k-1} \omega_j\,t_j\Big),
\end{align}
where the convergence is in $\snorm{ \cdot }_p$. Properties on the kernels that are relevant in the time-dependent framework are discussed in section \ref{section3} of the main paper.

%%%%%%%%%%%%%%%%%%%%%%%%%%%%%%%%%%%%%%%%%%%%%%%%%%%%%%
%\input{proofsect2-new.tex} 
\begin{theorem}\label{cumWN}
Let $\{X_t\}_{t\in\znum}$ be a stationary stochastic process in $L^{2}([0,1])$ such that $\mean\norm{X_0}_2^k<\infty$ for all $k\in\nnum$ and $\sum_{t_1,\ldots,t_{k-1}=-\infty}^{\infty}\snorm{\cumop{k}}_2<\infty$.
Furthermore let
\[
\ZN{\omega}=\SSS{\frac{1}{2\pi}}\sum_{t=-N}^{N}X_t \int_{-\pi}^{\omega}  e^{-\im\lambda t}\,d\lambda.
\]
Then there exists a $2\pi$-periodic stochastic process $\{Z_\omega\}_{\omega\in\rnum}$ taking values in $L^2_\cnum([0,1])$ with $\overline{Z}_{\omega}= Z_{-\omega}$ such that
$\lim_{N\to\infty}\mean\norm{\ZN{\omega}-Z_{\omega}}^2_2=0$. Furthermore, $\{Z_\omega\}$ equals  almost everywhere the functional orthogonal increment process of the Cramer representation of $\{X_t\}$, that is,
\[
X_t = \int_{-\pi}^{\pi} e^{\im\omega t}\,dZ_{\omega}\qquad\text{a.e. in $\Hspace$.} 
\]
Finally, we have for $k\geq 2$
\begin{align*}
\cum\big(Z_{\omega_1},\ldots,Z_{\omega_k}\big)
=\int_{-\pi}^{\omega_1}\cdots \int_{-\pi}^{\omega_k}
   \eta\Big(\lsum_{j=1}^{k} \lambda_j\Big) \F_{\alpha_1,\ldots,\alpha_{k-1}}
    \,d\alpha_1 \cdots d\alpha_k, \tageq \label{WNcum}
\end{align*}
which holds almost everywhere and in $L^2$.
\end{theorem}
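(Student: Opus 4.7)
\emph{Convergence.} My plan is to establish the theorem in three stages: showing $\{\ZN{\omega}\}_{N\in\nnum}$ is Cauchy in $\mathbb{H}_\cnum$, identifying the limit with the orthogonal increment process of a Cram\'er representation, and deriving the cumulant formula. For the first step, writing $c_t(\omega)=\int_{-\pi}^{\omega}e^{-\im\lambda t}\,d\lambda$, bilinearity of the inner product yields
\[
\mean\bignorm{Z^{(M)}_{\omega}-\ZN{\omega}}_2^2
=\frac{1}{(2\pi)^2}\lsum_{N<|t|,|s|\leq M}c_t(\omega)\,\overline{c_s(\omega)}\,\text{tr}(\mathcal{C}_{s-t}).
\]
Apply the scalar Herglotz theorem to the positive-definite sequence $\text{tr}(\mathcal{C}_h)$ (which is uniformly bounded by $\mean\norm{X_0}_2^2$) and write $\text{tr}(\mathcal{C}_h)=\int_{-\pi}^{\pi}e^{\im\lambda h}\,m(d\lambda)$ for a finite positive measure $m$. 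The double sum becomes $\int_{-\pi}^{\pi}\big|\Delta^{(N,M)}_\omega(\lambda)\big|^2\,m(d\lambda)$ for a truncated Dirichlet-type kernel $\Delta^{(N,M)}_\omega$; dominated convergence then delivers the Cauchy estimate. Denote the limit by $Z_\omega\in\mathbb{H}_\cnum$; $2\pi$-periodicity and $\overline{Z_\omega}=Z_{-\omega}$ are inherited from $\ZN{\omega}$ because the $X_t$ are real-valued.

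\emph{Cram\'er representation.} With $Z_\omega$ in hand, I would apply the stochastic integral machinery of section~\ref{proofStochInt} to the integrand $U_\lambda=e^{\im\lambda t}\,I_{H_\cnum}\in\mathcal{B}_{\infty}$. A direct computation on the pre-limit $\ZN{\omega}$ shows, for $|t|\leq N$,
\[
\int_{-\pi}^{\pi}e^{\im\lambda t}\,dZ^{(N)}_{\lambda}
=\frac{1}{2\pi}\lsum_{|s|\leq N}X_s\int_{-\pi}^{\pi}e^{\im\lambda(t-s)}\,d\lambda=X_t.
\]
Continuity of the isometric extension $\mathcal{T}$ defined in section~\ref{proofStochInt} then passes the identity to the limit $N\to\infty$, giving $X_t=\int_{-\pi}^{\pi}e^{\im\lambda t}\,dZ_\lambda$ a.e.\ in $\Hspace$.

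\emph{Cumulant tensors.} For \eqref{WNcum}, multilinearity of cumulants \eqref{cumlin} yields
\[
\cum(\ZN{\omega_1},\ldots,\ZN{\omega_k})=\frac{1}{(2\pi)^k}\lsum_{t_1,\ldots,t_k=-N}^{N}\cum(X_{t_1},\ldots,X_{t_k})\lprod_{j=1}^{k}c_{t_j}(\omega_j).
\]
Stationarity reduces the inner cumulant to $\mathcal{C}_{t_2-t_1,\ldots,t_k-t_1}$; substituting $s_j=t_j-t_1$ and applying Fubini factors the expression into a phase $\sum_{|t_1|\leq N}e^{-\im t_1(\lambda_1+\cdots+\lambda_k)}$ and a truncated sum $\sum_{s_2,\ldots,s_k}\mathcal{C}_{s_2,\ldots,s_k}\,e^{-\im\sum_{j\geq 2}\lambda_j s_j}$. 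The latter converges in $L^2$ to $(2\pi)^{k-1}\F_{\lambda_2,\ldots,\lambda_k}$ by the summability hypothesis, while the former, paired with the bounded indicator test kernels $\prod_j\mathbf{1}_{(-\pi,\omega_j]}$, converges to $2\pi\,\eta(\lambda_1+\cdots+\lambda_k)$. On the support $\sum_j\lambda_j\equiv 0\pmod{2\pi}$, the stationarity symmetry identifies $\F_{\lambda_2,\ldots,\lambda_k}$ with $\F_{\alpha_1,\ldots,\alpha_{k-1}}$ in any chosen relabeling, yielding \eqref{WNcum}.

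\emph{Main difficulty.} The principal obstacle will be rigorously justifying the emergence of the Dirac comb in the $L^2$-convergence of the cumulant tensors in $\bigotimes_{j=1}^{k}H_\cnum$, since $\sum_{|t|\leq N}e^{-\im t\lambda}$ has no pointwise limit. One must pair the identity against the finite-energy indicator test kernels $\prod_j\mathbf{1}_{(-\pi,\omega_j]}(\lambda_j)$ and invoke a Parseval--Plancherel-type argument, using the strong summability $\sum\snorm{\cumop{k}}_2<\infty$ to control the exchange of limits.
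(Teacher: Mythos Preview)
Your convergence argument via Herglotz and the Cauchy estimate is sound; the paper instead defines $Z_\omega$ first through the Kolmogorov isomorphism $\mathcal{T}:\overline{\text{sp}}\{X_t\}\to L^2([-\pi,\pi],\mu)$ with $\mathcal{T}X_t=e^{\im t\cdot}$ and $Z_\omega=\mathcal{T}^{-1}(1_{(-\pi,\omega]})$, and then identifies $Z^{(N)}_\omega$ as an approximant. Both routes work.

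The substantive issue is in the cumulant step. Your substitution $s_j=t_j-t_1$ does \emph{not} yield a clean factorization: after the change of variables the summation constraints become $|t_1|\leq N$ and $|t_1+s_j|\leq N$ for $j\geq 2$, so the $t_1$-sum and the $s_j$-sums are coupled. Separating them requires boundary estimates you have not supplied, and even once separated you still face the Dirac-comb limit you rightly flag as the principal obstacle.

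The paper sidesteps both problems by inserting the $k$-variable spectral inversion formula
\[
\cum(X_{t_1},\ldots,X_{t_k})
=\int_{\Pi^{k}}e^{\im\sum_j\alpha_j t_j}\,\eta\Big(\lsum_{j}\alpha_j\Big)\,
\F_{\alpha_1,\ldots,\alpha_{k-1}}\,d\alpha_1\cdots d\alpha_k
\]
\emph{before} summing over the $t_j$. This keeps the $t_j$-sums fully decoupled, and each one produces exactly the $N$-th Fourier partial sum $b_{\omega_j,N}(\alpha_j)=\sum_{|t_j|\leq N}\tilde b_{\omega_j,t_j}\,e^{\im\alpha_j t_j}$ of the indicator $1_{(-\pi,\omega_j]}$. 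The Dirac constraint $\eta$ is already present in the inversion formula and never has to emerge from a limit. Convergence then reduces to the classical scalar fact that the $b_{\omega,N}$ are uniformly bounded and converge pointwise a.e.\ to $1_{(-\pi,\omega]}$ (Proposition~\ref{propidprox}), combined with a telescoping bound and $\sup_\alpha\snorm{\F_{\alpha_1,\ldots,\alpha_{k-1}}}_2<\infty$. This is precisely Brillinger's classical argument (his Theorem~4.6.1) transported to the tensor setting, and it buys a clean dominated-convergence proof with no distributional or Parseval machinery.
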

The final statement of the above theorem suggests the use of the differential notation
\begin{align*}
\cum\big(dZ_{\omega_1}(\tau_1),\ldots,dZ_{\omega_k}(\tau_k)\big)
&=\eta(\omega_1+\ldots+\omega_k)\,
     f_{\omega_1,\ldots,\omega_{k-1}}(\tau_1,\ldots,\tau_k)\,
     d\omega_1\cdots d\omega_k. 
\end{align*} 
\begin{proof}[Proof of Theorem \ref{cumWN}] 
The theorem generalizes Theorem 4.6.1 of \citet{Brillinger}. Let $\mu$ be the measure on the interval $[-\pi,\pi]$ given by 
\[
\mu(A)=\int_{A}\snorm{\mathcal{F}_{\omega}}_1\,d\omega,
\]
for all Borel sets $A\subseteq[-\pi,\pi]$. Similar to the time series setting, it has been shown \citep{Panar2013b} that there is a unique isomorphism $\Iso$ of $\overline{\text{sp}}\{X_t \}_{t \in \znum}$ onto $L^2_\cnum([-\pi,\pi],\mu)$ such that
\[
\Iso X_{t}= e^{\im t \cdot}
\]
for all $t\in\znum$. The process defined by $Z_{\omega} = \Iso^{-1}\big(1_{(-\pi,\omega]}(\cdot)\big)$ is then a functional orthogonal increment process of which the second order properties are completely determined by the spectral density operator $\F$. We have
\begin{align*}
\Iso(Z_{\omega}-Z_{\nu}) &= 1_{(\nu,\omega]}(\cdot), \qquad -\pi < \nu < \omega < \pi,
\intertext{and for $b_j\in\cnum$, $j=1,\ldots,N$}
\Iso \Big(\lsum_{j=1}^{N} b_j X_{t_j}\Big) &= \lsum_{j=1}^{N} b_j e^{\im t_j (\cdot)}.
\end{align*}
For the first part of the proof, we shall use that the function $1_{(-\pi,\omega]}(\cdot)$ can be approximated by the $N$-th order Fourier series approximation
\begin{align*}
b_{N}(\lambda) & =\lsum_{|t|\leq N}\tilde{b}_{\omega, t}\,e^{\im t \lambda},
\end{align*}
where the Fourier coefficients are given by
\begin{align}
\label{eq:coefh}
\tilde{b}_{\omega,t} =\frac{1}{2\pi} \int^{\pi}_{-\pi} 1_{(-\pi,\omega]}(\lambda)\,e^{-\im t \lambda}\,d\lambda.
\end{align}
The approximation satisfies  the properties listed in the following proposition \citep[][Proposition 4.11.2]{BrockwellDavis}.
\begin{proposition} \label{propidprox}
Let $\{b_{N}\}_{N \ge 1}$ be the sequence of functions defined in \eqref{eq:coefh}. Then for $-\pi < \nu < \omega < \pi$,
\begin{romanlist}
\item \label{prop1FS}
$\DS\sup_{\lambda \in [-\pi,\pi] \setminus \mathcal{E}} | b_{N}(\lambda) - 1_{(\nu,\omega]}(\lambda)\big| \to 0$ as $N \to \infty$, where $\mathcal{E}$ is an open subset of $[-\pi,\pi]$ containing both $\nu$ and $\omega$; 
\item \label{prop2FS}
$\DS\sup_{\lambda \in [-\pi,\pi] } |b_{N}(\lambda) | \leq C < \infty$ for all $N \geq 1$.
\end{romanlist}
\end{proposition}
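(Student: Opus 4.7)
The plan is to exploit the convolutional representation of $b_N$ against the Dirichlet kernel and then invoke two classical theorems of Fourier analysis for partial sums of step functions. Substituting the explicit Fourier coefficients into the partial sum and exchanging summation and integration yields
\[
b_N(\lambda)=\frac{1}{2\pi}\int_{-\pi}^{\omega} D_N(\lambda-\mu)\,d\mu,
\qquad
D_N(x)=\sum_{|t|\leq N} e^{itx}=\frac{\sin\!\big((N+\tfrac12)x\big)}{\sin(x/2)},
\]
after which both assertions reduce to a careful analysis of this oscillatory integral.

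For item (i), I would invoke the Dirichlet--Jordan theorem: if $f$ is of bounded variation on $[-\pi,\pi]$ and continuous on a closed interval $K$, then its partial Fourier sums converge to $f$ uniformly on $K$. Since the relevant indicator function has total variation at most $2$ and is continuous outside its jump set, the convergence is uniform on any closed set $[-\pi,\pi]\setminus\mathcal{E}$ that excludes an open neighborhood of the jumps, giving exactly the claim.

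For item (ii), the change of variables $u=(N+\tfrac12)(\lambda-\mu)$ rewrites the integral as
\[
b_N(\lambda)=\frac{1}{\pi}\int_{\alpha_N(\lambda)}^{\beta_N(\lambda)}
\frac{\sin u}{(2N+1)\sin\!\big(u/(2N+1)\big)}\,du,
\]
with $\alpha_N(\lambda),\beta_N(\lambda)$ determined by $\lambda$ and the support of the indicator. Since the integrand is uniformly close to $(\sin u)/u$ on compact sets and the sine integral $\mathrm{Si}(x)=\int_0^x\tfrac{\sin u}{u}\,du$ is uniformly bounded for $x\in\mathbb{R}$, I extract a universal constant $C$ with $|b_N(\lambda)|\leq C$ for every $\lambda\in[-\pi,\pi]$ and every $N\geq 1$.

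The principal obstacle is item (ii) near the jumps: the Gibbs phenomenon produces an overshoot, and a naive bound by $\|D_N\|_{L^1}=O(\log N)$ would be insufficient. The delicate point is that the specific step-function structure induces enough cancellation in the sine integral above to keep the overshoot---and hence the whole partial sum---bounded uniformly in $N$, which is precisely what (ii) asserts.
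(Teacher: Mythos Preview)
The paper does not actually prove this proposition: it is stated verbatim as Proposition~4.11.2 of \citet{BrockwellDavis} and simply cited. Your sketch is essentially the classical argument one finds in that reference (or in Zygmund), so in spirit you are reproducing the textbook proof rather than offering an alternative route.

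One imprecision in your argument for~(ii) is worth flagging. After the change of variables the integration limits $\alpha_N(\lambda),\beta_N(\lambda)$ grow linearly in $N$, so saying the integrand is ``uniformly close to $(\sin u)/u$ on compact sets'' is not by itself enough. What you actually need is that
\[
\sup_{a<b}\;\Big|\int_a^b D_N(x)\,dx\Big|\leq C
\]
uniformly in $N$, which follows from the elementary inequality $|\sin(x/2)|\geq |x|/\pi$ on $[-\pi,\pi]$ together with the uniform boundedness of $\mathrm{Si}(x)=\int_0^x (\sin u)/u\,du$ over all of $\mathbb{R}$, not merely over compact sets. Once this is made explicit, the bound $|b_N(\lambda)|\leq C$ drops out immediately, and your treatment of the Gibbs overshoot is correct.
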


Note then that we can write
\begin{align*}
Z^{(N)}_{\omega} = \frac{1}{2\pi}\sum_{|t|\leq N}X_{t}\int^{\pi}_{-\pi} 1_{(-\pi,\omega]}(\lambda)\,e^{\im t \lambda}\,d\lambda
=\sum_{|t|\leq N}\tilde{b}_{\omega,t}\,X_{t},
\end{align*}
where $\{\tilde{b}_{\omega,t}\}_{t\in\nnum}$ are the Fourier coefficients of the indicator function $1_{(-\pi,\omega]}$. Therefore,
\begin{align*}
\cum &\big(Z^{(N)}_{\omega_1},\ldots,Z^{(N)}_{\omega_k}\big)\\
& = \sum_{|t_1|,\ldots, |t_k|\leq N} \tilde{b}_{\omega_1, t_1}\cdots \tilde{b}_{\omega_k, t_k}\,
\cum\big(X_{t_1}\ldots , X_{t_k}\big)
\intertext{and by stationarity of the process $X_t$}
&= \sum_{|t_1|,\ldots, |t_k|\leq N} \tilde{b}_{\omega_1, t_1}\cdots \tilde{b}_{\omega_k, t_k}\,\int_{{\Pi}^{k}} e^{\im (\alpha_1 t_1 +\ldots + \alpha_{k} t_{k})}  \eta\Big(\lsum_{j=1}^{k} \alpha_j\Big)\, \F_{\alpha_1 \ldots \alpha_{k-1}}\,d\alpha_1 \cdots d\alpha_{k}\\
& = \int_{{\Pi}^{k}}   \eta\Big(\lsum_{j=1}^{k} \alpha_j\Big)\,\F_{\alpha_1 \ldots \alpha_{k-1}}    \prod_{i=1}^{k}\sum_{|t_i|<N} \bigg(\int_{{\Pi}^{k}}  1_{(-\pi,\omega_i]}(\lambda_i) e^{-\im t_i \lambda_i} d\lambda_{i} \bigg)
\,e^{\im \alpha_i t_i} d\alpha_1 \cdots d\alpha_{k} \\
&=  \int_{{\Pi}^{k}}   \eta\Big(\lsum_{j=1}^{k} \alpha_j\Big)\, \F_{\alpha_1 \ldots \alpha_{k-1}}\, b_{\omega_1,N}(\alpha_1 ) \cdots b_{\omega_k,N}(\alpha_k)\,d\alpha_1 \cdots d\alpha_{k} .
\end{align*}
To show convergence, recall that the kernel function $\F_{\alpha_1 \ldots \alpha_{k-1}}$ is bounded and uniformly continuous in the manifold $\sum_{j=1}^{k} \alpha_j \equiv 0 \mod(2\pi)$ with respect to $\| \cdot\|_2$. An application of H{\"o}lder's inequality yields
\begin{align*}
\Bignorm{ & \int_{{\Pi}^{k}}   \eta(\lsum_{j=1}^{k} \alpha_j) \F_{\alpha_1 \ldots \alpha_{k-1}}  \Big[ b_{\omega_1,N}(\alpha_1 ) \cdots b_{\omega_k,N}(\alpha_k) -1_{(-\pi,\omega_1]}(\alpha_1) \cdots 1_{(-\pi,\omega_k]}(\alpha_k)  \Big]d\alpha_1 \cdots d\alpha_{k} }_2  
\\& \le \sup_{\alpha_1,\ldots,\alpha_{k-1}}\|  \F_{\alpha_1 \ldots \alpha_{k-1}}\|_2 \int_{{\Pi}^{k}} \Big| b_{\omega_1,N}(\alpha_1 ) \cdots b_{\omega_k,N}(\alpha_k) -1_{(-\pi,\omega_1]}(\alpha_1) \cdots 1_{(-\pi,\omega_k]}(\alpha_k)  \Big| d\alpha_1 \cdots d\alpha_{k} 
\end{align*}
A standard telescoping argument together with Proposition \ref{propidprox} gives \begin{align*}  
%& \le K\int_{{\Pi}^{k}} \sum_{j_1=1}^{k} \prod_{j_2=1}^{j_1-1} \prod_{j_3=j_1+1}^{k} \Big| b_{j_2,N}(\alpha_{j_2} ) %1_{(-\pi,\omega_1]}(\alpha_{j_{3}} )\Big|  \Big|b_{j_1,N}(\alpha_{j_1})-1_{(-\pi,\omega_{j_1}]}(\alpha_{j_1})  \Big| d\alpha_1 %\cdots d\alpha_{k} 
& \le K\int_{{\Pi}^{k}} \sum_{j=1}^{k} \lprod_{l=1}^{j-1} \big|b_{\omega_l,N}(\alpha_{l})\big|
\lprod_{l=j+1}^{k} \big| 1_{(-\pi,\omega_1]}(\alpha_{l} )\big|\,  \big|b_{\omega_j,N}(\alpha_{j})-1_{(-\pi,\omega_{j}]}(\alpha_{j})\big|
\,d\alpha_1 \cdots d\alpha_{k} \\
& \le K\,k\,\big(\sup_{1\leq j\leq k}\sup_{\alpha} |b_{\omega_j,N}(\alpha)|\big)^{k-1}\,\sup_{\omega} \int_{{\Pi}}   \big|b_{\omega,N}(\alpha)- 1_{(-\pi,\omega]}(\alpha)  \big| d\alpha \to 0
\end{align*}
as $N\to\infty$.
Hence, the dominated convergence theorem implies
\begin{align*}
& \lim_{N \to \infty} \cum \big(Z^{(N)}_{\omega_1},\ldots,Z^{(N)}_{\omega_k}\big)
\\&  =
 \frac{1}{({2\pi})^{k}}  \int_{{\Pi}^{k}}1_{(-\pi,\omega_1]}(\alpha_1) \cdots 1_{(-\pi,\omega_k]}(\alpha_k) \F_{\alpha_1 \ldots \alpha_{k-1}} \eta\Big(\lsum_{j=1}^{k} \alpha_j\Big) d\alpha_1 \cdots d\alpha_{k} \\& 
= \frac{1}{({2\pi})^{k}} \int_{-\pi}^{\omega_1} \cdots  \int_{-\pi}^{\omega_k} \eta\Big(\lsum_{j=1}^{k} \lambda_j\Big)\,\F_{\alpha_1 \ldots \alpha_{k-1}}d\lambda_1 \cdots d\lambda_{k},  \\&
=  \cum \big(Z_{\omega_1},\ldots,Z_{\omega_k}\big) \tageq \label{eq:cumZlim}
\end{align*}
which establishes the $L^2$ convergence in \eqref{WNcum}. The almost everywhere convergence is proved similarly by replacing $\F$ by $f(\tau_1,\ldots,\tau_k)$. 
In order to show that $X_t  = \int_{-\pi}^{\pi} e^{\im \omega t} dZ_{\omega}$ with probability 1, it remains to show that 
\begin{align}\label{eq:sprepZ}
\E \Bignorm{ X_t - \int_{-\pi}^{\pi} e^{\im \omega t} dZ_{\omega}}_2^2 = 0. \end{align}
We refer to \citet{Panar2013b} for a proof.

\end{proof}

%%%%%%%%%%%%%%%%%%%%%%%%%%%%%%%%%%%%%%%%%%%%%%%%%%%%%%%%

% !TEX spellcheck = en_US
%\subsection{Proofs section \ref{section3}} \label{proofsection3}
%%%%%%%%%%%%%%%%%%%%%%%%%%%%%%%%%%%%%%%%%%%%%%%%%%%%%%%%%%%%%%%%%%%%%%%%%%%%%%%
%%%%%%%%%%%%%%%%%%%%%%%%%%%%%%%%%%%%%%%%%%%%%%%%%%%%%%%%%%%%%%%%%%%%%%%%%%%%%%%
\subsection{Higher order dependence for time-dependent linear models} \label{MAfilters}
\begin{proposition} \label{MAcumrep}
Let $\{\veps_t\}_{t \in \znum}$ be a functional \IID process in $H$ with $\E\|\varepsilon_0 \|^k_2 < \infty$, $k \in \nnum$ and let $\{\AtT{s}\}_{s\in\znum}$ be a sequence of operators in $S_{\infty}(H)$ satisfying $\sum_{s}\snorm{\AtT{s}}_{\infty}<\infty$ for all $t=1,\ldots,T$ and $T \in\nnum$. Then the process $\XNT{t}=\sum_{|s| \le N } \AtT{s} \varepsilon_{t-s}$ has the following properties:
\begin{romanlist}
\item\label{item:Macumone}
$\XNT{t}$ converges to a process $\XT{t}$ in $L^k_H(\Omega,\prob)$;
\item\label{item:Macumtwo}
$\DS\cum(\XT{t_1},\ldots,\XT{t_k}) = \big(\lsum_{s_1 \in \znum} \AttT{t_1}{s_1} \otimes \cdots \otimes  \lsum_{s_k \in \znum} \AttT{t_k}{s_k} \big) \cum(\varepsilon_{{t_1}-s_1},\ldots, \varepsilon_{{t_k}-s_k} )$, where the convergence is with respect to $\norm{\cdot}_2$.
\end{romanlist}
\end{proposition}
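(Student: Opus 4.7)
The plan is to handle (i) by a direct Cauchy-sequence argument in $L^k_H(\Omega,\prob)$ and then derive (ii) by first working with the truncated process, exploiting multilinearity of cumulant tensors, and passing to the limit.

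For (i), I would fix $t$ and compare $\XNT{t}_M$ and $\XNT{t}_N$ for $M>N$. Since $\XNT{t}_M-\XNT{t}_N=\sum_{N<|s|\leq M}\AtT{s}\,\veps_{t-s}$, Minkowski's inequality in $L^k_H(\Omega,\prob)$ combined with the elementary bound $\|\AtT{s}\,\veps_{t-s}\|_2\leq\snorm{\AtT{s}}_\infty\,\|\veps_{t-s}\|_2$ and stationarity of $\{\veps_t\}$ yields
\[
\big\|\XNT{t}_M-\XNT{t}_N\big\|_{L^k_H(\Omega,\prob)}
\leq\big(\E\|\veps_0\|_2^k\big)^{1/k}\!\!\lsum_{N<|s|\leq M}\snorm{\AtT{s}}_\infty.
\]
Under the assumption $\sum_s\snorm{\AtT{s}}_\infty<\infty$ and $\E\|\veps_0\|_2^k<\infty$, the right-hand side tends to zero, so the sequence is Cauchy. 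Completeness of $L^k_H(\Omega,\prob)$ then provides a limit $\XT{t}$.

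For (ii), I would start from the truncated process, where only finitely many terms are involved. By the linearity property \eqref{cumlin} of cumulant tensors (applied componentwise through the identification of operators $\AttT{t_i}{s_i}$ acting on $H$ as bounded operators between the relevant tensor factors), one obtains
\[
\cum\big(\XNT{t_1}_N,\ldots,\XNT{t_k}_N\big)
=\!\!\lsum_{|s_1|,\ldots,|s_k|\leq N}\!\!
\big(\AttT{t_1}{s_1}\otimes\cdots\otimes\AttT{t_k}{s_k}\big)\,
\cum\big(\veps_{t_1-s_1},\ldots,\veps_{t_k-s_k}\big).
\]
The two sides must now be shown to converge to the claimed identity as $N\to\infty$. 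On the left, since $\XNT{t}_N\to\XT{t}$ in $L^k_H(\Omega,\prob)$, the moment tensors of order up to $k$ converge in $\norm{\cdot}_2$ by repeated application of H{\"o}lder's inequality, and hence so do the cumulant tensors via the moment-to-cumulant formula.

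On the right, I would exploit that $\{\veps_t\}$ is i.i.d., so that $\cum(\veps_{t_1-s_1},\ldots,\veps_{t_k-s_k})$ vanishes unless $t_1-s_1=\cdots=t_k-s_k$, and is uniformly bounded in $\snorm{\cdot}_2$ by a constant $C=C(k,\veps)$ depending only on moments of $\veps_0$ up to order $k$ (obtained from the cumulant--moment inversion and $\E\|\veps_0\|_2^k<\infty$). Combined with the operator tensor bound $\snorm{\AttT{t_1}{s_1}\otimes\cdots\otimes\AttT{t_k}{s_k}}_\infty=\prod_{i=1}^k\snorm{\AttT{t_i}{s_i}}_\infty$ from Corollary \ref{tensorprodHS}, the full sum is dominated by $C\prod_{i=1}^k\sum_{s_i}\snorm{\AttT{t_i}{s_i}}_\infty<\infty$. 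Absolute convergence in $\norm{\cdot}_2$ then allows interchange of the limit and sum, yielding the stated identity.

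The main obstacle I anticipate is verifying that convergence $\XNT{t}_N\to\XT{t}$ in $L^k_H(\Omega,\prob)$ is strong enough to carry through to convergence of the $k$-th order cumulant tensors in $\norm{\cdot}_2$. This requires writing the cumulant tensor as a polynomial in the moment tensors and bounding each resulting moment difference using H{\"o}lder's inequality together with uniform $L^k$-boundedness of $\XNT{t}_N$, which in turn follows from part (i).
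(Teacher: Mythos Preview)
Your proposal is correct and, for part~(i), essentially identical to the paper's argument (tail bound via Minkowski/H{\"o}lder and completeness of $L^k_H(\Omega,\prob)$). For part~(ii), both you and the paper start from the truncated identity via \eqref{cumlin}, but the limiting step differs slightly. The paper does not separate left and right sides: it reduces the interchange of sum and cumulant to an interchange for moments, projects onto basis elements $\psi_l$, and applies dominated convergence using the bound $\E\bignorm{\prod_j\sum_{s_j}\AttT{t_j}{s_j}\veps_{t_j-s_j}\psi_l}_1\leq(\sup_{t,T}\sum_s\snorm{\AttT{t}{s}}_\infty)^k\E\|\veps_0\|_2^k$. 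In particular, the paper's argument never invokes the vanishing of mixed $\veps$-cumulants. Your route---arguing $L^k$-convergence forces convergence of cumulant tensors on the left, and using the i.i.d.\ structure plus absolute summability on the right---is equally valid and arguably more transparent about why the right-hand series converges absolutely; it does, however, make essential use of the i.i.d.\ hypothesis, whereas the paper's dominated-convergence step would go through under weaker moment assumptions on~$\{\veps_t\}$.
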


%%%%%%%%%%%%%%%%%%%%%%%%%%%%%%%%%%%%%%%%%%%%%%%%%%%%%%%%%%%%%%%%%%%%%%%%%%%%%%%
\begin{proof}[Proof of Proposition \ref{MAcumrep}]
For the first equality, we need to show that 
\[\lim_{N \to \infty} \E \|\XNT{t}-\XT{t} \|^k_2=0.\] 
We will do this by demonstrating that the the tail series $X^{-(N)}_{t,T}= \sum_{ s =N+1 }^{M} \AtT{s} \varepsilon_{t-s}$ converges. Since $\|\AtT{s} \varepsilon_t\|_2 \le \snorm{\AtT{s}}_{\infty} \|\varepsilon_t\|_2$, an application of the generalized H{\"o}lder's Inequality yields
\begin{align*}
\E\|X^{-(N)}_{t_1,T}\|^k_2 & \le \sum_{s_1,\ldots,s_k = N+1 }^{M} \snorm{\AttT{t_1}{s_1}}_{\infty} \cdots  \snorm{\AttT{t_k}{s_k}}_{\infty} \E \big[\|\varepsilon_{t_1-s_1}\|_2 \cdots \|\varepsilon_{t_k-s_k}\|_2\big] \\& \le \sum_{|s_1|,\ldots,|s_k| > N } \snorm{\AttT{t_1}{s_1}}_{\infty} \cdots  \snorm{\AttT{t_k}{s_k}}_{\infty} \big[\E \|\varepsilon_{t_1-s_1}\|^k_2 \cdots \E \|\varepsilon_{t_k-s_k}\|^k_2 \big]^{1/k} \\&  \le \big(\sum_{|s| > N } \snorm{\AtT{s}}_{\infty}\Big)^k \E\|\varepsilon_{0}\|^k_2 < \infty,  
\end{align*}
uniformly in $M$. Hence, $\lim_{N \to \infty} \big(\mean\norm{\XT{t}^{-(N)}}^k_2\big)^{1/k}=0$.

We now prove (\romannumeral 0\ref{item:Macumtwo}). By Proposition \ref{momlin} and (\romannumeral 0\ref{item:Macumone}), we have 
\begin{align*}
\cum(\AttT{t_1}{s_1}\varepsilon_{{t_1}-s_1},\ldots,  \AttT{t_k}{s_k}  \varepsilon_{{t_k}-s_k} ) = \Big(\AttT{t_1}{s_1} \otimes \cdots \otimes  \AttT{t_k}{s_k} \Big)  \cum(\varepsilon_{{t_1}-s_1},\ldots, \varepsilon_{{t_k}-s_k} ). 
\end{align*}
It is therefore sufficient to show that 
\begin{align*}
 \cum\Big(\lsum_{s_1 \in \znum} \AttT{t_1}{s_1}\varepsilon_{{t_1}-s_1},\ldots, \lsum_{s_k \in \znum} \AttT{t_k}{s_k} \varepsilon_{{t_k}-s_k})= \sum_{s_1,\ldots,s_k \in \znum}\cum(\AttT{t_1}{s_1}\varepsilon_{{t_1}-s_1},\ldots, A_{s_k,t_{k}, T} \varepsilon_{{t_k}-s_k} ).
\end{align*}
Let $\{\psi_l\}_{l \in \nnum}$ be an orthonormal basis of $H$. Then $\{\psi_{l_1} \otimes \cdots \otimes \psi_{l_k} \}_{l_1,\ldots,l_k \ge 1}$ forms an orthonormal basis  $\bigotimes_{j=1}^{k} H$.  For the partial sums 
\begin{align*}
\lsum_{s_j=1 }^{N} \AttT{t_j}{s_j}\varepsilon_{{t_j}-s_j}, \qquad j=1,\cdots, k,
\end{align*}
we obtain by virtue of the triangle inequality, the Cauchy-Schwarz Inequality and generalized H{\"o}lder Inequality
\begin{align*}
\E\| \lprod_{j=1}^{k} \lsum_{s_j=1 }^{N} \AttT{t_j}{s_j}\varepsilon_{{t_j}-s_j}\psi_{l}\|_1
&\leq  \lprod_{j=1}^{k} \E \Bignorm{\lsum_{s_j=1 }^{N} \AttT{t_j}{s_j}\varepsilon_{{t_j}-s_j}\psi_{l}}_1\\
&\leq \big(\sup_{t,T}\lsum_{s \in \znum}\snorm{\AttT{t}{s}}_{\infty}\big)^k\, \E\|\varepsilon_0\|^k_2 < \infty.
\end{align*}
The result now follows by the dominated convergence theorem.
\end{proof}
\subsection{Existence of the stochastic integral} \label{proofStochInt}
In order to provide sufficient conditions for local stationarity of functional processes in terms of spectral representations, we turn to investigating the conditions under which stochastic integrals
$\int_{-\pi}^{\pi} U_{\omega}\,dZ_{\omega}$ for $S_{\infty}(H_\cnum)$-valued functions $U_{\omega}$ are well-defined. For this, let $\mu$ be a measure on the interval $[-\pi,\pi]$ given by 
\[
\mu(A)=\int_{A}\snorm{\mathcal{F}_{\omega}}_1\,d\omega, \tageq \label{eq:measure}
\]
for all Borel sets $A\subseteq[-\pi,\pi]$ and let $\mathcal{B}_{\infty} =L^2_{S_{\infty}(H_\cnum)}([-\pi,\pi],\mu)$ be the corresponding Bochner space of all strongly measurable functions $U:[-\pi,\pi]\to S_{\infty}(H_\cnum)$ such that 
\[
\|U\|^2_{\mathcal{B}_\infty} =\int_{-\pi}^{\pi} \snorm{U_{\omega}}^2_{\infty} d\mu(\omega)<\infty.   \tageq \label{eq:Bnorm}
\]
\citet{Panar2013b} showed that the stochastic integral is well defined in $\mathbb{H}_\cnum$ for operators that belong to the Bochner space $\mathcal{B}_{2}=L^2_{S_{2}(H_\cnum)}([-\pi,\pi],\mu)$, which is a subspace of $\mathcal{B}_\infty$. In particular, it contains all functions $U:[-\pi,\pi]\to S_{2}(H_\cnum)$ of the form
\[
U_\omega=g(\omega)\,I + A_{\omega},
\]
where $g$ and $A$ are, respectively, $\cnum$ and $S_2(H_\cnum)$-valued functions that are both c{\`a}dl{\`a}g with a finite number of jumps and $A$ additionally satisfies $\int_{-\pi}^{\pi} \snorm{A_{\omega}}^2_{2}\, \snorm{\mathcal{F}_{\omega}}_1\,d\omega < \infty$. Here, continuity in $S_2(H_\cnum)$ is meant with respect to the operator norm $\snorm{\cdot }_{\infty}$. Because the space $\mathcal{B}_{2}$ is too restrictive to include interesting processes such as general functional autoregressive processes, we first show that the integral is properly defined in $\mathbb{H}_\cnum$ for all elements of $\mathcal{B}_\infty$. To do so, consider the subspace $\mathcal{Q}_0 \subset \mathcal{B}_{\infty}$ of step functions spanned by elements $U\,\mathbf{1}_{[\alpha,\beta)}$ for $U \in S_{\infty}(H_\cnum)$ and $\alpha < \beta \in [-\pi, \pi]$. Additionally, denote its closure by $\mathcal{Q} = \overline{\mathcal{Q}_0}$. Define then the mapping $\mathcal{T}: \mathcal{Q}_0 \mapsto \mathbb{H}_\cnum$ by linear extension of
\begin{align}\label{eq:chap2isomap}
\mathcal{T}(U\,\mathbf{1}_{[\alpha,\beta)}) = U(Z_{\beta}-Z_{\alpha}).
\end{align}
The following lemma shows that the image of $\mathcal{T}$ is in $\mathbb{H}_\cnum$.

\leqnomode
\begin{lemma}\label{bochnerlemma}
Let $X_t$ be a functional process with spectral representation $X_t= \int_{-\pi}^{\pi} e^{\mathrm{i}\omega t} d Z_{\omega}$ for some functional orthogonal increment process $Z_{\omega}$ that satisfies $\E \| Z_{\omega}\|_2^2 = \int_{-\pi}^{\omega}\snorm{ \mathcal{F}_{\lambda}}_1 d \lambda$. Then for $U_1, U_2 \in S_{\infty}(H_\cnum)$ and $\alpha,\beta\in[-\pi,\pi]$
\begin{align}
\tag{$i$} \langle U_1Z_{\alpha}, U_2 Z_{\beta}\rangle_{\mathbb{H}_\cnum} &= \trace\Big(U_1 \Big[\int_{-\pi}^{\alpha\wedge\beta}\mathcal{F}_{\omega}\,d\omega \Big] U_2^{\dagger}\Big)
\intertext{and}
\tag{$ii$} \|U_1 Z_{\alpha}\|^2_{\mathbb{H}_\cnum} &\le \snorm{U_1}^2_{\infty}   \int_{-\pi}^{\alpha}\snorm{ \mathcal{F}_{\lambda}}_1\,d\lambda. 
\end{align}
\end{lemma}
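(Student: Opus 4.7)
The plan is to reduce both identities to properties of the covariance tensor of the orthogonal increment process $Z_\omega$ and then apply the tensor/trace identities recorded in Definition \ref{tensorproductop} together with H{\"o}lder's inequality for operators (Proposition \ref{holderoperator}). The starting observation is that, by the definition of the inner product on $\mathbb{H}_\cnum$,
\[
\innerprod{U_1 Z_\alpha}{U_2 Z_\beta}_{\mathbb{H}_\cnum}
=\mean\innerprod{U_1 Z_\alpha}{U_2 Z_\beta}_{H_\cnum}
=\mean\,\trace\big((U_1 Z_\alpha)\otimes(U_2 Z_\beta)\big),
\]
since for rank-one tensors the trace coincides with the inner product. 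Using that $U_1,U_2$ are bounded (hence Bochner expectations commute with their action), the right-hand side equals $\trace\big((U_1\otimes U_2)\,\mean[Z_\alpha\otimes Z_\beta]\big)$, which by identity \eqref{eq:tensorabc} rewrites as $\trace\big(U_1\,\mean[Z_\alpha\otimes Z_\beta]\,U_2^{\dagger}\big)$.

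The next step is to identify the covariance tensor $\mean[Z_\alpha\otimes Z_\beta]$. Here I invoke Theorem \ref{cumWN} with $k=2$, which gives
\[
\cum(Z_\alpha,Z_\beta)
=\int_{-\pi}^{\alpha}\int_{-\pi}^{\beta}\eta(\lambda_1+\lambda_2)\,\F_{\lambda_1}\,d\lambda_2\,d\lambda_1.
\]
Combining this with the defining orthogonal-increment relation $\mean\norm{Z_\omega}_2^2=\int_{-\pi}^{\omega}\snorm{\F_\lambda}_1\,d\lambda$ collapses the double integral onto the diagonal manifold, yielding the clean identity
\[
\mean[Z_\alpha\otimes Z_\beta]=\int_{-\pi}^{\alpha\wedge\beta}\F_\omega\,d\omega
\]
in $S_1(H_\cnum)$. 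Substituting this into the formula obtained in the previous paragraph proves (i). The one subtle point is the permutation of expectation and trace, which is justified by the fact that the integrand is integrable in trace norm (since $\F_\omega$ is trace-class with $\int\snorm{\F_\omega}_1d\omega<\infty$); this is the step I expect will require the most care to state rigorously, but it follows from the Bochner integrability of $\omega\mapsto\F_\omega$ in $S_1(H_\cnum)$.

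For (ii) I specialize (i) to $U_1=U_2$ and $\alpha=\beta$ to get
\[
\norm{U_1 Z_\alpha}_{\mathbb{H}_\cnum}^2
=\trace\big(U_1\,G_\alpha\,U_1^{\dagger}\big)
\qquad\text{with}\qquad
G_\alpha=\int_{-\pi}^{\alpha}\F_\omega\,d\omega.
\]
Then I apply Proposition \ref{holderoperator} twice, together with the cyclic property of the trace, to bound
\[
\big|\trace(U_1 G_\alpha U_1^{\dagger})\big|\leq\snorm{U_1 G_\alpha U_1^{\dagger}}_1\leq\snorm{U_1}_\infty\,\snorm{G_\alpha}_1\,\snorm{U_1^{\dagger}}_\infty=\snorm{U_1}_\infty^2\,\snorm{G_\alpha}_1.
\]
Finally I use the triangle inequality for the Bochner integral to estimate $\snorm{G_\alpha}_1\leq\int_{-\pi}^{\alpha}\snorm{\F_\omega}_1\,d\omega$, which delivers the claimed bound. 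The overall argument is short once the covariance structure of $Z_\omega$ is in hand; the main obstacle is the bookkeeping around the complex-conjugation convention in the tensor $\mean[Z_\alpha\otimes Z_\beta]$ and its compatibility with the convention $(x\otimes y)z=\innerprod{z}{y}x$ used throughout the paper.
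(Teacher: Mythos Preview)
Your approach to part (i) is essentially the same as the paper's: both express the $\mathbb{H}_\cnum$-inner product as the (expected) trace of a rank-one tensor, commute the expectation past the bounded operators $U_1,U_2$ (Proposition \ref{momlin}), and then use the identity \eqref{eq:tensorabc} to arrive at the form $\trace(U_1[\cdots]U_2^\dagger)$. The paper justifies the interchange of expectation and the $\tau$-integral via a preliminary Cauchy--Schwarz bound; your plan does the same thing by appealing to trace-class Bochner integrability of $\omega\mapsto\F_\omega$.

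Two points are worth noting. First, for part (ii) the paper takes a shorter and logically independent route: it observes directly that $\norm{U_1 Z_\alpha}_2\leq\snorm{U_1}_\infty\norm{Z_\alpha}_2$ and then uses the hypothesis $\mean\norm{Z_\alpha}_2^2=\int_{-\pi}^{\alpha}\snorm{\F_\lambda}_1\,d\lambda$. This avoids the detour through (i) and operator H{\"o}lder; in fact the paper establishes (ii) (via the bound \eqref{eq:firstlemma1}) \emph{before} (i), and then uses it to justify the Fubini step needed for (i). Your derivation of (ii) from (i) is correct but longer.

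Second, your invocation of Theorem \ref{cumWN} is a heavier tool than necessary and, strictly speaking, imports hypotheses (finiteness of all moments, summability of all cumulant tensors) that the lemma does not assume. The paper instead uses the covariance identity $\mean(Z_\alpha\otimes\overline{Z_\beta})=\int_{-\pi}^{\alpha\wedge\beta}\F_\omega\,d\omega$ as a direct consequence of $Z_\omega$ being the orthogonal increment process of the spectral representation; this only needs the second-order structure. In your write-up you should replace the appeal to Theorem \ref{cumWN} by this observation, which is all that is actually required here.
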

\reqnomode

\begin{proof}[Proof of Lemma \ref{bochnerlemma}]
Firstly, we note that by Cauchy-Schwarz inequality
\begin{align*} 
\E\int_{0}^1 \ |U_1 Z_{\alpha}(\tau) U_2 Z_{\beta}(\tau)|d\tau
&\le  \E\| U_1Z_{\alpha}\|_2 \|U_2 Z_{\beta}\|_2
\le \snorm{U_1}_{\infty} \snorm{U_2}_{\infty} \E\|Z_{\alpha}\|_2\|Z_{\beta}\|_2\\
&\le \snorm{U_1}_{\infty} \snorm{U_2}_{\infty}   \int_{-\pi}^{\alpha\wedge\beta}\snorm{ \mathcal{F}_{\lambda}}_1 d \lambda < \infty. \tageq \label{eq:firstlemma1}
\end{align*}
Secondly, $U_1Z_{\alpha}$ and $U_2 Z_{\beta}$ are elements in $H_\cnum$ and therefore the (complete) tensor product $U_1 Z_{\alpha} \otimes U_2 \overline{Z_{\beta}}$ belongs to $S_2(H_\cnum)$. By Proposition \ref{HSkernel}, it is thus a kernel operator with kernel [$U_1 Z_{\alpha} \otimes U_2\overline{Z_{\beta}}](\tau,\sigma) = U_1 Z_{\alpha}(\tau) \overline{U_2 Z_{\beta}}(\sigma)$. An application of Fubini's Theorem yields
\begin{align*}
\E\int_{0}^1
&\ U_1 Z_{\alpha}(\tau) \overline{U_2 Z_{\beta}}(\tau) d\tau
=\int_{0}^1 \ E \big(U_1 Z_{\alpha} \otimes U_2\overline{Z_{\beta}}\big)(\tau, \tau) d\tau \\
&= \int_{0}^1 \  \big(U_1 \otimes U_2)\E(Z_{\alpha} \otimes \overline{Z_{\beta}}\big)(\tau, \tau) d\tau 
= \int_{0}^1 \  \big(U_1 \otimes U_2)\int_{-\pi}^{\alpha\wedge\beta}\mathcal{F}_{\omega}\, d\omega (\tau, \tau) d\tau \\
& =\int_{0}^1 \ U_1 \int_{-\pi}^{\alpha\wedge\beta}\mathcal{F}_{\omega}\,d\omega (\tau, \tau)\,U_2^{\dagger}\,d\tau ,
\end{align*}
where the second equality follows because the expectation commutes with bounded operators for integrable random functions (Proposition \ref{momlin}) and the last equality follows from the identity \eqref{eq:tensorabc} of definition \ref{tensorproductop}. This shows the first result of Lemma \ref{bochnerlemma}. The second result follows straightforwardly from \eqref{eq:firstlemma1}. 
\end{proof}
It is easily seen from the previous lemma that for $\lambda_1 > \lambda_2 \ge \lambda_3 > \lambda_4$
\begin{align*}
\langle U_1(Z_{\lambda_1}-Z_{\lambda_2}), U_2(Z_{\lambda_3}-Z_{\lambda_4})\rangle_{\mathbb{H}_\cnum}= 0, 
\end{align*}
demonstrating orthogonality of the increments is preserved. Since every element $U_n \in \mathcal{Q}_0$ can be written as $\sum_{j=1}^{n}U_j\mathbf{1}_{[\lambda_{j},\lambda_{j+1})}$ the lemma moreover implies
\begin{align*}
\|\mathcal{T}(U)\|^2_{\mathbb{H}_\cnum}& = \sum_{j,k=1}^{n}\langle U_j (Z_{\lambda_{j+1}}-Z_{\lambda_j}), U_k (Z_{\lambda_{k+1}}-Z_{\lambda_k})\rangle_{\mathbb{H}_\cnum}=  \sum_{j=1}^{n}\| U_j (Z_{\lambda_{j+1}}-Z_{\lambda_j})\|^2_{\mathbb{H}_\cnum} \\& \le \sum_{j=1}^{n} \snorm{U_j}^2_{\infty} \int_{\lambda_j}^{\lambda_{j+1}} \snorm{ \mathcal{F}_{\alpha}}_1 d \alpha  = \|U\|^2_{\mathcal{B}_{\infty}}.
\end{align*}
The mapping $\mathcal{T}: \mathcal{Q}_0 \mapsto \mathbb{H}_\cnum$ is therefore continuous. Together with the completeness of the space $\mathbb{H}_\cnum$ this establishes that, for every sequence $\{U_n\}_{n \ge 1} \subset \mathcal{Q}_0$ converging to some element $U \in \overline{\mathcal{Q}}$, the sequence $\{\mathcal{T}(U_n)\}_{n \ge 1}$ forms a Cauchy sequence in $\mathbb{H}_\cnum$ with limit $\mathcal{T}(U) = \lim_{n \to \infty}\mathcal{T}(U_n)$.
By linearity and continuity of the mapping $\mathcal{T}$, the limit is independent of the choice of the sequence. Furthermore, since $\mathcal{Q}_0$ is the subspace spanned by step functions that are square integrable on $[-\pi,\pi]$ with respect to the finite measure $\mu$ and hence is dense in $L^2_{S_{\infty}(H_\cnum)}([-\pi,\pi],\mu)$, we have $\mathcal{B}_{\infty}\subseteq\overline{\mathcal{Q}}$. Since $\|\mathcal{T}(U)\|_{\mathbb{H}_\cnum} \le \|U \|_{\mathcal{B}_{\infty}}$, the above extension is well-defined for all $U\in\mathcal{B}_{\infty}$.

\section{Data taper}
\label{taperandl}

In order to show convergence of the higher order cumulants of the estimator in \eqref{eq:smoothest}, we will make use of two lemmas from \citet{Dahlhaus1993} (Lemma A.4 and A.5 resp.). Both rely on the function $L_T:\mathbb{R} \to \mathbb{R}, T \in \mathbb{R}^{+}$, which is the $2 \pi$-periodic extension of
\begin{align} L_T(\lambda) =
  \begin{cases}
    T       & \quad \text{if } |\lambda| \le 1/T, \\
    1/|\lambda|  & \quad \text{if } 1/T \le |\lambda| \le \pi.\\
  \end{cases}
\end{align}
The function $L_T$ satisfies some nice properties. The following lemma lists those required in the current paper:
\begin{lemma} \label{Dh1993A4}
Let $k, l, T \in \mathbb{N}, \lambda, \alpha, \omega, \mu, \gamma \in \mathbb{R}$ and $\Pi: (-\pi, \pi]$. The following inequalities then hold with a constant $C$ independent of $T$.
\begin{romanlist}
\item
$\DS L_T(\lambda)$ is monotone increasing in $T$ and decreasing in $\lambda \in [0,\pi]$;
\item
$\DS|\lambda| L_T(\lambda) \le C$ for all $|\lam|\leq \pi$;
\item
$\DS\int_{\Pi} L_T(\lambda) d\lambda \le C\,\log T$;
\item
$\DS\int_{\Pi} L_T(\lambda)^k\,d\lambda \le C\,T^{k-1}$ for $k>1$;
\item
$\DS\int_{\Pi} L_T(\alpha-\lambda)\,L_T(\lambda+\gamma)\, d\lambda \le C\, L_T(\alpha+\gamma) \log T$.
\end{romanlist}
\end{lemma}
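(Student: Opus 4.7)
Parts (i) and (ii) follow directly from the piecewise definition of $L_T$. For (i), fix $\lambda$ and consider how $L_T(\lambda)$ changes with $T$: when $|\lambda|>1/T$, one has $L_T(\lambda)=1/|\lambda|$, which is independent of $T$; when $|\lambda|\leq 1/T$, one has $L_T(\lambda)=T$, and any increase of $T$ either keeps us in this regime (giving a larger value $T'>T$) or pushes us into the regime $L_{T'}(\lambda)=1/|\lambda|\geq T$. Monotone decrease in $\lambda\in[0,\pi]$ is similar. For (ii), check the two pieces: if $|\lambda|\leq 1/T$ then $|\lambda|\,L_T(\lambda)=|\lambda|T\leq 1$; if $1/T\leq|\lambda|\leq\pi$ then $|\lambda|\,L_T(\lambda)=1$. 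Hence $C=1$ suffices.

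For (iii) and (iv) we integrate the two pieces explicitly. By symmetry,
\[
\int_{\Pi}L_T(\lambda)\,d\lambda
=2\int_{0}^{1/T}T\,d\lambda+2\int_{1/T}^{\pi}\frac{d\lambda}{\lambda}
=2+2\log(\pi T),
\]
which is bounded by $C\log T$ for $T\geq 2$ (the small-$T$ case is absorbed in $C$). For (iv), again splitting at $1/T$,
\[
\int_{\Pi}L_T(\lambda)^k\,d\lambda
=2\int_{0}^{1/T}T^k\,d\lambda+2\int_{1/T}^{\pi}\lambda^{-k}\,d\lambda
=2T^{k-1}+\frac{2}{k-1}\bigl(T^{k-1}-\pi^{1-k}\bigr)
\leq C\,T^{k-1}.
\]

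The main work is (v), the convolution-type estimate, and this is where I expect the real obstacle. The plan is to reduce to a canonical position by substituting $\mu=\lambda-\alpha$, so the integral becomes $\int_{\Pi} L_T(-\mu)\,L_T(\mu+(\alpha+\gamma))\,d\mu$, a convolution of $L_T$ with itself evaluated at $\alpha+\gamma$. Setting $\delta:=\alpha+\gamma$, it suffices to show $\int_{\Pi} L_T(\mu)\,L_T(\mu+\delta)\,d\mu\leq C\,L_T(\delta)\log T$, using that $L_T$ is even and $2\pi$-periodic. Partition the domain into three regions based on the locations of the two near-singularities of the integrand at $\mu=0$ and $\mu=-\delta$: (a) $|\mu|\leq|\delta|/2$, where $L_T(\mu+\delta)\leq L_T(\delta/2)\leq 2L_T(\delta)$ by (i), so this part is bounded by $2L_T(\delta)\int_\Pi L_T(\mu)\,d\mu\leq C\,L_T(\delta)\log T$ via (iii); (b) $|\mu+\delta|\leq|\delta|/2$, treated symmetrically; (c) the remaining region, where both arguments stay away from their respective singularities, so at least one of the factors is bounded by a multiple of $1/|\delta|$ (using (ii)) and we bound the other integral again by (iii). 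The case split may need refinement depending on whether $|\delta|\leq 1/T$ (so that $L_T(\delta)=T$ and one effectively obtains the bound $CT\log T$ directly from (iii) and the uniform bound $L_T\leq T$) or $|\delta|>1/T$ (the generic case handled above). A careful version of this argument is standard and appears essentially as Lemma A.4(v) of \citet{Dahlhaus1993}; we reproduce it here for completeness.
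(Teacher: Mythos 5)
The paper gives no proof of this lemma at all---it is imported verbatim from Dahlhaus (1993, Lemma A.4)---so there is no in-paper argument to compare against; your computations for (i)--(iv) and the three-region splitting around the two near-singularities in (v) are the standard argument and are correct. The only points requiring the care you already flag are the degenerate case $|\alpha+\gamma|\le 1/T$ in (v), where $L_T(\alpha+\gamma)=T$ and the trivial bound $L_T\le T$ combined with (iii) finishes immediately, and the use of the $2\pi$-periodic extension when $\mu+\delta$ leaves $(-\pi,\pi]$ (after reducing $\delta=\alpha+\gamma$ modulo $2\pi$ to $(-\pi,\pi]$, the distance of $\mu+\delta$ to $2\pi\mathbb{Z}$ is still at least $|\delta|/2$ on your region (a), so the monotonicity bound goes through).
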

In addition, we also make use of Lemma 2 from \citet{Eichler2007}.
\begin{lemma}
\label{EichlerL}
Let $\{P_1,\ldots,P_m\}$ be an indecomposable partition of the table
\[
\begin{matrix}
\alpha_1&-\alpha_1\\
\vdots&\vdots\\
\alpha_n&-\alpha_n
\end{matrix}
\]
with $n\geq 3$. For $P_j=\{\gamma_{j1},\ldots,\gamma_{jd_j}\}$, let $\bar\gamma_j=\gamma_{j1}+\ldots+\gamma_{jd_j}$.
\begin{romanlist}
\item
If $m=n$ then for any $n-2$ variables
$\alpha_{i_1},\ldots,\alpha_{i_{n-2}}$ we have
\[
\int_{\Pi^{k-2}}\lprod_{j=1}^{n}
L_T(\bar\gamma_j)\,d\alpha_{i_1}\cdots d\alpha_{i_{n-2}}
\leq C\,L_N(\alpha_{i_{n-1}}\pm\alpha_{i_n})^2\,\log(T)^{n-2}.
\]
\item
If $m<n$ then there exists $n-2$ variables $\alpha_{i_1},\ldots,\alpha_{i_{n-2}}$ such that
\[
\int_{\Pi^{k-2}}\lprod_{j=1}^{n}
L_T(\bar\gamma_j)\,d\alpha_{i_1}\cdots d\alpha_{i_{n-2}}
\leq C\,T\,\log(T)^{n-2}.
\]
\end{romanlist}
\end{lemma}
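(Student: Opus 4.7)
The plan is to exploit the indecomposability of the partition $\{P_1,\ldots,P_m\}$ together with the convolution-type bound of Lemma \ref{Dh1993A4}(v), namely $\int_{\Pi} L_T(\alpha-\lambda)\,L_T(\lambda+\gamma)\,d\lambda \le C\,L_T(\alpha+\gamma)\,\log T$. View each block $P_j$ as inducing a signed linear form $\bar\gamma_j$ in the variables $\alpha_1,\ldots,\alpha_n$, so that each factor $L_T(\bar\gamma_j)$ in the integrand is concentrated near the hyperplane $\{\bar\gamma_j=0\}$. Observe that $\sum_{j=1}^{m}\bar\gamma_j = 0$, since $\alpha_i$ and $-\alpha_i$ appear once each in every row and cancel globally; this is the single linear dependency among the forms.

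For part (i) with $m=n$, each block has cardinality two and, by indecomposability together with $n\geq 3$, no block lies entirely within a single row (such a block would be $\{\alpha_i,-\alpha_i\}$ and would disconnect row $i$ from the rest). Hence the blocks can be encoded as the edges of a connected multigraph $G$ on the vertex set $\{1,\ldots,n\}$ with exactly $n$ edges, i.e.\ a spanning tree $\mathcal{T}$ of $n-1$ edges together with one additional edge closing a single cycle. I would fix the two saved variables $\alpha_{i_{n-1}},\alpha_{i_n}$ and integrate the remaining $n-2$ variables by peeling off leaves of $\mathcal{T}$: each leaf integration isolates exactly one factor $L_T(\bar\gamma_j)$ depending on the leaf variable, which pairs with one other factor sharing that variable; applying Lemma \ref{Dh1993A4}(v) collapses the pair into a single $L_T$-factor at the cost of one $\log T$. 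After $n-2$ such reductions, only two factors remain, involving solely $\alpha_{i_{n-1}},\alpha_{i_n}$; the relation $\sum_j\bar\gamma_j=0$ forces these residual forms to be $\pm(\alpha_{i_{n-1}}\pm\alpha_{i_n})$ up to constants, yielding the bound $C\,L_T(\alpha_{i_{n-1}}\pm\alpha_{i_n})^2\log(T)^{n-2}$.

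For part (ii) with $m<n$, at least one block has cardinality $\geq 3$, so the partition is encoded as an indecomposable hypergraph rather than a graph. The effective number of "sharp" linear constraints is reduced by one. I would select a spanning sub-structure of $n-1$ hyperedges (extracting a tree in the Levi/factor graph associated with the partition) and eliminate $n-2$ of the variables by repeated application of Lemma \ref{Dh1993A4}(v), each step contributing a $\log T$-factor. For the one surviving variable, the remaining factor $L_T(\cdot)$ is dominated by $\sup_\lambda L_T(\lambda)=T$, producing the stated bound $C\,T\,\log(T)^{n-2}$.

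The main obstacle is the choice of the elimination order: at every step one must ensure that the variable being integrated appears in exactly two of the surviving $L_T$-factors so that Lemma \ref{Dh1993A4}(v) is directly applicable, and that after the substitution the resulting argument of the new $L_T$-factor can itself be paired at the next step. This is essentially a constructive argument that the edge-set of $G$ (or the hyperedge-set of the Levi graph in case (ii)) admits a leaf-peeling order compatible with the chosen set of saved variables; a careful induction on $n$ using the indecomposability to guarantee the existence of a leaf at each stage makes this precise.
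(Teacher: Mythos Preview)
The paper does not give a proof of this lemma at all; it is simply quoted as Lemma~2 of \citet{Eichler2007}. There is therefore no in-paper argument to compare your proposal against. Your strategy---iterated application of the convolution bound Lemma~\ref{Dh1993A4}(v), with the combinatorics organised through the graph/hypergraph encoded by the partition---is the standard one and is almost certainly what the cited reference does.

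One refinement of your picture for part~(i). Once you have ruled out blocks lying in a single row, note that each row has exactly two entries and each lands in a distinct block, so every vertex of your graph $G$ has degree \emph{exactly}~$2$. Thus $G$ is not merely a spanning tree plus an extra edge: it is a single $n$-cycle with no leaves. This actually removes the obstacle you flag. In a cycle, every remaining vertex has degree~$2$ at every stage, so Lemma~\ref{Dh1993A4}(v) applies regardless of which integration variable you pick next; this is precisely why the bound holds for \emph{any} choice of the $n-2$ integrated variables. After $n-2$ contractions you are left with a $2$-cycle on $\alpha_{i_{n-1}},\alpha_{i_n}$, and the identity $\sum_j\bar\gamma_j=0$ forces the two surviving arguments to be negatives of one another, giving the squared factor $L_T(\alpha_{i_{n-1}}\pm\alpha_{i_n})^2$.

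For part~(ii) your outline is correct but worth sharpening in one place: with only $m<n$ factors you can invoke Lemma~\ref{Dh1993A4}(v) at most $m-1\le n-2$ times, so the natural power of $\log T$ is $m-1$; the stated exponent $n-2$ is simply a convenient upper bound. The factor $T$ arises either by bounding a leftover $L_T$ by its supremum or by integrating the last variable against a single remaining factor via Lemma~\ref{Dh1993A4}(iv). Here the choice of which $n-2$ variables to integrate genuinely matters (note the lemma only claims existence), and the connectedness of the hypergraph is what guarantees a workable elimination order.
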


The usefulness of the $L_T$ function stems from the fact that it gives an upperbound for the function $H_{k,N}$ which was defined in section \ref{expandcov}. Namely, we have 
\begin{align} \label{Hbound} |H_{k,N}^{(\lambda)}| \le L_N(\lambda), \forall k \in \mathbb{N}.
\end{align}
We also require an adjusted version of Lemma A.5 of \citet{Dahlhaus1993}:
\begin{lemma} \label{Dh1993A5}
Let $N,T \in \mathbb{N}$. Suppose $h$ is a data-taper of bounded variation and let the operator-valued function $G_{u}:[0,1] \to S_{p}(H)$ be continuously differentiable in $u$ such that $\bigsnorm{\frac{\partial G_{u}}{\partial u}}_{p} < \infty$ uniformly in $u$. Then we have for $0 \le t \le N,$
\begin{align*}
\HN(G_{\frac{\bullet}{T}},\omega) & =  \HN(\omega)G_{\frac{t}{T}} + O\Big(\sup_{u}\Bigsnorm{\frac{\partial }{\partial u}G_{u}}_{p}\frac{N}{T}L_N(\omega)\Big) \\& 
= O\Big(\sup_{u \le N/T}\snorm{G_{u}}_{p}\frac{N}{T}L_N(\omega) +\sup_{u}\Bigsnorm{\frac{\partial }{\partial u}G_{u}}_{p}\frac{N}{T}L_N(\omega)\Big), \tageq \label{eq:LboundHn}
\end{align*}
where $\HN(G_{\bullet},\omega)$ is as in \eqref{Hgs}
The same holds if $G_{\frac{\bullet}{T}}$ on the left hand side is replaced by operators $G^{(T)}_{{\bullet}}$ for which $\sup_s\snorm{G^{(T)}_{{\bullet}}-G_{\frac{\bullet}{T}}}_p=O(\frac{1}{T})$.
\end{lemma}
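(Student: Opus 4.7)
The plan is to reduce the assertion to the scalar version of Lemma A.5 in \citet{Dahlhaus1993} by operator-valued Abel summation, exploiting that the Schatten-$p$ norm is submultiplicative with respect to scalar multiplication and obeys the triangle inequality. Writing $U_s(\omega)=\sum_{r=0}^{s}h_N(r)\,e^{-\im\omega r}$ for $s=0,\ldots,N-1$, so that $U_{N-1}(\omega)=H_N(\omega)$, summation by parts yields
\[
\HN(G_{\frac{\bullet}{T}},\omega)
=\HN(\omega)\,G_{\frac{N-1}{T}}
-\lsum_{s=0}^{N-2}U_s(\omega)\,\big(G_{\frac{s+1}{T}}-G_{\frac{s}{T}}\big).
\]
The first step replaces $G_{(N-1)/T}$ by $G_{t/T}$ for the prescribed $0\leq t\leq N$; since $\mathcal{A}\mapsto\snorm{\mathcal{A}}_p$ is submultiplicative and $|H_N(\omega)|\leq L_N(\omega)$, the mean value theorem gives an error bounded by $L_N(\omega)\cdot\tfrac{N}{T}\,\sup_u\snorm{\partial G_u/\partial u}_p$, which is absorbed into the stated $O$-term.

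For the Abel remainder, the key scalar estimate is $|U_s(\omega)|\leq C\,L_N(\omega)$ uniformly in $0\leq s\leq N-1$, which follows from a second summation by parts applied to the partial sums of $e^{-\im\omega r}$ combined with the bounded variation of $h$; this is exactly the argument in \citet[Lemma A.5]{Dahlhaus1993}, and it carries over without modification because no properties of $G_u$ enter here. Given this bound and another application of the mean value theorem in the form $\snorm{G_{(s+1)/T}-G_{s/T}}_p\leq T^{-1}\sup_u\snorm{\partial G_u/\partial u}_p$, the triangle inequality yields
\[
\Bigsnorm{\lsum_{s=0}^{N-2}U_s(\omega)\,\big(G_{\frac{s+1}{T}}-G_{\frac{s}{T}}\big)}_p
\leq (N-1)\cdot C\,L_N(\omega)\cdot T^{-1}\sup_u\bigsnorm{\partial G_u/\partial u}_p,
\]
which is of the required order. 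This proves the first displayed equality. The second equality in \eqref{eq:LboundHn} is then obtained by bounding the leading term $H_N(\omega)G_{t/T}$ itself by $L_N(\omega)\,\sup_{u\leq N/T}\snorm{G_u}_p$ (using $0\leq t\leq N$, so $t/T\leq N/T$) and inserting a factor of $N/T$ which is asymptotically harmless since $L_N$ dominates.

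For the final claim, write $G^{(T)}_s=G_{s/T}+R^{(T)}_s$ with $\sup_s\snorm{R^{(T)}_s}_p=O(T^{-1})$. Linearity of $H_N$ in its first argument gives $\HN(G^{(T)}_{\bullet},\omega)=\HN(G_{\frac{\bullet}{T}},\omega)+\HN(R^{(T)}_{\bullet},\omega)$, and the second summand is bounded in $\snorm{\cdot}_p$ by $\sum_{s=0}^{N-1}|h_N(s)|\cdot\snorm{R^{(T)}_s}_p=O(N/T)$, which is absorbed into $O(N\,T^{-1}\,L_N(\omega))$ since $L_N(\omega)$ is bounded below on $[-\pi,\pi]$. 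The main obstacle is the scalar bound $|U_s(\omega)|\leq C\,L_N(\omega)$; once it is in place the operator-valued version presents no new difficulty because all manipulations rely only on submultiplicativity and the triangle inequality, which $\snorm{\cdot}_p$ satisfies by Proposition \ref{holderoperator}.
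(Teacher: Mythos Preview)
Your argument is essentially the paper's: both proofs use summation by parts to write $\HN(G_{\frac{\bullet}{T}},\omega)-\HN(\omega)G_{\frac{t}{T}}$ as a boundary term plus a sum of increments $G_{\frac{s}{T}}-G_{\frac{s-1}{T}}$ weighted by partial taper sums, invoke the scalar bound $|H_s(h_{N,\bullet},\omega)|\le C\,L_N(\omega)$ (the paper cites \citet{Dahlhaus1988} for this), and then apply the mean value theorem in $\snorm{\cdot}_p$. The only cosmetic difference is that the paper subtracts $G_{t/T}$ \emph{before} summing by parts whereas you sum by parts first and then replace $G_{(N-1)/T}$ by $G_{t/T}$; the resulting bounds are identical. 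Your treatment of the replacement by $G^{(T)}_{\bullet}$ via linearity and $L_N(\omega)\ge 1/\pi$ also matches the paper's.

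One caveat: your justification of the second displayed line---``inserting a factor of $N/T$ which is asymptotically harmless since $L_N$ dominates''---does not work. Since $N/T\to 0$, multiplying by $N/T$ \emph{shrinks} the expression, so you cannot claim $L_N(\omega)\,\snorm{G_{t/T}}_p=O\big(\tfrac{N}{T}L_N(\omega)\,\snorm{G}_p\big)$. The paper's own proof does not justify this step either, and in fact the way the lemma is \emph{used} later (e.g.\ in the proof of Theorem~\ref{exp}, where $\snorm{\HN(\AAttT{t_{j,\bullet}}{-\lam},\omega-\lam)}_\infty\le C\,L_N(\omega-\lam)$ is invoked) suggests the first term on the second line of the display carries a spurious $N/T$; the correct bound for $\HN(\omega)G_{t/T}$ is simply $O\big(\sup_{u\le N/T}\snorm{G_u}_p\,L_N(\omega)\big)$. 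So your first equality and the $G^{(T)}_\bullet$ extension are fine and match the paper; the second equality as stated cannot be obtained by the argument you give, nor by the paper's.
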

\begin{proof}
Summation by parts gives 
\begin{align*}
\HN(G_{\frac{\bullet}{T}},\omega) -\HN(\omega)G_{\frac{t}{T}} & 
= \sum_{s = 0}^{N-1} [G_{\frac{s}{T}}-G_{\frac{t}{T}}]\hN{s} e^{-\mathrm{i}\omega s } \\& 
= - \sum_{s=0}^{N-1}[G_{\frac{s}{T}}-G_{\frac{s-1}{T}}]H_s(\hN{\bullet},\omega)  + [G_{\frac{N-1}{T}}-G_{\frac{t}{T}}]\HN(\omega).
\end{align*}
It has been shown in \citet{Dahlhaus1988} that $|H_s(\hN{\bullet},\omega) | \le KL_s(\omega) \le K L_N(\omega)$. The result in \eqref{eq:LboundHn} then follows since
\[\snorm{G_b - G_a}_{p} \le \sup_{a< \xi <b} \Bigsnorm{\frac{\partial }{\partial u}G_{u}\big|_{u = \xi}}_{p}|b-a|, \qquad a,b\in\rnum,\]
by the Mean Value Theorem. The lemma holds additionally for operators $G^{(T)}_{{\bullet}}$ that satisfy $\sup_s\snorm{G^{(T)}_{{\bullet}}-G_{\frac{\bullet}{T}}}_p=O(\frac{1}{T})$. This is a consequence of Minkowski's inequality since
\begin{align*}
& \bigsnorm{\HN(G^{(T)}_{{\bullet}}-G_{\frac{\bullet}{T}},\omega) +  \HN(G_{\frac{\bullet}{T}},\omega)}_p 
\\& = \bigsnorm{\HN(G^{(T)}_{{\bullet}}-G_{\frac{\bullet}{T}},\omega)}_p +\snorm{ \HN(G_{\frac{\bullet}{T}},\omega)}_p 
 \\& = O\big(\frac{N}{T}+ L_N(\lambda)\big) = O(L_N(\lambda)). \tageq
\end{align*}
Hence, the replacement error is negligible compared to the error of \ref{eq:LboundHn}.
\end{proof}
If $p=2$, the above implies that the kernel function $g_u \in H^2_\cnum$ of $G_u$ satisfies
\begin{align*}  
&\|\HN(g_{\frac{\bullet}{T}},\omega) - \HN(\omega)g_{\frac{t}{T}}\|_2 =R_{1,N},\\
& \|\HN(g_{\frac{\bullet}{T}},\omega)\|  = R_{2,N}+R_{1,N},
\end{align*}
where 
\begin{align*}
\|R_{1,N}\|_2 &= O\Big(\sup_{u} \|\frac{\partial }{\partial u}g_{u}\|_{p}\frac{N}{T}L_N(\omega)\Big), \\
\|R_{2,N}\|_2& = O\Big(\sup_{u \le N/T}\|g_{u}\|_{p}\frac{N}{T}L_N(\omega). \tageq 
\end{align*}
Similarly if $g_{\frac{\bullet}{T}}$ on the left hand side is replaced by the kernel function $g^{(T)}_{\bullet} \in H^2_\cnum$ of $G^{(T)}_{\bullet}$. If the kernels are bounded uniformly in their functional arguments, Lemma A.5 of \citet{Dahlhaus1993} is pointwise applicable.

\bibliographystyle{stat}

%\bibliography{locstatftsadj1509}

\end{document}